\newcommand{\bbN}{{\mathbb{N}}}
\newcommand{\bbR}{{\mathbb{R}}}
\newcommand{\bbE}{{\mathbb{E}}}
\newcommand{\bbZ}{{\mathbb{Z}}}
\newcommand{\eps}{\varepsilon}
\newcommand{\beq}{\begin{equation}}
\newcommand{\eeq}{\end{equation}}
\newcommand{\ba}{\begin{align}}
\newcommand{\ea}{\end{align}}
\newcommand{\EE}{\mathbb E}
\DeclareMathOperator{\Tr}{Tr}
\DeclareMathOperator{\Var}{Var}
\newcommand{\lozr}{
--++(1,1)--++(1,-1)--++(-1,-1)
--++(-1,1)[fill=red!40!white]}
\newcommand{\lozd}{--++(1,-1)--++(0,2)--++(-1,1)--++(0,-2)
[fill=white!60!green]}
\newcommand{\lozu}{--++(1,1)--++(0,2)--++(-1,-1)--++(0,-2)
[fill=white!60!blue]}
\numberwithin{equation}{section}
\newtheorem{theorem}{Theorem}[section]
\newtheorem{proposition}[theorem]{Proposition}
\newtheorem{lemma}[theorem]{Lemma}
\newtheorem{corollary}[theorem]{Corollary}
\theoremstyle{definition}
\newtheorem{example}[theorem]{Example}
\newtheorem{assumption}[theorem]{Assumption}
\theoremstyle{remark}
\newtheorem{remark}{Remark}[section]
\title{On global fluctuations for non-colliding processes}
\author{Maurice Duits \footnote{Department of Mathematics, Royal Institute of Technology (KTH), Stockholm Lindstedsv\"agen 25, SE-10044, Sweden. Email: duits@kth.se.  Supported  by the  Swedish Research Council (VR) Grant no.\ 2012-3128.} }
\date{\today}
\begin{document}
\maketitle
\begin{abstract}
We study the global fluctuations for a class of determinantal point processes coming from  large systems of non-colliding processes and non-intersecting paths. Our main assumption is that the point processes are constructed by biorthogonal families that satisfy finite term recurrence relations.  The central observation of the paper is that the fluctuations of multi-time or multi-layer linear statistics can be efficiently expressed in terms of  the associated recurrence matrices. As a consequence, we prove  that different models that share the same asymptotic behavior of the recurrence matrices, also share the same asymptotic behavior for the global fluctuations.  An important special case is when the recurrence matrices have limits along the diagonals,  in which case we prove  Central Limit Theorems for the linear statistics. We then show that these results prove Gaussian Free Field fluctuations for the random surfaces associated to these systems. To illustrate the results, several examples will be discussed, including non-colliding processes for which the invariant measures are the classical orthogonal polynomial ensembles and random lozenge tilings of a hexagon. 
\end{abstract}
%\tableofcontents
\section{Introduction}

Random surfaces appearing  in various models of integrable probability, such as random matrices and random tilings,  are known to have a rich structure. A particular feature, one that has received much attention in recent years, is the Gaussian free field that is expected to appear as a universal field describing the global fluctuations of such random surfaces. Using  the integrable structure, this has been rigorously verified in a number of models in the literature. For a partial list of reference see \cite{Bwigner,BB,BF,BG,D,Kenyon,Kuan,Petrov}.  The results so far indicate the universality of the Gaussian Free Field in this context is rather robust. In this paper, we will be interested in the global fluctuations for a particular class of models, namely that of non-colliding processes and ensembles of non-intersecting paths.  For those models, we will provide a general principle that leads to  Gaussian Free Field type fluctuations. % The aim of the paper is to prove a general principle confirming the universality of the Gaussian Free Field  for the random surface appearing in this context.
 
Non-colliding process and non-intersecting path ensembles form an important class of two dimensional random interacting systems. For instance, Dyson showed \cite{Dyson}  that the Gaussian Unitary Ensemble is the invariant measure for a system of non-colliding Ornstein-Uhlenbeck processes.   Replacing the Ornstein-Uhlenbeck process  by its radially squared version  defines a similar stochastic dynamics for the Laguerre Unitary Ensemble \cite{KOC}.  In Section \ref{sec:example} we will recall  a generalization to non-colliding processes for which the classical orthogonal polynomial ensembles (continuous and discrete) are the invariant measures. Another source of models that lead  non-intersecting paths  are random tilings of planar domains. Lozenge tilings of a hexagon on a triangular lattice, for example,  can alternatively be described by discrete random walks that start and end at opposite sites of the hexagon \cite{G1,Jhahn}. In that way,  a probability measure on all tilings of the hexagon induces a non-intersecting path ensemble.

By the tandem of the  Karlin-McGregor or Gessel-Lindstr\"om-Viennot Theorem and the Eynard-Mehta Theorem, it follows that many non-colliding process are determinantal point processes, see, e.g., \cite{Jdet}. This makes them especially tractable for asymptotic analysis. A natural way to study the random surfaces appearing in this way is to analyze   linear statistics for such determinantal point processes. 
 \emph{The purpose of this paper is to prove  Central Limit Theorems for multi-time or multi-layer linear statistics for a certain class of determinantal point processes. } The conditions under which the results hold are easily verified in the classical ensembles. In particular,  we will show that this Central Limit Theorem confirms the universality conjecture for the Gaussian free field in these models. We will illustrate our results by considering several examples, including dynamic  extension of Unitary Ensembles and other ensembles related to (bi)orthogonal polynomials.\\

In the remaining part of the  Introduction we give an example of what type of results we will prove by discussing  a well-known model. Consider the left picture as shown in Figure \ref{fig:BB} showing  $n$ Brownian bridges $t \mapsto \gamma_j(t)$ that start for $t=0$ at the origin and return there at $t=1$. We also condition the bridges never to collide. 
The $\gamma_j(t)$ turn out to have the same distribution as the ordered eigenvalues of an $n\times n$ Hermitian matrix for which the real and imaginary part of each entry independently (up to the symmetry) performs a Brownian bridge.  Hence, at any given time $t \in (0,1)$ the locations $\gamma_j(t)$ have the exact same distribution of the appropriately scaled eigenvalues of a matrix taken randomly from the Gaussian Unitary Ensemble (=GUE).  For more details and background  on this, by now classical, example we refer to \cite{AGZ,Dyson}.  See also \cite{Handbook} for a general reference on Random Matrix Theory.

 Because of the non-colliding condition, we can view the paths as level lines for a random surface. More precisely, if we define the height function by 
\begin{equation}\label{eq:defheight}
h_n(t,x)=\# \{ j \mid \gamma_j(t) \leq x\},
\end{equation}
i.e. $h_n(t,x)$ counts the number of paths directly below a given point $(t,x)$, then the trajectories are the lines where the stepped surface defined by the graph of $h_n(t,x)$ makes a jump by one. The  question is what happens with $h_n$ when the size of the system grows large, i.e. $n \to \infty$. 
\begin{figure}[t]
\begin{center}
%\begin{subfigure}
\includegraphics[scale=.4]{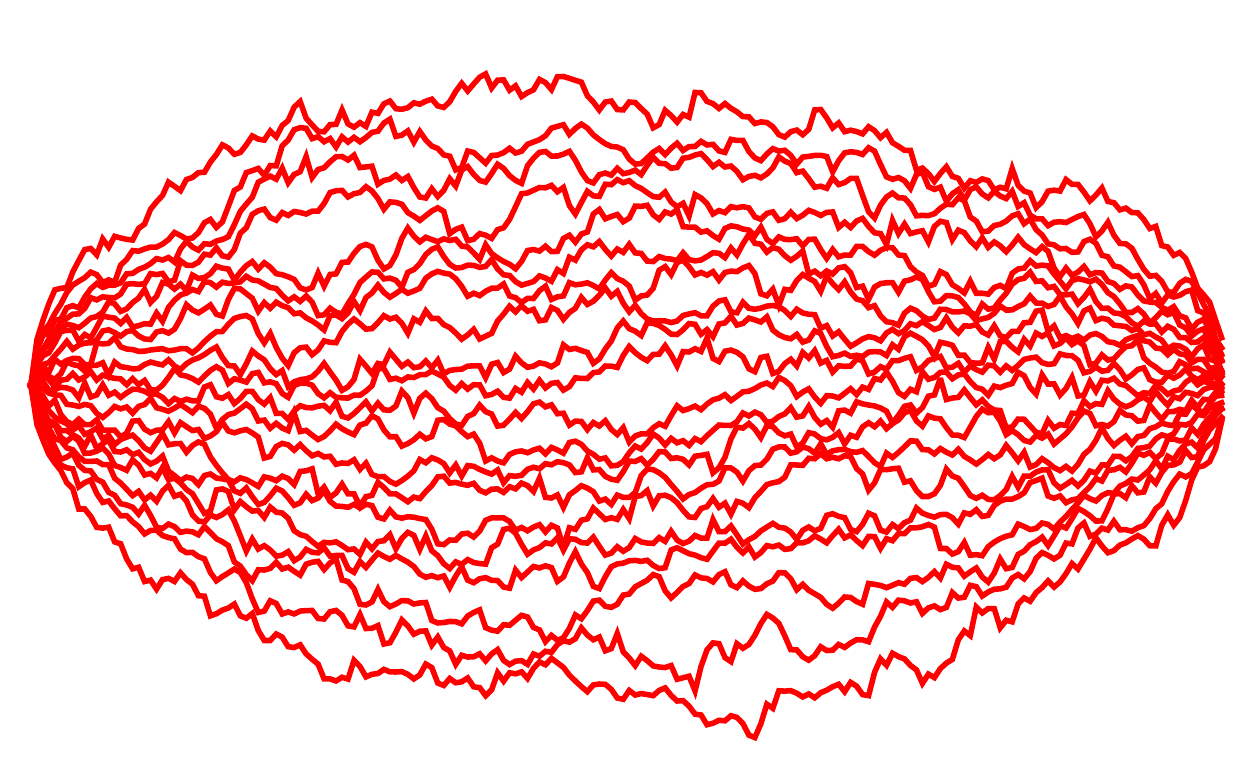}
%\end{subfigure}
%\begin{subfigure}
\includegraphics[scale=.4]{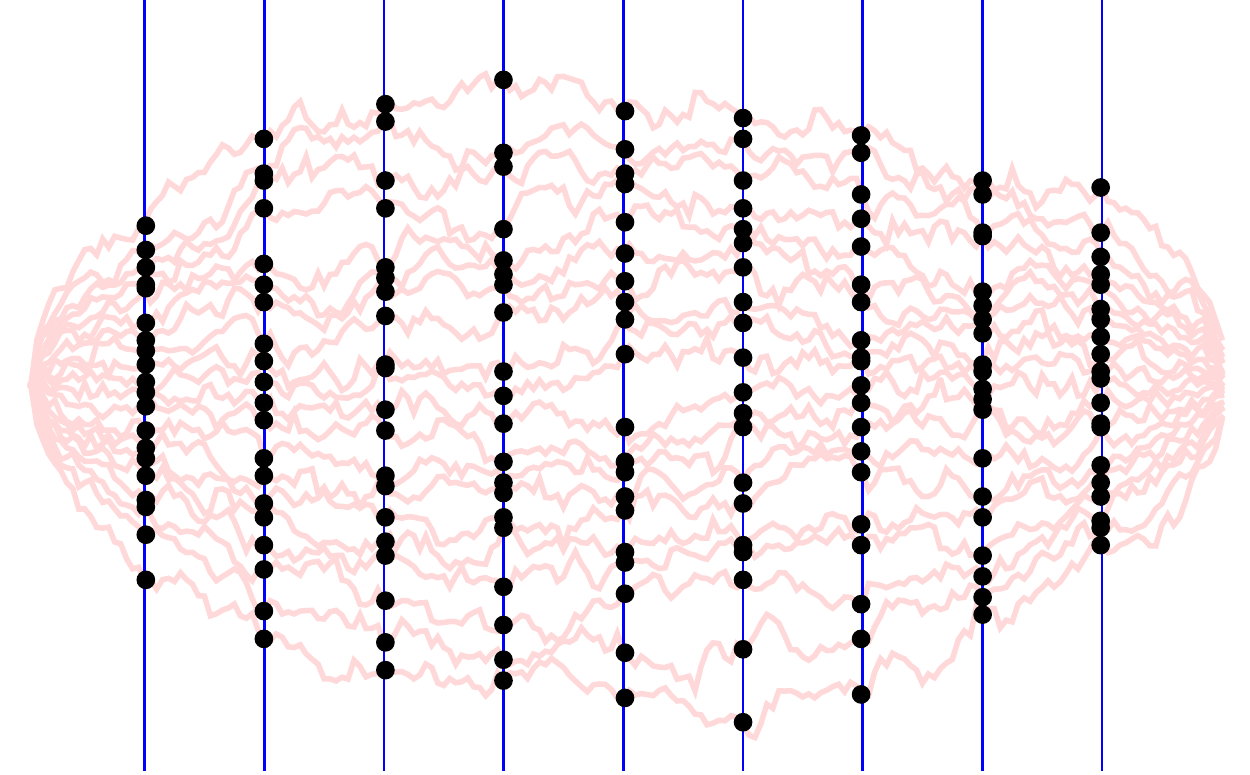}
%\end{subfigure}
\end{center}
\caption{The left picture shows a typical configuration of non-colliding Brownian bridges that are conditioned never to collide. In the right picture we take a number of vertical sections of the bridge. }  \label{fig:BB}
\end{figure}

It turns out that as $n \to \infty$ the normalized height function $\frac1n h_n$ has an almost sure limit, also called the limit shape.  Indeed, when $n \to \infty$ the paths will all accumulate on a region $\mathcal E$ that is usually referred to as the disordered region. We will assume that the original system is rescaled such that $\mathcal E$ does not depend on $n$ and is a non-empty open domain in $\mathbb R^2$.  In fact, after a  proper rescaling, the domain $\mathcal E$ is  the ellipse
\begin{equation}
\label{eq:ellipse}\mathcal E= \{(t,x)\, \mid  \, x^2 \leq 4t(1-t)\}.
\end{equation}
It is  well-known that the eigenvalue distribution of a GUE matrix converges to that of a semi-circle law (see, e.g., \cite{AGZ}). This implies that  we have the following limit for the height function
$$\lim_{n\to \infty} \frac{1}{n}  \EE h_n(t,x)= \frac1 {2\pi t(1-t)} \int_{-\sqrt{4 t(1-t)}}^x \sqrt{4 t(1-t)-s^2}{\rm d} s.$$
The next question is about the fluctuations of the random surface, i.e. the behavior of $h_n(t,x) -\EE h_n(t,x)$. For a fixed point $(t,x)$ we note that  $h_n(t,x)$ is a counting statistic counting the number of eigenvalues  of a suitably normalized GUE matrix  in a given semi-infinite interval $(-\infty,x]$. The variance for such a statistic is known to be growing logarithmically $\sim c \ln n$ as $n \to \infty$. Moreover, by dividing by the variance we find that 
$$\frac{h_n(t,x)-\EE h_n(t,x)}{\sqrt{\Var h_n(t,x)}} \to N(0,1),$$
as $n \to \infty$. The principle behind this result goes back to Costin-Lebowitz \cite{CL} and was later extended by Soshnikov \cite{SoshCLT}. However, if we consider the correlation between several points that are macroscopically far apart,
$$\EE \left[\prod_{j=1}^R\left( h_n(t_j,x_j)-\EE h_n(t_j,x_j)\right)\right],$$ we
obtain a finite limit as $n \to \infty$. When $n\to \infty$ the random surface defined by the graph of $h_n(t,x)-\EE h_n(t,x)$ converges to a rough surface. The pointwise limit does not make sense (due to the growing variance) but it has a limit as a generalized function. This generalized function is, up to coordinate transform, known in the  literature as the Gaussian free field. 

Since the Gaussian Free Field is a random generalized function, it is natural  to pair it  with a test function $\phi$, i.e.
\begin{equation}\label{eq:wrongpairing}
\langle h_n,\phi\rangle=\iint h_n(t,x) \phi(t,x) {\rm d} t {\rm d} x,
\end{equation} (as we will show in Section \ref{sec:GFF} the Dirichlet pairing is more appropriate, also including a coordinate transform, but this simpler pairing does show the essential idea). Then by  writing $h_n(t,x)= \sum_{j=1}^n \chi_{(-\infty,\gamma_j(t)]}(x)$ and by a discretization of the time integral we obtain
\begin{multline}\label{eq:intro:discr}
\langle h_n,\phi\rangle=\sum_{j=1}^n \int \int_{-\infty}^ {\gamma_j(t)} \phi(t,x) {\rm d} x {\rm d} t
= \sum_{m=1}^N  \sum_{j=1}^n  (t_{m+1}-t_m) \int_{-\infty}^ {\gamma_j(t_m)} \phi(t_m,x) {\rm d}x,
\end{multline} 
where  we choose  $N\in \bbN$ time points $t_m \in (0,1)$  such that 
$$0=t_0<t_1 < \ldots t_N<t_{N+1}=1,$$
and typically want the mesh $\sup_{m=0,\ldots,N}(t_{m+1}-t_m)$ to be small. The fact of the matter is that the right-hand side is a  \emph{linear statistic} for the  point process on $\{1,\ldots,N\}\times \bbR$ defined by the locations 
\begin{equation}\label{eq:pointprocess}
\{(m,\gamma_j(t_m)) \}_{j=1,m=1}^{n,N}.
\end{equation}
In other words, the pairing in \eqref{eq:wrongpairing} naturally leads us  to studying  linear statistics $X_n(f)$ defined by 
\begin{equation}\label{eq:introlinstat}
X_n(f) = \sum_{m=1}^N \sum_{j=1}^n f(m,\gamma_j(t_m)),
\end{equation}
for a function $f$ on $\{1, \ldots, N\} \times \bbR$.  The central question of the paper is to ask for the limiting behavior, as $n \to \infty$,  of the fluctuations of $X_n(f)-\EE X_n(f)$ for sufficiently smooth functions $f$.  A particular consequence of the main results is the following.
\begin{proposition}\label{prop:intro}
Let $f:\{1,\ldots,N\}\times \bbR \to \bbR$ such that $x\mapsto f(m,x)$ is continuously differentiable and grows at most polynomially for $x \to \pm \infty$. Then the linear statistic \eqref{eq:introlinstat} for the point process \eqref{eq:pointprocess} satisfies
\begin{equation}\label{eq:CLTintro} X_n(f)-\EE X_n(f) \to N(0,\sigma_f^2)
\end{equation}
as $n \to \infty$,
where $$
\sigma(f)^2 = \sum_{m_1,m_2=1}^N  \sum_{k=1}^\infty{\rm e}^{-|\tau_{m_1}-\tau_{m_2} |k} k f_k^{(m_1)} f_{k}^{(m_2)}, $$
with $\tau_m=\frac12\ln t_m/(1-t_m)$ and 
$$f_k^{(m)}= \frac{1}{\pi} \int_0^\pi f\left(m,2 \sqrt{t_m(1-t_m)}–\cos \theta\right) \cos k \theta {\rm d} \theta,$$
for $k \in \bbN$. 
\end{proposition}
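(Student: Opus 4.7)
The plan is to recognize the point process \eqref{eq:pointprocess} as a multi-time determinantal point process built from a biorthogonal family satisfying three-term recurrence relations, and then to apply the general Central Limit Theorem that is to be established as the paper's main result. The variance expression is produced by reading off the limits of the recurrence and transition matrices and feeding them into the main theorem.

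First I would combine the Karlin--McGregor formula with the Eynard--Mehta theorem to represent the joint law of $\{(m,\gamma_j(t_m))\}_{j,m}$ as an extended determinantal process whose kernel admits a biorthogonal representation. Since the marginal at time $t_m$ is, up to a rescaling by $\sqrt{2t_m(1-t_m)}$, the GUE eigenvalue distribution, the natural biorthogonal family at each time slice consists of rescaled Hermite polynomials. These satisfy a three-term recurrence whose Jacobi matrix $J^{(m)}$ has limits along the diagonal: the off-diagonal entry $a_k^{(m)}$ tends to $\sqrt{t_m(1-t_m)}$ as $n\to\infty$ and $k/n\to 1$. Thus the limiting recurrence matrix is the free Jacobi matrix scaled by $\sqrt{t_m(1-t_m)}$, whose density of states is the semicircle law on $[-2\sqrt{t_m(1-t_m)},2\sqrt{t_m(1-t_m)}]$, matching the support of the time-$t_m$ slice of the limit shape.

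Second, for the inter-time transitions I would use the classical observation that under the time change $\tau=\tfrac12\log(t/(1-t))$ together with the spatial rescaling $x\mapsto x/\sqrt{t(1-t)}$ the Brownian bridge is mapped onto a stationary Ornstein--Uhlenbeck process. The Mehler kernel of the OU semigroup is diagonal in the Hermite basis with eigenvalues $\exp(-|\tau_{m_1}-\tau_{m_2}|k)$, which identifies the limit along diagonals of the multi-time transition matrix. Substituting both limiting matrices into the variance formula delivered by the main theorem yields exactly $\sigma(f)^2$: the spectral coefficients of $f(m,\cdot)$ with respect to the limiting free Jacobi matrix are the Chebyshev coefficients $f_k^{(m)}$, the transition factor contributes $\exp(-|\tau_{m_1}-\tau_{m_2}|k)$, and the combinatorics of the CLT produces the factor $k$ typical of Szeg\H{o}-type variance formulas. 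Gaussianity follows from the vanishing of higher cumulants provided by the same theorem.

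The principal technical obstacle is extending the CLT from polynomial or smooth compactly supported test functions to the stated class of $f$ that are only $C^1$ in $x$ with at most polynomial growth. Here the explicit exponential decay of the transition factor in $k$, combined with the decay of Chebyshev coefficients of a $C^1$ function on the compact interval $[-2\sqrt{t_m(1-t_m)},2\sqrt{t_m(1-t_m)}]$, provides the uniform summability needed to pass to the limit inside the double sum defining $\sigma(f)^2$, while the polynomial-growth tails outside the bulk are controlled by the super-exponential decay of the Hermite kernel there.
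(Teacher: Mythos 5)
Your proposal is correct and follows essentially the same route as the paper: the paper proves Proposition \ref{prop:intro} in Example \ref{eq:OU} by verifying the recurrence-coefficient limits for the rescaled Hermite polynomials and the transition factors $c_{j,m}={\rm e}^{-t_m\lambda_j}$ with $\lambda_j=j$, applying Theorem \ref{th:OP1} (a consequence of Corollaries \ref{cor:th:main1} and \ref{cor:main1growing}), extending to $C^1$ functions of polynomial growth via Theorem \ref{th:extend}, and transferring from the stationary Ornstein--Uhlenbeck picture to the Brownian bridges by the change of variables $(t,\xi)\mapsto(1/(1+{\rm e}^{-t}),\xi/\cosh t)$. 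The only cosmetic difference is that you work directly in the bridge coordinates (limiting off-diagonal entry $\sqrt{t_m(1-t_m)}$) rather than in the OU coordinates with the coordinate change performed at the end.
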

This Central Limit Theorem is a special case of a more general theorem that we will state in the next section. The main point of the present paper is to show that such results follow from a  general principle for models that have a determinantal structure for which the integrating functions (i.e. the Hermite polynomials in the above example) satisfy a finite term recurrence.    The proof of Proposition \ref{prop:intro} will be discussed in Example \ref{eq:OU} (see also \cite{Bwigner} for a similar statement in the context of stochastic evolutions for Wigner matrices).  The precise connection to the Gaussian Free Field will be explained in Section \ref{sec:GFF}.

\subsubsection*{Overview of the rest of the paper}
In Section \ref{sec:mainresults} we will formulate the model that we will consider and state our main results, including various corollaries. The proofs of those corollaries will also be given in Section \ref{sec:mainresults}, but the proofs of the main results, Theorem ~\ref{th:main0}, ~\ref{th:extend}, ~\ref{th:main0growing} and~\ref{th:extendgrowing} will be given in Section \ref{sec:proofs}. Our approach is a connection to recurrence matrices, which will be explained  in Section \ref{sec:rec}. Then in Section~\ref{sec:fred} we will analyze the asymptotic behavior of a general Fredholm determinant from which the proofs of the main results are special cases. Finally, in Section \ref{sec:example} we will provide  ample  examples to illustrate our results. 

%\subsubsection*{Acknowlegdements} The author is grateful to Jonathan Breuer and Kurt Johansson for helpful discussions.

 \section{Statement of results}\label{sec:mainresults}
 
 In this section we will discuss the general  model that we will consider and state our main results. Some proofs are postponed to later sections. 
 
 \subsection{The model}
 
 Determinantal point processes that come from non-colliding process and non-intersecting paths have a particular form.  In this paper, we will therefore consider probability measures on $(\mathbb R^n)^N$ that can be written as the product of several determinants,
 \begin{multline}
\label{eq:productmeasures}\frac{1}{Z_n} \det \left(\phi_{j,1}(x_{1,k})\right)_{j,k=1}^n \prod_{m=1}^{N-1} \det\left( T_m(x_{m,i},x_{m+1,j}) \right)_{i,j=1}^n\\
\times  \det \left(\psi_{j,N}(x_{N,k})\right)_{j,k=1}^n
\prod_{m=1}^N \prod_{j=1}^n {\rm d} \mu_m(x_{m,j}),
\end{multline}
 where $Z_n$ is a normalizing constant, ${\rm d} \mu_m$ are Borel measures on $\mathbb R$, $\phi_{j,1} \in \mathbb L_2(\mu_N)$  and $ \psi_{j,N} \in \mathbb L_2(\mu_1)$. The function  $T_m$  is such that the integral operator $\mathcal T_m :\mathbb L_2(\mu_m) \to \mathbb L_2(\mu_{m+1})$ 
 defined by 
$$\mathcal T_m f(y)= \int f(x) T_m(x,y) {\rm d} \mu_m(x),$$ is a bounded  operator.

The form of \eqref{eq:productmeasures} may look very special at first, but it appears often  when dealing with non-colliding processes and non-intersecting paths. See, e.g., \cite{Jdet} and the references therein. The key is the Karlin-McGregor Theorem in the continuous setting or the Gessel-Lindstr\"om-Viennot Theorem in the discrete setting, that say that the transition probabilities of non-colliding processes can be expressed as determinants of a matrix constructed out of the transition probability  for a single particle. We will discuss several explicit examples in Section \ref{sec:example}.

It is standard that without loss of generality we can assume that 
\begin{equation} \label{eq:biophimpsi}
 \int \psi_{j,N}(x) \mathcal T_{N-1} \mathcal T_{N-2} \cdots \mathcal T_{1} \phi_{k,1}(x) {\rm d}  \mu_{N}(x)= \delta_{jk},
 \end{equation}
for $j,k=1, \ldots,n$.  To show this,    we first recall Andrei\'ef's identity: For any measure $\nu$ and $f_j,g_j\in \mathbb L_2(\nu)$ for $j=1, \ldots, n$ we have 
\begin{multline} \label{eq:adnr} \int \cdots \int \det \left(f_j(x_k)\right)_{j,k=1}^n \det \left(g_j(x_k)\right)_{j,k=1}^n {\rm d} \nu(x_1) \cdots {\rm d} \nu(x_n)\\= n!\det \left(\int f_j(x)g_k(x) {\rm d} \nu(x)\right)_{j,k=1}^n.
\end{multline}
 Then, by iterating \eqref{eq:adnr}, we see that $Z_n$ can be expressed as the determinant of the Gram-matrix associated to $\phi_{i,1}$ and $\psi_{j,N}$, i.e.
$$Z_n= (n!)^N \det \left(\int \psi_{i,N}(x) \mathcal T_{N-1} \mathcal T_{N-2} \cdots \mathcal T_{1} \phi_{j,1}(x) {\rm d}  \mu_{N}(x)\right)_{i,j=1}^n.$$
 Since $Z_n$ can not vanish, it means that Gram-matrix is non-singular. The fact of the matter is that   by linearity of the determinant, we can replace the $\phi_{j,1}$'s and $\psi_{k,N}$'s in the determinants in \eqref{eq:productmeasures} by any other linear combinations of those functions, as long as the resulting family is linearly independent. A particular choice, for example by using the singular value decomposition of the original Gram-matrix,  is to make sure that the new Gram-matrix  becomes the identity. In other words, without loss of generality we can assume that  we $\phi_{j,1}$ and $\psi_{k,N}$ are such that  \eqref{eq:biophimpsi} holds. This also shows that in that case $Z_n=(n!)^N$.

%Another consequence of Andrei\'ef's identity is that the joint probability function for the points $\{(m,x_{j,m})\}_{j=1}^n$ for a fixed $m$ is given by 
%$$\frac{1}{n!} \det\left( \psi_{j,m}(x_{k,m})\right)_{j,k=1}^n \det\left( \phi_{j,m}%(x_{k,m})\right)_{j,k=1}^n  {\rm d}\mu_m(x_{1,m}) \cdots {\rm d}\mu_m(x_{n,m})$$
%This is known in the literature as a \emph{biorthogonal ensemble}. Hence \eqref{eq:productmeasures} defines an extended or time dependent biorthogonal ensembles. 

An important role in the analysis is played by the functions
\begin{equation} \label{eq:defphimpsim}
 \phi_{j,m}= \mathcal T_{m-1} \cdots \mathcal T_1 \phi_{j,1}, \qquad \psi_{j,m}= \mathcal T_m^* \cdots   \mathcal T_{N-1}^*  \psi_{j,N},
 \end{equation}
 for $m=1, \ldots, N$, where $\mathcal T_m^*$ stands for the adjoint of $\mathcal T_m$ which is given by   
$$\mathcal T_m^* f(x)= \int f(y) T_m(x,y) {\rm d} \mu_{m+1}(y)$$
Note that it follows from \eqref{eq:biophimpsi} that
 \begin{equation} \label{eq:biophimpsim}
 \int \phi_{j,m}(x) \psi_{k,m}(x) {\rm d}  \mu_{m}(x)= \delta_{jk},
 \end{equation}
for $j,k=1, \ldots,n$ and $m=1, \ldots, N$. 
The marginals in \eqref{eq:productmeasures} for the points $\{(m,x_{j,m})\}_{j=1}^n$ for a fixed $m$ are given by the measure 
\begin{equation}
\label{eq:BE}
\frac{1}{n!} \det \left( \phi_{j,m}(x_{m,k})\right)_{j,k=1}^n \det \left( \psi_{j,m}(x_{m,k})  \right)_{j,k=1}^n {\rm d} \mu_m(x_{1,m}) \cdots  {\rm d} \mu_m(x_{n,m}). 
\end{equation}
A probability measure of this type is known in the literature as a  biorthogonal ensemble \cite{BorBio}.

It is well-known that, by  the Eynard-Mehta Theorem,  measures of the form \eqref{eq:productmeasures} are examples of determinantal point processes. We recall that  a determinantal point process is a point process for which the $k$-point correlation functions can be written as  $k\times k$ determinants constructed out of a single function of two variables, called the correlation kernel.  More precisely, there exists a $K_{n,N}$ such that for any test function $g$ we have
\begin{multline}\label{eq:detstruct}
\mathbb E\left[\prod_{m=1,j=1}^{N,n} (1+ g(m,x_{m,j}))\right]\\
=\sum_{\ell=0} ^\infty \int_{\Lambda^\ell} g(\eta_1) \cdots g(\eta_\ell) \det \left(K_{n,N}(\eta_i,\eta_j)\right)_{i,j=1}^\ell  {\rm d} \nu(\eta_1) \cdots {\rm d} \nu(\lambda_\ell),
\end{multline}
where $\Lambda= \{1,2,\ldots, N\}\times \bbR$ and $\nu$ is a measure on $\Lambda$, called the reference measure. 
For the point process defined by \eqref{eq:productmeasures} this kernel has the form
\begin{multline}
\label{eq:generalformKn}
K_{n,N}(m_1,x_1,m_2,x_2)\\
= \begin{cases}
\sum_{k=1}^n \phi_{j,m_1}(x_1)\psi_{j,m_2}(x_2), & \text{if } m_1 \geq m_2,\\
\sum_{k=1}^n \phi_{j,m_1}(x_1)\psi_{j,m_2}(x_2)- T_{m_1,m_2}(x_1,x_2), & \text{if } m_1 <m_2,
\end{cases}
\end{multline}
with reference measure $\nu=\sum_{m=1}^N \delta_m\times \mu_m$. Here  $T_{m_1,m_2}(x_1,x_2)$ stands for the integral kernel for the integral operator $\mathcal T_{m_1}\mathcal T_{m_1+1}\cdots \mathcal T_{m_2-1}$ For more details and background on determinantal point process for extended kernels we refer to \cite{BorDet,Jdet,L,Sosh}.  

For a determinantal point process, all information is in one way or the other encoded in the correlation kernel. For that reason, a common approach to various results for determinantal point processes goes by an analysis of the kernel and its properties. However, in this paper we use an alternative approach for analyzing the global fluctuations.  We   follow the idea of \cite{BD} and assume that the biorthogonal families admit a recurrence.

\begin{assumption}\label{assumption}  We assume that $\{\phi_{j,1}\}_{j=1}^N$ and  $\{\psi_{j,N}\}_{j=1}^N$   can be extended to families  $\{\phi_{j,1}\}_{j=1}^\infty$ and $\{\psi_{j,N}\}_{j=1}^\infty$  such that   the functions defined by$$ \phi_{j,m}= \mathcal T_{m-1} \cdots \mathcal T_1 \phi_{1,m}, \qquad \psi_{j,m}= \mathcal T_m^* \cdots   \mathcal T_{N-1}^*  \psi_{j,N},$$ for $m=1, \ldots, N$,  have the properties
\begin{enumerate}
\item Biorthogonality: $$\int \phi_{j,N}(x) \mathcal T_{N-1} \cdots \mathcal T_1 \psi_{k,1}(x) {\rm d}\mu_N(x) = \delta_{jk},$$
for $j,k=1,2, \ldots$ 
\item  Recursion:   for each $m\in\{1, \ldots, N\}$ there is a  \emph{banded} matrix $\mathbb J_m$ such that 
\begin{equation}\label{eq:jacobirec} x \begin{pmatrix} \phi_{0,m} (x)\\  \phi_{1,m} (x)\\ \phi_{2,m} (x)\\ \vdots \end{pmatrix}= \mathbb J_{m} \begin{pmatrix} \phi_{0,m} (x)\\  \phi_{1,m} (x)\\ \phi_{2,m} (x)\\ \vdots \end{pmatrix}.
\end{equation}
\end{enumerate}
We will denote the bandwidth by $\rho$, i.e. 
$$(\mathbb J_m)_{k,l}= 0 , \text{     if   } |k-l| \geq \rho.$$ 
We will assume that $\rho$ does not depend on $m$ or $n$ (but $J_m$ may also depend on $n$).
\end{assumption} 
 
  Note that \eqref{eq:jacobirec} and the banded structure of $\mathbb J_m$ means that the functions $\phi_{j,m}$ satisfy finite term recurrence relation
$$x\phi _{k,m}(x)=\sum_{|j|\leq \rho} \left(\mathbb J_{m}\right)_{k,k+j} \phi_{k+j,m}(x).$$ The number of terms in the recurrence equals the number of non-trivial diagonals, which is at most $2 \rho+1$.  Also note that by biorthogonality we have    $$(\mathbb J_m)_{kl}= \int x \phi_{k,m}(x)\psi_{l,m}(x) {\rm d} \mu_m(x).$$
Finally, we mention that  although the arguments in this paper can likely by adjusted to allow for a varying bandwidth (but keeping the bandwidth uniformly bounded in $m,n,N$), in the relevant examples we always have a fixed bandwidth independent of $n,m$. 

An important special class of examples that we will study in this paper is when the biorthogonal families are related to  orthogonal polynomials.  If  each $\mu_m$ has finite moments, then we can define $p_{j,m}$ as the polynomial of degree $j$ with positive leading coefficient such that 
$$\int p_{j,m} (x) p_{k,m}(x) {\rm d} \mu_m(x) = \delta_{jk}.$$
As we will see in Section \ref{sec:example} in many examples we end up with  a measure \eqref{eq:productmeasures} with $\phi_{j,1}=c_{j,1} p_{j-1,N}$, $\psi_{j,N}=p_{j-1,N}/c_{j,N}$ and 
$$T_m(x,y) =\sum_{j=1}^\infty \frac{c_{j,m+1}}{c_{j,m}} p_{j-1,m}(x)p_{j-1,m+1}(y),$$
for some coefficients $c_{j,m} \neq 0$.  In that case, we find
\begin{equation}\label{eq:opchoice}
\phi_{j,m}(x)= c_{j,m} p_{j-1,m}, \qquad \text{and} \qquad \psi_{j,m}(x)= \frac{1}{c_{j,m}} p_{j-1,m} (x).
\end{equation}
Such examples satisfy Assumption \ref{assumption}. Indeed,  it is classical that the orthogonal polynomials satisfy a three-term recurrence
$$xp_{k,m}(x)=a_{k+1,m} p_{k+1,m}(x)+b_{k,m} p_{k,m}(x) + a_{k,m} p_{k-1,m}(x),$$ for coefficients $a_{k,m} >0$ and $b_{k,m} \in \bbR$.   This recurrence can be written as
\begin{equation} x \begin{pmatrix} p_{0,m} (x)\\  p_{1,m} (x)\\ p_{2,m} (x)\\ \vdots \end{pmatrix}= \mathcal  J_{m} \begin{pmatrix} p_{0,m} (x)\\  p_{1,m} (x)\\ p_{2,m} (x)\\ \vdots \end{pmatrix}.
\end{equation}
 The matrix $\mathcal J_{m}$ is then a symmetric tridiagonal matrix containing the recurrence coefficients, also called the Jacobi matrix/operator associated to $\mu_{m}$. It is not hard to check that in this situation, Assumption \ref{assumption} is satisfied with 
 \begin{equation} \label{eq:fromJacobitoJ}
 (\mathbb J_m)_{kl} =  \frac{c_{k,m}}{c_{l,m}} (\mathcal J_m)_{kl}.
 \end{equation}
 
 We stress that such a recurrence relation is not special for orthogonal polynomials only, but appear often in the presence of an  orthogonality condition. For instance, multiple orthogonal polynomial ensembles \cite{Kuijlaars}  also appear in the context of non-colliding processes, such  as external source models. Multiple orthogonal polynomials  satisfy recurrence relations  involving more terms then only three, see e.g. \cite{CV}.

Finally, note that in the example in the Introduction, it was needed to rescale the process as $n \to \infty$. This rescaling means that all the parameters will depend on $n$. \emph{Therefore  we will allow $\mu_m$,  $\phi_{j,m}$ and $\psi_{j,m}$ to depend on $n$, but for clarity reasons we will suppress this dependence in the notation.}

\subsection{Fluctuations of linear statistics for fixed $N$}

We will study linear statistics for the determinantal point process. That is, for a function $f:\{0,1,\ldots,N\} \times  \bbR \to \mathbb R$ we define
$$X_n(f)= \sum_{m=0}^N \sum_{j=1}^n f(m,x_{j,m})$$
where $\{(m,x_{j,m})\}_{j=1,m=1}^{n,N}$ are sampled from a probability measure of the form  \eqref{eq:productmeasures} satisfying Assumption \ref{assumption}. As we will see, the linear statistics $X_n(f)$ admit  a  useful representation  in terms of  the recurrence matrices $\mathbb J_m$.  In fact, one of the main points of the paper is that for studying linear statistics, this representation appears to be more convenient than the representation in terms of the correlation kernel $K_{n,N}$. In many interesting examples the asymptotic study of the relevant parts of $\mathbb J_m$  are trivial, whereas the  asymptotic analysis (in all the relevant regimes) of the kernel is usually  tedious.

The central observation of this paper, is that the fluctuation of the linear statistic depend strongly on just small part of the operators $\mathbb J_m$, More precisely, the coefficients $(\mathbb J_m)_{n+k,n+l}$ for fixed $k,l$ are dominant in the fluctuations for large $n$.  The other coefficients only play a minor role.  Two different models for which these coefficients behave similarly, have the same fluctuations. This is the content of the first main result. 
\begin{theorem}\label{th:main0}
Consider two probability measures of the form \eqref{eq:productmeasures} satisfying Assumption \ref{assumption} and denote expectations with $\mathbb E$ and $\tilde {\mathbb E}$ and the banded matrices by $\mathbb J_m$ and $\tilde {\mathbb J}_m$.
Assume that  for any $k, l \in \bbZ$  the sequence 
$\{(\tilde {\mathbb J}_{m})_{n+k,n+l}\}_n$ is bounded and
\begin{equation}
\label{eq:cond1main0}\lim_{n\to \infty} \left(\left(\mathbb J_{m}\right)_{n+k,n+l}-\left(\tilde { \mathbb J}_{m}\right)_{n+k,n+l}\right) = 0,
\end{equation}
Then for any function $f:  \{0,1,\ldots,N\}\times \bbR \to \mathbb R$  such that $f(m,x)$ is a polynomial in $x$, we have for $k \in \bbN$, 
\begin{equation}\label{eq:concthm0}
\EE\left[\left( X_n(f)-\EE X_n(f)\right)^k \right]-\widetilde \EE\left[\left( X_n(f)-\widetilde \EE X_n(f)\right)^k \right]\to 0,
\end{equation}
as $n \to \infty$. 
\end{theorem}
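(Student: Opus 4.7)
The plan is to express the centered moments of $X_n(f)$ via cumulants, rewrite each cumulant as a trace of products of banded matrices built from the recurrence matrices $\mathbb J_m$ truncated by the projection $P_n$ onto the first $n$ coordinates, and then use bandedness to localize the resulting sums to a bounded neighbourhood of index $n$, so that the hypothesis~\eqref{eq:cond1main0} directly yields the conclusion.

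First I would start from the Fredholm determinant identity implied by the determinantal structure~\eqref{eq:detstruct}, namely
\[
 \log \EE \bigl[ e^{t X_n(f)} \bigr] = \Tr \log \bigl( I + (e^{tf} - 1) \mathcal K_{n,N} \bigr),
\]
where $\mathcal K_{n,N}$ denotes the integral operator on $\bigoplus_{m=1}^N \mathbb L_2(\mu_m)$ with kernel $K_{n,N}$ from~\eqref{eq:generalformKn}. Expanding both the logarithm and $e^{tf}-1$ as power series, the $k$th cumulant $\kappa_k(X_n(f))$ becomes a finite linear combination of traces
\[
 \Tr \bigl( M_{f^{a_1}} \mathcal K_{n,N} \cdots M_{f^{a_p}} \mathcal K_{n,N} \bigr), \qquad a_1 + \cdots + a_p = k,
\]
where $M_g$ denotes multiplication by $g(m,\cdot)$ at time $m$. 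Since the centered moments in~\eqref{eq:concthm0} are polynomial in the cumulants of order at least $2$, it suffices to show that each such trace has the same large-$n$ limit for the two models.

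The next step is to pass to the $\mathbb J_m$-picture. Biorthogonality~\eqref{eq:biophimpsim} and iteration of the recurrence~\eqref{eq:jacobirec} yield the identity $\int g(x)\phi_{j,m}(x)\psi_{k,m}(x)\,{\rm d}\mu_m(x) = (g(\mathbb J_m))_{j,k}$ for any polynomial $g$. Substituting the biorthogonal expansion $K_{n,N}(m,x;m',x') = \sum_{j\leq n}\phi_{j,m}(x)\psi_{j,m'}(x') - T_{m,m'}(x,x')\mathbf 1_{m<m'}$ into each trace above, integrating out the spatial variables with that identity, and using the intertwining relations $\mathcal T_m \phi_{j,m} = \phi_{j,m+1}$ and $\mathcal T_m^* \psi_{j,m+1}=\psi_{j,m}$ to absorb the transition kernels, each trace should reduce to a trace on $\ell^2(\mathbb N)$ of a product involving the banded matrices $f^{a_i}(m_i, \mathbb J_{m_i})$ interleaved with $P_n$. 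Bandedness then forces every index appearing in a nonzero summand to lie within a constant $C_{k,f,N}$ of $n$, so that the resulting trace is a polynomial in the finitely many entries $(\mathbb J_m)_{n+i,n+j}$ with $|i|,|j|\leq C_{k,f,N}$. The boundedness of $(\tilde{\mathbb J}_m)_{n+i,n+j}$ combined with~\eqref{eq:cond1main0} then yields~\eqref{eq:concthm0} by continuity in these finitely many variables.

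The main technical obstacle is to carry out the $\mathbb J_m$-representation in the multi-time case $N>1$. For $N=1$ the reduction is a direct consequence of biorthogonality. For $N>1$ the transition kernels $T_{m,m'}$ in the off-diagonal blocks of $\mathcal K_{n,N}$ must combine with the biorthogonal sums in such a way that the contributions from indices $j>n$ cancel and only a finite-rank operator localized near index $n$ survives; making this cancellation explicit, and keeping bookkeeping of the interleaved transition factors under a general product of $p$ copies of $M_{f^{a_i}} \mathcal K_{n,N}$, is precisely the content of the recurrence and Fredholm-determinant analyses of Sections~\ref{sec:rec} and~\ref{sec:fred}.
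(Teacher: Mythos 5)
Your overall architecture --- centered moments via cumulants, cumulants as traces of products of banded matrices interleaved with the projection $P_n$, localization of those traces to entries $(\mathbb J_m)_{n+i,n+j}$ with $|i|,|j|$ bounded, and then continuity in these finitely many bounded variables --- is exactly the skeleton of the paper's argument, and the final localization-plus-continuity step is sound (the paper makes it quantitative in Proposition~\ref{prop:comparison} and Corollary~\ref{cor:comp}, where the boundedness hypothesis on $\tilde{\mathbb J}_m$ is used to control the prefactor $(\max\sum_j\|\cdot\|_\infty)^{k-1}$).

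The genuine gap is in your route to the trace representation. You start from the Fredholm determinant $\det(I+(e^{tf}-1)\mathcal K_{n,N})$ built on the correlation kernel and claim that, after substituting \eqref{eq:generalformKn}, the contributions of the indices $j>n$ in the biorthogonal sums cancel against the transition kernels $T_{m,m'}$ so that each trace $\Tr(M_{f^{a_1}}\mathcal K_{n,N}\cdots M_{f^{a_p}}\mathcal K_{n,N})$ reduces to an $\ell^2(\bbN)$-trace of banded matrices interleaved with $P_n$. For $N>1$ this cancellation is precisely the hard step, you do not carry it out, and your assertion that it ``is precisely the content'' of Sections~\ref{sec:rec} and~\ref{sec:fred} is not accurate: the paper never performs this kernel computation. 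Instead, Lemma~\ref{lem:moment} derives the identity $\EE[e^{\lambda X_n(f)}]=\det((e^{\lambda f(1,\mathbb J_1)}\cdots e^{\lambda f(N,\mathbb J_N)})_{i,j})_{i,j=1}^n$ directly from the product measure \eqref{eq:productmeasures} by iterating Andrei\'ef's identity and the Cauchy--Binet formula, bypassing the Eynard--Mehta kernel entirely. This matters for two reasons. First, the precise interleaving structure that comes out --- $P_n$ inserted only between time-ordered blocks $A_1^{r_{s,1}}\cdots A_N^{r_{s,N}}$, as in Lemma~\ref{lem:cumulantexpp} --- is what the telescoping argument of Proposition~\ref{prop:comparison} is built on; a generic product of $M_{f^{a_i}}\mathcal K_{n,N}$ factors does not obviously produce it, and you would have to prove that it does. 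Second, since $f(m,\cdot)$ is an unbounded polynomial, $\det(I+(e^{tf}-1)\mathcal K_{n,N})$ and its log-expansion require the same care with formal power series and cutoffs ($R_M$, $\mathbb J_{m,S}$) that the paper spends effort on; your proposal does not address convergence of these expansions. To close the gap you would either have to execute the multi-time cancellation explicitly (essentially re-deriving the Eynard--Mehta theorem in reverse), or abandon the kernel and follow the product-measure route of Lemma~\ref{lem:moment}.
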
 
The proof of this theorem will be given in Section \ref{sec:proofs}.

This result is a genuine universality result,  in the sense that  there is no specification of a limit. If two  families of models have the same asymptotic behavior of the recurrence matrices, then the fluctuations are also the same. As a consequence,  after computing the limiting behavior for a particular example, we obtain the same result for all comparable processes. 

The natural question is then what the typical behaviors are that one observes in the models of interest. As we will illustrate in Section \ref{sec:GFF}, one important example is when the recurrence coefficients have a limiting value or, more precisely, the matrices $\mathcal J_m$ have limits along the diagonals. The fluctuations in that case are described by the following theorem.

\begin{theorem}\label{th:main1}
Consider a probability measure of the form \eqref{eq:productmeasures} satisfying  Assumption \ref{assumption}. Assume that there exists $a_j^{(m)} \in \bbR$ such that  \begin{equation}\label{eq:mainthcond1}
\lim_{n\to \infty} \left(\mathbb J_{m}\right)_{n+k,n+l} = a_{k-l}^{(m)},
\end{equation}
for $k,l \in \bbZ$ and $m=1, \ldots, N$.
 Then for any function $f:\{1,\ldots,N\} \times \bbR \to \mathbb R$  such that $f(m,x)$ is a polynomial in $x$ , we have
\begin{multline}\label{eq:CLTfinite}
X_n(f)  - \bbE X_n(f)  \to\\
 N\left(0,2 \sum_{m_1=1}^N\sum_{m_2=m_1+1}^N  \sum_{k=1}^\infty k  f^{(m_1)}_k f^{(m_2)}_{-k}+\sum_{m=1}^N \sum_{k=1}^\infty k f^{(m)}_kf^{(m)}_{-k}\right),
\end{multline}
where 
\begin{equation}\label{eq:defpk}
 f_k^{(m)}= \frac{1}{2 \pi {\rm i}} \oint_{|z|=1} f(m,a^{(m)}(z)) \frac{{\rm d} z}{z^{k+1}}.
 \end{equation}
 and $a^{(m)}(z)= \sum_j a_j^{(m)} z^j$.
\end{theorem}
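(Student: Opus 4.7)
The plan is to combine the universality principle of Theorem~\ref{th:main0} with an explicit asymptotic computation in a reference Toeplitz model, and then conclude via the method of moments. First, by Theorem~\ref{th:main0}, the centered polynomial moments of $X_n(f)$ depend only on the diagonal limits of the banded matrices $\mathbb J_m$. So it suffices to compute these moments in a single convenient model in which $\mathbb J_m$ is replaced by the bi-infinite banded Laurent operator on $\ell^2(\bbZ)$ with symbol $a^{(m)}(z)=\sum_j a^{(m)}_j z^j$; equivalently, one applies the algebraic cumulant representation of Section~\ref{sec:rec} directly to these translation-invariant matrices.

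Next, I would use the moment/cumulant representation from Section~\ref{sec:rec} to express the $k$-th connected moment of $X_n(f)$ as a (cyclic) trace of a product built from the projections $P_n$ onto the first $n$ basis vectors, the operators $f(m_j, \mathbb J_{m_j})$, and transition factors between the different times $m_j$. For the Toeplitz model, each $\mathbb J_m$ is unitarily equivalent to multiplication by $a^{(m)}({\rm e}^{{\rm i}\theta})$ on $L^2$ of the unit circle; hence $f(m, \mathbb J_m)$ becomes multiplication by $f(m, a^{(m)}({\rm e}^{{\rm i}\theta}))$, whose Fourier coefficients are exactly the $f_k^{(m)}$ of \eqref{eq:defpk}. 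After shifting indices by $n$, the projection $P_n$ corresponds to the Hardy projection onto one half of the Fourier modes, and the commutator $[P_n, f(m, \mathbb J_m)]$ becomes a Hankel-type operator whose entries are explicit combinations of the coefficients $f^{(m)}_k$.

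With these identifications, the second cumulant reduces to a Szeg\H{o}/Helton--Howe-type identity: for symbols $g_1, g_2$ on the unit circle with multiplication operators $M_{g_j}$,
\[
\lim_{n\to\infty}\Tr\bigl([P_n, M_{g_1}]\,[P_n, M_{g_2}]\bigr)\;=\;\sum_{k\in\bbZ} |k|\,\hat g_1(k)\,\hat g_2(-k).
\]
Splitting the sum over $m_1, m_2$ into the diagonal part ($m_1=m_2$) and the strictly ordered part ($m_1<m_2$, where the factor $2$ arises from symmetrizing the time-ordered multi-time representation naturally carried by \eqref{eq:generalformKn}) reproduces precisely the variance on the right-hand side of \eqref{eq:CLTfinite}.

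The main obstacle is to show that all higher cumulants $\kappa_k$ with $k\ge 3$ vanish as $n\to\infty$. Each factor $[P_n, f(m,\mathbb J_m)]$ is a Hankel operator of Hilbert--Schmidt norm bounded uniformly in $n$, so the cyclic trace of a product of $k\ge 3$ such operators is bounded but not obviously vanishing. One expects this vanishing from the fact that these Hankel operators are effectively concentrated at the ``edge'' and the cyclic combinatorics produces Toeplitz/Hankel cancellations that kill the leading order; making this quantitative, uniformly in the polynomial degree of $f$ and across the times $m_j$, is the technically delicate step. Once established, the method of cumulants delivers the Gaussian convergence in \eqref{eq:CLTfinite}.
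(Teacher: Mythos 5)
Your overall architecture --- reduce to a reference Toeplitz model via the comparison principle and then conclude by the method of cumulants --- is the same as the paper's, and your variance computation via Hankel/commutator traces is essentially sound (the paper's Corollary~\ref{cor:comp} is the precise operator-level version of the reduction you want; note that Theorem~\ref{th:main0} as stated compares two \emph{point processes}, so to substitute an arbitrary Laurent/Toeplitz operator that need not arise from a measure of the form \eqref{eq:productmeasures} you must work with the purely algebraic quantities $C_k^{(n)}(A_1,\ldots,A_N)$ of Section~\ref{sec:fred}, which is exactly what the paper does).

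The genuine gap is the one you flag yourself: the vanishing of the cumulants of order $k\ge 3$. This is the heart of the CLT and cannot be left as ``one expects cancellations from the cyclic combinatorics'' --- a product of $k\ge 3$ uniformly Hilbert--Schmidt Hankel factors has a trace that is bounded but has no a priori reason to vanish, and chasing the cancellations term by term in the expansion of Lemma~\ref{lem:cumulantexpp} is not a viable route. The paper closes this gap with Ehrhardt's identity (Lemma~\ref{prop:ehrhardt}): after conjugating by ${\rm e}^{-T(s_+)}$ and ${\rm e}^{-T(s_-)}$ with $s=\sum_m a(m,\cdot)$ (which absorbs the projections $P_n$ because $T(s_+)$ is lower and $T(s_-)$ is upper triangular), the operators $-T(s_+),\,T(a(1)),\ldots,T(a(N)),\,-T(s_-)$ sum to zero and have trace-class pairwise commutators, so the full Fredholm determinant equals $\exp\bigl(\tfrac12\sum_{i<j}\Tr[A_i,A_j]\bigr)$ \emph{exactly}. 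Since each $A_j$ is linear in $\lambda$, the logarithm of the generating function is exactly quadratic in $\lambda$ in the limit, so all higher cumulants vanish automatically, and the same formula delivers the variance (including the asymmetric $2\sum_{m_1<m_2}+\sum_{m}$ splitting, which your appeal to a symmetrized Helton--Howe identity and to \eqref{eq:generalformKn} does not pin down --- watch the factor of $2$ and the sign: $\Tr\bigl([P_+,M_{g_1}][P_+,M_{g_2}]\bigr)=-\sum_{k\neq 0}|k|\,\hat g_1(k)\hat g_2(-k)$, and the diagonal terms $m_1=m_2$ enter with weight $\tfrac12$ relative to that). Without Ehrhardt's identity or an equivalent exact determinant formula, your argument does not yield asymptotic Gaussianity.
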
 
\begin{remark}
Note that each $\mathbb J_m$ is banded and hence only finitely many $a_j^{(m)}$ are non-zero. In particular, each $a^{(m)}(z)$ is a Laurent polynomial. 
\end{remark}
The proof of this theorem will be given in Section \ref{sec:proofs}.

The latter result in particular applies when we are in the situation of orthogonal polynomials \eqref{eq:opchoice}. In that case, the following corollary will be particularly useful to us. 
\begin{corollary}\label{cor:th:main1}
Consider a probability measure of the form \eqref{eq:productmeasures} with  $\phi_{j,m}$ and $ \psi_{j,m}$ as in \eqref{eq:opchoice}.
Assume that for $k, \ell \in \mathbb Z$ with $|k-\ell | \leq 1$ we have \begin{equation}\label{eq:cormainthcond1}
\lim_{n\to \infty} \left(\mathcal J_{m}\right)_{n+k,n+l} = a_{|k-l|}^{(m)},
\end{equation}
for some $a_0^{(m)} \in \mathbb R$ and $a_1^{(m)} >0$ and 
\begin{equation}
\label{eq:cormainthcond2}
\lim_{n\to \infty} \frac{c_{n+k,m}}{c_{n+\ell,m}}= {\rm e}^{\tau_m(\ell-k)}.
\end{equation}
for some  $\tau_1 < \tau_2<\ldots < \tau_m$.
 Then for any function $f:\{1,\ldots,N\} \times \bbR \to \mathbb R$   such that $f(m,x)$ is a polynomial in $x$  we have
\begin{multline}\label{eq:variancesymmetric}
X_n(f)  - \bbE X_n(f)  \to
 N\left(0,\sum_{m_1=1}^N\sum_{m_2=1}^N  \sum_{k=1}^\infty k  {\rm e}^{-|\tau_{m_1}-\tau_{m_2}|k}\hat f^{(m_1)}_k \hat f^{(m_2)}_{k}\right),
\end{multline}
where 
\begin{equation}\label{eq:cordefpk}
\hat  f_k^{(m)}= \frac{1}{\pi } \int_0^\pi  f(m,a_0^{(m)}+2 a_1^{(m)} \cos \theta) \cos k \theta {\rm d} \theta.
 \end{equation}
\end{corollary}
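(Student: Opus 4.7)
The plan is to apply Theorem~\ref{th:main1} to the $\mathbb{J}_m$ obtained from the orthogonal-polynomial setup via \eqref{eq:fromJacobitoJ}. Since $\mathcal{J}_m$ is tridiagonal, \eqref{eq:fromJacobitoJ} immediately forces $(\mathbb{J}_m)_{n+k,n+\ell}=0$ for $|k-\ell|\geq 2$, so the limiting symbol can have at most three nonzero coefficients. Those three limits are read off from \eqref{eq:cormainthcond1} and \eqref{eq:cormainthcond2}: on the main diagonal the $c$-ratio is trivially $1$, yielding $a_0^{(m)}$; on the sub- and super-diagonals the $c$-ratio contributes a factor $e^{\mp\tau_m}$, yielding $e^{\mp\tau_m}a_1^{(m)}$. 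Hence the hypothesis~\eqref{eq:mainthcond1} of Theorem~\ref{th:main1} holds with symbol
\[
a^{(m)}(z)=a_0^{(m)}+a_1^{(m)}\bigl(e^{\tau_m}z^{-1}+e^{-\tau_m}z\bigr),
\]
and boundedness of each individual sequence follows automatically from the existence of finite limits.

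The remaining task is to convert the variance~\eqref{eq:CLTfinite} into the form~\eqref{eq:variancesymmetric}. The key tool is a contour deformation: since $f(m,\cdot)$ is polynomial, $f(m,a^{(m)}(z))$ is a Laurent polynomial in $z$, so the circle $|z|=1$ in \eqref{eq:defpk} may be replaced by any other circle centered at the origin without changing the integral. The correct choice is $|z|=e^{\tau_m}$, which is precisely the radius at which the two off-diagonal exponentials combine into a single real cosine: parametrizing $z=e^{\tau_m+i\theta}$ one obtains $a^{(m)}(z)=a_0^{(m)}+2a_1^{(m)}\cos\theta$. Combined with the evenness in $\theta$ of the resulting integrand, a direct computation then gives $f_{\pm k}^{(m)}=e^{\mp k\tau_m}\hat f_k^{(m)}$ for $k\geq 1$, since the $\sin(k\theta)$ part of $e^{-ik\theta}$ integrates to zero against an even function.

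Substituting these identities back into~\eqref{eq:CLTfinite}, the diagonal terms contribute $f_k^{(m)}f_{-k}^{(m)}=(\hat f_k^{(m)})^2$, while the cross terms become $f_k^{(m_1)}f_{-k}^{(m_2)}=e^{-|\tau_{m_1}-\tau_{m_2}|k}\hat f_k^{(m_1)}\hat f_k^{(m_2)}$ in view of the ordering $\tau_1<\cdots<\tau_N$. The doubled ordered sum $2\sum_{m_1<m_2}$ together with the diagonal $\sum_m$ then repackages as the unrestricted symmetric double sum $\sum_{m_1,m_2=1}^{N}$ in~\eqref{eq:variancesymmetric}. The only delicate point in the whole plan is the choice of the deformation radius: it must be tuned to $\tau_m$ so that $a^{(m)}(z)$ restricts to a real function of the form $a_0^{(m)}+2a_1^{(m)}\cos\theta$ on the new contour; once this choice is made, the remainder is symbolic bookkeeping.
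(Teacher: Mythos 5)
Your route is the same as the paper's: read the limiting symbol off from \eqref{eq:fromJacobitoJ}, \eqref{eq:cormainthcond1} and \eqref{eq:cormainthcond2}, apply Theorem \ref{th:main1}, obtain $f_{\pm k}^{(m)}={\rm e}^{\mp k\tau_m}\hat f_k^{(m)}$ by moving the contour to $|z|={\rm e}^{\tau_m}$ (this is exactly the ``rescaling and symmetry argument'' the paper invokes), and then symmetrize the double sum. The symbol you write down and the contour-deformation computation are correct.

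One piece of the bookkeeping does not close as written. With $f_k^{(m_1)}={\rm e}^{-k\tau_{m_1}}\hat f_k^{(m_1)}$ and $f_{-k}^{(m_2)}={\rm e}^{+k\tau_{m_2}}\hat f_k^{(m_2)}$, the cross term as displayed in Theorem \ref{th:main1}, namely $f_k^{(m_1)}f_{-k}^{(m_2)}$ summed over $m_1<m_2$, equals ${\rm e}^{+k(\tau_{m_2}-\tau_{m_1})}\hat f_k^{(m_1)}\hat f_k^{(m_2)}={\rm e}^{+|\tau_{m_1}-\tau_{m_2}|k}\hat f_k^{(m_1)}\hat f_k^{(m_2)}$, i.e.\ the exponent has the wrong sign for \eqref{eq:variancesymmetric}; your asserted identity $f_k^{(m_1)}f_{-k}^{(m_2)}={\rm e}^{-|\tau_{m_1}-\tau_{m_2}|k}\hat f_k^{(m_1)}\hat f_k^{(m_2)}$ is therefore false as stated. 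The decaying exponential comes from the combination $f_{-k}^{(m_1)}f_{k}^{(m_2)}$ (negative Fourier index attached to the \emph{earlier} layer), which is what the Toeplitz computation of $\Tr S_1$ in Section \ref{sec:fred} actually produces: there the positive Fourier coefficient sits on the later index, so the statements of Proposition \ref{prop:fredholmtoeplitzfixedn} and Theorem \ref{th:main1} have the roles of $m_1$ and $m_2$ interchanged relative to their own proofs. Your write-up introduces a compensating sign slip at this step, so the final variance is correct, but you should either work from the corrected cross term $f_{-k}^{(m_1)}f_k^{(m_2)}$ or flag the index swap in Theorem \ref{th:main1} explicitly rather than forcing the exponent to come out negative.
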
 
\begin{proof}
This directly follows from Theorem \ref{th:main1}, the relation \eqref{eq:fromJacobitoJ} and a rewriting of the limiting variance. For the latter, note that by a rescaling and a symmetry argument \eqref{eq:defpk} can be written as 
$$ f_k^{(m)}= {\rm e}^{-\tau_m k} \hat f_k^{(m)}.$$
Moreover, by  $  \hat f_k^{(m)}=  \hat f_{-k}^{(m)}$ and the fact that $\tau_{\ell} < \tau_k$ for $\ell <k$, we can symmetrize the limiting variance in Theorem \ref{th:main1} to obtain the one in  \eqref{eq:variancesymmetric}.
\end{proof}
As we will see in Section \ref{sec:example},  Theorem  \ref{th:main1} and Corollary \ref{cor:th:main1} cover several interesting examples. 

In case $N=1$ the determinantal point process reduces to the definition of a biorthogonal ensemble \cite{BorDet}.  In this situation, the above results are already  proved  by the author and Breuer \cite{BD}. In that paper, the approach using recurrence matrices was used for the first time and later used again in a mesoscopic analysis for orthogonal polynomials ensembles \cite{BDmes}. The results in \cite{BD} are a generalization of various earlier works in the determinantal setting and there is a vast amount of references on the subject. We only single out the influential work of   Johansson \cite{Jduke} on Unitary Ensembles (and  extensions to general $\beta$)  and refer to  \cite{BD} for further references.  However, much less is known in the case of $N>1$. To the best of the author's knowledge, the statement above are the first general results for multi-time or multi-layer linear statistics for determinantal point processes.

\begin{remark}
The conditions in both Theorem \ref{th:main0} and \ref{th:main1} can be relaxed. In fact, we only need the limits \eqref{eq:cond1main0}  along a subsequence $\{n_k\}_k$ to conclude \eqref{eq:concthm0} along that same subsequence. Similarly, for the limits in Theorem \ref{th:main1} and Corollary \ref{cor:th:main1}. For the case  $N=1$ and $\mathcal J$ the Jacobi operator associated with the orthogonal polynomials corresponding to the measure $\mu$, this relates the study of possible limit theorems  for the linear statistic to the study of right limits of the Jacobi operator.  For the interested reader we refer to the discussion in \cite{BD}.   However, in the present setup this generality seems less relevant. 
\end{remark}
\begin{remark}
The conditions in  both Theorem \ref{th:main0} and \ref{th:main1} are not sufficient to guarantee that a limit shape exists. That is, we do not know (nor need) the limit of $\frac1n \EE X_n(f)$. 
\end{remark}

\begin{remark}\label{rem:sym}
In Corollary \ref{cor:th:main1} it is easy to see that  the variance is positive. 
 In fact, in that case the variance can also be written in a different form that will be useful to us.  We recall the standard integral 
\begin{equation}\label{eq:fourierpoisson}
{\rm e}^{-k|\tau |} =\frac{1}{\pi}  \int_{-\infty}^{\infty} \frac{k}{k^2+\omega^2} {\rm e}^{-{\rm i} \omega \tau} {\rm d}\tau.
\end{equation}
By inserting this back into \eqref{eq:variancesymmetric} and a simple reorganization we see that the limiting variance can be written as 
\begin{equation}\label{eq:variancesymmetricpositivr}
\frac{1}{\pi} \sum_{k=1}^\infty \int_{-\infty}^\infty\left|\sum_{m=1}^N {\rm e}^{-{\rm i} \tau_m \omega} k f_k^{(m)}\right|^2\frac{{\rm d} \omega}{k^2+\omega^2}.
\end{equation}
This will be of use later on when we explain the connection of the above results  with the Gaussian Free Field. 

 In the general case, the limiting variance is of course also positive, but this is not evident from the expression due to the lack of symmetry. This feature is already present in the $N=1$ case, as discussed in \cite{BD}. 
\end{remark}

In the situation of Corollary \ref{cor:th:main1}  we can formulate natural conditions that allow us to extend Theorem  \ref{th:main1}  so that it holds for more general functions~$f$. In that case we will prove that the variance is continuous with respect to the $C^1$ norm. Hence we can try to extend the theorem to $C^1$ functions by polynomial approximation. For such an approximation it is convenient to work on a compact set. 
\begin{theorem}\label{th:extend}
Suppose all the conditions in Corollary \ref{cor:th:main1} hold and in addition there exists a compact set $E \subset \bbR$ such that either\\
(1) all supports $S(\mu_m^{(n)}) \subset E$ for $n \in \bbN$ and $m=1, \ldots, N$,\\
 or, more generally, \\
 (2) for every $k \in \mathbb N$ and $m=1, \ldots, N$, we have 
$$\int_{\bbR \setminus E}  |x|^k K_{n,N}(m,x,m,x) {\rm d} \mu_m(x)=o(1/n),$$
as $n \to \infty$. \\
Then the conclusion of Corollary \ref{cor:th:main1} also holds for  any $f: \{1,\ldots,N\} \times \bbR \to \mathbb R$  such that for $m \in \{1,\ldots,N\}$ the map $x\mapsto f(m,x)$ is a $C^1$ function  that grows at most polynomially at $\pm \infty$.
\end{theorem}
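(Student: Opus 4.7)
The plan is to extend Corollary~\ref{cor:th:main1} from polynomials to $C^1$ test functions by an approximation argument, using characteristic functions to pass to the limit. Corollary~\ref{cor:th:main1} supplies the CLT for polynomial test functions; the task is to bridge to the larger class by density.

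\textbf{Truncation and polynomial approximation.} First, condition (1) is a special case of (2) (the integrals over $\bbR \setminus E$ vanish identically), so I will assume (2). For $h := f \mathbf{1}_{\bbR \setminus E}$, the Cauchy--Schwarz bound $X_n(h)^2 \leq nN \sum_{m,j} h(m,x_{j,m})^2$ gives
\begin{equation*}
\EE[X_n(h)^2] \leq nN \sum_m \int_{\bbR \setminus E} f(m,x)^2 \, K_{n,N}(m,x,m,x)\, d\mu_m(x) = o(1),
\end{equation*}
by the polynomial growth of $f$ and condition (2) applied with sufficiently large~$k$. Thus $X_n(f) - X_n(f\mathbf{1}_E) \to 0$ in $L^2$, and we reduce to the compact set $E$. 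On $E$, for each $\epsilon > 0$ and each $m$, I will approximate $f'(m,\cdot)$ uniformly by a polynomial via Weierstrass and integrate, producing polynomials $p_\epsilon^{(m)}$ with $\|f(m,\cdot)-p_\epsilon^{(m)}\|_{C^1(E)} < C\epsilon$. Integration by parts in \eqref{eq:cordefpk} shows $|\hat g_k^{(m)}| \leq C_m \|g'(m,\cdot)\|_{\infty,E}/k$, so the series \eqref{eq:variancesymmetric} converges absolutely and is continuous in the $C^1(E)$-norm; hence $\sigma(p_\epsilon)^2 \to \sigma(f)^2$ as $\epsilon \to 0$.

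\textbf{The main obstacle: uniform variance control.} The technical heart of the proof is the bound
\begin{equation*}
\limsup_{n\to\infty}\Var(X_n(f-p_\epsilon)) \leq C'\epsilon^2,
\end{equation*}
which is not a direct consequence of Corollary~\ref{cor:th:main1} because $f - p_\epsilon$ is not a polynomial. My plan is to revisit the recurrence-matrix representation of moments developed in Section~\ref{sec:rec} and underlying the proofs of Theorems~\ref{th:main0} and~\ref{th:main1}: the finite-$n$ variance of $X_n(g)$ for a bounded $C^1$ function $g$ on $E$ admits a representation in terms of entries $(\mathbb J_m)_{n+k,n+l}$ with bounded $|k|,|l|$ paired against Chebyshev-type coefficients of $g$, plus terms that vanish as $n \to \infty$. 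For polynomial $g$ this produces precisely $\sigma(g)^2$ in the limit; the same identity, combined with the $|\hat g_k^{(m)}| \lesssim 1/k$ estimate from Step 1, should give the $C^1$-continuous variance bound above. Applied to $g = f - p_\epsilon$ (whose tail is handled by the truncation of the previous paragraph) this yields the claimed estimate. I expect this to be the most delicate step, since the recurrence-matrix machinery in Section~\ref{sec:rec} is set up for polynomial $g$, so one must carefully verify that the trace/operator-theoretic bounds extend in a $g$-continuous way.

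\textbf{Conclusion via characteristic functions.} Setting $\phi_n(t) := \EE \exp(it(X_n(f) - \EE X_n(f)))$ and writing $\phi_n^{p_\epsilon}(t)$ for the analogous quantity with $p_\epsilon$, the elementary estimate
\begin{equation*}
|\phi_n(t) - \phi_n^{p_\epsilon}(t)| \leq |t|\sqrt{\Var(X_n(f - p_\epsilon))}
\end{equation*}
combined with Corollary~\ref{cor:th:main1} applied to $p_\epsilon$ and the continuity $\sigma(p_\epsilon)^2 \to \sigma(f)^2$ yields, upon sending first $n \to \infty$ and then $\epsilon \to 0$, that $\phi_n(t) \to e^{-t^2 \sigma(f)^2/2}$ for every $t \in \bbR$. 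L\'evy's continuity theorem then delivers the announced Central Limit Theorem.
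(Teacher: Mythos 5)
Your overall architecture — approximate $f$ in $C^1(E)$ by a polynomial $p_\epsilon$, control the limiting variance's $C^1$-continuity by integration by parts and Parseval, and close via the characteristic-function estimate $|\phi_n(t)-\phi_n^{p_\epsilon}(t)|\leq |t|\sqrt{\Var X_n(f-p_\epsilon)}$ — is exactly the paper's (your tail treatment via Cauchy--Schwarz on $\EE[X_n(f\mathbf{1}_{\bbR\setminus E})^2]$ is a legitimate variant of the paper's splitting of the variance double integral over $(\bbR\setminus E)\times E$ etc., though note $f\mathbf{1}_E$ is no longer $C^1$, so the truncation and the polynomial approximation must be combined more carefully than "reduce to $E$").

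However, the step you yourself identify as the technical heart — the uniform-in-$n$ bound $\Var X_n(g)\lesssim \sup_{x\in E}|\partial_x g|^2$ for $C^1$ functions $g$ — is left as a plan, and the plan points in the wrong direction. Extending the recurrence-matrix/trace machinery of Section \ref{sec:rec} to non-polynomial $g$ does not work as stated: Lemma \ref{lem:moment} hinges on $g(m,\mathbb J_m)$ being a well-defined banded matrix, which fails for general $C^1$ functions, and there is no obvious $g$-continuous extension of the cumulant expansion. The paper's actual mechanism is entirely different and elementary: in the orthogonal polynomial situation the kernel is symmetric, so the variance admits the exact representation
\begin{equation*}
\Var \sum_{j=1}^n g(x_j) \;=\; \frac12 \iint \bigl(g(x)-g(y)\bigr)^2 K_{n,N}(x,y)K_{n,N}(y,x)\,{\rm d}\mu(x){\rm d}\mu(y),
\end{equation*}
with a \emph{nonnegative} integrand; bounding $(g(x)-g(y))^2\leq \|g'\|_\infty^2 (x-y)^2$ and using the identity $\frac12\iint (x-y)^2 K_{n,N}(x,y)K_{n,N}(y,x)\,{\rm d}\mu\,{\rm d}\mu = \|[P_n,\mathcal J]\|_2^2$ reduces everything to the Hilbert--Schmidt norm of the commutator $[P_n,\mathcal J_m]$, which is bounded uniformly in $n$ because only the near-diagonal recurrence coefficients enter and these converge by hypothesis (this is also why the theorem is stated under the conditions of Corollary \ref{cor:th:main1} rather than in the general biorthogonal setting). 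Without this lemma — or some genuine substitute — your proof does not close, so as written there is a real gap at the decisive step.
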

The proof of this theorem will be given in Section \ref{sec:proofs}.

The conditions in the theorem are  rather mild. In case of unbounded supports,  one can often show  in the classical situations, by standard asymptotic methods such as classical steepest descent arguments or Riemann-Hilbert techniques,  that the second condition is satisfied with exponentially small terms at the right-hand side, instead of only~$o(1/n)$. 
\subsection{Varying $N_n$}
Motivated by the example of non-colliding Brownian bridges in the Introduction, the natural question rises whether we can allow $N_n$ to depend on $n$ and such that $N_n \to \infty$.  Indeed, in that example we wanted to view the discrete sum \eqref{eq:intro:discr}  as a Riemann sum. Hence we will now consider probability measures of the form \eqref{eq:productmeasures} with $N=N_n$ and keep in mind that in many examples we have $T_m=P_{t_{m+1}-t_m}$ for some transition probability function~$P_t$ and sampling times $t_m$.

We start with a partitioning 
$$\alpha=t_0^{(n)}<t_1^{(n)} < t_2^{(n)}< \ldots < t_N^{(n)}< t_{N+1}^{(n)}= \beta,$$
of an interval $I= [\alpha, \beta] \subset \bbR$ such that $$\sup_{m} (t_{m+1}^{(n)}-t_m^{(n)})\to 0,$$ as $n \to \infty$.   And then, for a function on $g:I \times \bbR\to \bbR$‚ we ask for the equivalent statement of Theorems \ref{th:main0}, \ref{th:main1} and \ref{th:extend} for the linear statistic 
\begin{equation}\label{eq:varyinglinstat}
Y_n(g)= \sum_{m=1}^{N_n} \frac{1}{t_{m+1}^{(n)}-t_m^{(n)}}  \sum_{j=1}^n g(t_m^{(n)},x_{j,m}).
\end{equation}
  The first result is that Theorem \ref{th:main0} continues to hold when the limits \eqref{eq:cond1main0}  hold uniformly in $m$.
\begin{theorem}\label{th:main0growing}
Let $\{N_n\}_n$ be a sequence of integers such that $N_n \to \infty$ as $n \to \infty$.  Consider two probability measures of the form \eqref{eq:productmeasures} with $N=N_n$ and satisfying Assumption \ref{assumption} and denote the banded matrices by $\mathbb J_m$ and $\tilde {\mathbb J}_m$ for $m=1, \ldots, N_n$.

Assume that  for any $k, l \in \bbZ$  the set 
$\{(\tilde{ \mathbb  J}_{m})_{n+k,n+l}\}_{m=1,n=1}^{N_n,\infty}$ is bounded and
\begin{equation}\label{eq:growingextendcondition1}
\lim_{n\to \infty} \sup_{m=1,\ldots,N_n} \left|\left(\mathbb J_{m}\right)_{n+k,n+l}-\left(\tilde{\mathbb  J}_{m}\right)_{n+k,n+l}\right| = 0.
\end{equation}
Then for any function $g(t,x)$ such that polynomial $g\mapsto g(t,x)$ is a polynomial  in $x$ we have, for $k \in \bbN$ and $Y_n(g)$ as in \eqref{eq:varyinglinstat}, 
$$\EE\left[\left( Y_n(g)-\EE Y_n(g)\right)^k \right]-\widetilde \EE\left[\left( Y_n(g)-\widetilde \EE Y_n(g)\right)^k \right]\to 0,$$
as $n \to \infty$. 
\end{theorem}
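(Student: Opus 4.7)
The plan is to reduce to the argument behind Theorem \ref{th:main0} by verifying that the estimates produced there are already uniform in the time index $m$, and then arranging the sum over the growing index set $\{1,\ldots,N_n\}$ so that the resulting bound remains finite.

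First, I would unpack the representation of $\mathbb E[(X_n(f) - \mathbb E X_n(f))^k]$ coming from the proof of Theorem \ref{th:main0}. Because $f(m,\mathbb J_m)$ is banded of bandwidth at most $(\deg_x f)\rho$ when $f(m,\cdot)$ is polynomial, each such central moment is a finite sum over tuples $(m_1,\ldots,m_k) \in \{1,\ldots,N\}^k$, and each summand depends only on entries of $\mathbb J_{m_j}$ in a window of bounded width around row/column $n$. For the statistic $Y_n(g)$, I set $f(m,x) := (t_{m+1}^{(n)}-t_m^{(n)})^{-1} g(t_m^{(n)}, x)$, which is still a polynomial in $x$ of fixed degree, so the same representation applies with this choice.

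Next, by telescoping across the two models, the difference of $k$-th central moments becomes a sum over the same tuples in which each summand contains at least one factor of the form $(\mathbb J_{m_j})_{n+i,n+\ell} - (\tilde{\mathbb J}_{m_j})_{n+i,n+\ell}$ with $|i|,|\ell| \leq K$ for some $K = K(\deg g, k, \rho)$ independent of $n$ and $m$. All remaining factors, namely entries of $\tilde{\mathbb J}_m$ near row/column $n$, are uniformly bounded in $m$ and $n$ by the first assumption of the theorem. Setting
$$\varepsilon_n := \sup_{m \leq N_n,\, |i|,|\ell| \leq K} \bigl|(\mathbb J_m)_{n+i,n+\ell}-(\tilde{\mathbb J}_m)_{n+i,n+\ell}\bigr|,$$
the uniform convergence \eqref{eq:growingextendcondition1} gives $\varepsilon_n \to 0$, and we can factor $\varepsilon_n$ out of the tuple sum.

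The main obstacle is then to show that what remains—the sum over $(m_1,\ldots,m_k)$ of an essentially bounded polynomial in $\tilde{\mathbb J}$-entries, multiplied by the combinatorial weights coming from the extended kernel \eqref{eq:generalformKn} and by $\prod_j (t_{m_j+1}^{(n)}-t_{m_j}^{(n)})^{-1}$—is uniformly bounded in $n$, despite $N_n \to \infty$. I expect this follows because the weights in \eqref{eq:generalformKn} carry compensating factors from the transition kernels $T_m$ that, when combined with $\prod_j (t_{m_j+1}^{(n)}-t_{m_j}^{(n)})^{-1}$, act as a Riemann-sum approximation of an integral over $[\alpha,\beta]^k$. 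Establishing this compensation carefully—identifying the factor of $(t_{m+1}^{(n)}-t_m^{(n)})$ hidden inside the kernels $T_{m_i,m_j}$ appearing in \eqref{eq:generalformKn}—is the technical heart of the argument.
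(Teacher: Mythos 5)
Your overall strategy --- reduce to the Theorem \ref{th:main0} machinery, telescope the difference of moments so that each term carries exactly one factor $(\mathbb J_{m_j})_{n+i,n+\ell}-(\tilde{\mathbb J}_{m_j})_{n+i,n+\ell}$ localized in a window of width $K(\deg g,k,\rho)$ around the $(n,n)$ entry, and then control the sum over the growing index set --- is the right one, and the first two steps match what the paper does (this is exactly the content of Proposition \ref{prop:comparison}, Lemma \ref{lem:continuitycumu} and Corollary \ref{cor:comp}, where the telescoping is carried out at the level of operator norms rather than entrywise tuples). The problem is in the final step, which you yourself flag as the technical heart: your proposed mechanism for bounding the sum over $m$ does not exist. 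In the recurrence-matrix representation (Lemma \ref{lem:moment}), the cumulants are functionals of the matrices $f(m,\mathbb J_{m,S})$ \emph{alone}; the transition operators $\mathcal T_m$ have already been absorbed into the construction of the $\phi_{j,m}$ and hence into the $\mathbb J_m$, and the kernel \eqref{eq:generalformKn} plays no role in this argument. There are therefore no ``compensating factors from the transition kernels'' to be found, and no hidden factor of $(t_{m+1}^{(n)}-t_m^{(n)})$ inside $T_{m_1,m_2}$: the entries of $\mathbb J_m$ near the diagonal are $O(1)$ uniformly in $m$ by hypothesis, so with your choice $f(m,x)=(t_{m+1}^{(n)}-t_m^{(n)})^{-1}g(t_m^{(n)},x)$ the quantity $\sum_{m=1}^{N_n}\|R_{n,2\rho(k+1)}f(m,\mathbb J_{m,S})R_{n,2\rho(k+1)}\|_\infty$ that enters Corollary \ref{cor:comp} diverges, and the argument collapses.

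The actual resolution is much simpler and purely a matter of normalization: the weight attached to level $m$ must be $(t_{m+1}^{(n)}-t_m^{(n)})$ multiplicatively, i.e.\ $f(m,x)=(t_{m+1}^{(n)}-t_m^{(n)})\,g(t_m^{(n)},x)$, consistent with the Riemann-sum discretization \eqref{eq:intro:discr} and with the fact that the limiting variance in Theorem \ref{th:main1growing} is a double integral over $I\times I$. (The $\tfrac{1}{t_{m+1}^{(n)}-t_m^{(n)}}$ printed in \eqref{eq:varyinglinstat} is misleading here; the proof in the paper uses the multiplicative weight.) With that choice one gets
$\sum_{m=1}^{N_n}\|R(f(m,\mathbb J_{m,S})-f(m,\tilde{\mathbb J}_{m,S}))R\|_\infty\le(\beta-\alpha)\sup_{m\le N_n}\|R(g(t_m^{(n)},\mathbb J_{m,S})-g(t_m^{(n)},\tilde{\mathbb J}_{m,S}))R\|_\infty\to 0$
by the uniform hypothesis \eqref{eq:growingextendcondition1}, and similarly $\sum_m\|Rf(m,\tilde{\mathbb J}_{m,S})R\|_\infty\le(\beta-\alpha)\sup_m\|\cdot\|_\infty$ stays bounded; substituting these into Corollary \ref{cor:comp} finishes the proof. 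So the gap is not in your decomposition but in the uniform bound you hoped would come from the kernel structure; it must instead come from the Riemann-sum weights, and as written your version of that bound fails.
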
 

Also Theorem \ref{th:main1} has an extension to the varying $N_n$ setting.
\begin{theorem}\label{th:main1growing}
Let $\{N_n\}_n$ be a sequence of integers such that $N_n \to \infty$ as $n \to \infty$ and   suppose that for each $n$ we have a probability measures of the form \eqref{eq:productmeasures} with $N=N_n$ and satisfying Assumption \ref{assumption}. 

Assume that  there exists piecewise continuous functions $a_k(t)$  on the interval $I$ such that,  for $k,l \in \bbZ$,
$$\lim_{n\to \infty} \sup_{m=1,\ldots, N_n} \left|(\mathbb  J_m)_{n+k,n+l}-a_{k-l}(t_m^{(n)}) \right| =0.$$
Then for any function $g:I\times \bbR \to \bbR$  such that  $t \mapsto g(t,x)$ is piecewise continuous and $x\mapsto g(t,x)$ is a polynomial, we have that $Y_n(g)$ as defined in \eqref{eq:varyinglinstat} satisfies
$$ Y_n(g)-\EE Y_n(g) \to N\left(0,{\sum_{k=1}^\infty 2\iint_{\alpha<t_1<t_2<\beta} k g_k(t_1) g_{-k}(t_2) {\rm d} t_1 {\rm d} t_2} \right)$$
as $n \to \infty$, with 
$$g_k(t)=\frac{1}{2\pi {\rm i}} \oint_{|z|=1} g\left(t,\sum_{\ell} a_\ell(t)z^\ell\right) \frac{{\rm d}z}{z^{k+1}}.$$ 
\end{theorem}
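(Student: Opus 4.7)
The plan is to recast $Y_n(g)$ as a linear statistic $X_n(f_n)$ for a test function $f_n$ that absorbs the partition weights, and then to run the cumulant expansion behind Theorem \ref{th:main1} with estimates that are uniform in the number of slices $N_n$. Setting (with the convention that reproduces the stated variance as a Riemann sum)
$$f_n(m,x) = (t^{(n)}_{m+1}-t^{(n)}_m)\, g(t^{(n)}_m,x),$$
we have $Y_n(g)=X_n(f_n)$, and $f_n(m,\cdot)$ is a polynomial of degree uniformly bounded in $n,m$. The proof of Theorem \ref{th:main1} expresses the $p$-th cumulant of $X_n(f_n)$ as a $p$-fold sum over $(m_1,\ldots,m_p)\in\{1,\ldots,N_n\}^p$ of traces of polynomials in the matrices $\mathbb J_{m_1},\ldots,\mathbb J_{m_p}$. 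The key input is that these traces are determined, up to $o(1)$, only by entries $(\mathbb J_m)_{n+i,n+j}$ for $|i|,|j|\leq C_p$, and the uniform convergence assumed in the hypothesis then makes these $o(1)$ errors uniform in the time indices.

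For the second cumulant, the application of the machinery behind Theorem \ref{th:main1} yields
$$\Var Y_n(g) = 2\sum_{1\leq m_1<m_2\leq N_n}\sum_{k=1}^\infty k\,(f_n)_k^{(m_1)}(f_n)_{-k}^{(m_2)} + \sum_{m=1}^{N_n}\sum_{k=1}^\infty k\,(f_n)_k^{(m)}(f_n)_{-k}^{(m)} + o(1),$$
where by linearity and \eqref{eq:defpk} one has $(f_n)_k^{(m)} = (t^{(n)}_{m+1}-t^{(n)}_m)\, g_k(t^{(n)}_m)$, and the inner sum over $k$ is actually finite because $g(t,\cdot)$ is polynomial and each $a^{(m)}(z)$ a Laurent polynomial of uniformly bounded degree. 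The off-diagonal double sum is then precisely a two-dimensional Riemann sum converging, by piecewise continuity of $g_k$ in $t$, to the stated double integral over $\alpha<t_1<t_2<\beta$. The diagonal sum is bounded by $\max_m(t^{(n)}_{m+1}-t^{(n)}_m)\cdot \sum_m(t^{(n)}_{m+1}-t^{(n)}_m)\sum_k k\,|g_k(t^{(n)}_m)|^2$, which vanishes as the mesh shrinks.

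For $p\geq 3$ the $p$-th cumulant of $Y_n(g)$ is a $p$-fold sum whose individual terms are $O(1)$ uniformly but now carry a product of $p$ mesh factors; hence it is bounded by $\bigl(\max_m(t^{(n)}_{m+1}-t^{(n)}_m)\bigr)^{p-2}$ times a $(p-1)$-fold Riemann sum of finite total mass, which tends to zero. All cumulants of order $\geq 3$ thus vanish, and combined with the recentering and the convergence of the second cumulant, the method of moments yields convergence in distribution of $Y_n(g)-\EE Y_n(g)$ to the Gaussian law with the stated variance. The principal obstacle is to guarantee that the cumulant trace formulas from the proof of Theorem \ref{th:main1} come equipped with error bounds uniform in $(m_1,\ldots,m_p)\in\{1,\ldots,N_n\}^p$; this is the direct analogue, in the limit-identifying setting, of what Theorem \ref{th:main0growing} does in the comparison setting, and requires revisiting the Fredholm-determinant analysis of Section \ref{sec:fred} to extract bounds independent of the number of time slices.
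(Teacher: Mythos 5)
Your overall architecture (absorb the mesh weights into $f_n(m,x)=(t^{(n)}_{m+1}-t^{(n)}_m)g(t^{(n)}_m,x)$, pass to cumulants, use the banded structure to localize everything near the $(n,n)$-entries, and read off the variance as a two-dimensional Riemann sum) is the same as the paper's, and your treatment of the second cumulant --- off-diagonal double sum converging to the double integral, diagonal sum killed by one extra mesh factor --- matches the computation inside Proposition \ref{prop:fredholmtoeplitzvaryingn}. You correctly flag that the whole scheme hinges on making the $o(1)$ errors uniform in the number of time slices; in the paper this is exactly what the proof of Theorem \ref{th:main0growing} (via Corollary \ref{cor:comp}) and, crucially, Lemma \ref{lem:hulp} supply.

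The genuine gap is in your argument that the cumulants of order $p\geq 3$ vanish. Writing $\mathcal C_p(Y_n(g))=\sum_{m_1,\ldots,m_p}\prod_{i=1}^p(t^{(n)}_{m_i+1}-t^{(n)}_{m_i})\,\kappa_n(m_1,\ldots,m_p)$ with $\kappa_n$ uniformly $O(1)$ does \emph{not} give decay: a $p$-fold sum of terms each carrying $p$ mesh factors is $O\bigl((\beta-\alpha)^p\bigr)=O(1)$, not $o(1)$, and there is no way to extract the factor $\bigl(\max_m \Delta t_m\bigr)^{p-2}$ you claim unless the joint cumulants were already known to be negligible whenever three or more of the time indices are distinct --- which is false at finite $n$ and is precisely the statement to be proved. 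The vanishing of the higher cumulants is not a size estimate at all but an algebraic cancellation: after comparing with the banded Toeplitz model (uniformly in $m$, via Theorem \ref{th:main0growing}), the paper shows through Lemma \ref{lem:hulp} that $P_n\bigl(\prod_m {\rm e}^{A_m^{(n)}}-I\bigr)P_n$ converges in trace norm to the untruncated product \emph{uniformly in the number of factors $N_n$}, and then Ehrhardt's identity (Lemma \ref{prop:ehrhardt}) evaluates $\det\prod_m {\rm e}^{\lambda A_m}$ as the exponential of a form that is exactly quadratic in $\lambda$; this is what forces $\lim_n C_k^{(n)}=0$ for $k\geq 3$ in Corollary \ref{prop:cumulantstoeplitzvaryingn}. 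Your proposal never produces this cancellation, so as written the method-of-moments step does not go through. (An alternative repair --- showing that $\sup_{m_1,\ldots,m_p}|\kappa_n(m_1,\ldots,m_p)|\to 0$ by applying the fixed-$N$ theorem with $N=p$ uniformly over the choice of slices --- would work, but that is a different argument from the one you give and still requires the uniform comparison machinery you deferred.)
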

As before, in the special case that we deal with orthogonal polynomials the latter theorem takes the following form.
\begin{corollary}\label{cor:main1growing}
 Let $a_0(t)$, $a_1(t)$ and $\tau(t)$ be  piecewise continuous functions on an interval $I$  and assume that $\tau(t)$ is increasing. 
 
Suppose that for each $n$ we have a probability measures of the form \eqref{eq:productmeasures} satisfying Assumption \ref{assumption} with $\phi_{j,m}$ and $\psi_{j,m}$ as in \eqref{eq:opchoice} and assume that for $k,l \in \bbZ$, for $k, \ell \in \mathbb Z $ with $|k-\ell| \leq 1$, we have
\begin{equation} \label{eq:limitsop1}
\lim_{n\to \infty} \sup_{m=1,\ldots, N_n} \left|(\mathcal J_m)_{n+k,n+\ell}-a_{|k-\ell|}(t_m^{(n)}) \right| =0,
\end{equation}
 and  
 \begin{equation} \label{eq:limitsop2}
 \lim_{n\to \infty}\sup_{m=1, \ldots,N_n}  \left|\frac{c_{n+\ell,m}}{c_{n+ k,m}}-{\rm e}^{\tau(t_m^{(n)})(k-\ell)} \right|= 0.
 \end{equation}
Then for any function $g:I\times \bbR \to \bbR$  such that  $t \mapsto g(t,x)$ is piecewise continuous and $x\mapsto g(t,x)$ is a polynomial, we have that $Y_n(g)$ as defined in \eqref{eq:varyinglinstat} satisfies
$$ Y_n(g)-\EE Y_n(g) \to N\left(0,{\sum_{k=1}^\infty \iint_{I \times I } {\rm e}^{-|\tau(t_2)-\tau(t_1)|k} k g_k(t_1) g_{-k}(t_2) {\rm d} t_1 {\rm d} t_2} \right)$$
as $n \to \infty$, with 
$$g_k(t)=\frac{1}{\pi } \int_0^\pi  g\left(t,a_0(t)+2 a_1(t) \cos \theta  \right) \cos k \theta{\rm d} \theta.$$ 
\end{corollary}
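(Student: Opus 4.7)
The plan is to deduce Corollary \ref{cor:main1growing} from Theorem \ref{th:main1growing} by the same two-step argument used for Corollary \ref{cor:th:main1}: first verify the hypothesis of the theorem (identifying the limiting Laurent polynomial $a^{(t)}(z)$), and then rewrite the resulting variance in the symmetric ``cosine'' form stated in the corollary. To avoid notational collision, denote by $g_k^{\mathrm{L}}(t)$ the Laurent coefficient appearing in Theorem \ref{th:main1growing} and by $\hat g_k(t)$ the cosine coefficient appearing in the corollary.

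For the first step, combine \eqref{eq:fromJacobitoJ} with the hypotheses. Writing $(\mathbb J_m)_{n+k,n+\ell} = \tfrac{c_{n+k,m}}{c_{n+\ell,m}}\,(\mathcal J_m)_{n+k,n+\ell}$, the tridiagonal structure of $\mathcal J_m$ reduces the analysis to $|k-\ell|\le 1$. Condition \eqref{eq:limitsop1} supplies the uniform-in-$m$ limit of the second factor, and \eqref{eq:limitsop2} the uniform-in-$m$ limit of the first factor; since both limits are bounded, both factors are uniformly bounded for all large $n$, and the product converges uniformly in $m$. Collecting the three nonzero cases yields $(\mathbb J_m)_{n+k,n+\ell}\to A_{k-\ell}(t_m^{(n)})$ uniformly in $m$, where $A_0(t)=a_0(t)$ and $A_{\pm 1}(t)=e^{\mp\tau(t)}a_1(t)$. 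This verifies the hypothesis of Theorem \ref{th:main1growing} with Laurent polynomial $a^{(t)}(z) = a_0(t) + e^{-\tau(t)}a_1(t)\,z + e^{\tau(t)}a_1(t)\,z^{-1}$.

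For the second step, the substitution $z=e^{\tau(t)}w$ in the contour integral defining $g_k^{\mathrm{L}}(t)$ simultaneously symmetrizes the Laurent polynomial into $a_0(t)+a_1(t)(w+w^{-1})$ and produces a factor $e^{-\tau(t)k}$ from $\mathrm{d}z/z^{k+1}$. The transformed integrand is a Laurent polynomial in $w$, so the contour deforms back to $|w|=1$; parameterizing $w=e^{\mathrm{i}\theta}$ and using the even symmetry in $\theta$ yields $g_k^{\mathrm{L}}(t)=e^{-\tau(t)k}\hat g_k(t)$. Inserting this together with $\hat g_{-k}=\hat g_k$ into the variance from Theorem \ref{th:main1growing}, and using monotonicity of $\tau$, the exponential factor in the integrand for $t_1<t_2$ collapses to $e^{-|\tau(t_2)-\tau(t_1)|k}$. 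Finally, extending the double integral from $\{t_1<t_2\}$ to $I\times I$ by the $t_1\leftrightarrow t_2$ symmetry (the compensating factor $1/2$ absorbing the prefactor $2$ in Theorem \ref{th:main1growing}) produces the symmetric expression stated in the corollary.

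The whole argument is essentially bookkeeping and parallels the proof of Corollary \ref{cor:th:main1}; the only place any care is required is step one, where uniform-in-$m$ passage to the limit in the product of the two factors rests on uniform boundedness of each, itself a consequence of the uniform convergence hypotheses \eqref{eq:limitsop1} and \eqref{eq:limitsop2}. No new analytic input beyond Theorem \ref{th:main1growing} is needed.
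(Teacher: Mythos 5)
Your proof is correct and is essentially the paper's own argument: the paper disposes of this corollary in one line by deferring to the proof of Corollary \ref{cor:th:main1} (use \eqref{eq:fromJacobitoJ} to identify the limiting symbol, then rescale $z\mapsto {\rm e}^{\tau(t)}z$ and symmetrize), which is exactly what you carry out in detail. The only caveat is that, to land on the decaying factor ${\rm e}^{-|\tau(t_2)-\tau(t_1)|k}$, the positive Fourier mode $g_k$ must be paired with the \emph{later} time in the time-ordered double integral (as the trace computation behind Proposition \ref{prop:fredholmtoeplitzvaryingn} actually produces), a bookkeeping point that both your write-up and the displayed variance in Theorem \ref{th:main1growing} pass over silently.
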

\begin{proof}
The proof follows from Theorem \ref{th:main1growing} in the same way as Corollary \ref{cor:th:main1} followed from \ref{th:main1}.
\end{proof}
\begin{remark}
By \eqref{eq:fourierpoisson} we can write the variance also as
\begin{equation}\label{eq:fouriertransformvariance}
\frac1\pi\sum_{k=1}^\infty  \int \frac{k^2}{\omega^2+k^2}   \left| \int_I  {\rm e}^{- {\rm i}  \omega \tau(t)} g_k(t){\rm d} t \right|^ 2 {\rm d} \omega.
\end{equation}
This will be useful later on. 
\end{remark}

Again Theorem \ref{th:main1growing} is stated for a function $g(t,x)$ that is a polynomial in $x$. Under similar conditions as in Theorem \ref{th:extend} we can extend this to a larger class of functions. 
\begin{theorem}\label{th:extendgrowing}
Assume that all the conditions of Corollary \ref{cor:main1growing} hold.  In addition, assume that    there exists a compact set $E \subset \bbR$ such that either\\
(1) all supports $S(\mu_m^{(n)}) \subset E$ for $n \in \bbN$ and $m=1, \ldots, N$,\\
 or, more generally, \\
 (2) for every $k \in \mathbb N$ and we have 
$$\sup_{m=1, \ldots,N_n} \int_{\bbR \setminus E}  |x|^k K_{n,N}(m,x,m,x) {\rm d} \mu_m(x)=o(1/n),$$
as $n \to \infty$. \\
Moreover, assume that $t_m^{(n)}$ are such 
$N_n \sum_{m=1}^{N_n} (t_{m+1}^{(n)}-t_m^{(n)})^2$
is bounded in $n$. Then Theorem \ref{th:main1growing} also holds wit for any $g$ such  that  $x\mapsto g(t,x)$ is a $C^1$ function  growing at most polynomially at $\pm \infty$.
\end{theorem}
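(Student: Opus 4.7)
The plan is to mimic the reduction used in the proof of Theorem \ref{th:extend}: truncate to a compact set in $x$, approximate the restriction by polynomials in the $C^1_x$ norm, and invoke Corollary \ref{cor:main1growing}. The main complication relative to the fixed-$N$ case is that every bound must be uniform in the time slice $m\in\{1,\ldots,N_n\}$ since $N_n\to\infty$; the assumption that $N_n\sum_m(t_{m+1}^{(n)}-t_m^{(n)})^2$ be bounded is present precisely to control the Riemann-sum errors that this uniformity creates.

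For the truncation step, fix a smooth cutoff $\chi$ equal to $1$ on $E$ with compact support in a slightly larger set, and split $g=\chi g+(1-\chi)g$. Under condition (1) the second piece is never sampled and there is nothing to do. Under condition (2), I would show that the contribution of $(1-\chi)g$, after centering, has vanishing variance, by bounding $\Var(Y_n((1-\chi)g))$ by a weighted sum over $m$ of the integrals $\int_{\bbR\setminus E}|g(t_m^{(n)},x)|^2 K_{n,N}(m,x,m,x)\,d\mu_m(x)$ and then using the polynomial growth of $g$ together with the $o(1/n)$ tail-moment estimate. For the approximation step, Weierstrass on each continuity piece of $t$ produces polynomials $p_\epsilon(t,x)$ with $\sup_{t\in I,\,x\in E}(|g-p_\epsilon|+|\partial_x(g-p_\epsilon)|)<\epsilon$. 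The limiting variance of Corollary \ref{cor:main1growing}, rewritten as in \eqref{eq:fouriertransformvariance}, is then continuous in this $C^1_x$ norm: integration by parts in $\theta$ gives $|g_k(t)|=O(\|\partial_x g\|_\infty/k)$, which together with the factor $k^2/(k^2+\omega^2)$ and an $L^2$-bound on the $t$-integral produces a series depending continuously on $\epsilon$.

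The main obstacle is to show that $\Var(Y_n(g)-Y_n(p_\epsilon))=O(\epsilon^2)$ uniformly in $n$, and not only in the $n\to\infty$ limit, since Corollary \ref{cor:main1growing} is stated only for polynomials. This should be extracted from the machinery developed to prove Theorem \ref{th:main1growing}: the recurrence-matrix identity expresses the variance of the linear statistic at finite $n$ directly in terms of the uniformly bounded entries $(\mathbb J_m)_{n+k,n+\ell}$ and the $C^1_x$ norm of the test function, with an error that is a Riemann-sum discretization of an integral in $t$; the assumption $N_n\sum_m(t_{m+1}-t_m)^2=O(1)$ controls that discretization error uniformly in $n$. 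A standard diagonal argument (choose $\epsilon$ small, let $n\to\infty$, then send $\epsilon\to 0$) then yields the CLT for $g$ in the stated class.
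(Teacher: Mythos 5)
Your overall architecture is the same as the paper's: split off the tails using condition (2), approximate $g$ by a polynomial in $x$ uniformly in $C^1$ on a compact set, show the limiting variance is continuous in that norm (integration by parts in $\theta$, exactly as in the paper's Lemma \ref{lem:extendCgrowing}), and close with a three-term triangle inequality on characteristic functions as in the proof of Theorem \ref{th:extend}. You also correctly isolate the crux: a bound on $\Var\bigl(Y_n(g)-Y_n(p_\eps)\bigr)$ of order $\eps^2$ \emph{uniformly in $n$}.

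However, the tool you propose for that crux is the wrong one, and this is a genuine gap. The recurrence-matrix machinery (Lemma \ref{lem:moment}, the cumulant expansion, the localization in Proposition \ref{prop:comparison}) only applies when the test function is a polynomial in $x$, because it rests on $f(m,\mathbb J_m)$ being a \emph{banded} matrix; for a general $C^1$ function $g-p_\eps$ the operator $g(t_m,\mathbb J_m)$ is not banded and the entire comparison argument collapses, so the finite-$n$ variance bound cannot be ``extracted from the machinery developed to prove Theorem \ref{th:main1growing}.'' The paper instead goes through the determinantal kernel representation \eqref{eq:variancelinstat}: in the orthogonal polynomial situation of Corollary \ref{cor:main1growing} the kernel is symmetric, $K_{n,N}(m,x,m,y)=K_{n,N}(m,y,m,x)$, so the single-slice variance can be symmetrized as $\tfrac12\iint (f(x)-f(y))^2K(x,y)K(y,x)\,{\rm d}\mu\,{\rm d}\mu$ with a \emph{nonnegative} integrand; one then bounds $(f(x)-f(y))^2\le\|\partial_x f\|_\infty^2(x-y)^2$ on $E\times E$ and uses the identity $\tfrac12\iint(x-y)^2K(x,y)K(y,x)\,{\rm d}\mu\,{\rm d}\mu=\|[P_n,\mathcal J_m]\|_2^2$, whose right-hand side is uniformly bounded precisely because the recurrence coefficients near the $(n,n)$ corner converge. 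The crude inequality $\Var\sum_m X_m\le N_n\sum_m\Var X_m$ together with the per-slice weights then produces the factor $N_n\sum_m(t_{m+1}^{(n)}-t_m^{(n)})^2$, which is where that hypothesis enters --- not, as you suggest, to control a Riemann-sum discretization error. Without the symmetry/positivity of $K(x,y)K(y,x)$ and the commutator identity, your plan has no mechanism to produce the uniform-in-$n$ variance bound, so as written the proof does not go through.
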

\subsection{Connection to  Gaussian Free Field} \label{sec:GFF}
Finally, we discuss the relation of the above results with the Gaussian Free Field.  We will focus on the situation of Theorem \ref{th:main1growing} and such that one of the conditions in Theorem \ref{th:extendgrowing} is valid, such that Theorem \ref{th:main1growing} holds for continuously differentiable $g$.  

We will start by  recalling the definition of the Gaussian Free Field without a detailed justification. More details and background can be found in the survey \cite{Shef}. 

Let $\mathcal D$ be a simply connected domain in $\bbR^2$. With this domain we consider the space of test functions $\mathcal H_\nabla$ defined as follows: we start with space  $C^1_0(\mathcal D)$  of all continuously differentiable functions that vanish at the boundary of $\mathcal D$. On that space we define the norm 
$$\|\phi\|_\nabla^2 = \pi \iint_{\mathcal D} |\nabla \phi(w)|^2 {\rm d} m(w),$$
where ${\rm d} m$ stands for the planar Lebesgue measure on $\mathcal D$. The space of test function $\mathcal H_\nabla$ is then defined as the closure of $C^1_0(\mathcal D)$ with respect to this norm. Now that we have the space of test functions, we then define  the Gaussian free field to be the collection of Gaussian random variables $\langle F, \phi\rangle_\nabla$  indexed by $\phi \in \mathcal H_\nabla$ such that $$\langle F,\phi\rangle_\nabla\sim N(0,\|\phi\|_\nabla^2),$$
and such that $\phi\mapsto  \langle F, \phi\rangle_\nabla$ is linear. 

Now let us first focus on the example given in the introduction and let $h_n$ be the height function as defined in \eqref{eq:defheight}.  The statement now is that the fluctuations of $h_n-\EE h_n$ are described by the Gaussian free field \emph{in appropriately chosen coordinates}. That is, there is a simply connected domain $\mathcal D$ and a  homeomorphism $$\Omega : \mathcal D \to \mathcal E: w=(\tau,\theta) \mapsto (t,x),$$
 where $\mathcal E$ is the ellipse \eqref{eq:ellipse}, such that the push-forward of $h_n-\EE h_n$ under the map $\Omega$. That is, 
\begin{equation}
\label{eq:gffnonn} \langle h_n\circ \Omega, \phi\rangle_\nabla - \langle \EE h_n\circ \Omega, \phi \rangle _\nabla \to N(0,\|\phi\|_\nabla^2),
\end{equation}
as $n \to \infty$, for some natural pairing $\langle h_n\circ \Omega, \phi\rangle_\nabla $.  It is important to note that the Gaussian Free Field is a universal object, the coordinate transform  is not and depends on the specific problem at hand. 

The relation with linear statistics is explained as follows (see also \cite{D}), which also gives the precise form of the pairing  $\langle h_n\circ \Omega, \phi\rangle_\nabla $ that we will use.  First, by integration by parts and a change of variables we obtain
\begin{align*}
\pi  \iint_{\mathcal D} \nabla  h_n(\Omega(w)) \cdot \nabla\phi(w) {\rm d}m(w)  
= -\pi \iint_{\mathcal E} h_n(x,t)\Delta \phi(w(t,x)) \frac{ d (\tau,\theta)}{d(x,t)} &{\rm d} x {\rm d} t,
\end{align*}
where ${ d (\tau,\theta)}/{d(x,t)}$ stands for the Jacobian of the map $\Omega^{-1}$. 
We then use the  fact that $h_n(t,x) = \sum_{j=1}^n \chi_{(-\infty,\gamma_j(t)]} (x) $  to rewrite the right-hand side as
\begin{align*}
   -\pi \int_I \sum_{j=1}^n  \int_{-\infty}^{\gamma_j(t)} \Delta \phi(w(t,x)) \frac{ d (\tau, \theta)}{d(x,t)} &{\rm d} x {\rm d}t
\end{align*}
Finally, the pairing $\langle h_n , \phi\rangle_\nabla$ is then defined  by a discretization of the  integral over $t$,
 \begin{align}\label{eq:gffparingdiscreeet}
\langle h_n , \phi\rangle_\nabla=  - \sum_{m=1}^N \frac{1}{t_{m+1}-t_m}  \sum_{j=1}^n \pi \int_{-\infty}^{\gamma_j(t_m^{(n)})}\Delta \phi (w(t_m,x))  \frac{ d (\tau,\theta)}{d(x,t)} {\rm d} x.
 \end{align}
Now note that $\langle h_n , \phi\rangle_\nabla=Y_n(g)$ where $Y_n(g)$ is the linear statistic  as in \eqref{eq:varyinglinstat} with  $$g(t,y)= -\pi \int_{-\infty}^{y}\Delta \phi(w(t,x) ) \frac{ d (\tau, \theta)}{d(x,t)} {\rm d}x.$$ Hence the pairing of the height function with a test function, reduces to a linear statistic  for the point process $\{\gamma_j(t_m)\}_{j=1,m=1}^{n,N}$ and we can apply Proposition \ref{prop:intro} to find its limiting fluctuations, which leads to \eqref{eq:gffnonn} as we will show below. 
 
We will state the result in the more general setup of Theorem  \ref{th:extendgrowing}. That is we consider a probability measure of the form in \eqref{eq:productmeasures} satisfying Assumption \ref{assumption} in the orthogonal polynomial situation \eqref{eq:opchoice}. We then assume that there exists interval $I=[\alpha, \beta]$  and functions $a_0:I \to \bbR$, $a_1: I \to \bbR$ and an increasing function $\tau(t)$ such that we have the limits \eqref{eq:limitsop1} and \eqref{eq:limitsop2} for some partitioning $\{t_m^{(n)}\}$ of $I$. This gives us the random points $\{(m,x_{j,m})\}_{j=1,m=1}^{n,N_n}$ and we ask for the fluctuations of  the height function defined by 
$$h(t_m^{(n)},x)= \# \{ j \mid x_{j,m} \leq x\}.$$
 The coordinate transform  is now constructed as follows. First, define
$$\mathcal E= \left\{(t,x) \mid  -2 a_1(t)  \leq x-a_0(t) \leq 2 a_1(t)\right\}.$$
 Note that $\tau(t)$ is strictly increasing as a function of $t$ and hence it has an inverse $t(\tau)$. Then, with
$$\mathcal D= \{ (\tau, \theta) \mid \tau \in  (\tau(\alpha),\tau(\beta)), \, \theta \in (0,\pi)\},$$
the map
$$\Omega : \mathcal D \to \mathcal E : (\tau,\theta) \mapsto (t(\tau),x(\tau,\theta))= \left(t(\tau),  2  a(t(\tau)) \cos \theta\right),$$
is a bijection  and has the  inverse
$$\Omega^{-1} : \mathcal E \to \mathcal D : (t,x) \mapsto (\tau(t), \theta(t,x))=\left ( \tau(t), \arccos \frac{x}{2a(t)}\right).$$
In this setting, we have that the push-forward by $\Omega$ of the fluctuations of the height function $h_n$ are governed by   the Gaussian Free Field on $\mathcal D$ in the following sense. 
\begin{theorem}\label{prop:GFF} Let $\{(m,x_{j,m})\}_{j,m=1}^{n,N}$ be random from a probability measure of the form \eqref{eq:productmeasures} satisfying the conditions in Theorem \ref{th:extendgrowing} with the parameters as described above.

Let $\phi$ be a twice continuously differentiable  real-valued function with compact support in $\mathcal D$ and consider the pairing 
$$\langle h_n , \phi\rangle_\nabla:=  - \sum_{m=1}^N  \frac{1}{t_{m+1}^{(n)}-t_m^{(n)}}\sum_{j=1}^n \pi \int_{-\infty}^{x_{j,m}}\Delta \phi (w(t,x))  \frac{ d (\tau,\theta)}{d(x,t)} {\rm d} x.$$
Then,  as $n \to \infty$, 
$$\langle h_n , \phi\rangle_\nabla-\EE \left[\langle h_n , \phi\rangle_\nabla\right]\to N(0,\|\phi\|_\nabla^2),$$
in distribution. 
\end{theorem}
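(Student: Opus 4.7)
The plan is to write $\langle h_n,\phi\rangle_\nabla$ as a linear statistic $Y_n(g)$ to which Theorem~\ref{th:extendgrowing} (via Corollary~\ref{cor:main1growing}) applies, and then to show the resulting limit variance equals $\|\phi\|_\nabla^2$ by a Fourier--Parseval computation.

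Reading off the definition of the pairing one has $\langle h_n,\phi\rangle_\nabla = Y_n(g)$ with
\[g(t,y) = -\pi\int_{-\infty}^y \Delta\phi(\Omega^{-1}(t,x))\,\frac{d(\tau,\theta)}{d(x,t)}\,dx.\]
Since $\phi$ is $C^2$ with compact support in $\mathcal D$, the integrand is continuous with compact support in $\mathcal E$, so $y\mapsto g(t,y)$ is $C^1$ and eventually constant (in particular of polynomial growth), while $t\mapsto g(t,y)$ is continuous. Thus Theorem~\ref{th:extendgrowing} applies and gives $\langle h_n,\phi\rangle_\nabla - \EE\langle h_n,\phi\rangle_\nabla \to N(0,\sigma^2)$; the task reduces to showing $\sigma^2 = \|\phi\|_\nabla^2$.

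To evaluate $\sigma^2$ I would compute the coefficients $g_k(t)$ of \eqref{eq:cordefpk}. Changing variables $x=a_0(t)+2a_1(t)\cos\theta'$ inside the integral defining $g$, the factor $dx=-2a_1(t)\sin\theta'\,d\theta'$ cancels the Jacobian $d(\tau,\theta)/d(x,t)=\tau'(t)/(2a_1(t)\sin\theta')$, leaving
\[g(t,a_0(t)+2a_1(t)\cos\theta) = -\pi\tau'(t)\int_\theta^\pi \Delta\phi(\tau(t),\theta')\,d\theta',\]
and swapping orders of integration in \eqref{eq:cordefpk} gives $g_k(t) = -(\tau'(t)/k)\int_0^\pi \Delta\phi(\tau(t),\theta')\sin(k\theta')\,d\theta'$. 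Passing to the Fourier--Poisson form \eqref{eq:fouriertransformvariance} of the variance and substituting $\tau=\tau(t)$ produces
\[\int_I e^{-i\omega\tau(t)}g_k(t)\,dt = -\frac{1}{k}\iint e^{-i\omega\tau}\,\Delta\phi(\tau,\theta)\,\sin(k\theta)\,d\theta\,d\tau.\]
Two integrations by parts in $\tau$ (boundary terms vanish by compact support in $\tau$) and two in $\theta$ (boundary terms at $0,\pi$ vanish since $\phi$ is supported in $\mathcal D$) convert $\Delta$ to the multiplier $-(\omega^2+k^2)$, producing $((\omega^2+k^2)/k)\hat\Phi_k(\omega)$ with $\hat\Phi_k(\omega)=\iint \phi(\tau,\theta)e^{-i\omega\tau}\sin(k\theta)\,d\theta\,d\tau$.

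Inserting this into \eqref{eq:fouriertransformvariance} cancels one factor of $\omega^2+k^2$ against the Poisson denominator, and Plancherel in $\omega$ gives
\[\sigma^2 = 2\sum_{k\ge 1}\int_\bbR \bigl(|\Phi_k'(\tau)|^2 + k^2|\Phi_k(\tau)|^2\bigr)\,d\tau,\]
where $\Phi_k(\tau)=\int_0^\pi \phi(\tau,\theta)\sin(k\theta)\,d\theta$. Parseval for the sine series on $(0,\pi)$ turns the first term into $\tfrac{\pi}{2}\int_0^\pi |\partial_\tau\phi|^2\,d\theta$; Parseval for the cosine series, using $k\Phi_k(\tau)=\int_0^\pi\partial_\theta\phi\cos(k\theta)\,d\theta$ (valid because $\phi$ vanishes at $\theta=0,\pi$), turns the second into $\tfrac{\pi}{2}\int_0^\pi |\partial_\theta\phi|^2\,d\theta$. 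Integrating over $\tau$ then yields $\sigma^2 = \pi\iint_\mathcal D |\nabla\phi|^2\,dm = \|\phi\|_\nabla^2$, as desired. The main obstacle is purely bookkeeping: one must track the cancellation between the $(\omega^2+k^2)$ produced by applying the Laplacian to $\phi$ and the $1/(\omega^2+k^2)$ coming from the Poisson kernel in \eqref{eq:fouriertransformvariance}, and verify that the boundary terms from the various integrations by parts all vanish, so that what emerges is exactly the Dirichlet form on $\mathcal D$.
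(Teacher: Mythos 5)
Your proof is correct and follows essentially the same route as the paper: identify the pairing as the linear statistic $Y_n(g)$, apply Theorem \ref{th:extendgrowing}, and convert the Fourier--Poisson form \eqref{eq:fouriertransformvariance} of the limiting variance into the Dirichlet norm. The only cosmetic difference is that you finish with explicit sine/cosine Parseval identities, whereas the paper packages the same computation as a unitary map $\mathcal G$ satisfying $\mathcal G\Delta\phi(\omega,k)=-(\omega^2+k^2)\mathcal G\phi(\omega,k)$ followed by $-\pi\iint_{\mathcal D}\phi\,\Delta\phi\,{\rm d}m=\|\phi\|_\nabla^2$.
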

\begin{proof} We start by recalling that $\langle h_n , \phi\rangle_\nabla$ is the linear statistic $Y_n(g)$ as in \eqref{eq:varyinglinstat} 
with 
 $$g(t,y)= -\pi \int_{-\infty}^{y} \Delta \phi(w(t,x) ) \frac{ d (\tau, \theta)}{d(x,t)} {\rm d}x.$$
Note that $x\mapsto g(t,x)$  is a continuously differentiable and bounded function. Moreover, the point process satisfies the assumptions in Theorem \ref{th:extendgrowing} so that, as $n \to \infty$, 
$$\langle h_n , \phi\rangle_\nabla-\EE \left[\langle h_n , \phi\rangle_\nabla\right]\to N(0,\sigma(g)^2),$$
in distribution, with (see Remark \ref{rem:sym})
$$
\sigma(g)^2=\frac{1}{\pi} \sum_{k=1}^\infty \int_\bbR \frac{k^2}{\omega^2+k^2} \left|\int_I {\rm e}^{-{\rm i} \tau(t) \omega} g_k(t) {\rm d} t \right|^2 {\rm d} \omega.$$
and 
$$ g_k(t) =\frac{1}{\pi} \int_0^\pi g(t,a_0(t)+2 a_1(t) \cos \theta)  \cos k \theta {\rm d} \theta.$$
It remains to show that we can rewrite the variance so that it matches with the one in the statement. 

We start by  noting that the Jacobian for the map is given by 
$$\frac{{\rm d} (\tau,\theta)}{{\rm d} (t, x)}= \tau'(t) \frac{\partial \theta}{\partial x}.$$
Then, by a change of variables we have
\begin{multline*}
g(t,y)= \pi\int_{a_0(t)-2 a_1(t)}^y \Delta \phi(\tau(t),  \theta(t,x) ) \frac{{\rm d} (\tau,\theta)}{{\rm d} (t, x)} {\rm d} x\\
= \pi\tau'(t)\int_{a_0(t)-2 a_1 (t)}^y \Delta \phi(\tau(t),  \theta(t,x) )
 \frac{\partial \theta}{\partial x} {\rm d} x 
=\pi\tau'(t) \int_0^{\theta(t,y)}\Delta \phi(\tau(t),  \theta) )
 {\rm d} \theta.
\end{multline*}
Clearly, since $\Omega$ and $\Omega^{-1}$ are each others inverse maps, we have $$\theta(t,a_0(t)+2  a_1(t) \cos(t))= \theta.$$
and hence
$$g(t,a_0(t)+2 a_1 (t) \cos \theta)= \pi \tau'(t)\int_0^{\theta} \Delta \phi(\tau(t),\tilde \theta) {\rm d} \tilde \theta.$$

This implies that
\begin{multline*}
k g_k(t) = \frac{1}{\pi} \int_0^\pi g(t,a_0(t)+2 a_1(t) \cos  \theta) k \cos k \theta {\rm d} \theta
\\=
  {\tau'(t)} \int_0^\pi \int_0^{\theta} \Delta \phi(\tau(t),\tilde \theta) {\rm d} \tilde \theta \  k \cos k \theta  {\rm d} \theta
 =   {\tau'(t)} \int_0^\pi  \Delta \phi(\tau(t), \theta)   k \sin k \theta {\rm d} \theta
\end{multline*}
using integration by parts in the last step. We then continue by inserting the last expressions and using the fact that $\phi$ has compact support in $\mathcal D$, 
\begin{multline*}
k\int_I {\rm e}^{-{\rm i} \tau(t) \omega} g_k(t) {\rm d} t
= \int_I  \int_0^\pi{\rm e}^{-{\rm i} \tau(t) \omega}  \Delta \phi(\tau(t), \theta)   k \sin k \theta \, {\rm d} \theta \,\tau'(t)\, {\rm d} t\\
=  \int_\bbR \int_0^\pi{\rm e}^{-{\rm i} \tau \omega}  \Delta \phi(\tau, \theta)   k \sin k \theta \, {\rm d} \theta  \, {\rm d} \tau =\pi \left( \mathcal G \Delta \phi\right)(\omega,k),
\end{multline*}
where $\mathcal G$ is the operator
$$\mathcal G f(\omega,k) = \frac{1}{\pi} \int_\bbR \int_0^\pi f(\tau,\theta) {\rm e}^{-{\rm i} \omega \tau} \sin k \theta \, {\rm d} \theta\, {\rm d} \tau.$$
Since $\{\sqrt{\frac{2}{\pi}} \sin k \theta\}_{k=1}^\infty$ is an orthonormal basis for $\mathbb L_2([0,\pi))$ and since the integral over $\tau$ is the usual Fourier transform (with normalization $(2 \pi)^{-1/2}$), we see that $\mathcal G$ defines a unitary transform from $\mathbb L_2(\mathcal D)$ to $\mathbb L_2(\bbR) \times \ell_2(\bbN)$. It is also easy to check that $\mathcal G \Delta \phi(\omega,k)=- (\omega^2+k^2) \mathcal G \phi(\omega,k)$. We then apply Plancherel's Theorem  to write
\begin{multline*}
\frac{1}{\pi} \sum_{k=1}^\infty \int_\bbR \frac{k^2}{\omega^2+k^2} \left|\int_I {\rm e}^{-{\rm i} \tau(t) \omega} g_k(t) {\rm d} t \right|^2 {\rm d} \omega
=  \pi \sum_{k=1}^ \infty \int_\bbR \frac{\left|(\mathcal G\Delta \phi)(\omega,k)\right|^2}{\omega^2+k^2}  {\rm d} \omega\\
-\pi \sum_{k=1}^ \infty \int_\bbR \overline{\mathcal G\Delta \phi(\omega,k)}\mathcal G \phi(\omega,k)  {\rm d} \omega
=- \pi \iint_{\mathcal D} \phi \overline{\Delta \phi} =  \pi \iint_{\mathcal D} \phi {\Delta \phi}= \|\phi\|_\nabla^2,
\end{multline*}
and this proves the statement. 
\end{proof}

\section{Examples}\label{sec:example}
In this section we will illustrate the main results by discussing several examples.

\subsection{Stationary non-colliding processes}

The first class of examples  is that of non-colliding processes for which the classical orthogonal polynomials ensembles are the invariant measures. The construction we will follow is  a well-known approach using Doob's $h$-transform and the Karlin-McGregor Theorem, see e.g. \cite{K} for a discussion.  An alternative way of defining the  processes is to start with a generator for a single particle process and then define an $n$-particle process by  constructing  a generator on the space of symmetric functions~\cite{BO,O}. 

Suppose we are given a  Markov process for a single particle and let us assume that it has a transition function $P_t(x,y) {\rm d} \mu(y)$ on a subset $E\subset \mathbb R$ that can be written as
\begin{equation}\label{eq:diffusion}
P_t(x,y) {\rm d} \mu(y)= \sum_{j=0}^\infty {\rm e}^{-\lambda_j t} p_j(x)p_j(y) {\rm d} \mu(y),
\end{equation}
where
$$ 0 = \lambda_0 < \lambda_1< \lambda_2< \ldots $$
and  $p_j(x)$ are orthogonal polynomials with respect to ${\rm d} \mu(y)$. That is, $\{p_k\}_k$ is the unique family of polynomials such that  $p_k$ is a polynomial of degree $k$ with positive leading coefficient and 
$$\int p_k(x) p_\ell(x) {\rm d} \mu(x)= \delta_{k\ell}. $$  In other words, we assume that the generator for the Markov process has eigenvectors $p_j$ and eigenvalues $\lambda_j$.  It is standard that the classical orthogonal polynomials appear in this way, as we will see.

We then  construct  a Markov process on the Weyl chamber 
$$\mathcal W_n= \{(x_1, \ldots, x_n) \in \bbR^n \mid x_1<\ldots <x_n\}.$$ 
First we note that by a theorem of Karlin-McGregor  it follows, under general conditions on the Markov process, that the joint probability distribution for the positions $y_j$ after time $t>0$ of particles  that (1) each perform a  single particle process given by $P_t$, (2) start at $x_1, \ldots, x_n$ and  (3) are conditioned not to  collide in $[0,1]$, is given by 
$$\det \left(P_t(x_i,y_j)\right).$$
Then by \eqref{eq:adnr} and \eqref{eq:diffusion} it follows that 
\begin{multline*} \int _{\mathcal W_n} \det \left(P_t(x_i,y_j)\right)\det \left( p_{j-1}(y_k)\right)_{j,k=1}^n  {\rm d} \mu(y_1) \cdots {\rm d} \mu(y_n)\\
={\rm e}^{-t \sum_{j=0}^{n-1}  \lambda_j}  \det \left(p_{j-1}(x_k)\right)_{j,k=1}^n.
\end{multline*}
Moreover,  $\det \left(p_{j-1}(x_k)\right)_{j,k=1}^n= c \prod_{1 \leq i < j \leq n} (x_j-x_i)$ is positive.  Hence it is a positive harmonic function and we can apply Doob's $h$-transform to arrive at the transition function
$$
P_t(\vec x,\vec y)=
{\rm e}^{t \sum_{j=0}^{n-1}  \lambda_j} \det \left(P_t(x_i,y_j)\right)_{i,j=1}^n  \frac{\det \left( p_{j-1}(y_k)\right)_{j,k=1}^n} {\det \left(p_{j-1}(x_k)\right)_{j,k=1}^n}.
$$
This defines the  Markov process on $\mathcal W_n$ that we will be interested in. 
Finally,  it is not hard to show from \eqref{eq:adnr} that the unique invariant measure is given by the orthogonal polynomial ensemble \cite{K},
$$   \left(\det \left(p_{j-1}(x_k)\right)_{j,k=1}^n \right)^2  {\rm d} \mu(x_1) \cdots {\rm d} \mu(x_n).$$
In other words, the above construction provides a way for defining a stochastic dynamics for which the classical orthogonal polynomial ensembles are the invariant measures.

We consider this Markov process in the stationary situation. That is, we fix $t_1<t_2<\ldots< t_N \in \bbR$ and start the Markov process with the invariant measure at $t_1$. Then  we obtain a probability measure for the locations at $t_m$ 
\begin{multline*}
\det \left(p_{j-1}(x_k(t_1)\right)_{j,k=1}^n  \prod_{m=1}^{N-1}\det \left(P_{t_{m+1}-t_m}(x_j(t_m),x_k(t_{m+1}))\right)_{j,k=1}^n \\ \times \det \left(p_{j-1}(x_k(t_N)\right)_{j,k=1}^n  \prod_{m=1}^N \prod_{j=1}^n {\rm d} \mu(x_j(t_m)),
\end{multline*} 
which, after  a symmetrization, is exactly of the form \eqref{eq:productmeasures}. In fact it is an example of an orthogonal polynomial situation as given in \eqref{eq:opchoice} with $c_{j,m}= {\rm e}^{-t_m \lambda_j}$. Before we apply Corollary \ref{cor:th:main1} we recall that the orthogonal polynomials on the real line are subject to a three term recurrence relation 
$$xp_k(x) =a_{k+1} p_{k+1}(x) + b_k p_k(x)+a_k p_{k-1}(x),$$
for some numbers $b_k \in \mathbb R$ and $a_k>0$. We recall that we allow the measure $\mu$ to vary with $n$ so that also $b_k$ and $a_k$ may vary with $n$ and hence we will write $a_k= a_{k}^{(n)}$ and $b_k^{(n)}$.

\begin{theorem}\label{th:OP1}
Suppose that, for some $a>0$ and $n \in \bbR$ we have 
$$\begin{cases}
a_{n+k}^{(n)} \to a, \\
b_{n+k}^{(n)} \to b,
\end{cases}$$
and choose $t_j$ such that, for some $\tau_j$ and sequence $\kappa_n$, 
$$\kappa_n (\lambda_{n+k}-\lambda_{n+l})t_j \to (k-l)\tau_j $$
as $n \to \infty$. Then for any $f: \{1, \ldots, N\} \times \bbR$ such that $f(m,x)$ is a polynomial, we have
$$X_n(f)-\EE X_n(f) \to N\left(0,\sum_{m_1,m_2=1}^N  \sum_{k=1}^\infty k  {\rm e}^{-|\tau_{m_1}-\tau_{m_2}|k}f^{(m_1)}_kf^{(m_2)}_{k}\right),$$
where 
$$ f_k^{(m)}= \frac{1}{ \pi } \int_0^\pi f\left(m,2 a  \cos \theta+b\right) \cos k  \theta {\rm d} \theta.$$
Moreover,   for any $g: I \times \bbR\to \bbR$ such that $g(t,x)$ is a polynomial in $x$ we have that $Y_n(g)$ as defined in \eqref{eq:varyinglinstat} satisfies
$$ Y_n(g)-\EE Y_n(g) \to N\left(0,{\sum_{k=1}^\infty \iint_{I \times I } {\rm e}^{-|\tau(t_2)-\tau(t_1)|k} k g_k(t_1) g_{-k}(t_2) {\rm d} t_1 {\rm d} t_2} \right)$$
as $n \to \infty$, with 
$$g_k(t)=\frac{1}{\pi } \int_0^\pi  g\left(t,a_0+2 a_1 \cos \theta  \right) \cos k \theta{\rm d} \theta.$$ 
\end{theorem}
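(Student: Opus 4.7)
The plan is to recognize that the stationary non-colliding process fits exactly into the orthogonal polynomial framework \eqref{eq:opchoice}, and then to verify the hypotheses of Corollary \ref{cor:th:main1} (for the first claim) and Corollary \ref{cor:main1growing} (for the second). No new asymptotic analysis of kernels is needed; everything is already built into those corollaries.

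First I would identify the structure. From the construction via Doob's $h$-transform, the joint law at times $t_1<\cdots<t_N$ can, after symmetrization, be written in the form \eqref{eq:productmeasures}, and the expansion \eqref{eq:diffusion} displays the transition kernel as
\[
P_{t_{m+1}-t_m}(x,y)=\sum_{j=0}^\infty e^{-(t_{m+1}-t_m)\lambda_j}\,p_j(x)\,p_j(y).
\]
Matching this with the general expression for $T_m$ preceding \eqref{eq:opchoice} identifies the situation as \eqref{eq:opchoice} with $c_{j,m}=e^{-t_m \lambda_{j-1}}$. Crucially, the invariant measure $\mu$ is the same at every layer, so the Jacobi matrix $\mathcal{J}_m\equiv \mathcal{J}$ does not depend on $m$; its entries near row $n$ are governed precisely by the recurrence coefficients $a_k^{(n)}, b_k^{(n)}$ that the hypothesis controls.

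Next I would check the two convergence hypotheses. Condition \eqref{eq:cormainthcond1} is immediate from $a_{n+k}^{(n)}\to a$ and $b_{n+k}^{(n)}\to b$, giving $a_0^{(m)}=b$ and $a_1^{(m)}=a$ for every $m$. For \eqref{eq:cormainthcond2} I compute
\[
\frac{c_{n+\ell,m}}{c_{n+k,m}}=\exp\!\bigl(-t_m(\lambda_{n+\ell-1}-\lambda_{n+k-1})\bigr),
\]
and invoke the scaling assumption $\kappa_n(\lambda_{n+k}-\lambda_{n+\ell})t_m\to(k-\ell)\tau_m$, interpreting the times (or equivalently the eigenvalues) on the natural scale set by $\kappa_n$ so that $t_m(\lambda_{n+k-1}-\lambda_{n+\ell-1})\to(k-\ell)\tau_m$. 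Increasingness $\tau_1<\cdots<\tau_N$ is a hypothesis on the sampling times. Corollary \ref{cor:th:main1} then delivers the first CLT, with the cosine representation of $\hat f_k^{(m)}$ specializing to the formula in the statement because $a_0^{(m)}+2a_1^{(m)}\cos\theta = b+2a\cos\theta$.

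For the second assertion I would run the same verification for Corollary \ref{cor:main1growing}, the key observation being that the $m$-independence of $\mathcal{J}$ makes \eqref{eq:limitsop1} automatically \emph{uniform} in $m$, with constant functions $a_0(t)\equiv b$, $a_1(t)\equiv a$. The only new point is the uniform version of \eqref{eq:limitsop2}: given a partition of $I$, the assumed limit $\kappa_n(\lambda_{n+k}-\lambda_{n+\ell})t_m^{(n)}\to(k-\ell)\tau(t_m^{(n)})$ must hold uniformly in $m$ and identify a piecewise continuous increasing $\tau:I\to\bbR$. This is where I expect the main (though still routine) obstacle to lie: one needs the time rescaling $\kappa_n$ to be set up so that the $t_m^{(n)}$'s, after multiplication by the spectral gaps $\lambda_{n+k}-\lambda_{n+\ell}$, converge uniformly in $m$ to $(k-\ell)\tau(t_m^{(n)})$. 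In each of the classical cases (Hermite with $\lambda_j=j$, Laguerre with $\lambda_j=j$, Jacobi with $\lambda_j=j(j+\alpha+\beta+1)$, etc.) the natural choice of $\kappa_n$ makes this transparent, and the resulting $\tau(t)$ is an explicit elementary function.
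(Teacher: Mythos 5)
Your proposal is correct and follows exactly the paper's route: the paper's proof is the one-line observation that the theorem is a direct consequence of Corollaries \ref{cor:th:main1} and \ref{cor:main1growing} with $c_{j,m}={\rm e}^{-t_m\lambda_j}$, and your verification of conditions \eqref{eq:cormainthcond1}--\eqref{eq:cormainthcond2} (and their uniform-in-$m$ versions) simply fills in the details the paper leaves implicit.
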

\begin{proof}
This is a direct consequence of Corollaries \ref{cor:th:main1} and \ref{cor:main1growing}, with $c_{j,m}={\rm e}^{-t_m \lambda_j}$.
\end{proof}
The point is now that for the Markov process related to the classical polynomials we can easily verify the stated condition by looking up the explicit values of the parameters in standard reference works on classical orthogonal polynomials, such as \cite{koekoek}.

\begin{figure}
\begin{center}
\includegraphics[scale=.4]{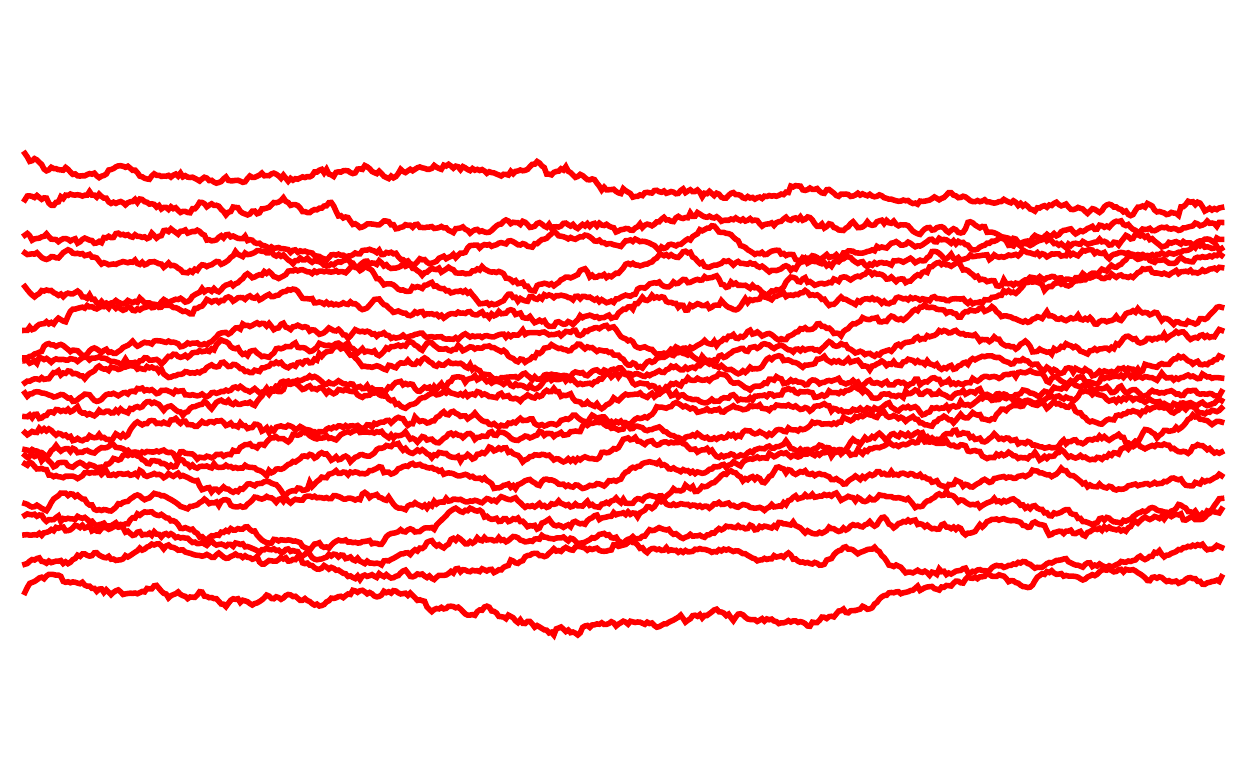}\qquad 
\includegraphics[scale=.4]{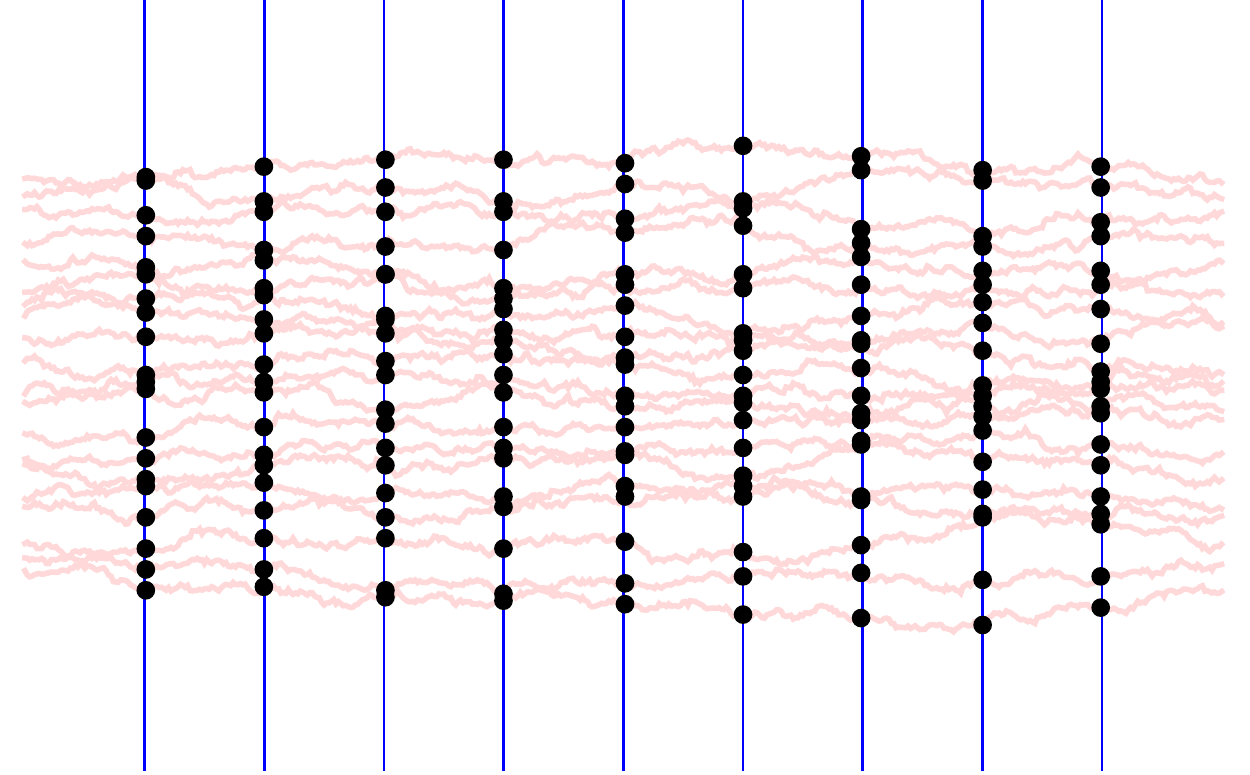}
\end{center}
\caption{The left figure shows a sampling from a stationary non-colliding process generated by the Ornstein-Uhlenbeck process of size $n=20$. At the right we intersect the trajectories at multiple times $t_m$.}
\end{figure}
We will now illustrate the results with some examples. To start with, we consider the classical Hermite, Laguerre and Jacobi polynomials, which are well-known to be  eigenfunctions for a second order differential operator that can be used as the generator for the Markov process. 
\begin{example}[Non-colliding Ornstein-Uhlenbeck processes] \label{eq:OU} Let us start where we take $P_t(x,y)$ according to the Ornstein-Uhlenbeck process. This is the model that was considered by Dyson \cite{Dyson} for $\beta=2$. In that case we have 
$$ P_t(x,y) {\rm d} \mu(y)= \frac{1}{\sqrt{2 \pi (1-{\rm e}^{-2t})}} {\rm e}^{-\frac{({\rm e}^{-t}x-y)^2}{2(1-{\rm e}^{-2t})}}{\rm d} y,$$
as the transition function. By  Mehler's formula for the Hermite polynomials this can be expanded as
$$ P_t(x,y) {\rm d} \mu(y)=\sum_{j=0}^\infty{\rm e} ^{ -j t}    H_j(x) H_j(y)  {\rm e}^{-y^2/2} {\rm d} y, $$
where $H_j(y)$ are the normalized   Hermite polynomials where the  orthogonality is with respect to ${\rm e}^{-y^2/2}{\rm d} y$ on $\bbR$. 

The Hermite polynomials satisfy the recurrence 
$$x H_k(x)= \sqrt{k+1} H_{k+1}(x)+\sqrt k H_{k-1}(x).$$
The recurrence coefficients grow and in order to get a meaningful  limit we need to rescale the process.  Indeed, when $n \to \infty$ the paths at any given time $t$ fill the interval $(-2 \sqrt n,2 \sqrt n)$ and we rescale the space variable and introduce the new variable $\xi$ by 
$$x= \sqrt{n} \xi.$$
Then the rescaled orthonormal  polynomials  are $p_k(\xi)=n^{-1/4} H_k(\sqrt n \xi)$ and for these polynomials we have the following recursion
$$p_k(\xi)= \sqrt\frac {k+1}{n} p_{k+1}(\xi)+ \sqrt\frac{k}{n} p_{k-1}(\xi).$$
One readily verifies that  
 $$a_{n+k,n}^{(n)} \to 1, \qquad b_{n+k,n}^{(n)}=0.$$
Moreover, since $\lambda_j=j$,  we also have 
$$(\lambda_{n+k} -\lambda_{n+\ell}) t_j= (k-\ell)t_j.$$
Therefore, the conditions of Theorem \ref{th:OP1} are satisfied with $a=1,b=0$ and $\tau_j=t_j$. In fact, for the Hermite polynomials one can verify that the second condition in Theorem \ref{th:extend} is satisfied and hence Theorem \ref{th:OP1} also holds for function $f: \{1, \ldots, N\} \times \bbR$ such that $x \to f(m,x)$ is a continuously differentiable function that grows at most polynomially  at $\pm \infty$.  This follows for example after a classical steepest decent analysis on the integral representation of the Hermite polynomials or by a Riemann-Hilbert analysis. We will leave the tedious details to the reader.

Finally, we note that the non-colliding brownian bridges model from the Introduction can be obtained from the above model after the change of variables 
$$
\begin{pmatrix}
t\\
\xi
\end{pmatrix}
 \mapsto \begin{pmatrix}
 \frac{1}{1+ {\rm e}^{-t}}\\
 \frac{\xi}{\cosh t}
\end{pmatrix}. $$ This is discussed in, for example, \cite{Jhahn} and we refer to that paper for more details. This also proves Proposition \ref{prop:intro}. \hfill $\blacksquare$
\end{example}

\begin{example}[Non-colliding squared radial Ornstein-Uhlenbeck processes] In the next example, we replace the Ornstein-Uhlenbeck process with its squared radial version. That is,
$$P_t(x,y) {\rm d} \mu(y)=\frac{1}{1-{\rm e}^{-t}}\left(\frac{y}{x {\rm e}^{-t}}\right)^{r/2} {\rm e}^{-\frac{x {\rm e}^{-t}}{1-{\rm e}^{-t}}-\frac{y}{1-{\rm e}^{-t}}} I_r\left( \frac{2 \sqrt{ {\rm e}^{-t}xy}}{1-{\rm e}^t} \right)$$
on $[0,\infty)$ where $r>-1$ is a parameter and $I_r$ stands for the modified Bessel function of the first kind of order $r$. The squared radial Ornstein-Uhlenbeck process is related to the squared Bessel process in a similar way as the Ornstein-Uhlenbeck process is to Brownian motion. Indeed, the squared Bessel process can be obtained by a change of variables. The latter process has been studied in the literature in the context of non-colliding processes before. In \cite{KOC} it was used to define  a dynamic version of the Laguerre ensemble from Random Matrix Theory. 

To see its connection we note that we can expand the transition function as \cite{Schoutens}
$$ P_t(x,y) {\rm d} \mu(y)=\sum_{j=1}^\infty  {\rm e} ^{ -j  t}  \frac{j!}{\Gamma(j+r+1)}L_j^{(r)} ( x) L_j^{(r)}( x){y^{r}{\rm e}^{- y}} {\rm d} y,$$
where $L^{(r)}_j(x)$ is the  generalized Laguerre polynomial of degree $r$ (with orthogonality with respect to $y^{r} {\rm e}^{-y}{\rm d} y$). These polynomials satisfy the recursion
$$x L_k^{(r)}=-(k+1) L_{k+1}^{(r)}(x)+(2k+r+1)L^{(r)}_k(x)-(k+r) L_{k-1}^{(r)}(x).$$
Note that the recursion coefficients are growing, which means that  we need to rescale the process. Moreover, the Laguerre polynomials are not normalized.  To be able to apply Theorem \ref{th:OP1} we therefore define the normalized  and rescaled polynomials by 
$$ p_k(\xi)=\sqrt{\frac{k!}{\Gamma(k+r+1)}}L_k^{(r)}(n\xi)(-1)^k.$$
The $p_k$ then satisfy the recursion 
$$\xi  p_k(\xi)=\frac{\sqrt{(k+1)(k+r+1)}}{n}  p_{k+1}(\xi)+\frac{(2k+r+1)}{n} p_k(\xi)+ \frac{\sqrt{k(k+r)}}{n} \ p_{k-1}(\xi).$$
Then one readily verifies that 
$$a_{n+k,n}^{(n)} \to 1, \qquad b_{n+k,n}^{(n)} \to 2.$$
Moreover, as in the previous example we have $\lambda_j=j$,  and hence also
$$(\lambda_{n+k} -\lambda_{n+\ell}) t_j= (k-\ell)t_j.$$
Therefore, the conditions of Theorem \ref{th:OP1} are satisfied with $a=1,b=2$ and $\tau_j=t_j$. To the best of our knowledge we believe that this example is a new result that has not appeared before. 

Also in the case it is possible to prove the conditions of Theorems \ref{th:extend} and \ref{th:extendgrowing}. \hfill $\blacksquare$\end{example}

\begin{figure}
\centering
\begin{subfigure}{}
\includegraphics[scale=.4]{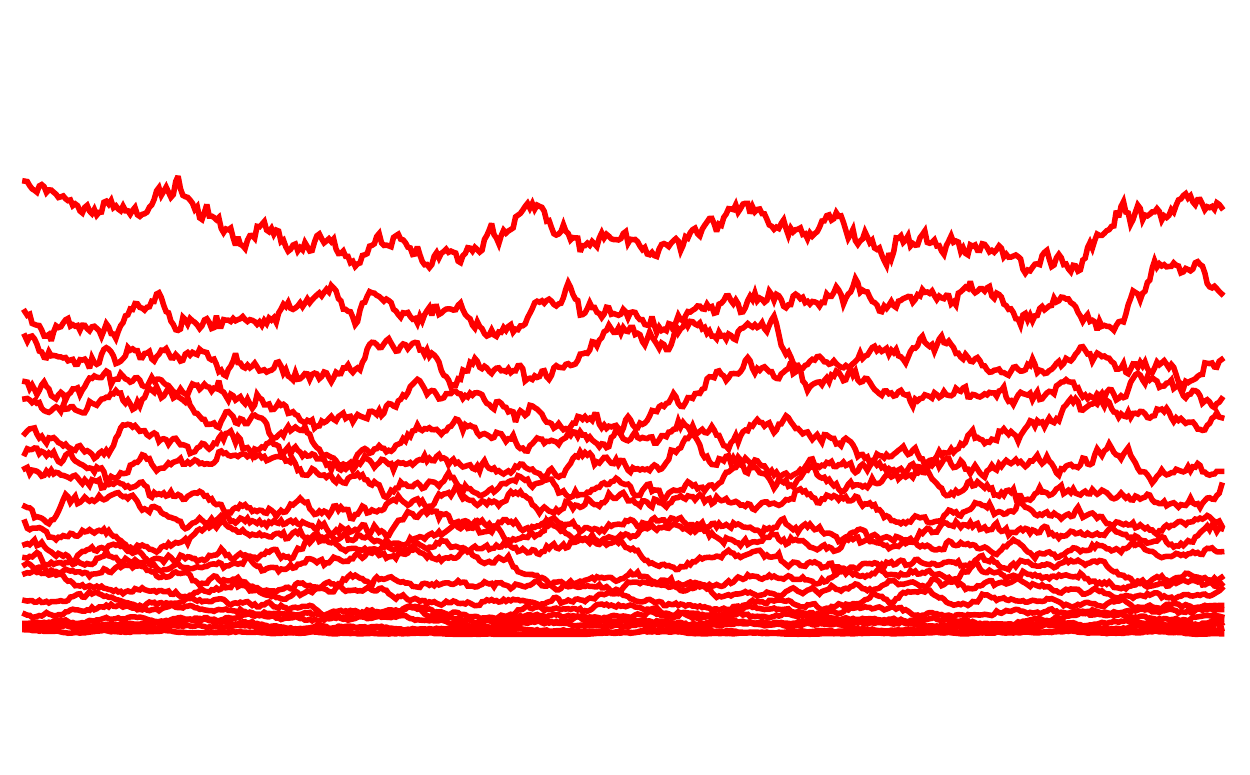}
\end{subfigure}
\begin{subfigure}{}
\includegraphics[scale=.4]{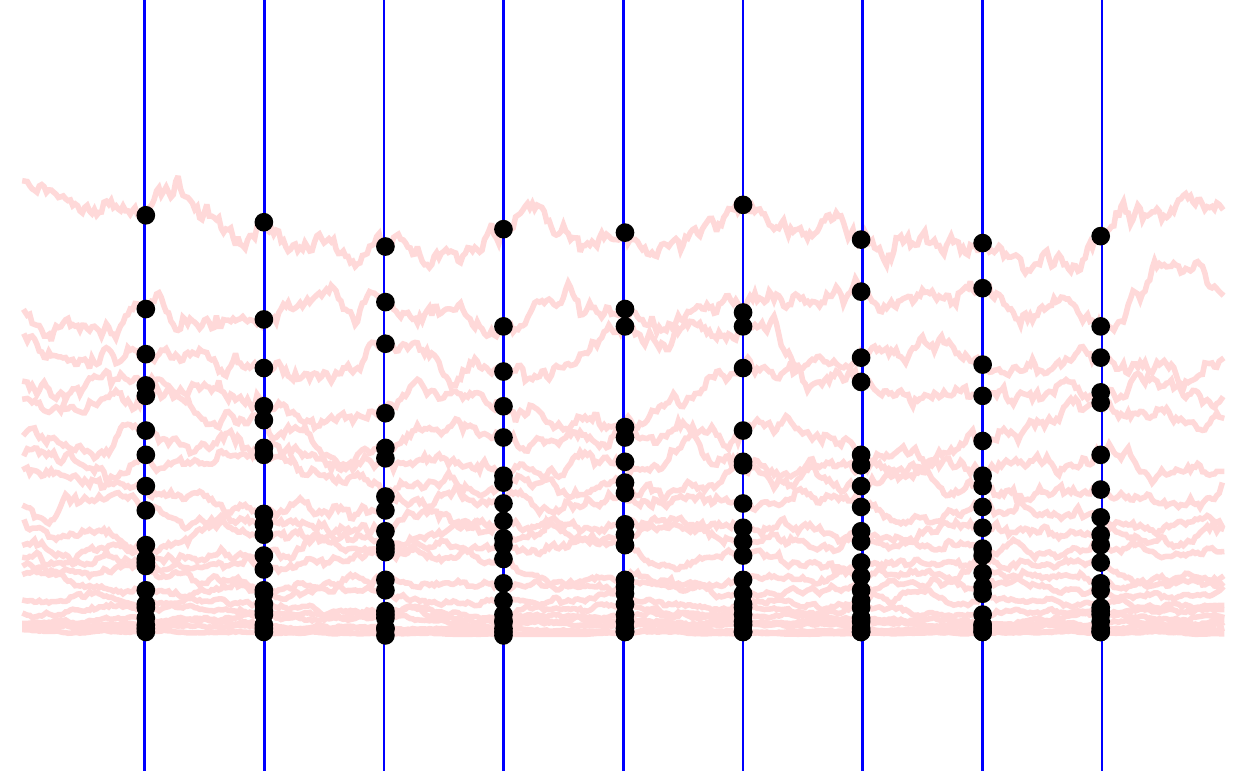}
\end{subfigure}
\caption{The left figure shows a sampling from a stationary non-colliding process generated by the squared radial Ornstein-Uhlenbeck process of size $n=20$. At the right we intersect the trajectories at multiple times $t_m$.}
\end{figure}
\begin{example}[Non-colliding Jacobi diffusions]
The last of the continuous examples is that of Jacobi diffusions, which have also been discussed in \cite{doumerc,G2}. For $\alpha, \beta>-1$, consider the Jacobi diffusion  \cite{Schoutens}
$$ P_t(x,y) {\rm d} \mu(y)=\sum_{j=1}^\infty {\rm e}^{-j(j+\alpha+ \beta+1)} p_j^{\alpha, \beta}(x) p_j^{\alpha,\beta}(y)  y^{\alpha} (1-y)^{\beta} {\rm d} y
$$
on $[0,1]$, where $p_j^{(\alpha,\beta)}(x) $ is the polynomial of degree $j$ with positive leading coefficient satisfying
$$\int_0^1  p_j^{(\alpha,\beta)}(x) p_k^{(\alpha,\beta)}(x) x^{\alpha}(1-x)^\beta{\rm d} x=\delta_{jk}.$$
Also in this case, the recurrence coefficient are explicit. Without giving them we mention that it can easily be computed that 
$$a_{n+k}\to \frac14, \qquad b_{n+k} \to 1/2.$$
In fact, this result is true for any measure $w(x) {\rm d}x$ on $[0,1]$ with  positive weight $w(x)>0$ (and even more general, the Denisov-Rakhmanov Theorem \cite[Th. 1.4.2]{SimonSzego} says that it holds for any general measure for which the essential support is $[0,1]$ and  for which the density  of the absolutely continuous part is strictly positive on $(0,1)$). 

In this case the $\lambda_j=j(j+\alpha+\beta+1)$ is quadratic. For that reason we will consider times 
$t_j= n(\alpha+ \beta+2)\tau_j$ for some fixed $\tau_j$'s, so that we have
$$\kappa_n t_j(\lambda_{n+k}-\lambda_{n+\ell})=\tau_j(k-\ell),$$
with $\kappa_n= \frac{1}{n(\alpha+ \beta+2)}$
and hence both conditions in Theorem \ref{th:OP1} are satisfied with $a=1/4$ and $b=1/2$. \hfill $\blacksquare$

\end{example}

The three examples above can also be obtained from stochastic evolution for matrices. See \cite{doumerc,Dyson,KOC} for more details.

The last three example are continuous in time and space. The next examples are concerned with a  discrete  space variable, based on birth and death processes. These are processes that model a population that can increase or decrease by  one. By tuning the birth/death rates (which may depend on the size of the population) one obtains classical orthogonal polynomials of a discrete variable. We refer to \cite{Schoutens} for more details and background.  In the $n$-particle construction as before we then arrive at stochastic evolution for which the classical discrete orthogonal polynomial ensembles are the invariant measures.  We emphasize that there are other constructions  \cite{OC} that lead to discrete orthogonal polynomial ensembles, such as the Charlier processes. Although there may be relations, these examples should not be confused with each other. 

\begin{example}[Non-colliding Meixner Ensemble] In the first example we start with a birth and death process on $\{0,1, \ldots\}$ with birth $\mu(n+\gamma)$  and death rate $n$, where $n$ is the size of the population. This process has  the transition function 
$$P_t(x,y){\rm d} \mu(y)= \sum_{j=0}^\infty {\rm e}^{-j t}   \frac{\mu^j (\gamma)_j }{j!} M_j(x;\gamma,\mu) M_j(y;\gamma,\mu)  (1-\mu)^\gamma  \frac{\mu^y(\gamma)_y} {y!}$$
on $\{0,1,2,3,\ldots\}$. Here $(\gamma)_y$ denotes the Pochhammer symbol and $M_j$ is the Meixner polynomial of degree $j$. 

The associated $n$-particle generalization  appeared in \cite{BO}. We now show how the conditions of Theorem \ref{th:OP1} are met.

The Meixner polynomials satisfy the following recursion 
\begin{multline*}
x M_k(x; \gamma,\mu)= -\frac{\mu (k+ \gamma)}{1-\mu} M_{k+1}(x,\gamma,\mu)+\frac{k(1 + \mu) +\mu \gamma}{1-\mu} M_k(x, \gamma, \mu)\\ -\frac{k}{1-\mu} M_{k-1}(x,\gamma,\mu).
\end{multline*}
Also in this case, both a recalling and normalization are needed. We define
$$p_k(\xi)= (-1)^k \sqrt{\frac{\mu^k (\gamma)_k (1-\mu)^{\gamma}}{k!}}M_k(\xi n;\gamma,\mu).$$
Then the recursion turns into 
\begin{multline*}
xp_k(\xi)= \frac{\sqrt{ \mu(k+ \gamma)(k+1)}}{n(1-\mu)} p_{k+1}(\xi)+ \frac{k(1 + \mu) +\mu \gamma}{1-\mu}  p_k(\xi)\\ + \frac{\sqrt{\mu(\gamma+k-1)k}}{n(1-\mu)} p_{k-1}(\xi).
\end{multline*}
Now it easily follows that 
$$a_{n+k,n}^{(n)} \to \frac{\sqrt \mu}{1-\mu}, \qquad b_{n+k,n}^{(n)} \to \frac{1+ \mu}{1-\mu}.$$ 
Also $\lambda_j$ so we have $\tau_j=t_j$. This shows that the conditions of Theorem \ref{th:OP1} hold. \hfill $\blacksquare$

\end{example}

\begin{example}[Non-colliding Charlier] In the next example we consider  a birth and death process on $\{0,1, \ldots\}$ with birth $\mu$  and death rate $n$, where $n$ is the size of the population. This process has  the transition function 
$$P_t(x,y){\rm d} \mu(y)= \sum_{j=0}^\infty {\rm e}^{-j t} \frac{\mu^j }{j!}  C_j(x;\mu) C_j(y;\mu) {\rm e}^{-\mu} \frac{\mu^k}{k!} $$
for  $x,y \in \{0,1,2,3,\ldots\}$, where $C_j(x;\mu)$ is the Charlier polynomial of degree~$j$. 

To apply Theorem \ref{th:OP1} for the corresponding $n$-particles process, we recall that the recursion for the Charlier reads 
$$x C_k(x; \mu)=- \mu C_{k+1}(x;\mu)+(k+\mu) C_k(x;\mu)-k  C_{k-1}(x;\mu).$$
As before, we renormalize 
$$p_k(x)= (-1)^k \sqrt{\frac{\mu^k}{k!}} C_k(x;\mu),$$
which gives the new recurrence 
$$xp_k(x)=\sqrt{\mu (k+1)} p_k(x) + (k+\mu)p_k(x)+ \sqrt{\mu k } p_{k-1}(x).$$
Now note that this case is special, since the $a_k$'s and $b_k$'s grow with different rates. This is a well-known feature of the Charlier polynomials. It means that there are two ways to get an interesting limit which we will treat separately. 

In the first one, we shift and rescale  the space variable according to
 $\xi= (x-n)/\sqrt{n},$
 and set $\hat p_k(\xi)= n^{1/4} p_k(n+ \sqrt n x)$. These polynomials satisfy the recurrence
 $$\hat p_k(\xi)= \sqrt{\frac{\mu (k+1)}{n}} \hat p_{k+1}(\xi)  +\frac{k-n+\mu}{\sqrt{n}} p_k(\xi)+ \sqrt{\frac{\mu k}{n} } p_{k-1}(\xi).$$
 Hence we see that 
 $$a_{n+k}^{(n)} \to \sqrt{\mu} \qquad \text{and} \qquad b_{n+k}^{(n)} \to 0.$$
 Combining this with the fact  $\lambda_j=j$ we see that in this way the conditions of Theorem \ref{th:OP1} are met. 
 
 In the second, way we allow $\mu$ to vary with $n$ and write $\mu= \tilde \mu n$. Then we consider the new variable 
  $\xi= x/n,$
  and set $\tilde  p_k(\xi)= \sqrt n  \tilde p_k(x n)$. Now the recurrence becomes
   $$\tilde p_k(\xi)= \frac{\sqrt{\mu k (k+1)}}{n} \tilde p_{k+1}(\xi)  +\frac{k(1+\mu)}{n} \tilde p_k(\xi)+ {\frac{\sqrt \mu k}{n} } \tilde p_{k-1}(\xi).$$
   In this case we have $$a_{n+k}^{(n)} \to \sqrt{\mu} \qquad \text{and} \qquad b_{n+k}^{(n)} \to 1+ \mu.$$
   Combining this again with the fact  $\lambda_j=j$ we see that also in this way the conditions of Theorem \ref{th:OP1} are met.
   
   Finally, we want to mention that this process is different from what is usually referred to as the Charlier process \cite{KOCR,OY}, which is non-colliding Poisson random walks starting from densely packed initial points. In that case we only allow for up jumps. The Charlier Ensemble  appears there as the fixed time distribution, but not as the invariant measure. \hfill $\blacksquare$
\end{example}

\begin{example}[Non-colliding Krawtchouck]
In the final example we consider  a birth and death process on $\{0,1, \ldots,M\}$ with birth rate $p(M-n)$  and death rate $n(1-p)$, where $n$ is the size of the population, $p \in (0,1)$ and $M \in \bbN$.  We then have the transition function 
\begin{multline}
P_t(x,y){\rm d} \mu(y)\\= \sum_{j=0}^M {\rm e}^{-j t} \begin{pmatrix} M \\ j \end{pmatrix} p^j (1-p)^{M-j}K_j(x;M,p) K_j(y;M,p)  \begin{pmatrix} M \\ k \end{pmatrix}p^k 1(-p)^{M-k}
\end{multline}
on $\{0,1,2,3,\ldots,M\}$, where $K_j$ is the Krawtchouk polynomial of degree $j$. 

In order for the $n$-particles process  to make sense, we need  enough available  nodes for all the  paths. That is, we need a $M\geq n$. In fact, when taking the limit $n \to \infty$, we will assume that $M_n$ also goes to infinity such that 
$$\frac{M_n}{n} \to \gamma >1.$$
The Krawtchouk polynomial satisfy the recurrence
\begin{multline*}
xK_k(x;M,p)= p(M-k) K_{k+1}(x;M,p)-\left(p(M-k)+k(1-p)\right) K_k(x;M,p)\\
+k (1-p)K_{k-1}(x;M,p).\end{multline*}
In this case we define the rescaled and normalized polynomials 
$$p_k(x)= (-1)^k \begin{pmatrix} M \\k \end{pmatrix} ^{1/2} \left(\frac{p}{1-p}\right)^{k/2}(-1)^k K_k(x;M,p),$$
and for these polynomials we get the recursion 
\begin{multline*}
x p_k(x)= \sqrt{p(1-p)} \frac{\sqrt{(k+1)(M-k)}}{n} p_{k+1}(x)+ \frac{p M-2p k +k }{n}p_k(x)\\
+ \sqrt{p(1-p)} \frac{\sqrt{k(M-k+1)}}{n}p_{k-1}(x).
\end{multline*}
Hence in this model we have, with $M/n \to \gamma$, 
$$a_{n+k,n}^{(n)} \to \sqrt{p(1-p)\gamma}, \qquad b_{n+k,n}^{(n)} \to p \gamma -2 p +1, $$
and, again $\lambda_j=j$, so that the conditions of Theorem \ref{th:OP1} are satisfied.

The invariant measure for the $n$-particle process is the Krawtchouk ensemble. This ensemble also appears in random domino tilings of an Aztec diamond \cite{Jannals}. However, the multi-time processes here is different from the extended Krawtchouk Ensemble in \cite{Jannals}. It is also different form the process introduced in \cite{KOCR} for which the single time distribution is a Krwatchouk Ensembles.  \hfill $\blacksquare$

\end{example}

\subsection{Non-stationary example}\label{sec:2mm}

We now consider the same construction ideas as the non-colliding Ornstein-Uhlenbeck process of Example \ref{eq:OU},  but instead of having the invariant measure as initial condition, we take the initial points random from a Unitary Ensemble. That is, we take $x_j$ random from the probability measure on $\mathbb R^n$ proportional to 
$$\prod_{1 \leq i < j\leq n } (x_i-x_j)^2 {\rm e}^{-n \sum_{j=1}^n V(x_j)} {\rm d} x_1 \cdots {\rm d} x_n,$$
where $V$ is a polynomial of even degree and positive leading coefficient (so that the above measure is indeed of finite mass). Then if we start the non-colliding Ornstein-Uhlenbeck process from these initial points and look at the positions $\{x_{j,m}\}_{j=1,m=1}^{n,N}$  at times 
$$0= t_1 < \ldots < t_N,$$
then we find that the following joint probability for these locations is proportional to 
\begin{multline} \det \left(x_{k,1}^{j-1}\right)_{j,k=1}^n \prod_{m=1}^N  \det \left(T_m(x_{j,m},x_{k,m+1}) \right)_{j,k=1}^n\\ \times  \det \left(x_{k,m}^{j-1} \right)_{j,k=1}^n  \prod_{m=1}^N \prod_{j=1}^n {\rm d} \mu_m(x_{j,m}),
\end{multline}
where, with $\Delta_m={\rm e}^{- (t_{m+1}^{(n)}-t_m^{(n)})}$, 
$$T_m(x,y)=(2 \pi)^{-1/2} \exp \left(\frac{- n\Delta_m^2 (x^2+y^2) +n\Delta_m  xy}{1-\Delta_m^2 }\right)= \sum_{j=0}^\infty {\rm e}^{-j t} H_{j}(\sqrt n x) H_j(\sqrt n y)$$
and 
$${\rm d}\mu_m(x)= \begin{cases}
{\rm e}^{-n V(x)} {\rm d} x, & m=1\\
{\rm e}^{-n x^2/2} {\rm d } x, & m=2, \ldots,N.
\end{cases} $$
(Note that we have rescaled space immediately). 

The functions in the determinant are not in the right form, as Assumption \ref{assumption} is not yet satisfied. Hence the first thing to do is to rewrite the probability density. We start by defining $p_{j,n}$ to be the normalized orthogonal polynomial (with positive leading coefficient) with respect to ${\rm e}^{-n V(x)} {\rm d} x$.
Now set
$$\phi_{j,1}(x)= p_{j-1,n}(x).$$
To define the $\phi_{j,m}$'s we first expand $p_{j-1,n}$  in terms of Hermite functions
$$p_{j,n}(x)= \sum_{k=0}^j c_{j,k} H_k(x),$$
where $H_n(x)$ are the normalized Hermite polynomials. We then define
$$\psi_{j,N}(x)= \sum_{k=0}^{j-1} c_{j,k}  {\rm e}^{kt_N } H_k(x).$$
It is then straightforward  that 
$$\psi_{j,m}(x)= \mathcal T_m^*\cdots \mathcal T_{N-1}^* \psi_{j,N}(x)=\sum_{k=0}^{j-1} c_{j,k}  {\rm e}^{kt_m } H_k(x).$$
Hence, we also have $\phi_{j,1}=\psi_{j,1}=p_{j-1,n}(x)$ and the biorthogonality condition is satisfied. In the following lemma we show that these indeed are biorthogonal families and that they satisfy a recursion.

\begin{figure}[t]
\begin{center}
\includegraphics[scale=0.26]{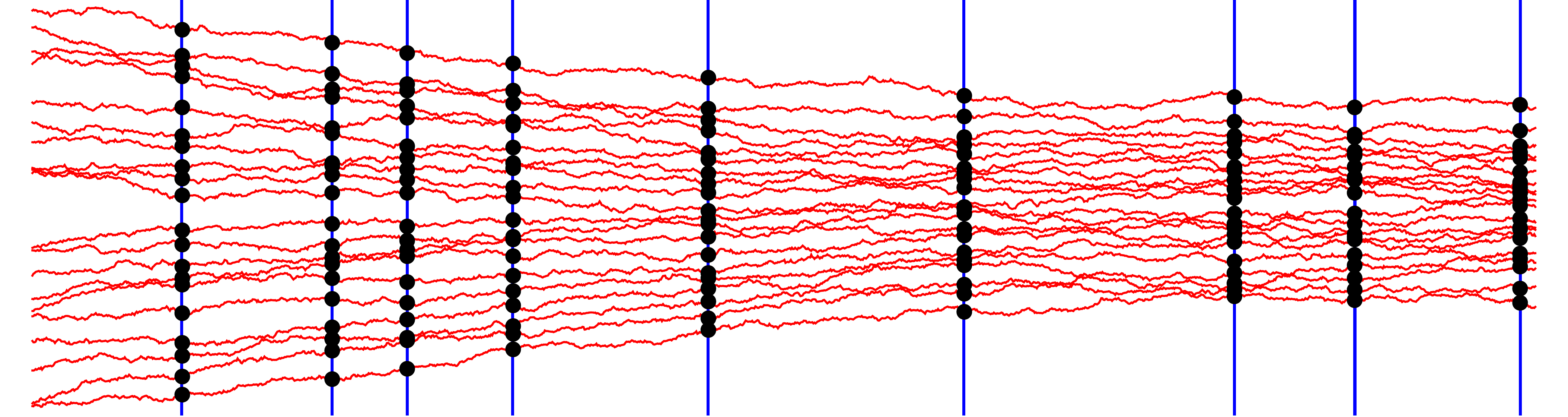}
\end{center}
\caption{The non-colliding Ornstein-Uhlenbeck process started from  arbitrary points at $t=0$.   In the example of Section \ref{sec:2mm} we take those initial points randomly from a Unitary Ensemble.} \label{DBMUE}
\end{figure}

\begin{lemma} \label{lem:nonstat}
The $\phi_{j,m}$ satisfy a recurrence relation $$x\begin{pmatrix}\phi_{1,m}(x)\\
\phi_{2,m}(x) \\
\vdots \end{pmatrix}= \mathbb J_m \begin{pmatrix}\phi_{1,m}(x)\\
\phi_{2,m}(x) \\
\vdots \end{pmatrix} $$
with $$(\mathbb J)_{j,k} = \begin{cases}
{\rm e}^{-t_m}( \mathcal J)_{j,k},& \text{if } j>k,\\
{\rm e}^{-t_m}( \mathcal J)_{j,k}+ 2 \sinh t_m (V'(\mathcal J))_{j,k}, &\text{if } {j \leq k},
\end{cases}
$$
where $\mathcal J$ is the Jacobi matrix associated to the polynomials $p_j$.
\end{lemma}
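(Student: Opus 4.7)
The plan is to express $\phi_{j,m}(y)$ as a single integral against a Mehler kernel and then commute multiplication by $y$ past that kernel. For $m\geq 2$ the intermediate measures $\mu_m={\rm e}^{-nx^2/2}dx$ are precisely the invariant measures of the Mehler (Ornstein--Uhlenbeck) semigroup, so iteratively composing $T_k$ against these Gaussians collapses $\mathcal T_{m-1}\cdots\mathcal T_1$ into a single kernel of elapsed time $t_m-t_1=t_m$:
\[\phi_{j,m}(y)=\int p_{j-1,n}(x)\,T_{[0,t_m]}(x,y)\,{\rm e}^{-nV(x)}\,dx,\qquad T_{[0,t]}(x,y):=\sum_{k\geq 0}{\rm e}^{-kt}H_k(\sqrt n\,x)H_k(\sqrt n\,y).\]

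\textbf{Mehler commutation identity.} Next I would derive the commutation identity
\[yT_{[0,t]}(x,y)={\rm e}^{-t}\,xT_{[0,t]}(x,y)+\frac{2\sinh t}{n}\,\partial_x T_{[0,t]}(x,y),\]
which follows from a short bookkeeping using the Hermite three-term recurrence $\xi H_k(\xi)=\sqrt{k+1}H_{k+1}(\xi)+\sqrt k H_{k-1}(\xi)$, the lowering $H_k'(\xi)=\sqrt k H_{k-1}(\xi)$, and the $\sqrt n$ rescaling. Substituting this into the integral representation of $\phi_{j,m}$ gives
\[y\phi_{j,m}(y)={\rm e}^{-t_m}\!\int xp_{j-1,n}(x)T_{[0,t_m]}(x,y){\rm e}^{-nV(x)}dx+\frac{2\sinh t_m}{n}\!\int p_{j-1,n}(x)\partial_x T_{[0,t_m]}(x,y){\rm e}^{-nV(x)}dx.\]
The first integral reduces, via the three-term recurrence $xp_{j-1,n}=\sum_\ell(\mathcal J)_{j-1,\ell}p_{\ell,n}$, to ${\rm e}^{-t_m}\sum_\ell(\mathcal J)_{j-1,\ell}\phi_{\ell+1,m}(y)$; the second becomes, after integration by parts against ${\rm e}^{-nV(x)}dx$, a sum $-\int p_{j-1,n}'T_{[0,t_m]}{\rm e}^{-nV}dx+n\int V'(x)p_{j-1,n}T_{[0,t_m]}{\rm e}^{-nV}dx$.

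\textbf{String equation and conclusion.} The algebraic input producing the $V'(\mathcal J)$-structure of the answer is the classical string equation $D+D^{\top}=nV'(\mathcal J)$ for the derivative matrix $D$ defined by $p_{j,n}'=\sum_k D_{jk}p_{k,n}$, obtained by one further integration by parts using the orthonormality of the $p_{j,n}$. Since $p_{j,n}'$ has degree $j-1$, the matrix $D$ is strictly lower triangular; combined with the symmetry of $V'(\mathcal J)$ this forces $D=n\,V'(\mathcal J)_-$, where $(\cdot)_-$ is the strictly lower triangular part, and moreover $V'(\mathcal J)_{jj}=0$. Inserting $p_{j-1,n}'$ via this identity and expanding $V'(x)p_{j-1,n}(x)=\sum_\ell(V'(\mathcal J))_{j-1,\ell}p_{\ell,n}(x)$ makes the derivative contribution equal to $n\sum_\ell(V'(\mathcal J)-V'(\mathcal J)_-)_{j-1,\ell}\phi_{\ell+1,m}(y)$, i.e.~$n$ times the upper triangular part of $V'(\mathcal J)$ acting on $\phi_{\cdot,m}$. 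The prefactor $2\sinh(t_m)/n$ cancels the $n$, and after the trivial reindexing $\ell\mapsto\ell+1$ one reads off
\[(\mathbb J_m)_{j,\ell}={\rm e}^{-t_m}(\mathcal J)_{j,\ell}+2\sinh(t_m)\,\mathbf{1}_{\{j\leq\ell\}}(V'(\mathcal J))_{j,\ell},\]
which is the claimed formula (for $j>\ell$ the indicator kills the second term, while for $j=\ell$ the entry $V'(\mathcal J)_{jj}$ vanishes). As a sanity check, at $m=1$ one has $t_1=0$, $\sinh t_1=0$, ${\rm e}^{-t_1}=1$, so $\mathbb J_1=\mathcal J$, consistent with $\phi_{j,1}=p_{j-1,n}$ and $xp_{j-1,n}=\mathcal J p_{j-1,n}$.

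\textbf{Main obstacle.} The technically most delicate point is the very first step: the collapse of $\mathcal T_{m-1}\cdots\mathcal T_1$ into a single Mehler kernel of total time $t_m$ integrated against the initial weight ${\rm e}^{-nV(x)}$. This rests on the semigroup property of the OU kernel against the Gaussian reference measure and requires careful tracking of normalization constants introduced by the $\sqrt n$-rescaling. The rest of the argument is then a matter of making the three factors of $n$ --- the $1/n$ in the commutation identity, the $n$ from integration by parts against ${\rm e}^{-nV}$, and the $n$ inside $D=nV'(\mathcal J)_-$ --- cancel cleanly so that the resulting $\mathbb J_m$ is $n$-independent.
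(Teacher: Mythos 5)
Your proposal is correct and follows essentially the same route as the paper: collapse $\mathcal T_{m-1}\cdots\mathcal T_1$ into a single Mehler kernel of elapsed time $t_m$ against the weight ${\rm e}^{-nV}$, generate the factor of the space variable by a derivative of that kernel, integrate by parts, and use the identity $p_{j-1,n}'=n\sum_{k<j}(V'(\mathcal J))_{k,j}\,p_{k-1,n}$ (your string equation $D+D^{\top}=nV'(\mathcal J)$ with $D$ strictly lower triangular) to cancel the strictly lower-triangular part of $V'(\mathcal J)$. The only cosmetic difference is that you derive the commutation identity from the Hermite recurrence and lowering relation at the level of the eigenfunction expansion, while the paper obtains the same three terms by differentiating the explicit Gaussian form of the kernel and integrating by parts in the integration variable.
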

\begin{proof}
For $m=1$ the statement is trivial since then $\psi_{j,1}=\phi_{j,1}=p_{j-1,n}$ and the the recurrence matrix is $\mathcal J$ by definition, which is also the result when we substitute $t_m=0$ in the statement.

 So it remains to deal with the case $m>1$.  We first  claim that 
 \begin{equation} \label{eq:lastlast} p_{j-1,n}'(x)= n \sum_{k<j} (V'(\mathcal J))_{k,j} p_{k-1,n}(x).
 \end{equation}
 To see this, we note that $p_{j-1,n}'(x)$ is a polynomial of degree $j-2$ and hence it can be expanded in terms of the polynomials $p_{k-1,n}$ for $k=0,\ldots ,j-1$. That the coefficients in the expansion are indeed as stated follows from an integration by parts
 \begin{multline}
\int p_{k-1,n}(x) p_{j-1,n}'(x) {\rm e}^{-n V(x)} {\rm d} x
=n \int V(x) p_{k-1,n}(x) p_{j-1,n}(x) {\rm e}^{-nV(x)}{\rm d} x\\-\int p_{k-1,n}'(x) p_{j-1,n}(x) {\rm e}^{-n V(x)} {\rm d} x= (V'(\mathcal J))_{k,j},
 \end{multline}
 where the second integral in the middle part vanishes by orthogonality and the fact that $k<j$. 
 
  Then, for $m>1$, we have
$$\phi_{j,m}(x)= \frac{1}{\sqrt{2  \pi (1- {\rm e}^{-2t_m})}} \int p_{j-1,n}(y) {\rm e}^{-n \left(V(y)+ \frac{{\rm e}^{-2t_m}(x^2+y^2)-2 {\rm e}^{-t_m}x y}{2(1-{\rm e}^{-2t_m})}\right)} {\rm d} y.$$
Hence, by  integration by parts,
\begin{multline}\label{eq:recurrence2m}
x \phi_{j,m}(x)
=- \frac{2 \sinh t_m}{n}  {\rm e}^{-\frac{n{\rm e}^{-2 t_m} x^2}{2(1-{\rm e}^{-2t_m})}} 
 \int p_{j-1,n}(y)  {\rm e}^{-n V(y)- \frac{n {\rm e}^{-t_m}y^2}{2(1-{\rm e}^{-2t_m}) }} \frac{\partial }{\partial y} {\rm e}^{-\frac{n xy }{  2 \sinh t_m}} {\rm d} y 
 \\
  = 2\sinh t_m    \int \left(-p_{j-1,n}'(y)/n +p_{j-1,n}(y)  V'(y)+p_{j-1,n}(y)\frac{ {\rm e}^{-t_m}y}{2\sinh t_m}\right)\\
 \times {\rm e}^{- n (V(y)+\frac{{\rm e}^{-2t_m}(x^2+y^2)-2 {\rm e}^{-t_m}x y}{2(1-{\rm e}^{-2t_m})})} 
{\rm d} y
\end{multline}
The statement now follows by a rewriting of the latter using the recurrence matrix $\mathcal J$ and using \eqref{eq:lastlast}. \end{proof}

From Lemma \ref{lem:nonstat} it in particular follows that if the recurrence coefficients for the orthogonal polynomials have the required asymptotic behavior, then also the recurrence coefficients for $\phi_{j,m}$ have the required behavior and Theorems \ref{th:main1} and \ref{th:main1growing} apply.  
\begin{proposition}
If the recurrence coefficients $a_{k,n}$ and $b_{k,n}$  for $p_{k,n}$ satisfy $$a_{n+k,n} \to a, \qquad b_{n+k,n} \to b,$$
as $n \to \infty$,  then Theorem \ref{th:main0}  applies where
$$a^{(m)}(z)= 2 \sinh(t_m) (V'(az +b+a/z))_+ + {\rm e}^{-t_m}(az+b+a/z),$$
where $(V'(az +b+a/z))_+$ is the part of the Laurent polynomial $V'(az +b+a/z)$ containing the non-negative powers.

Moreover, Theorem \ref{th:main0growing} also applies with 
$$\sum_{j} a_j(t) z^j= 2 \sinh(t) (V'(az +b+a/z))_+ + {\rm e}^{-t}(az+b+a/z).$$
\end{proposition}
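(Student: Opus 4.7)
The plan is to verify the hypothesis \eqref{eq:mainthcond1} of Theorem \ref{th:main1} (and, for the varying $N_n$ claim, of Theorem \ref{th:main1growing}) by computing the entrywise limits of $\mathbb{J}_m$ from Lemma \ref{lem:nonstat}, and then to read off the resulting Laurent symbol.

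First, the assumption $a_{n+k,n} \to a$, $b_{n+k,n} \to b$ is precisely the statement that the Jacobi matrix $\mathcal J$ of $p_{k,n}$ is asymptotically Toeplitz with Laurent symbol $c(z) = az + b + a/z$: for every fixed pair $(k,\ell)$,
$$\lim_{n\to\infty} (\mathcal J)_{n+k,n+\ell} = [c(z)]_{k-\ell},$$
where $[\,\cdot\,]_j$ denotes the coefficient of $z^j$. By standard symbol calculus $V'(\mathcal J)$ inherits this property with symbol $V'(c(z))$. The clean way to verify this is by induction on the degree, using that for a polynomial $P_p$ whose image in $\mathcal J$ has symbol $p(z)$ one has $(\mathcal J P_p(\mathcal J))_{n+k,n+\ell} = \sum_r (\mathcal J)_{n+k,n+r}(P_p(\mathcal J))_{n+r,n+\ell}$, which is a finite sum by the banded structure of $\mathcal J$; term-by-term passage to the limit yields $[c(z)p(z)]_{k-\ell}$.

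With these two limits in hand, I would split according to the two cases of Lemma \ref{lem:nonstat}. For $k > \ell$ only the $\mathcal J$-piece is present, so
$$\lim_{n\to\infty}(\mathbb{J}_m)_{n+k,n+\ell} = e^{-t_m}[c(z)]_{k-\ell}.$$
For $k \leq \ell$ both pieces contribute and
$$\lim_{n\to\infty}(\mathbb{J}_m)_{n+k,n+\ell} = e^{-t_m}[c(z)]_{k-\ell} + 2\sinh(t_m)[V'(c(z))]_{k-\ell}.$$
Assembling $a^{(m)}(z) = \sum_j a^{(m)}_j z^j$ with $a^{(m)}_j$ being the limit above for $j = k-\ell$, the $\mathcal J$-contribution reconstructs $e^{-t_m}c(z)$ in full, while the $V'(\mathcal J)$-contribution appears only on one triangular side (the sign determined by the case split). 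Using the palindromic symmetry $V'(c(z)) = V'(c(1/z))$, which follows from $c(z)=c(1/z)$, this triangular half is identified with $(V'(c(z)))_+$ as in the statement. This verifies \eqref{eq:mainthcond1} with the stated $a^{(m)}(z)$, and Theorem \ref{th:main1} applies.

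For the varying $N_n$ statement, the only additional ingredient is uniformity in $m \in \{1,\ldots,N_n\}$ of the convergence of $(\mathbb{J}_m)_{n+k,n+\ell}$. Since $\mathcal J$ does not depend on $m$ and the $m$-dependence of $\mathbb{J}_m$ enters only through the smooth, bounded factors $e^{-t_m}$ and $\sinh(t_m)$ evaluated on the compact time interval $I$, the uniformity is automatic from the fixed-$m$ result, and one reads off $\sum_j a_j(t) z^j = 2\sinh(t)(V'(c(z)))_+ + e^{-t}c(z)$. The only care needed anywhere is the bookkeeping of the below-diagonal versus on-or-above-diagonal decomposition of $\mathbb{J}_m$ and its translation into the positive/negative Laurent power decomposition of $V'(c(z))$; the analytic content reduces to routine symbol calculus for asymptotically Toeplitz banded operators.
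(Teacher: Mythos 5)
Your route is the same as the paper's: the paper offers no proof beyond the remark that the claim "follows from Lemma \ref{lem:nonstat}", and your elaboration — $\mathcal J$ is asymptotically Toeplitz with symbol $c(z)=az+b+a/z$, hence $V'(\mathcal J)$ is asymptotically Toeplitz with symbol $V'(c(z))$ by induction on the degree using bandedness, and then the triangular decomposition of Lemma \ref{lem:nonstat} is read off entrywise — is exactly the intended argument, including the correct observation that uniformity in $m$ for the varying-$N_n$ case comes for free since the $m$-dependence sits only in $e^{-t_m}$ and $\sinh t_m$. One caveat: your final step, identifying the triangular half carrying the $V'(\mathcal J)$ contribution with $(V'(c(z)))_+$ "using palindromic symmetry", is not actually an argument. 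Palindromy gives $[V'(c)]_j=[V'(c)]_{-j}$, i.e.\ equality of coefficients, but $(V'(c))_+$ and $(V'(c))_-$ are different Laurent polynomials (related by $z\mapsto 1/z$), and with the paper's convention $(\mathbb J_m)_{n+k,n+l}\to a^{(m)}_{k-l}$ the case $j\le k$ of Lemma \ref{lem:nonstat} (row at most column) corresponds to the \emph{non-positive} powers of $z$; so strictly one lands on $(V'(c(z)))_-$, and whether the statement's $(\cdot)_+$ is a typo in the paper or in the lemma's case split is a bookkeeping point you should settle by tracking the convention rather than by appeal to symmetry. This does not affect the applicability of Theorems \ref{th:main1} and \ref{th:main1growing}, only the precise form of the limiting symbol.
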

The conditions of the latter proposition are met, when the polynomial $V$ is such that the zeros of $p_n$ accumulate on a single interval \cite{DKMVZ1}. This happens for example when $V$ is convex.

Finally, we note that the above model is a special case of the Hermitian multi-matrix model. In the present setting the limiting distribution of points at a given time $t$ can also be computed using the recurrence coefficients, as was done in \cite{DGK} for the special case where $V$ is an even quartic. This even leads to a vector equilibrium problem for the limiting distribution. 
\subsection{Lozenge tilings of a hexagon}
The last example that we will treat is that of lozenge tilings of an $abc$-hexagon. See  Figure \ref{fig:hex}. This well-studied model can also be viewed as a model of discrete non-intersecting paths.
It was proved in \cite{Petrov}  (in a more general context) that the height function associated to the paths indeed has Gaussian Free Field fluctuations. We will show here that it also follows from our general results. We first give the two equivalent descriptions of the model in terms of tilings and in terms of non-intersecting paths, starting with the latter.

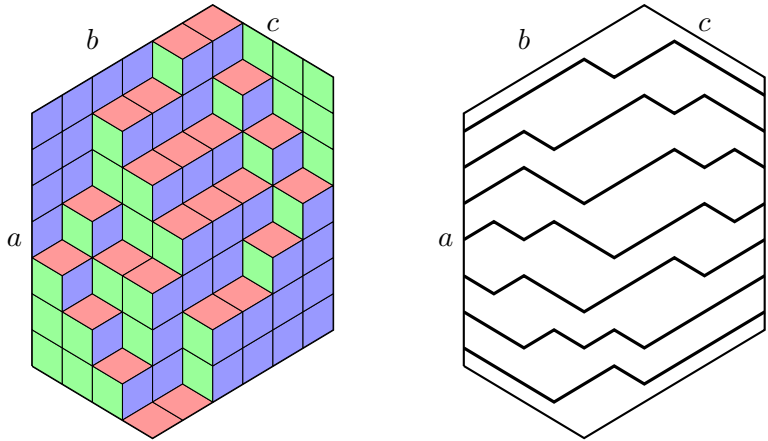
\begin{figure}[t]
\begin{center}
\begin{tikzpicture}[xscale=0.4,yscale=0.24]

%\draw[very thick] (3,-7)--(3,24);
%\draw (2,-5) node {$ \{m\}\times\bbN $};
%\draw[rotate around={83:(20,8)},thick] (20,8)  ellipse(11cm and 4.8cm);
\draw[ thick] (0,0)--(0,14)--(6,20)--(10,16)--(10,2)--(4,-4)--(0,0);
%\draw[ thick] (15,0)--(15,14)--(21,20)--(25,16)--(25,2)--(19,-4)--(15,0);

{
\draw (0,7) node[left] {$a$};
\draw (2,17) node[above] {$b$};
\draw (8,18) node[above] {$c$};

\draw (0,0) \lozd;
\draw (0,2) \lozd;
\draw (0,4) \lozd;
\draw (0,6) \lozu;
\draw (0,8) \lozu;
\draw (0,10) \lozu;
\draw (0,12) \lozu;

\draw (0,6) \lozr;

\draw (1,-1) \lozd;
\draw (1,1) \lozd;
\draw (1,3) \lozu;
\draw (1,7) \lozd;
\draw (1,9) \lozu;
\draw (1,11) \lozu;
\draw (1,13) \lozu;

\draw (1,3) \lozr;
\draw (1,9) \lozr;

\draw (2,-2) \lozd;
\draw (2,0) \lozu;
\draw (2,4) \lozd;
\draw (2,6) \lozu;
\draw (2,10) \lozd;
\draw (2,12) \lozd;
\draw (2,14) \lozu;

\draw (2,0) \lozr;
\draw (2,6) \lozr;

\draw (2,14) \lozr;

\draw (3,-3) \lozu;
\draw (3,1) \lozd;
\draw (3,3) \lozd;

\draw (3,7) \lozd;
\draw (3,9) \lozd;
\draw (3,11) \lozu;
\draw (3,15) \lozu;

\draw (3,-3) \lozr;
\draw (3,5) \lozr;
\draw (3,11) \lozr;
\draw (3,15) \lozr;

%\filldraw (4,-3) ellipse (5pt and 8pt);
%\filldraw (4,5)  ellipse (5pt and 8pt);
%\filldraw (4,11) ellipse (5pt and 8pt);
%\filldraw (4,15)  ellipse (5pt and 8pt);
%\draw[thick] (4,-6) -- (4,20);
\draw (4,-2) \lozu;
\draw (4,0) \lozu;
\draw (4,2) \lozu;
\draw (4,6) \lozd;
\draw (4,8) \lozu;
\draw (4,12) \lozu;
\draw (4,16) \lozd;

\draw (4,-2) \lozr;
\draw (4,8) \lozr;
\draw (4,12) \lozr;
\draw (4,18) \lozr;

\draw (5,-1) \lozd;
\draw (5,1) \lozd;
\draw (5,3) \lozu;
\draw (5,5) \lozu;
\draw (5,9) \lozu;
\draw (5,13) \lozu;
\draw (5,15) \lozu;

\draw (5,3) \lozr;
\draw (5,9) \lozr;
\draw (5,13) \lozr;
\draw (5,19) \lozr;

\draw (6,-2) \lozu;
\draw (6,0) \lozu;
\draw (6,4) \lozu;
\draw (6,6) \lozu;
\draw (6,10) \lozu;
\draw (6,14) \lozd;
\draw (6,16) \lozu;

\draw (6,4) \lozr;
\draw (6,10) \lozr;
\draw (6,16) \lozr;

\draw (7,-1) \lozu;
\draw (7,1) \lozu;
\draw (7,5) \lozd;
\draw (7,7) \lozu;
\draw (7,11) \lozd;
\draw (7,13) \lozu;
\draw (7,17) \lozd;

\draw (7,7) \lozr;
\draw (7,13) \lozr;

\draw (8,0) \lozu;
\draw (8,2) \lozu;
\draw (8,4) \lozu;
\draw (8,8) \lozd;
\draw (8,10) \lozu;
\draw (8,14) \lozd;
\draw (8,16) \lozd;

\draw (8,10) \lozr;;

\draw (9,1) \lozu;
\draw (9,3) \lozu;
\draw (9,5) \lozu;
\draw (9,7) \lozu;
\draw (9,11) \lozd;
\draw (9,13) \lozd;
\draw (9,15) \lozd;
};
\end{tikzpicture}
\hspace*{1cm} 
\begin{tikzpicture}[xscale=0.4,yscale=0.24]
\draw[ thick] (0,0)--(0,14)--(6,20)--(10,16)--(10,2)--(4,-4)--(0,0);
{\draw[very thick] (0,1)--(3,-2)--(5,0)--(6,-1)--(10,3);
\draw[very thick] (0,3)--(2,1)--(3,2)--(4,1)--(5,2)--(6,1)--(10,5);
\draw[very thick] (0,5)--(1,4)--(2,5)--(4,3)--(7,6)--(8,5)--(10,7);
\draw[very thick] (0,7)--(1,8)--(2,7)--(3,8)--(5,6)--(8,9)--(9,8)--(10,9);
\draw[very thick] (0,9)--(2,11)--(4,9)--(7,12)--(8,11)--(9,12)--(10,11);
\draw[very thick] (0,11)--(2,13)--(3,12)--(6,15)--(7,14)--(8,15)--(10,13);
\draw[very thick] (0,13)--(4,17)--(5,16)--(7,18)--(10,15);}
\draw (0,7) node[left] {$a$};
\draw (2,17) node[above] {$b$};
\draw (8,18) node[above] {$c$};
\end{tikzpicture}
\end{center}

\caption{A lozenge tiling of an $abc$-hexagon (l) and the equivalent representation in terms of non-intersecting paths (r). }\label{fig:hex}
\end{figure}

 Fix $a,b,c \in \bbN$ and without loss of generality we assume that $b \leq c$. Then we consider a collection $\vec \gamma$ of $c$ zig-zag paths $\gamma_j : \{0,1,\ldots,b+c\} \to  \bbZ $ for $j=1, \ldots, a$ such that
\begin{enumerate}
\item they start $\gamma(0) =2j-1$ and end at $\gamma(b+c)=c-b+2j-1$ for $j=1,\cdots, a$. 
\item At each step by one to the right, the path goes up or down by one, i.e, $\gamma_j(k+1)= \gamma_j(k) \pm 1$. 
\item  the paths never cross $\gamma_{j}(k) < \gamma_{j+1}(k)$. 
\end{enumerate}
Note that due to the conditions on the starting and endpoints, each path will consist of $b$ down steps and $c$ up steps. We then take the uniform measure on all such $\vec\gamma$. This is equivalent to say that we consider $a$ random walkers with given starting and ending points conditioned never to intersect.

A different representation is that of lozenge tiling of the hexagon.   Indeed, if we take an $abc$-hexagon with corners $(0,0)$, $(0,2a)$, $(c,2a+c)$, $(b+c,2a+c-b)$, $(b+c,c-b)$ and $(b,-b)$ and tile this hexagon with lozenges of 
\begin{center}
type I \tikz[xscale=.2,yscale=.15]{\draw (0,0) \lozd;}, type II \tikz[xscale=.2,yscale=.15]{\draw (0,0) \lozu;} and type III \tikz[xscale=.2,yscale=.15]{\draw (0,0) \lozr;}.
\end{center} To make the connection with the above path model, we associate to each tiling of the hexagon a collection of paths by drawing a down step on a lozenge of type I and an up step on a type III lozenge going through the centers of the lozenges. That is, 
\begin{center}
type I \tikz[xscale=.2,yscale=.15]{\draw (0,0) \lozd; \draw[very thick] (0,1)--(1,0);}, type II \tikz[xscale=.2,yscale=.15]{\draw (0,0) \lozu; \draw[very thick] (0,1)--(1,2);} and type III \tikz[xscale=.2,yscale=.15]{\draw (0,0) \lozr;}.
\end{center} 
 It is then easy to see that this indeed defines a collection of zig-zag paths that do not intersect and start and end from the given points. Moreover, by taking the uniform measure on all possible tiling, we obtain the uniform measure on all zig-zag paths.

In \cite{G1,Jhahn} it was proved that the locations of the paths $\{(k,\gamma_j(k))\}_{k=1,j=1}^{b+c-1,a}$ form a determinantal point process with a  kernel  constructed out of the Hahn polynomials.  We recall that the Hahn polynomials $q_{k,M}^{(\alpha, \beta)}$ are the orthonormal polynomials with respect to the weight 
$$w_M^{(\alpha, \beta)}(x)= \frac{1}{x!(x+\alpha)!(M+\beta-x)!(M-x)!},$$
on $\{0,1,\ldots,M\}$, i.e.
$$\sum_{x=0}^M q_{k,M}^{(\alpha, \beta)}(x) q_{\ell,M}^{(\alpha, \beta)}(x) w_N^{(\alpha,\beta)}= \delta_{k \ell}.$$ 
They have the explicit representation 
$$q_{k,M}^{(\alpha,\beta)}= \frac{(-M-\beta)_k(-M)_k}{k!d_{k,M}^{(\alpha,\beta)} }3F_2\left(\begin{matrix} -k,k-2M-\alpha-\beta-1,-x\\-M-\beta,-M \end{matrix} \ ; \ 1 \right)$$ 
with 
$${d_{k,M}^{(\alpha,\beta)}}^2 =\frac{(\alpha+ \beta+M+1-k)_{M+1}}{(\alpha+\beta+2kM+1-2k)k! (\beta+M-k)!(\alpha+M-k)!(M-k)! },$$
and $(\alpha)_M=\alpha(\alpha+1) \cdots (\alpha+M-1)$ denotes the usual Pochhammer symbol. From \cite[\S 9.5]{koekoek} it follows that the normalized  Hahn polynomials have the recurrence
\begin{equation}
\label{eq:rechahn} x q_{k,M}^{(\alpha,\beta)}(x)= a_{k+1} q_{k+1,M}^{(\alpha,\beta)}(x)+b_kq_{k,M}^{(\alpha,\beta)}(x)+a_k
q_{k-1,M}^{(\alpha,\beta)}(x),
\end{equation}
where 
\begin{equation}
\label{eq:akhex}
 a_k= \sqrt{\frac{(M-k+1)k(M-k+1+\alpha)(M-k+1+\beta)(M-k+1+\alpha+\beta)(2M-k+2+\alpha+ \beta)}{(1+2M-2k+\alpha+ \beta)(2+2M-2k+\alpha+\beta)^2(3+2M-2k+\alpha+ \beta)}},
 \end{equation}
and 
\begin{multline} \label{eq:bkhex}
b_k=\frac{(2M+\alpha+ \beta+k-1)(M+\beta-k)(M-k)}{(2M-2k+\alpha+ \beta)(2M-2k+\alpha+\beta+1)}\\+ \frac{k(2M+ \alpha+ \beta+1-k)(M-k+\alpha+1)}{(2M-2k+\alpha+ \beta+2)(2M-2k+ \alpha+ \beta+1)}
\end{multline}

Now we come back to the tiling process. We first need some notations in which we follow \cite{Jhahn}.  Set $\alpha_r= |c-r|$, $\beta_r=|b-r|$, $L_r=b-b_r$ and
$$M_r= \begin{cases}
r+a-1, & 0 \leq r \leq b, \\
b+a-1, & b \leq r \leq c,\\
a+b+c-1-r,& c \leq r \leq  b+c.
\end{cases}$$
Then, as shown in  \cite{G1,Jhahn}, the locations of the paths  $\{(m,\gamma_j(m)\}$ (or, equivalently, the centers of the tiles of type II) form a determinantal point process on $\{0,1,\ldots,B+C\} \times \bbZ$ with kernel  
\begin{multline}
K_a(r,L_r+2x+1,s,L_s+2y+1)\\
=\begin{cases}
 \sum_{k=0}^{a-1}\sqrt{\tfrac{(a+s-1-k)!(a+b+c-r-1-k)!}{(a+r-1-k)!(a+b+c-s-1-k)!}}q_{k,M_r}^{(\alpha_r,\beta_r)}(x)q_{k,M_s}^{(\alpha_s,\beta_s)}(y), &r \leq s, \\
- \sum_{k=a}^{\infty}\sqrt{ \tfrac{(a+s-1-k)!(a+b+c-r-1-k)!}{(a+r-1-k)!(a+b+c-s-1-k)!}} q_{k,M_r}^{(\alpha_r,\beta_r)}(x)q_{k,M_s}^{(\alpha_s,\beta_s)}(y), & r>s,
\end{cases}.
\end{multline}
The question of interest is what happens with the system as the hexagon becomes large. That is, we introduce a  big parameter $n$ and  scale $a,b$ and $c$  such that 
$$a/n \to 1, \quad b/n \to B >0, \quad c/n \to C>0.$$
Then we take $\rho_m\in (0,B+C)$ for $m=1, \ldots, N$ and set $r_r=[n \rho_r]$ where $[\cdot]$ denotes the integer part. We also rescale the process along the vertical axis by $n$ (hence we replace $x$ by $[n x]$). 

We then set 
$$p_{j,m}(x)=q_{j,M_{r_m}}^{\alpha_{r_m}, \beta_{r_m}}([n x ])$$
and 
$$c_{j,m}= \sqrt{\frac{(a+b+c-r_m-1-j)!}{(a+r_m-1-j)!}},$$
and consider the probability measure \eqref{eq:productmeasures}  with $\phi_{j,m}= c_{j,m} p_{j-1,m} $ and $ \psi_{j,m}=1/c_{j,m} p_{j-1,m}$ as in  \eqref{eq:opchoice} (with $T_m(x,y)= \sum_{j=0}^\infty c_{j,m+1}/c_{j,m} p_{j,m}(x) p_{j,m+1}(y)).$

Denote the recurrence coefficients for $p_{j,m}$ by $a^{(n)}_{j,m}$ and $b_{j,m}^{(n)}$. Then, from \eqref{eq:akhex}, \eqref{eq:bkhex} and the choice of the parameters is not hard to show that there exists functions $F_1$ and $F_2$ as $n \to \infty$,
$$a_{n+k,m}^{(n)}\to F_1(\rho_m;B,C), \qquad b_{n+k,m}^{(n)}\to F_2(\rho_m;B,C),$$
for any $k \in \bbN$. In other words,  condition \eqref{eq:cormainthcond1} is satisfied. Moreover,  we also 
we easily verify \eqref{eq:cormainthcond2} and find
$$\lim_{n \to \infty} \frac{c_{l,m}}{c_{k,m}} = {\rm  e} ^{\tau_m(k-l)}, \qquad \text{with } \tau_m= \frac12 \ln \frac{1+B+C-\rho_m}{1+\rho_m}.$$
Hence we see that Theorem \ref{cor:th:main1} applies. Also note that after rescaling with $n$ the hexagon will always be contained in a fixed compact set for every $n$, hence also Corollary \ref{th:extend} applies.

Similarly, in the same way  one can verify that Corollary \ref{cor:main1growing}, Theorem  \ref{th:extendgrowing}  and Theorem \ref{prop:GFF} apply. We leave the precise statement to the reader.

\begin{figure}
\begin{center}
\begin{subfigure}{}
\includegraphics[scale=0.4]{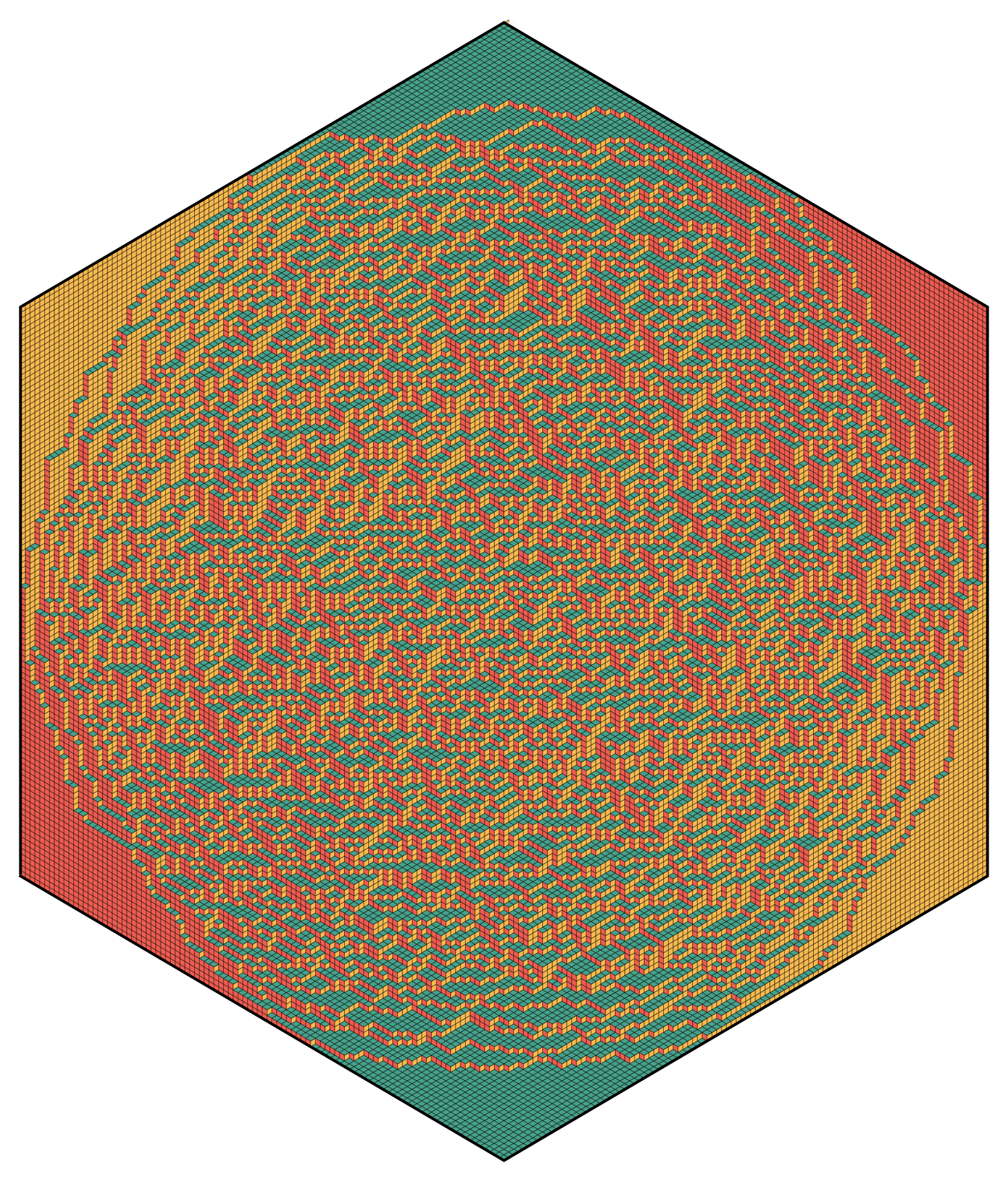}
\end{subfigure}
\begin{subfigure}{}
\includegraphics[scale=0.4]{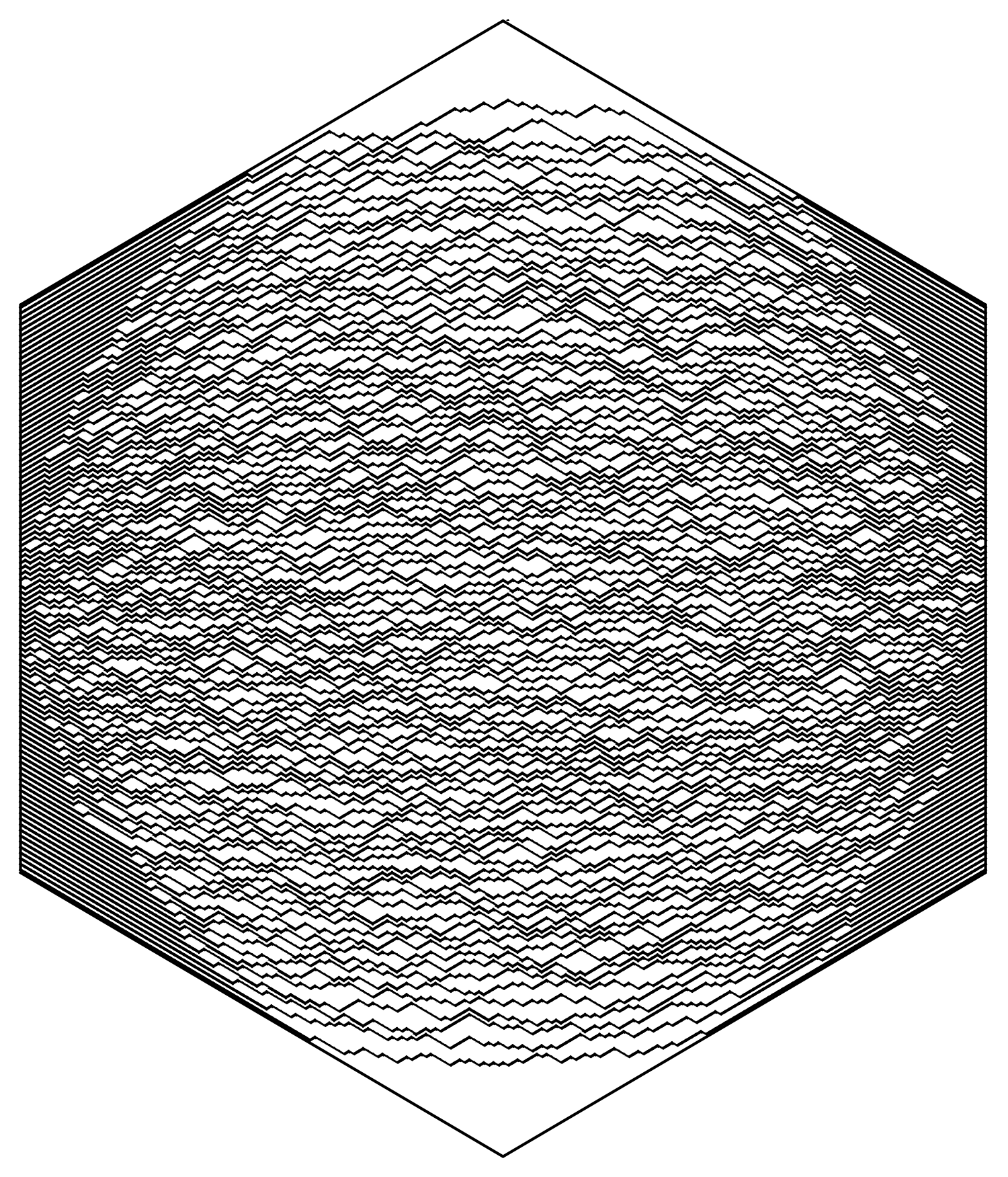}
\end{subfigure}\end{center}
\caption{A sampling of a random tiling of a large regular hexagon (l) and the alternative representation in terms of non-intersecting paths (r). The disorded regime, circle inside the hexgon, and the frozen corners are clearly visible.\protect \footnotemark } \label{fig:hexagon}
\end{figure}

\section{From linear statistics to the recurrence matrices} \label{sec:rec}
In this section we show how the moments and cumulants of the linear statistics are connected to the matrices $\mathbb J_m$.

The determinantal structure of the process means that we can express the moments of  linear statistics in terms of the kernel $K_{n,N}$ \eqref{eq:generalformKn}.  Indeed, it is standard that from  \eqref{eq:detstruct} and some computations one can show 
$$\mathbb E X_n(f) = \sum_{m=1}^N \int f(m,x) K_{n,N} (m,x,m,x){\rm d} \mu_m(x),$$
implying that $K_{n,N}(m,x,m,x)$ is the mean density. Moreover, 
\begin{multline} \label{eq:variancelinstat} \Var X_n(f)= \sum_{m=1}^N \int f(m,x)^2 K_{n,N}(m,x,m,x) {\rm d} \mu_m(x)\\
- \sum_{m_1,m_2=1}^N \iint f(m_1,x_1) f(m_2,x_2) K_{n,N} (x_1,m_1,x_2,m_2)K_{n,N} (x_2,m_2,x_1,m_1) {\rm d} \mu_{m_1} (x_1) {\rm d} \mu_{m_2} (x_2).
\end{multline}
and similar expressions hold for the higher terms.

\footnotetext{These figures are produced using a code that was kindly provided to the author by Leonid Petrov.}

Although we use  these expressions in the proofs of Theorem \ref{th:extend} and \ref{th:extendgrowing},  the general strategy in this paper is based on  a different approach.  The key identity is the following lemma, connecting the moments to the recurrence matrices $\mathbb J_m$.
\begin{lemma}\label{lem:moment}
Suppose that $f(m,x)$ is a polynomial in $x$ and $\mathbb J_m$ is bounded for $m=1, \ldots, N$, then  
\begin{equation}
\label{eq:pendantic0} \bbE \left[ {\rm e} ^{\lambda X_n(f)} \right]= \det \left(\left({\rm e} ^{\lambda f (1,\mathbb J_1)}{\rm e} ^{\lambda f (2,\mathbb J_2)} \cdots {\rm e} ^{ \lambda f (N,\mathbb J_N)}\right)_{i,j}  \right)_{i,j=1}^n.\end{equation}
In case one of the  $\mathbb J_m$ is an unbounded matrix, the equality is understood as an equality between formal power series by expanding each exponential. More precisely, with $R_M(x) = \sum_{k=0}^M x^k/k!$  consider the expansion 
\begin{multline} \label{eq:pendantic}\det \left(\left(R_M(\lambda f (1,\mathbb J_1))R_M(\lambda f (2,\mathbb J_2)) \cdots R_M(\lambda f (N,\mathbb J_N))\right)_{i,j}  \right)_{i,j=1}^n\\= \sum_{k=0}^\infty D_{k,M}(f) \frac{\lambda^k}{k!}, 
\end{multline}
then we have $\mathbb E\left[(X_n(f))^k\right]=D_{k,M}(f),$
for $k \leq M$.
\end{lemma}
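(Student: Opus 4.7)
The overall strategy is to absorb the exponential factor $e^{\lambda X_n(f)} = \prod_{m,j}e^{\lambda f(m,x_{m,j})}$ into the determinantal density \eqref{eq:productmeasures} column by column, then apply the same iteration of Andreief's identity \eqref{eq:adnr} that the authors used to compute $Z_n$ on p.~10. Writing $M_m$ for the multiplication operator by $e^{\lambda f(m,\cdot)}$ on $L^2(\mu_m)$, this collapses the $Nn$-dimensional integral to a single $n\times n$ determinant
\[
\mathbb E\!\left[e^{\lambda X_n(f)}\right]=\det\bigl(\tilde G_{ij}(\lambda)\bigr)_{i,j=1}^n,\qquad
\tilde G_{ij}(\lambda)=\bigl\langle \psi_{i,N},\,M_N\mathcal T_{N-1}M_{N-1}\cdots\mathcal T_1 M_1\phi_{j,1}\bigr\rangle_{\mu_N},
\]
after using biorthogonality \eqref{eq:biophimpsi} (so that $Z_n=(n!)^N$ cancels the $1/Z_n$ from the density times the combinatorial factor from iterated Andreief).

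Next I would convert $\tilde G$ into a matrix product of exponentials of the $\mathbb J_m$. The recurrence \eqref{eq:jacobirec} extends by linearity to any polynomial, $p(x)\phi_{k,m}(x)=\sum_l (p(\mathbb J_m))_{k,l}\phi_{l,m}(x)$, and when $\mathbb J_m$ is bounded this extends to $e^{\lambda f(m,x)}\phi_{k,m}=\sum_l(e^{\lambda f(m,\mathbb J_m)})_{k,l}\phi_{l,m}$. Combined with the intertwining $\mathcal T_m\phi_{k,m}=\phi_{k,m+1}$ that is built into the definitions \eqref{eq:defphimpsim}, an induction on $m$ gives
\[
M_N\mathcal T_{N-1}M_{N-1}\cdots\mathcal T_1 M_1\phi_{j,1}
=\sum_l\bigl(e^{\lambda f(1,\mathbb J_1)}e^{\lambda f(2,\mathbb J_2)}\cdots e^{\lambda f(N,\mathbb J_N)}\bigr)_{j,l}\,\phi_{l,N}.
\]
Pairing with $\psi_{i,N}$ and applying biorthogonality \eqref{eq:biophimpsim} gives $\tilde G_{ij}=\bigl(e^{\lambda f(1,\mathbb J_1)}\cdots e^{\lambda f(N,\mathbb J_N)}\bigr)_{j,i}$, which is \eqref{eq:pendantic0} after the harmless transposition $\det A=\det A^{T}$.

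For the unbounded case, $e^{\lambda f(m,\mathbb J_m)}$ may not converge, so I would replace it throughout the argument by the Taylor polynomial $R_M(\lambda f(m,\mathbb J_m))$. Because $f(m,x)$ is a polynomial and $\mathbb J_m$ is banded with bandwidth $\rho$, each power $f(m,\mathbb J_m)^p$ is again a banded matrix with bandwidth at most $\rho\deg f\cdot p$, so every entry of the $N$-fold product is a finite sum of entries of the $\mathbb J_m$; the right-hand side of \eqref{eq:pendantic} is therefore a well-defined polynomial in $\lambda$. The same operator identity as above now reads: the coefficient of $\lambda^k$ in the product of truncated exponentials agrees with the corresponding coefficient obtained when acting by $R_M(\lambda f(m,\cdot))$ on the integrand inside $\mathbb E[\cdot]$. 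For $k\le M$, one has $e^{\lambda f(m,x)}\equiv R_M(\lambda f(m,x))\pmod{\lambda^{M+1}}$ term-by-term, so the two sides match in every $\lambda^k/k!$ coefficient with $k\le M$, giving $\mathbb E[(X_n(f))^k]=D_{k,M}(f)$.

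The main obstacle is the second step: keeping the bookkeeping of the operator composition straight so that the induction produces the product of exponentials in the correct order, and verifying that for each finite $k$ in the formal-series argument only finitely many entries of the unbounded matrices $\mathbb J_m$ are touched, so that every quantity in \eqref{eq:pendantic} is genuinely defined without appeal to spectral theory.
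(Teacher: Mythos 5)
Your proposal is correct, and it rests on the same three ingredients as the paper's own proof --- iterated use of Andrei\'ef's identity \eqref{eq:adnr}, conversion of multiplication by a polynomial (or truncated exponential) into the action of a banded matrix via the recurrence \eqref{eq:jacobirec}, and the intertwining $\mathcal T_m\phi_{k,m}=\phi_{k,m+1}$ combined with biorthogonality --- but it organizes them in a genuinely different and leaner way. The paper keeps everything at the level of $n\times n$ determinants: it proves the more general claim \eqref{eq:claimAs} for arbitrary row indices $s_1<\cdots<s_n$ by induction on $N$, and each induction step requires two applications of the Cauchy--Binet formula (once to expand $\det\bigl(\sum_k A_{s_j,k}\phi_{k,2}(x_{2,i})\bigr)$ over index subsets, once to recombine after invoking the induction hypothesis). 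You instead run Andrei\'ef all the way down to a single Gram determinant with entries $\langle\psi_{i,N},M_N\mathcal T_{N-1}\cdots\mathcal T_1M_1\phi_{j,1}\rangle$ and only then apply the recurrence entrywise, which eliminates the induction, the auxiliary index sets and Cauchy--Binet altogether. Two small points to make explicit when writing this up. First, in the final pairing you need biorthogonality $\int\phi_{l,N}(x)\psi_{i,N}(x)\,{\rm d}\mu_N(x)=\delta_{li}$ for \emph{all} $l$, not merely $l\le n$, since the banded product spreads the index beyond $n$; this is precisely what part (1) of Assumption \ref{assumption} supplies for the extended families. Second, your bounded-case identity ${\rm e}^{\lambda f(m,x)}\phi_{k,m}=\sum_l\bigl({\rm e}^{\lambda f(m,\mathbb J_m)}\bigr)_{k,l}\phi_{l,m}$ involves an infinite sum, and interchanging it with $\mathcal T_m$ and with the integral against $\psi_{i,N}$ is not automatic because the $\phi_{k,m}$ need not form an orthonormal (or even Riesz) basis; the safe route --- which you in effect adopt, and which is also the paper's --- is to prove the truncated identity \eqref{eq:pendantic} first, where every sum is finite, and then recover \eqref{eq:pendantic0} by letting $M\to\infty$.
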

\begin{remark}
Before we come to the proof we note that since the $\mathbb J_m$'s are banded matrices and $f$ is a polynomial, the product of the matrices $R_m(f(\mathbb J_m,m))$ is well-defined so that the determinant at  the left-hand side of \eqref{eq:pendantic} makes sense.
\end{remark} 
\begin{proof}
It is enough to prove \eqref{eq:pendantic}. In case all $\mathbb J_m$ are bounded we then obtain \eqref{eq:pendantic0} by taking the limit $M \to \infty$ in a straightforward way.

We first note that the first $M+1$ terms of the (formal) expansions 
$$\bbE \left[ {\rm e} ^{\lambda X_n(f)} \right]=\bbE \left[ \prod_{j=1,m}^{n,N} {\rm e} ^{\lambda f(m,x_{m,j})} \right]$$ 
and 
$$\bbE \left[ \prod_{j=1,m}^{n,N} R_M(\lambda f(m,x_{m,j})) \right]$$
are equal. To prove the lemma it  thus suffices to prove that the last expectation equals the left-hand side of \eqref{eq:pendantic}. To this end, we note
\begin{multline*}\bbE \left[ \prod_{j=1,m}^{n,N} R_M(\lambda f(m,x_{m,j})) \right]=\frac{1}{(n!)^N}\int \cdots \int  \left(\prod_{j=1,m}^{n,N} R_M(\lambda f(m,x_{m,j})) \right) \det \left(\phi_{j,1}(x_{1,k})\right)_{j,k=1}^n \\
\times  \prod_{m=1}^{N-1} \det\left( T_m(x_{m,i},x_{m+1,j}) \right)_{i,j=1}^n \det \left(\psi_{j,N}(x_{N,k})\right)_{j,k=1}^n
\prod_{m=1}^N \prod_{j=1}^n {\rm d} \mu_m(x_{m,j})\\
=\frac{1}{(n!)^N} \int \cdots \int  \det \left(   \phi_{j,1}(x_{1,k})\right)_{j,k=1}^n
 \prod_{m=1}^{N-1} \det\left( R_M(\lambda f(m,x_{m,i}))  T_m(x_{m,i},x_{m+1,j}) \right)_{i,j=1}^n  \\ \times  \det \left( R_M(\lambda f(N,x_{N,k})) \psi_{j,N}(x_{N,k})\right)_{j,k=1}^n
\prod_{m=1}^N \prod_{j=1}^n {\rm d} \mu_m(x_{m,j}),
\end{multline*}

For convenience we set some notation  $A_m=R_M(\lambda f(m,\mathbb J_m)).$ Now the statement is a special case (where $s_j=j$) of the more general claim 
\begin{multline} \label{eq:claimAs}
\frac{1}{(n!)^N} \int \cdots \int  \det \left( \phi_{s_j,1}(x_{1,k})\right)_{j,k=1}^n  \prod_{m=1}^{N-1} \det\left( R_M(\lambda f(m,x_{m,i}))  T_m(x_{m,i},x_{m+1,j}) \right)_{i,j=1}^n  \\ \times  \det \left( R_M(\lambda f(N,x_{N,k}))  \psi_{j,N}(x_{N,k})\right)_{j,k=1}^n
\prod_{m=1}^N \prod_{j=1}^n {\rm d} \mu_m(x_{m,j})\\= \det \left(\left(A_1 \cdots A_N\right)_{s_i,s_j} \right)_{i,j=1}^n,
\end{multline}
for any $s_1 <\ldots< s_n$.

The proof of \eqref{eq:claimAs} goes by induction to $N$. 

The case of $N=1$ is a direct consequence of  Andreief's identity in \eqref{eq:adnr}
\begin{multline*}\frac{1}{n!} \int \cdots \int  \det \left(  \phi_{s_j,1}(x_{1,k})\right)_{j,k=1}^n 
  \det \left(R_M(\lambda f(1,x_{1,k}))\psi_{s_j,1}(x_{1,k})\right)_{j,k=1}^n
 \prod_{j=1}^n {\rm d} \mu_1(x_{1,j})\\
= \det \left(\int R_M(\lambda f(1,x))  \phi_{s_j,1}(x)\psi_{s_i,1}(x) {\rm d} \mu_1(x)\right)_{i,j=1}^n = \det \left(\left(A_1 \right)_{s_i,s_j}\right)_{i,j=1}^n.
\end{multline*}

For $N>1$ we use  Andrei\'ef's identity to write  \begin{multline} \label{eq:claimAspr}
\frac{1}{n!} \int \cdots \int \det \left( \phi_{s_j,1}(x_{1,k})  \right)_{j,k=1}^n  \det\left( R_M( \lambda f(x_{1,i},1))  T_{1}(x_{1,i},x_{2,j}) \right)_{i,j=1}^n 
 \prod_{j=1}^n {\rm d} \mu_1(x_{1,j}) \\
= \det \left( \int  T_{1}(x,x_{2,i})  R_M( \lambda f(x,1)) \phi_{s_j,N}(x) {\rm d} \mu_1(x) \right).
\end{multline}
By using the recurrence and the fact that $$\int  T_{1}(x,x_{2,i})  \phi_{k,1}(x){\rm d}\mu_1(x)=\phi_{k,2}(x_{2,i})
$$
we find  that the right-hand side of \eqref{eq:claimAspr} can be written as
$$\det \left( \sum_{k} A_{s_j,k} \phi_{k,2}(x_{2,i}) \right)_{i,j=1}^n=\sum_{l_1 < l_2 < \cdots <l_n} \det \left(A_{s_j,l_i}\right)_{i,j=1}^n \det \left( \phi_{l_j,2}(x_{2,i}) \right)_{i,j=1}^n $$
where we used Cauchy-Binet in the last step. By inserting the latter with \eqref{eq:claimAspr} back into the left-hand side of \eqref{eq:claimAs} and using  the induction hypothesis we find that the left-hand side of \eqref{eq:claimAs} can be written as
\begin{multline*}
\sum_{l_1 < l_2 < \cdots <l_n} \det\left(\left(A_1\right)_{s_k,l_j}\right)_{j,k=1}^n \det\left((A_2 \ldots A_{N})_{l_j,s_i}\right)_{i,j=1}^n  \\
=\det\left((A_1 \ldots A_{N})_{s_i,s_k}\right)_{i,k=1}^n,
\end{multline*}
where we used Cauchy-Binet in the last step again.
This proves the claim in \eqref{eq:claimAs} and hence the statement. 
\end{proof}

This lemma also has a convenient consequence. Since all $\mathbb J_m$ are banded, each entry 
$$
\left(R_M(\lambda f (1,\mathbb J_1))R_M(\lambda f (2,\mathbb J_2)) \cdots R_M(\lambda f (N,\mathbb J_N))\right)_{i,j} 
$$
for $i,j =1, \ldots, n$ only depends on some entries of the individual $\mathbb J_m$'s. By writing out the matrix product it is not hard to see that these entries  do not depend on any $(\mathbb J_m)_{rs}$ for $m=1, \ldots, N$ and $r,s>S$ for some sufficiently large $S$.  Hence if we define the cut-offs
$$(\mathbb J_{m,S})_{j,k}= \begin{cases}
(\mathbb J_{m})_{j,k} & j,k \leq S\\
0, & \text{otherwise,}
\end{cases}
$$
and expand
\begin{equation} \label{eq:expansionsdks}
\det \left(\left({\rm e} ^{\lambda f (1,\mathbb J_{m,S})}{\rm e} ^{\lambda f (2,\mathbb J_{2,S})} \cdots {\rm e} ^{ \lambda f (N,\mathbb J_{N,S})}\right)_{i,j}  \right)_{i,j=1}^n=\sum_{k=0}^\infty  \frac{\lambda^k}{k!} \tilde {\mathcal D}_{k,S}(f),
\end{equation}
then for each $k \in \bbN$ we have $\mathbb E \left[(X_n(f))^k \right]= \tilde{ \mathcal D}_{k,S}(f)$ for sufficient large $S$ (which may depend on $k$). The benefit is that the matrix in the determinant consists of a product of bounded operators and the series is convergent.  Hence we do not have to worry about formal series and this will be convenient for technical reasons.

Instead of the moments, it will be more convenient to work with the cumulants $\mathcal C_k(X_n(f))$. These are special combinations of the linear statistic, determined by the (formal) generating function
\begin{equation}\label{eq:cumulantgeneratinggunction}
\log \mathbb E\left[\lambda X_n(f)\right]= \sum_{k=1}^\infty \frac{\lambda^k}{k!}\mathcal C_k(X_n(f)).
\end{equation}
Note that $\mathcal C_1(X_n(f))= \mathbb E X_n(f)$ and $\mathcal C_2(X_n(f))=\Var X(f)$. The $k$-th cumulant can be expressed in terms of the first $k$ moments and vice versa. Since the first terms in the expansion on the right-hand side of \eqref{eq:expansionsdks} are the moments, we can take the logarithm at both sides and immediately obtain the following lemma. 
\begin{lemma}\label{lem:cumulant}
Let $\mathcal C_{k,S}(f)$ be the coefficients in the series
$$\log \det \left(\left({\rm e} ^{\lambda f (1,\mathbb J_{m,S})}{\rm e} ^{\lambda f (2,\mathbb J_{2,S})} \cdots {\rm e} ^{ \lambda f (N,\mathbb J_{N,S})}\right)_{i,j}  \right)_{i,j=1}^n=\sum_{k=0}^\infty  \frac{\lambda^k}{k!} \mathcal C_{k,S}(f),$$
then $\mathcal C_k(X_n(f))= \mathcal C_{k,S}(f)$ for sufficiently large $S$. 
\end{lemma}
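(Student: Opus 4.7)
The plan is to derive Lemma \ref{lem:cumulant} directly from Lemma \ref{lem:moment} together with the elementary algebraic relation between moments and cumulants. This should be essentially immediate; the only point to handle carefully is the formal-versus-analytic status of the series, which is exactly where the truncation $\mathbb{J}_{m,S}$ earns its keep.

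First I would recall that the cumulants are determined by the moments via
$$\sum_{k=1}^\infty \frac{\lambda^k}{k!}\,\mathcal{C}_k(X_n(f)) \;=\; \log \sum_{k=0}^\infty \frac{\lambda^k}{k!}\,\mathbb{E}\bigl[(X_n(f))^k\bigr],$$
and that expanding the logarithm exhibits $\mathcal{C}_k(X_n(f))$ as a fixed universal polynomial in the first $k$ moments. In particular, $\mathcal{C}_k(X_n(f))$ depends only on $\mathbb{E}[(X_n(f))^j]$ for $j = 1, \ldots, k$.

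Second, I would invoke Lemma \ref{lem:moment} together with the cutoff discussion preceding the statement of Lemma \ref{lem:cumulant}: for every $k$ there exists $S_0 = S_0(k)$ such that for all $S \geq S_0$ the identity $\mathbb{E}[(X_n(f))^j] = \tilde{\mathcal{D}}_{j,S}(f)$ holds for every $j \leq k$, where $\tilde{\mathcal{D}}_{j,S}(f)$ are the Taylor coefficients of
$$D_S(\lambda) \;:=\; \det\!\left(\left({\rm e}^{\lambda f(1,\mathbb{J}_{1,S})}\,{\rm e}^{\lambda f(2,\mathbb{J}_{2,S})}\cdots {\rm e}^{\lambda f(N,\mathbb{J}_{N,S})}\right)_{i,j}\right)_{i,j=1}^n.$$
Crucially, each truncated matrix $\mathbb{J}_{m,S}$ is of finite rank and hence bounded, so each exponential is an entire operator-valued function of $\lambda$, and $D_S(\lambda)$ is a genuinely analytic function of $\lambda$ near the origin with $D_S(0) = 1$. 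Consequently $\log D_S(\lambda)$ is analytic in a neighborhood of $\lambda = 0$ and expands as $\sum_{k \geq 0}(\lambda^k/k!)\,\mathcal{C}_{k,S}(f)$.

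Third, the coefficients $\mathcal{C}_{k,S}(f)$ are obtained from $\tilde{\mathcal{D}}_{j,S}(f)$, $j \leq k$, via exactly the same universal polynomial expressions that express ordinary cumulants in terms of moments. Combining this with the second step gives $\mathcal{C}_k(X_n(f)) = \mathcal{C}_{k,S}(f)$ for all $S \geq S_0(k)$, which is the claim. There is no real obstacle: the truncation exists precisely so that one may legitimately take the logarithm of an analytic function rather than wrestle with formal power series, after which matching Taylor coefficients on both sides finishes the proof.
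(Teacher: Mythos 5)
Your proposal is correct and follows essentially the same route as the paper: the paper obtains the lemma by noting that the first coefficients of the expansion \eqref{eq:expansionsdks} are the moments (via Lemma \ref{lem:moment} and the cutoff discussion) and then taking the logarithm of both sides, which is exactly your combination of the moment--cumulant polynomial identities with the analyticity of the truncated determinant. Your added emphasis on why the truncation makes the logarithm legitimate (finite-rank, hence bounded, operators) is a fair elaboration of what the paper leaves implicit.
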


Using this  representation,  we will give  useful expressions for all the cumulants. We will do this in the next section in a more general setup. 

\section{Expansions of Fredholm determinant}\label{sec:fred}
In this  section we will look at the expansion given in Lemma \ref{lem:cumulant}, where we replace the $f(m,\mathbb J_{m,S})$'s  in the  determinant  by general \emph{banded} and \emph{bounded} operators $A_m$.
\subsection{Preliminaries}
We start by recalling traces and determinants for trace class operators. We refer to \cite{GK,Simon} for more details. 

For a compact operator $A$ on a separable Hilbert space $\mathcal H$ we define the singular values 
$$\sigma_1(A) \geq \sigma_2(A) \geq \sigma_3(A) \geq  \ldots >0,$$ as the square roots of the eigenvalues of the self-adjoint compact operators $A^* A$. The space of trace class operators is then defined as  the Banach space
$$\mathcal B_1(\mathcal H)= \{A \mid \sum_{k=1}^\infty \sigma_k(A) < \infty\},$$
equipped with the trace norm
$$\|A\|_1=\sum_{k=1}^\infty \sigma_k(A).$$ The space of Hilbert-Schmidt operators is then defined as  the Hilbert  space
$$\mathcal B_2(\mathcal H)= \{A \mid \sum_{k=1}^\infty \sigma_k(A)^2 < \infty\},$$
equipped with the  Hilbert-Schmidt norm
$$\|A\|_2=\left(\sum_{k=1}^\infty \sigma_k(A)^2\right)^{1/2}.$$
We also denote the operator norm by $\|A\|_\infty$ and the space of bounded operators by $\mathcal B_\infty(\mathcal H)$. 

The following identities are standard. For any $A \in \mathcal B_\infty(\mathcal H) $ and $B \in \mathcal B_1(\mathcal H)$ we have 
$$\|AB\|_1,\|BA\|_1  \leq \|A\|_\infty \|B\|_1.$$
Similarly, for any $A \in \mathcal B_\infty(\mathcal H) $ and $B \in \mathcal B_2(\mathcal H)$ we have 
$$\|AB\|_2,\|BA\|_2  \leq \|A\|_\infty \|B\|_2.$$
For any $A,B \in \mathcal B_2(\mathcal H)$ we have
$$\|AB\|_1 \leq \|A\|_2\|B\|_2.$$
The trace class operators $\mathcal B_1(\mathcal H)$ are precisely the operators for which we can define the trace, denote by $\Tr A$, by naturally extending the trace for finite rank operators. We note that 
$$|\Tr A|= \|A\|_1.$$
For any  trace class operator $ A \in \mathcal B_1 (\mathcal H)$ we can also define the operator determinant
$\det (I+A)$ by natural extension from the finite rank operators. Here we note that  
$$\left|\det (I+A)-\det(I+B)\right| \leq \|A-B\|_1 \exp(\|A\|_1+ \|B\|_1+1).$$
A particular relation between the trace and the determinant that we will use is the following 
\begin{equation}
\label{eq:logdet}
\log \det (I+A) = \sum_{j=1}^\infty \frac{(-1)^{j+1}}{j} \Tr A^j,
\end{equation}
valid for any $A \in \mathcal B_1(\mathcal H)$ for which $\|A\|_\infty  <1$ (ensuring the convergence of the right-hand side).
\subsection{A cumulant-type expansion}
Let $A_1,\ldots,A_N$ be bounded operator on $\ell_2(\mathbb N)$ (in the coming analysis we will identity bounded operators on $\ell_2(\mathbb N)$ with their semi-infinite  matrix representations with respect to the canonical basis).  We will also use the notation  $P_n$ for the projection operators on $\ell_2(\mathbb N)$ defined by 
$$P_n : (x_1,x_2, \ldots ) \mapsto (x_1, \ldots,x_n, 0, 0,\ldots),$$
and $Q_n=I-P_n$.
Then 
\begin{equation}
\label{eq:defoperatorcumulant}
\det \left(I+ P_n\left({\rm e}^{\lambda A_1} {\rm e}^{\lambda A_2} \cdots {\rm e}^{\lambda A_N}-I\right)P_n\right),
\end{equation}
is a well-defined and entire function of $\lambda$. 
By taking $A={\rm e}^{\lambda A_1} {\rm e}^{\lambda A_2} \cdots {\rm e}^{\lambda A_N}-I$ in \eqref{eq:logdet} for sufficiently small $\lambda$  we define $C_k^{(n)}(A_1,\ldots, A_N)$ by 
$$ \log  \det \left(I+ P_n\left({\rm e}^{\lambda A_1} {\rm e}^{\lambda A_2} \cdots {\rm e}^{\lambda A_N}-I\right)P_n\right)= \sum_{m=1}^\infty \lambda^k C_k^{(n)} (A_1,\ldots,A_N),$$
which is valid for small $\lambda$.  In  Lemma \ref{lem:cumulant} we have shown that the relation between the cumulant $\mathcal C_k(X_n(f))$ and the general coefficient $C_k^{(n)}(A_1,\ldots,A_N)$ is given by 
\begin{equation}
\label{eq:fromCtoC}
\mathcal C_k(X_n(f)) =  C_k^{(n)}(f(1,\mathbb J_{1,S}), \ldots, f(N,\mathbb J_{N,S})),
\end{equation}
for sufficiently large $S$. We will use this connection only in  Section \ref{sec:proofs} when we give the proofs of the main results. In this section  we focus on general properties of $C_k^{(n)}(A_1,\ldots,A_N)$.  To start with, an easy consequence of the above is the following. 

\begin{lemma}
We have
\begin{multline}\label{eq:firsttrace}
C_k^{(n)} (A_1,\ldots,A_N)\\ = \frac{1}{2 \pi {\rm i}} \oint_{|z|= \rho} \Tr \log  \left(I+ P_n\left({\rm e}^{\lambda A_1} {\rm e}^{\lambda A_2} \cdots {\rm e}^{\lambda A_N}-I\right)P_n\right) \frac{{\rm d}\lambda}{\lambda^{k+1}},
\end{multline}
where $0<\rho<(2   \sum_{j=1}^N \|A_j\|_\infty)^{-1}$. 
\end{lemma}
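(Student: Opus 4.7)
The plan is to combine Cauchy's coefficient extraction formula with the identity \eqref{eq:logdet} relating $\log\det$ and $\Tr\log$. Since the operator $P_n(e^{\lambda A_1}\cdots e^{\lambda A_N}-I)P_n$ has finite rank (bounded by $n$), it is automatically trace class, so the Fredholm determinant in \eqref{eq:defoperatorcumulant} is well-defined and entire in $\lambda$. By the definition of $C_k^{(n)}(A_1,\ldots,A_N)$, these are the Taylor coefficients of $\log\det(\,\cdot\,)$ at $\lambda=0$, which exists as an analytic function on some disk around $0$. So the first step is simply Cauchy's formula:
\[
C_k^{(n)}(A_1,\ldots,A_N) = \frac{1}{2\pi {\rm i}} \oint_{|\lambda|=\rho} \log\det\!\left(I+P_n(e^{\lambda A_1}\cdots e^{\lambda A_N}-I)P_n\right) \frac{{\rm d}\lambda}{\lambda^{k+1}},
\]
valid provided $|\lambda|=\rho$ lies inside the disk where $\log\det$ is well-defined as a convergent power series.

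The second step is to replace $\log\det$ by $\Tr\log$ on the contour. For that I invoke \eqref{eq:logdet}, which requires $\|P_n(e^{\lambda A_1}\cdots e^{\lambda A_N}-I)P_n\|_\infty<1$. Since $\|P_n X P_n\|_\infty\le \|X\|_\infty$, it suffices to bound $\|e^{\lambda A_1}\cdots e^{\lambda A_N}-I\|_\infty$. Expanding each exponential and using submultiplicativity of the operator norm one obtains
\[
\left\|e^{\lambda A_1}\cdots e^{\lambda A_N}-I\right\|_\infty \le \exp\!\left(|\lambda|\sum_{j=1}^N\|A_j\|_\infty\right)-1.
\]
The hypothesis $\rho<(2\sum_j\|A_j\|_\infty)^{-1}$ then gives $|\lambda|\sum_j\|A_j\|_\infty<1/2$ on the contour, so the right-hand side is bounded by $e^{1/2}-1<1$. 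Consequently \eqref{eq:logdet} applies and
\[
\log\det\!\left(I+P_n(e^{\lambda A_1}\cdots e^{\lambda A_N}-I)P_n\right)
= \Tr\log\!\left(I+P_n(e^{\lambda A_1}\cdots e^{\lambda A_N}-I)P_n\right),
\]
interpreting $\Tr\log(I+B)=\sum_{j\ge1}\frac{(-1)^{j+1}}{j}\Tr B^j$. Substituting this identity into the Cauchy formula yields \eqref{eq:firsttrace}.

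There is no real obstacle here: the only thing to check is that the specific choice $\rho<(2\sum_j\|A_j\|_\infty)^{-1}$ is what makes both (i) the power series for $\log\det$ converge on a neighborhood of the contour and (ii) the trace-log series converge in norm. Both are handled by the uniform bound $e^{|\lambda|\sum_j\|A_j\|_\infty}-1<1$ for $|\lambda|\le\rho$. The whole argument is thus a short verification that the contour lies strictly inside the common domain of analyticity on which the trace-log representation of $\log\det$ is legitimate.
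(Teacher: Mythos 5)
Your proof is correct and follows essentially the same route as the paper: Cauchy's coefficient formula applied to the analytic function $\lambda\mapsto\log\det(I+P_n(e^{\lambda A_1}\cdots e^{\lambda A_N}-I)P_n)$, combined with the identity \eqref{eq:logdet} which is licensed on the contour because $\|P_n(e^{\lambda A_1}\cdots e^{\lambda A_N}-I)P_n\|_\infty<1$ there (your bound $e^{1/2}-1<1$ plays the same role as the paper's $\tfrac12 e^{1/2}<1$). Nothing is missing.
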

\begin{proof}
The only remaining is the choice of $\rho$. To this end, we note that 
$$\left\|P_n\left({\rm e}^{\lambda A_1} {\rm e}^{\lambda A_2} \cdots {\rm e}^{\lambda A_N}-I\right)P_n\right\|_\infty \leq |\lambda|  \sum_{j=1}^N \|A_j\|_\infty \exp\left(|\lambda | \sum_{j=1}^N \|A_j\|_\infty\right),$$
and that $\frac12 {\rm e} ^{\frac12} <1$. Hence the integrand at the right-hand side of \eqref{eq:firsttrace} is well-defined and analytic for $|\lambda |<\rho$. This proves the statement.
\end{proof}
By expanding the logarithm we obtain another useful expression. 
\begin{lemma}\label{lem:cumulantexpp}
We have
\begin{multline}
C_k^{(n)}(A_1,\ldots,A_N)\\=
\sum_{j=1}^k \frac{(-1)^{j+1} }{j} \sum_{\overset{\ell_1+ \cdots + \ell_j=k}{\ell_i\geq 1}}\,\sum_{ (r_{s,v})\in R_{\ell_1,\ldots \ell_j}} 
\frac{\Tr \prod_{s=1}^j \left(P_n A_1^{r_{s,1}} \cdots A_N^{r_{s,N}}  P_n\right)}{r_{1,1}! \cdots r_{1,N} ! r_{2,1}! \cdots r_{2,N}! \cdots r_{j,1}! \cdots r_{j,N}!}
\end{multline}
where 
$$R_{\ell_1,\ldots \ell_j}= \left\{ (r_{s,v})_{s=1, v=1}^{j,N} \mid r_{s,v} \in \{0,1,2,\ldots\}, \ \sum_{v=1}^N r_{s,v} =\ell_s\right\}$$
\end{lemma}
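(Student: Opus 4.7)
The plan is to derive the claimed formula directly from the definition of $C_k^{(n)}$ by expanding the logarithm and then expanding each exponential as a power series.

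First I would note that since $P_n$ has finite rank, the operator $Y(\lambda) := P_n({\rm e}^{\lambda A_1}\cdots {\rm e}^{\lambda A_N}-I)P_n$ is trace class (in fact finite rank), and for $|\lambda|$ small enough, $\|Y(\lambda)\|_\infty < 1$. Hence the identity $\log\det(I+Y) = \Tr\log(I+Y)$ combined with \eqref{eq:logdet} yields
\begin{equation*}
\log\det(I+Y(\lambda)) = \sum_{j=1}^\infty \frac{(-1)^{j+1}}{j}\Tr Y(\lambda)^j,
\end{equation*}
with absolute convergence in $\mathcal B_1(\ell_2(\bbN))$, so it is legitimate to rearrange into a power series in $\lambda$.

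Next I would substitute the expansion of each exponential: writing ${\rm e}^{\lambda A_v} = \sum_{r\geq 0} \lambda^r A_v^r / r!$ and expanding the product $\prod_v {\rm e}^{\lambda A_v}$, the subtraction of $I$ removes only the term $(r_1,\ldots,r_N)=(0,\ldots,0)$. Therefore
\begin{equation*}
Y(\lambda) = \sum_{\ell \geq 1} \lambda^\ell \sum_{\substack{r_1+\cdots+r_N = \ell \\ r_v\geq 0}} \frac{P_n A_1^{r_1}\cdots A_N^{r_N} P_n}{r_1!\cdots r_N!}.
\end{equation*}
Using $P_n^2 = P_n$, the $j$-th power $Y(\lambda)^j$ becomes a sum indexed by $j$ multi-indices $(r_{s,1},\ldots,r_{s,N})$ ($s=1,\ldots,j$) each summing to some $\ell_s \geq 1$; the integrand of the cyclic trace telescopes into $\prod_{s=1}^j P_n A_1^{r_{s,1}}\cdots A_N^{r_{s,N}} P_n$, and the $\lambda$-weight is $\lambda^{\ell_1+\cdots+\ell_j}$.

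Finally I would collect the coefficient of $\lambda^k$. Since each $\ell_s \geq 1$, only $j \in \{1,\ldots,k\}$ contribute, and for each such $j$ we must sum over all compositions $\ell_1+\cdots+\ell_j = k$, and for each composition over all $(r_{s,v})\in R_{\ell_1,\ldots,\ell_j}$. Dividing by $r_{s,v}!$ and keeping the prefactor $(-1)^{j+1}/j$ yields precisely the formula in the lemma. The only subtlety is justifying the interchange of the three summations (over $j$, over compositions, and over multi-indices within the expansion of $Y(\lambda)^j$), which follows from absolute convergence in the trace norm for $|\lambda|$ below the radius used in the preceding lemma, so no genuine obstacle arises.
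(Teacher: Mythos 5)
Your proposal is correct and follows essentially the same route as the paper: expand $\log\det(I+Y(\lambda))$ via the trace--log series \eqref{eq:logdet}, expand each exponential to write $Y(\lambda)$ as a power series in $\lambda$ with coefficients $P_nA_1^{r_1}\cdots A_N^{r_N}P_n/(r_1!\cdots r_N!)$, take the $j$-th power using $P_n^2=P_n$, and collect the coefficient of $\lambda^k$. No substantive difference from the paper's argument.
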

\begin{proof}
We note the following expansion which is valid for any bounded operators $A_1, \ldots,A_N$ and sufficiently small $\lambda$,
\begin{align*}
 \log &\left((1+P_n({\rm e}^{\lambda A_1}{\rm e}^{\lambda A_2} \cdots {\rm e}^{\lambda A_N}-I)P_n\right)=\sum_{j=1}^\infty \frac{(-1)^{j+1}}{j} \Tr \left(P_n ({\rm e}^{\lambda A_1}{\rm e}^{\lambda A_2} \cdots {\rm e}^{\lambda A_N}-I)P_n\right)^j\\
&=\sum_{j=1}^\infty \frac{(-1)^{j+1}}{j} \left(\sum_{\overset{r_1,\ldots,r_N=0}{r_1+\ldots +r_N\geq 1}}^\infty \lambda^{r_1+ \cdots r_N} \Tr P_n A_1^{r_1} \cdots A_N^{r_N}P_n\right)^j\\
&=\sum_{j=1}^\infty \frac{(-1)^{j+1}}{j} \left(\sum_{\ell=1}^\infty \lambda^{\ell} \sum_{{r_1,\ldots,r_N=\ell}}\frac{\Tr P_n A_1^{r_1} \cdots A_N^{r_N}P_n}{r_1!\cdots r_N!}\right)^j\\
&=\sum_{j=1}^\infty \frac{(-1)^{j+1}}{j} \sum_{\ell_1,\ldots ,\ell_j=1}^\infty \lambda^{\ell_1+ \cdots + \ell_j}
\sum_{ (r_{s,v})\in R_{\ell_1,\ldots \ell_j}} \frac{ \Tr \prod_{s=1}^j \left(  P_n A_1^{r_{s,1}} \cdots  A_n^{r_{s,N}}  P_n\right)}{r_{1,1}! \cdots r_{1,N} ! r_{2,1}! \cdots r_{2,N}! \cdots r_{j,1}! \cdots r_{j,N}!}\\
&=\sum_{j=1}^\infty \frac{(-1)^{j+1}}{j} \sum_{k=j}^\infty \lambda^k \sum_{\ell_1+\ldots+\ell_j=k}
\sum_{ (r_{s,v})\in R_{\ell_1,\ldots \ell_j}} \frac{  \Tr \prod_{s=1}^j \left( P_n A_1^{r_{s,1}} \cdots  A_n^{r_{s,N}}  P_n\right)}{r_{1,1}! \cdots r_{1,N} ! r_{2,1}! \cdots r_{2,N}! \cdots r_{j,1}! \cdots r_{j,N}!}\\
&=\sum_{k=1}^\infty  \lambda^k \sum_{j=1}^k \frac{(-1)^{j+1}}{j} \sum_{\ell_1+\ldots+\ell_j=k}
\sum_{ (r_{s,v})\in R_{\ell_1,\ldots \ell_j}} \frac{\Tr \prod_{s=1}^j \left( P_n A_1^{r_{s,1}} \cdots  A_n^{r_{s,N}}  P_n\right)}{r_{1,1}! \cdots r_{1,N} ! r_{2,1}! \cdots r_{2,N}! \cdots r_{j,1}! \cdots r_{j,N}!}.
\end{align*}
This proves the statement.
\end{proof}
In the proofs of the main theorems  it will be important to have the following continuity result. 
\begin{lemma}\label{lem:continuitycumu}
Let $A_1,\ldots, A_N$ and $B_1,\ldots, B_N$ be semi-infinite matrices, then 
\begin{multline*}
\left|C_k^{(n)} (A_1,\ldots, A_N)-C_k^{(n)} (B_1,\ldots, B_N)\right|\\
\leq 
\frac{2 {\rm e} }{(2-\sqrt {\rm e} )^2} (2 \max(\sum_{j=1}^N \|A_j\|_\infty, \sum_{j=1}^N \|B_j\|_\infty))^{k-1}\sum_{j=1}^N \|A_j -B_j\|_1.
\end{multline*}
\end{lemma}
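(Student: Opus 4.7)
The plan is to combine the contour integral formula preceding the statement with a trace-class perturbation estimate for the Fredholm log-determinant. Subtracting the representation for $A$ and $B$ gives
\[
C_k^{(n)}(A_1,\ldots,A_N)-C_k^{(n)}(B_1,\ldots,B_N) = \frac{1}{2\pi {\rm i}}\oint_{|\lambda|=\rho}\bigl[\log\det(I+\tilde{A}_\lambda)-\log\det(I+\tilde{B}_\lambda)\bigr]\frac{d\lambda}{\lambda^{k+1}},
\]
with $\tilde{A}_\lambda := P_n(e^{\lambda A_1}\cdots e^{\lambda A_N}-I)P_n$ and $\tilde{B}_\lambda$ defined analogously. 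Setting $M := \max\bigl(\sum_j\|A_j\|_\infty,\,\sum_j\|B_j\|_\infty\bigr)$, the bound $\|\tilde{A}_\lambda\|_\infty \leq |\lambda|Me^{|\lambda|M}$ used earlier in the section lets me choose the contour radius $\rho = 1/(2M)$, at which $\|\tilde{A}_\lambda\|_\infty, \|\tilde{B}_\lambda\|_\infty \leq \tfrac{\sqrt{e}}{2} < 1$. This keeps all logarithm series convergent and leaves room for a final $ML$-estimate.

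The next step is to bound $|\log\det(I+\tilde{A}_\lambda)-\log\det(I+\tilde{B}_\lambda)|$ pointwise on $|\lambda|=\rho$. Expanding $\log(I+X)=\sum_{j\geq 1}(-1)^{j+1}X^j/j$, using the identity $X^j-Y^j=\sum_{l=0}^{j-1}X^l(X-Y)Y^{j-1-l}$, and applying the trace-ideal bound $\|X^l(X-Y)Y^{j-1-l}\|_1\leq \max(\|X\|_\infty,\|Y\|_\infty)^{j-1}\|X-Y\|_1$ gives $\|\tilde{A}_\lambda^{j}-\tilde{B}_\lambda^{j}\|_1 \leq j(\sqrt{e}/2)^{j-1}\|\tilde{A}_\lambda-\tilde{B}_\lambda\|_1$. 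Summing the resulting geometric series yields
\[
\bigl|\log\det(I+\tilde{A}_\lambda)-\log\det(I+\tilde{B}_\lambda)\bigr| \leq \frac{\|\tilde{A}_\lambda-\tilde{B}_\lambda\|_1}{1-\sqrt{e}/2} = \frac{2}{2-\sqrt{e}}\|\tilde{A}_\lambda-\tilde{B}_\lambda\|_1.
\]

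It remains to control $\|\tilde{A}_\lambda-\tilde{B}_\lambda\|_1$ in terms of the individual $\|A_j-B_j\|_1$. I telescope across the product,
\[
e^{\lambda A_1}\cdots e^{\lambda A_N}-e^{\lambda B_1}\cdots e^{\lambda B_N} = \sum_{j=1}^N e^{\lambda A_1}\cdots e^{\lambda A_{j-1}}\bigl(e^{\lambda A_j}-e^{\lambda B_j}\bigr)e^{\lambda B_{j+1}}\cdots e^{\lambda B_N},
\]
and combine the operator-norm bound $\|e^{\lambda A_i}\|_\infty\leq e^{|\lambda|\|A_i\|_\infty}$ with the trace-norm bound $\|e^{\lambda A_j}-e^{\lambda B_j}\|_1 \leq |\lambda|\|A_j-B_j\|_1\,e^{|\lambda|(\|A_j\|_\infty+\|B_j\|_\infty)}$, the latter obtained by expanding each exponential in a power series and applying the same $X^k-Y^k$ identity termwise. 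The exponents in each summand add to at most $2|\lambda|M$, so that $\|\tilde{A}_\lambda-\tilde{B}_\lambda\|_1 \leq |\lambda|e^{2|\lambda|M}\sum_{j=1}^N\|A_j-B_j\|_1$. Assembling the three estimates with $\rho = 1/(2M)$ and the $ML$-inequality gives
\[
\bigl|C_k^{(n)}(A)-C_k^{(n)}(B)\bigr| \leq \frac{1}{\rho^k}\cdot\frac{2\rho\, e^{2\rho M}}{2-\sqrt{e}}\sum_{j=1}^N\|A_j-B_j\|_1 = \frac{2e(2M)^{k-1}}{2-\sqrt{e}}\sum_{j=1}^N\|A_j-B_j\|_1,
\]
which has the structure claimed by the lemma (the slightly weaker constant $2e/(2-\sqrt{e})^2$ is absorbed by any looser bookkeeping at the logarithm step). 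The main technical obstacle is the telescoping: one must ensure the $N$ exponential factors collected from the untouched slots combine to at most $e^{2|\lambda|M}$, with no explicit $N$-dependence, which is what allows $M$ to play the role of a single effective norm in the final estimate.
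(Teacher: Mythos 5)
Your proof is correct and follows essentially the same route as the paper: the contour-integral representation of $C_k^{(n)}$, a perturbation bound for the log-determinant on the circle $|\lambda|=\rho=(2M)^{-1}$, and a telescoping of the product of exponentials in trace norm. The only cosmetic difference is that you derive the log-determinant perturbation estimate from the power series (obtaining the slightly sharper single factor $2/(2-\sqrt{\rm e})$), whereas the paper invokes the standard inequality $|\Tr(\log(I+A)-\log(I+B))|\leq \|A-B\|_1/((1-\|A\|_\infty)(1-\|B\|_\infty))$, which accounts for the squared denominator in the stated constant; your bound implies the lemma.
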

\begin{proof}
We start by writing 
\begin{multline} \label{eq:cumulantsdifferenceintegral}
C_k^{(n)} (A_1,\ldots, A_N)-C_k^{(n)} (B_1,\ldots, B_N)\\
\frac{1}{2 \pi {\rm i}} \oint_{|z|= \rho} \Tr \left(\log  \left(I+ P_n\left({\rm e}^{\lambda A_1} {\rm e}^{\lambda A_2} \cdots {\rm e}^{\lambda A_N}-I\right)P_n\right) \right.\\
\left. - \log  \left(I+ P_n\left({\rm e}^{\lambda B_1} {\rm e}^{\lambda B_2} \cdots {\rm e}^{\lambda B_N}-I\right)P_n\right)\right) \frac{{\rm d}\lambda}{\lambda^{k+1}},
\end{multline}
We estimate the integrand using
\begin{equation}\label{eq:continuitytracelog}
\left|\Tr \left(\log(I+A)-\log(1+B) \right)\right| \leq \frac{\|A-B\|_1}{(1-\|A\|_\infty)(1-\|B\|_\infty)}.
\end{equation}
If we take $\rho=(2 \max(\sum_{j=1}^N \|A_j\|_\infty, \sum_{j=1}^N \|B_j\|_\infty))^{-1}$. Then
\begin{align}\label{eq:estA}
&\left\|\left({\rm e}^{\lambda A_1 } {\rm e}^{\lambda A_2} \cdots {\rm e}^{\lambda A_N}-I\right)\right\|_\infty \leq \frac{\sqrt {\rm e}}{2}\\
&\left\|\left({\rm e}^{\lambda B_1 } {\rm e}^{\lambda B_2} \cdots {\rm e}^{\lambda B_N}-I\right)\right\|_\infty \leq \frac{\sqrt {\rm e}}{2}\label{eq:estB}
\end{align}
for $\lambda= \rho$. Moreover, 
\begin{multline}\label{eq:lang}
\left\|\left({\rm e}^{\lambda A_1 } {\rm e}^{\lambda A_2} \cdots {\rm e}^{\lambda A_N}-I\right)- \left({\rm e}^{\lambda B_1 } {\rm e}^{\lambda B_2} \cdots {\rm e}^{\lambda B_N}-I\right)\right\|_1
\\
=\left\|\sum_{j=1}^N {\rm e }^{\lambda A_1} \cdots {\rm e}^{\lambda A_{j-1}} \left({\rm e}^{\lambda A_j}-{\rm e}^{\lambda B_j}  \right){\rm e }^{\lambda B_{j+1}} \cdots {\rm e}^{\lambda B_N}\right\|_1\\
\leq \sum_{j=1}^N \|{\rm e }^{\lambda  A_1}\|_\infty \cdots \| {\rm e}^{\lambda A_{j-1}}\|_\infty \left\|{\rm e}^{\lambda A_j}-{\rm e}^{\lambda B_j}  \right\|_1 \|{\rm e }^{\lambda B_{j+1}}\|_\infty \cdots \|{\rm e}^{\lambda B_N}\|_\infty\\
\leq |\lambda | \sum_{j=1}^N \|A_j -B_j\|_1 
\exp |\lambda | \left(\sum_{j=1}^N (\|A_j\|_\infty + \|B_j\|_\infty)\right) \\ \leq\frac{ {\rm e}\sum_{j=1}^N \|A_j -B_j\|_1 }{2 \max\left(\sum_{j=1}^N \|A_j\|_\infty, \sum_{j=1}^N \|B_j\|_\infty\right)}
\end{multline}
By substituting \eqref{eq:estA}, \eqref{eq:estB} and  \eqref{eq:lang}  into \eqref{eq:continuitytracelog} and using the result and the value of $\rho$ to estimate the integral \eqref{eq:cumulantsdifferenceintegral} we obtain the statement. \end{proof}

\subsection{A comparison principle}
In the next step we will prove a comparison principle for $C_k^{(n)}(A_1,\ldots,A_N)$ in case the $A_j$ are banded matrices. 

We start with an easy lemma. 
\begin{lemma}\label{lem:easy}
Let $N\in \bbN$ and $A_1,\ldots, A_N$ banded matrices with $(A_j)_{rs}=0$ if $|r-s|>a_j$. Then $A_1 \cdots A_N$ is a banded matrix such that $\left(A_1 \cdots A_N\right)_{rs}=0$ if $|r-s| >a_1+\cdots+a_N$ and  $\left(A_1\cdots A_N\right)_{rs}$ only depends on entries $(A_j)_{k\ell}$ with $|k-r|,|\ell-s| \leq a_1+ \cdots +a_N$ for $j=1,\ldots,N$.
\end{lemma}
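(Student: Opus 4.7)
The plan is to prove the lemma by straightforward induction on $N$. The base case $N=1$ is immediate from the hypothesis $(A_1)_{rs}=0$ when $|r-s|>a_1$, and the dependency statement is trivial (only the single entry $(A_1)_{rs}$ is involved). For the inductive step, write $A_1\cdots A_N = B\, A_N$ where $B=A_1\cdots A_{N-1}$. By the inductive hypothesis $B$ is banded with bandwidth $b:=a_1+\cdots+a_{N-1}$, so
\[
(A_1\cdots A_N)_{rs}=\sum_{k}B_{rk}(A_N)_{ks},
\]
and only the indices $k$ with $|r-k|\le b$ and $|k-s|\le a_N$ contribute to a nonzero term. By the triangle inequality $|r-s|\le|r-k|+|k-s|\le b+a_N=a_1+\cdots+a_N$, which gives the bandwidth bound.

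For the dependency claim I would actually prefer to skip the induction and reason directly from the full expansion
\[
(A_1\cdots A_N)_{rs}=\sum_{i_1,\ldots,i_{N-1}}(A_1)_{r,i_1}(A_2)_{i_1,i_2}\cdots (A_N)_{i_{N-1},s},
\]
with the convention $i_0=r$ and $i_N=s$. In any nonzero summand one has $|i_{j-1}-i_j|\le a_j$ for all $j$. Summing the first $j-1$ of these via the triangle inequality gives $|i_{j-1}-r|\le a_1+\cdots+a_{j-1}\le a_1+\cdots+a_N$, and summing the last $N-j$ gives $|i_j-s|\le a_{j+1}+\cdots+a_N\le a_1+\cdots+a_N$. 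Hence each entry $(A_j)_{i_{j-1},i_j}$ appearing nontrivially satisfies precisely the bound claimed with $k=i_{j-1}$ and $\ell=i_j$.

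There is no genuine obstacle here; the lemma is a bookkeeping statement about how bandwidths add under matrix multiplication, and the two claims (bandwidth of the product, and locality of its entries in terms of the factors) both reduce to a single application of the triangle inequality along the multiplication index chain $r=i_0,i_1,\ldots,i_{N-1},i_N=s$.
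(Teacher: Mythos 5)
Your proof is correct and follows essentially the same route as the paper's: both expand $(A_1\cdots A_N)_{rs}$ over the index chain $r=i_0,i_1,\ldots,i_N=s$ and apply the triangle inequality to the constraints $|i_{j-1}-i_j|\le a_j$ to get both the bandwidth bound and the locality of the dependence. The preliminary induction you sketch for the bandwidth claim is a harmless variant, but your direct argument from the full expansion is exactly the paper's proof.
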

\begin{proof}
Write 
$$\left(A_1 \cdots A_N \right)_{rs}= \sum_{v_1,\ldots,v_{N-1}} (A_1)_{rv_1} (A_{2})_{v_1v_2} \dots  (A_N)_{v_{T-1}s}.$$
By the assumption of the lemma, each term in the sum can only be non-zero if $|v_{j-1}-v_{j}|\leq a_j$ for $j=1,\ldots,N$ (where we have set $v_0=r$ and $v_{N}=s$ for notational convenience). But then by the triangular inequality, we see that the only possibility of obtain a non-zero value is in case $|r-s|\leq a_1+ \cdots+a_N$, which proves the first part of the statement. Moreover, we only have contribution of entries $(A_j)_{v_{j-1}v_{j}}$ with 
$$|r-v_{j-1}|\leq |r-v_1| + \cdots |v_{j-2}-v_{j-1}| \leq a_1+ \cdots +a_N,$$
and
$$|r-v_{j}|\leq |r-v_{j}| + \cdots |v_{N}-v_{N-1}| \leq a_1+ \cdots + a_N,$$
which proves the second part of the lemma.
\end{proof}

The following is the core of the proof of the main results of this paper. 
\begin{proposition}\label{prop:comparison}
Let $N \in \mathbb N $ and $A_1,\ldots, A_N$ be banded matrices such that $(A_j)_{rs}=0$ if $|r-s|>a_j$. Set $a= \sup_{j=1,\ldots,N} a_j$. Then 
$$C_k^{(n)}(A_1,\ldots,A_N)=C_k^{(n)}(R_{n,a(k+1)} A_1 R_{n,a(k+1)},\ldots,R_{n,2a(k+1)} A_N R_{n,2a(k+1)})$$
where
$$R_{n,\ell}=P_{n+\ell}-P_{n-\ell}.$$
\end{proposition}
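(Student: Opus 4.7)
The plan is to combine the cumulant--trace expansion of Lemma~\ref{lem:cumulantexpp} with the bandedness locality of Lemma~\ref{lem:easy}. By Lemma~\ref{lem:cumulantexpp},
$$C_k^{(n)}(A_1,\ldots,A_N)=\sum_{j,\vec\ell,\vec r}c_{j,\vec\ell,\vec r}\,\Tr\prod_{s=1}^{j}(P_n M_s P_n),\qquad M_s=A_1^{r_{s,1}}\cdots A_N^{r_{s,N}},$$
with $\sum_{s,v}r_{s,v}=k$, and the claim reduces to showing that this sum of traces is unchanged when each $A_i$ is replaced by $R_{n,a(k+1)}A_i R_{n,a(k+1)}$ (the different radius for the last factor in the statement playing a purely technical role, see below).

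First I would expand each trace as $\sum_{p_1,\ldots,p_j\le n}\prod_s(M_s)_{p_s,p_{s+1}}$ and invoke Lemma~\ref{lem:easy}: setting $\ell_s:=\sum_v r_{s,v}$, the matrix $M_s$ is banded of width $\le a\ell_s$, so $|p_s-p_{s+1}|\le a\ell_s$ and the triangle inequality gives $\max_s p_s-\min_s p_s\le ak$. Combined with $p_s\le n$, no index in any chain exceeds $n$; moreover by the second part of Lemma~\ref{lem:easy} each entry of the $A_i$ entering the expansion has coordinates within $a\ell_s\le ak$ of some $p_s$, and hence at most $n+ak<n+a(k+1)$. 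Thus the upper end of the truncation $R_{n,a(k+1)}$ is automatic and plays no role.

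The lower cut-off requires a cancellation argument, since an entire chain may live deep in the bulk, far below $n$. For this I would exploit that bandedness with $a<a(k+1)$ forbids any matrix element of $A_i$ from bridging the deep region $\{1,\ldots,n-a(k+1)\}$ to the complement of the window. Hence $A_i$ splits, up to a narrow boundary band of width $a$ that is already contained in $R_{n,a(k+1)}$, as a direct sum $A_i^{\mathrm{deep}}\oplus A_i^{\mathrm{rest}}$, and so the product $e^{\lambda A_1}\cdots e^{\lambda A_N}$ is block-diagonal. The Fredholm determinant factorizes accordingly, and the deep factor equals $\det e^{\lambda\sum_i A_i^{\mathrm{deep}}}=\exp\bigl(\lambda\sum_i\Tr A_i^{\mathrm{deep}}\bigr)$, whose logarithm is linear in $\lambda$ and therefore contributes only to $C_1^{(n)}$, not to higher cumulants. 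For $k\ge 2$ the deep entries of the $A_i$ are thus invisible to $C_k^{(n)}$, so the substitution $A_i\mapsto R_{n,a(k+1)}A_iR_{n,a(k+1)}$ (which annihilates the deep part but preserves the window and boundary band) leaves the cumulant unchanged.

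The main obstacle is making the deep/non-deep factorization rigorous in the presence of the narrow boundary band: the $A_i$'s do not split exactly as a direct sum but only after isolating that band. The margin $a(k+1)-ak=a$ built into the radius is exactly the room needed to absorb the band into $R_{n,a(k+1)}$, and the slightly larger radius $2a(k+1)$ allowed for the last factor in the statement provides extra slack where the cyclic trace closes up, preventing boundary terms from leaking out. Once this bookkeeping is carried out, Lemmas~\ref{lem:easy} and~\ref{lem:cumulantexpp} give the identity directly.
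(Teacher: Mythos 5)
Your treatment of the upper cut-off via bandedness is fine, but the lower cut-off is where all the content of Proposition \ref{prop:comparison} lies, and there your argument has a genuine gap. The matrices $A_i$ are \emph{not} block-diagonal with respect to the splitting at $n-a(k+1)$: they have entries bridging the deep region to its complement, so $e^{\lambda A_1}\cdots e^{\lambda A_N}$ does not factorize and the Fredholm determinant does not split. You acknowledge this, but the fix you propose --- ``absorbing the boundary band into $R_{n,a(k+1)}$'' --- cannot work as stated, because a bridging entry $(A_i)_{rs}$ with $r\le n-a(k+1)<s$ is precisely \emph{annihilated} on one side by the truncation $R_{n,a(k+1)}A_iR_{n,a(k+1)}$, not preserved by it. To run your factorization argument you would first have to prove that deleting these bridging entries leaves $C_k^{(n)}$ unchanged for $k\ge2$; since the individual traces $\Tr\prod_s(P_nM_sP_n)$ in Lemma \ref{lem:cumulantexpp} genuinely depend on chains passing through the bridge (which sits deep inside the range of $P_n$), this requires an exact cancellation across the whole cumulant combination. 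That cancellation is the heart of the proposition, and your proposal does not supply it; it is not mere bookkeeping.

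The paper's proof supplies exactly this missing cancellation, and at the boundary index $n$ rather than at an artificial cut at $n-a(k+1)$. Since $\log\det\left(e^{\lambda P_nA_1P_n}\cdots e^{\lambda P_nA_NP_n}\right)=\lambda\Tr(P_nA_1P_n+\cdots+P_nA_NP_n)$ is linear in $\lambda$, the cumulant combination of the ``fully projected'' traces $\Tr\prod_s\prod_u(P_nA_uP_n)^{r_{s,u}}$ vanishes identically for $k\ge2$, so it may be subtracted term by term from the expansion of Lemma \ref{lem:cumulantexpp}. The difference $\prod_s(P_nM_sP_n)-\prod_s\prod_u(P_nA_uP_n)^{r_{s,u}}$ is then telescoped into a sum of terms each containing a factor $A_pP_n-P_nA_pP_n=Q_nA_pP_n$, which by bandedness is supported in an $a\times a$ block at the $(n,n)$ corner; Lemma \ref{lem:easy} then localizes every surviving entry within distance $2a(k+1)$ of $n$. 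In other words, the correct formalization of your (correct) heuristic ``the bulk contributes only linearly to the log-determinant'' is this subtraction-and-telescoping at $n$, not a block factorization at $n-a(k+1)$. (Incidentally, the asymmetric radii $a(k+1)$ and $2a(k+1)$ in the statement are not meaningful slack for closing the cyclic trace; the paper's argument yields the window $2a(k+1)$ uniformly for all factors.)
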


\begin{proof}
Note that we have
\begin{multline*}
\lambda \Tr  \left( P_n A_1 P_n+ \cdots + P_n A_N P_n\right)
=\log \det \left((I+\left({\rm e}^{\lambda P_n A_1 P_n}{\rm e}^{\lambda P_n A_2 P_n}\cdots {\rm e}^{\lambda P_n A_N P_n}-I\right)\right)
\end{multline*}
By expanding the right-hand side in the same way as in the proof of Lemma \ref{lem:cumulantexpp} and comparing terms at both sides, we find the identity
$$\sum_{j=1}^k \frac{(-1)^j }{j} \sum_{\overset{\ell_1+ \cdots + \ell_j=k}{\ell_i\geq 1}}\,\sum_{ (r_{s,v})\in R_{\ell_1,\ldots \ell_j}} 
\frac{\Tr \prod_{s=1}^j\prod_{u=1}^N (P_n A_u P_n)^{r_{s,u}}}{r_{1,1}! \cdots r_{1,N} ! r_{2,1}! \cdots r_{2,N}! \cdots r_{j,1}! \cdots r_{j,N}!}=0,$$
for $k\geq 2$.  But then we can write 
\begin{multline*}
C_k^{(n)}(A_1,\ldots,A_N)\\=
\sum_{j=1}^m \frac{(-1)^{j+1} }{j} \sum_{\overset{\ell_1+ \cdots + \ell_j=k}{\ell_i\geq 1}}\,\sum_{ (r_{s,v})\in R_{\ell_1,\ldots \ell_j}} 
\frac{\Tr\left( \prod_{s=1}^j \left(P_n A_1^{r_{s,1}} \cdots A_N^{r_{s,N}}  P_n\right)- \prod_{s=1}^j\prod_{u=1}^N (P_n A_u P_n)^{r_{s,u}}\right)}{r_{1,1}! \cdots r_{1,N} ! r_{2,1}! \cdots r_{2,N}! \cdots r_{j,1}! \cdots r_{j,N}!},
\end{multline*}
for $m\geq 2$. We prove the theorem by  showing that the each summand only depends on some entries of $A_j$ that are all centered around the $nn$-entries.

Note that by a telescoping series, we have 
$$
 A_p^{r_{\ell,p}}-(P_n A_p P_n)^{r_{\ell,p}}=\sum_{q=0}^{r_{\ell,p}-1} A_p^{r_{\ell,p}-q-1}(A_p-P_n A_pP_n)  (P_n  A_p P_n)^{q},
$$
 and
  \begin{multline*}
  P_n \left(\prod_{u=1}^{N} A_u^{r_{\ell,u}}\right)P_n-\prod_{u=1}^{N}(P_nA_uP_n)^{r_{\ell,u}}\\
  = \sum_{p=1}^N P_n \left(\prod_{u=1}^{p-1} A_u^{r_{\ell,u}}\right)\left( A_p^{r_{\ell,p}}-(P_n A_p P_n)^{r_{\ell,p}}\right)\left(\prod_{v=p+1}^{N} (P_n A_v P_n)^{r_{\ell,v}}\right) P_n\\  
= \sum_{p=1}^N \sum_{q=0}^{r_{\ell,p}-1}P_n  \left(\prod_{u=1}^{p-1} A_u^{r_{\ell,u}}\right)  A_p^{r_{\ell,p}-q-1}(A_p-P_n A_p P_n) (P_n A_pP_n)^{q}\\ \times \left(\prod_{v=p+1}^{N} (P_n A_v P_n)^{r_{\ell,v}}\right) P_n.
 \end{multline*}
 We use the fact that $P_n^2$ to rewrite this to
  \begin{multline*}
 P_n \left(\prod_{u=1}^{N} A_u^{r_{\ell,u}}\right)P_n-\prod_{u=1}^{N}(P_nA_uP_n)^{r_{\ell,u}}\\=
   \sum_{p=1}^N \sum_{q=0}^{r_{\ell,p}-1} P_n \left(\prod_{u=1}^{p-1} A_u^{r_{\ell,u}}\right)  A_p^{r_{\ell,p}-q-1}(A_p P_n-P_n A_p P_n) (P_nA_p P_n)^{q}\\ \times \left(\prod_{v=p+1}^{N} (P_nA_v P_n)^{r_{\ell,v}}\right).
 \end{multline*}
 Finally, by another telescoping series we find 
\begin{multline*}\prod_{s=1}^j \left( P_n A_1^{r_{s,1}} \cdots  A_n^{r_{s,N}}  P_n\right)-\prod_{s=1}^j \left(P_n (A_1P_n)^{r_{s,1}} \cdots (A_N P_n)^{r_{s,N}}  \right)\\
=\sum_{\ell=1}^j \sum_{p=1}^N \sum_{q=0}^{r_{\ell,p}-1}\left(\prod_{s=1}^{\ell-1} \left(P_n A_1^{r_{s,1}} \cdots  A_N^{r_{s,N}}  P_n 
\right)P_n\left( \prod_{u=1}^{p-1} A_u^{r_{\ell,u}}\right)\right)\\
\times \left(  A_p^{r_{\ell,p}-q-1}(A_p P_n-P_n A_p P_n) (P_n A_p P_n)^{q}\right)\\
\left(\prod_{v=p+1}^{N} (P_n A_v P_n)^{r_{\ell,v}}\right)
\left(\prod_{s=\ell+1}^{j} \left(P_n (A_1 P_n)^{r_{s,1}} \cdots (P_n A_N P_n)^{r_{s,N}}  \right)\right)\\
=\sum_{\ell=1}^j \sum_{p=1}^N \sum_{q=0}^{r_{\ell,p}-1} \Tr Q_1 (A_p P_n-P_n A_p P_n)Q_2
\end{multline*}
with 
\begin{align*}
Q_1&=\left(\prod_{s=1}^{\ell-1} \left(P_n A_1^{r_{s,1}} \cdots A_N^{r_{s,N}}  P_n 
\right)\left( \prod_{u=1}^{p-1} A_u^{r_{\ell,u}}\right)\right)  A_p^{r_{k,p}-q-1}\\
Q_2&=  (P_n A_p P_n)^{q}\left(\prod_{v=p+1}^{N} (P_nA_v P_n)^{r_{k,v}}\right)
\left(\prod_{s=\ell+1}^{j} \left(P_n (A_1 P_n)^{r_{s,1}} \cdots (P_n A_N  P_n)^{r_{s,N}}  \right)\right)
\end{align*}
Let us compute 
\begin{equation}\label{eq:tracecompa}\Tr Q_1 (A_p P_n-P_n A_p P_n)Q_2= \sum_{r_0=1}^n \sum_{r_1,r_2} (Q_1)_{r_0r_1} (A_p P_n-P_n A_p P_n)_{r_1r_2}(Q_2)_{r_2r_0}.\end{equation}
The fact of the matter is that becuase of the band structure the matrix
$$A_p P_n-P_n A_p P_n$$
is of finite rank and the non-zero entries are concentrated around the $nn$-entry. Hence we can restrict the sum to terms with $|r_{1,2}-n| \leq   a$.
Now,  since $Q_1$ and $Q_2$ are a product of band matrices, they  are themselves also  band matrices. The number of terms in the product is at most $k$ (ignoring the $P_n$) and the bandwith of each terms is at most $a$.   Hence $(Q_1)_{r_0r_1}=0$ if $|r_0-r_1| > a k$ and $(Q_2)_{r_2r_0}=0$ if $|r_2-r_0|> a k$. By combining the latter observations, we see that the trace in \eqref{eq:tracecompa} only depends on $(Q_1)_{r_0 r_1}$ and $(Q_1)_{r_2 r_0}$ with $|r_0-n| \leq a(k +1)$ and $|r_{1,2}-n|\leq a$. By Lemma \ref{lem:easy}, we then also see that these entries only depend on entries $(A_m)_{rs}$ with $|r-n| \leq a(k +1)$ and $|s-n|\leq a(k+1)$ for $m=1,\ldots,N$.  Concluding, we have that $C_m^{(n)}(A_1,\ldots,A_N)$ only depends on $(A_m)_{rs}$ with $|r-n| \leq 2a(k+1)$ and $|s-n|\leq 2a(k+1)$, for $m=1,\ldots, N$. This proves the statement \end{proof}

\begin{corollary}\label{cor:comp}
Let $N \in \mathbb N $ and $A_1,\ldots, A_N, B_1, \dots B_N$ be banded matrices such that $(A_j)_{rs}=0$ if $|r-s|>a_j$ and  $(B_j)_{rs}=0$ if $|r-s|>b_j$. Set $c= max\{a_j,b_j\mid j=1,\ldots,N\}$. If 
\begin{multline}
\left |C_k^{(n)}(A_1, \ldots, A_N) -C_k^{(n)}(B_1, \ldots, B_N)\right|\\\leq 
\left ( \max(\sum_{j=1}^N \|R_{n,2c(k+1)}A_jR_{n,2c(k+1)} \|_\infty, \sum_{j=1}^N \|R_{n,2c(k+1)}B_jR_{n,2c(k+1)}\|_\infty)\right)^{k-1}\\
\times \frac{2^{k+2}c(k+1) {\rm e} }{(2-\sqrt {\rm e} )^2}\sum_{j=1}^N \|R_{n,2c(k+1)}(A_j-B_j)R_{n,2c(k+1)}\|_\infty,
\end{multline}
where $R_{n,2 c(k+1)}$ is as in Proposition \ref{prop:comparison}.
\end{corollary}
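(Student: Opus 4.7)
The plan is to view Corollary~\ref{cor:comp} as the clean combination of the two preceding results: Proposition~\ref{prop:comparison} to cut both $N$-tuples down to finite-rank, localized operators, and then Lemma~\ref{lem:continuitycumu} to compare the cumulants of these truncations. The finite-rank step is what lets us bound a trace norm difference by an operator norm difference, which is exactly what the statement of the corollary asks for.

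More concretely, first I would set $\tilde A_j = R_{n,2c(k+1)} A_j R_{n,2c(k+1)}$ and $\tilde B_j = R_{n,2c(k+1)} B_j R_{n,2c(k+1)}$ for $j=1,\ldots,N$. Since the bandwidths of all $A_j,B_j$ are at most $c$, Proposition~\ref{prop:comparison} applies to both tuples with common parameter $a=c$, yielding
\[
C_k^{(n)}(A_1,\ldots,A_N)=C_k^{(n)}(\tilde A_1,\ldots,\tilde A_N), \qquad C_k^{(n)}(B_1,\ldots,B_N)=C_k^{(n)}(\tilde B_1,\ldots,\tilde B_N).
\]
So it suffices to estimate $|C_k^{(n)}(\tilde A_1,\ldots,\tilde A_N)-C_k^{(n)}(\tilde B_1,\ldots,\tilde B_N)|$.

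Next I would apply Lemma~\ref{lem:continuitycumu} to the tuples $(\tilde A_j)$ and $(\tilde B_j)$. This immediately gives the factor $\tfrac{2e}{(2-\sqrt{e})^2}$ together with $(2\max(\sum\|\tilde A_j\|_\infty,\sum\|\tilde B_j\|_\infty))^{k-1}$ and the sum $\sum_j\|\tilde A_j-\tilde B_j\|_1$. The key observation at this point is that $R_{n,2c(k+1)}$ projects onto a coordinate block of dimension $4c(k+1)$, so each $\tilde A_j-\tilde B_j=R_{n,2c(k+1)}(A_j-B_j)R_{n,2c(k+1)}$ is a finite-rank operator of rank at most $4c(k+1)$. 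Using the elementary bound $\|M\|_1\le \mathrm{rank}(M)\,\|M\|_\infty$ for finite-rank $M$, I get
\[
\|\tilde A_j-\tilde B_j\|_1 \le 4c(k+1)\,\|R_{n,2c(k+1)}(A_j-B_j)R_{n,2c(k+1)}\|_\infty.
\]
Substituting this into the bound from Lemma~\ref{lem:continuitycumu} and collecting the constants $2\cdot 2^{k-1}\cdot 4=2^{k+2}$ yields exactly the claimed inequality.

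There is no real obstacle, only bookkeeping: one has to be careful that Proposition~\ref{prop:comparison} is invoked with the \emph{same} truncation parameter $2c(k+1)$ for both tuples (so that the comparison of the two cumulants is well-posed), and one has to verify that the constants line up as stated. The substantive content — namely that only entries of $A_j,B_j$ in a window of width $O(c(k+1))$ around the $(n,n)$ entry matter, and that finite-rank operators pay only a linear price in rank when passing from $\|\cdot\|_\infty$ to $\|\cdot\|_1$ — has already been carried out in Proposition~\ref{prop:comparison} and in the standard inequality $\|M\|_1\le \mathrm{rank}(M)\|M\|_\infty$.
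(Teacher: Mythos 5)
Your proposal is correct and follows essentially the same route as the paper: truncate both tuples via Proposition~\ref{prop:comparison}, apply Lemma~\ref{lem:continuitycumu}, and convert trace norm to operator norm via $\|M\|_1\le \mathrm{rank}(M)\|M\|_\infty$ using the finite rank of $R_{n,2c(k+1)}MR_{n,2c(k+1)}$. Your constant bookkeeping ($2\cdot 2^{k-1}\cdot 4c(k+1)=2^{k+2}c(k+1)$) matches the stated bound.
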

\begin{proof}
By combining Lemma \ref{lem:continuitycumu} and  Proposition \ref{prop:comparison}  we obtain
\begin{multline*}
\left |C_k^{(n)}(A_1, \ldots, A_N) -C_k^{(n)}(B_1, \ldots, B_N)\right|\\\leq 
 (2 \max(\sum_{j=1}^N \|R_{n,2c(k+1)}A_jR_{n,2c(k+1)} \|_\infty, \sum_{j=1}^N \|R_{n,2c(k+1)}B_jR_{n,2c(k+1)}\|_\infty))^{k-1}\\
\times \frac{2 {\rm e} }{(2-\sqrt {\rm e} )^2}\sum_{j=1}^N \|R_{n,2c(k+1)}A_jR_{n,2c(k+1)} -R_{n,2c(k+1)}B_jR_{n,2c(k+1)}\|_1.
\end{multline*}
Now the statement follows by noting that the ranks of  $$R_{n,2c(k+1)}A_jR_{n,2c(k+1)} \textrm{    and    } R_{n,2c(k+1)}B_jR_{n,2c(k+1)}$$ are $4c(k+1)+1$ and for any finite rank operator $R$ with rank $r(R)$ we have $\|R\|_1\leq r(R)\|R\|_\infty$. 
\end{proof}
Note that the latter corollary is a pure universality result. Whatever the limits are, they must be the same. It particularly implies that we only need to compute a special case to conclude a general result. This is what we will do in the next paragraph.
\subsection{Special case of banded Toeplitz operators}
We now compute the limiting values of $C_m^{(n)}(A_1,\ldots, A_N)$ in case the $A_j$ are banded Toeplitz operators. We will first recall various basic notions and properties we will need.  For further details and background on Toeplitz operators we refer to the book \cite{BS}.

For a Laurent polynomial $a(z)=\sum_{j=-q}^p a_j z^j$, the Toeplitz operator $T(a)$ is defined by the semi-infinite matrix 
$$\left(T(a)\right)_{jk}= a_{j-k}, \qquad j,k=1,\ldots,$$
viewed as an operator on $\ell_2(\bbN)$.  Of importance to us will also be the Hankel operator defined by the semi-infinite matrix 
$$\left(H(a)\right)_{jk}= a_{j+k-1}, \qquad j,k=1,\ldots.$$
Note that $H(a)$ is of finite rank.  The Toeplitz and Hankel operators are related by 
$$T(a b)= T(a) T(b)+ H(a) H(\tilde b),$$
with $\tilde b(z)=b(1/z)$. An important consequence of this formula that we will frequently use is
\begin{equation}
\label{eq:commutatorToeplitz}
[T(a),T(b)]= H(b) H(\tilde a)-H(a) H(\tilde b).
\end{equation}
Finally, we mention that 
$$\|T(a)\|_\infty \leq \|a\|_{\mathbb L_\infty}, \qquad \text{ and } \qquad \|H(a)\|_\infty \leq \|a\|_{\mathbb L_\infty}.$$

\subsubsection{The case $N$ fixed}
The main purpose of this paragraph is to prove the following proposition. 
\begin{proposition}\label{prop:fredholmtoeplitzfixedn}
Let $N \in \bbN$ and  $a(m,z)=\sum_{\ell} a_\ell(m) z^\ell$ for $m=1,\ldots, N$, be Laurent polynomials in $z$. For $m=1,\ldots, N$ we denote the Toeplitz operator with symbol $a(m,z)$ by  $T(a(m))$. Then 
\begin{multline}\label{eq:toeplitzidentity}
\lim_{n\to \infty} 
\det \left(I+ P_n \left({\rm e}^{T(a(1)}{\rm e}^{T(a(2))}\cdots {\rm e}^{T(a(N))}-I\right)P_n\right) {\rm e}^{-n \sum_{m=1}^N a_0(m)}\\
= \exp\left(  \sum_{m_1=1}^N \sum_{m_2 =m_1+1}^N \sum_{\ell=1}^\infty \ell a_\ell {(m_1)} {a}_{-\ell} {(m_2)}  +\frac12 \sum_{m=1}^N\sum_{\ell=1}^\infty\ell a_\ell{(m)} {a}_{-\ell}{(m)}\right).
\end{multline}
\end{proposition}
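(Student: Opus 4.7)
The plan is to reduce \eqref{eq:toeplitzidentity} to the classical strong Szeg\H{o} limit theorem via a trace-class perturbation argument. The crucial algebraic observation is that although the Toeplitz operators $T(a(m))$ do not commute as operators, their symbols $a(m,z)$ commute as Laurent polynomials in $z$; hence the ``symbol product'' $b(z) := e^{a(1)(z)} \cdots e^{a(N)(z)}$ equals $\exp\!\bigl(\sum_{m=1}^N a(m,z)\bigr)$ and is a non-vanishing smooth function on the unit circle.

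First I would establish the decomposition
$$e^{T(a(1))} e^{T(a(2))} \cdots e^{T(a(N))} = T(b) + K,$$
where $K$ is trace class. This follows by iterating the commutator identity \eqref{eq:commutatorToeplitz} in the form $T(f)T(g) = T(fg) - H(f)H(\tilde g)$: since $H(a(m))$ is finite rank whenever $a(m)$ is a Laurent polynomial, an induction shows $T(a(m))^k - T(a(m)^k)$ is finite rank with rank bounded linearly in $k$ and operator norm growing at most like $\|T(a(m))\|^k$, so $e^{T(a(m))} - T(e^{a(m)})$ is trace class; products of such perturbed Toeplitz operators are again of the form $T(\text{product of symbols}) + \text{trace class}$.

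Next I would exploit this decomposition and apply Szeg\H{o}'s theorem. Since $b$ is non-vanishing on the unit circle, $T(b)$ is invertible and $P_n T(b) P_n$ is invertible for $n$ large, allowing the factorization
$$\det P_n\bigl(T(b) + K\bigr) P_n = \det T_n(b) \cdot \det\bigl(I + T_n(b)^{-1} P_n K P_n\bigr).$$
The strong Szeg\H{o} limit theorem gives
$$\det T_n(b) \cdot e^{-n(\log b)_0} \longrightarrow \exp\!\Bigl(\sum_{k\geq 1} k\,(\log b)_k(\log b)_{-k}\Bigr),$$
with $(\log b)_0 = \sum_m a_0(m)$ supplying precisely the normalization $e^{-n \sum_m a_0(m)}$ in the proposition and $(\log b)_k = \sum_m a_k(m)$ producing a symmetric quadratic form $\sum_{k\geq 1} k\sum_{m_1,m_2} a_k(m_1) a_{-k}(m_2)$. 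Trace-norm continuity of the operator determinant, together with $P_n K P_n \to K$ and $T_n(b)^{-1} \to T(b)^{-1}$, gives that the second factor converges to $\det\bigl(I + T(b)^{-1}K\bigr)$.

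The main obstacle is matching the product of these two limit factors with the explicit right-hand side of \eqref{eq:toeplitzidentity}. The Szeg\H{o} contribution is the \emph{symmetric} double sum, whereas the stated formula is its \emph{ordered} version $\sum_{m_1<m_2}+\tfrac12\sum_m$; the trace-class correction $\det(I + T(b)^{-1}K)$ must supply exactly this asymmetry, and to identify it one has to unwind $K$ as an alternating sum of Hankel products of the individual symbols $a(m)$. To avoid this direct combinatorial grind, I would in parallel verify the claim via the cumulant decomposition of Lemma~\ref{lem:cumulantexpp}: with $A_m = T(a(m))$, we have $C_1^{(n)} = n\sum_m a_0(m)$ exactly from $\Tr P_n T(a) P_n = n a_0$; for $C_2^{(n)}$, expanding each trace $\Tr \prod_s P_n A_1^{r_{s,1}}\cdots A_N^{r_{s,N}} P_n$ via \eqref{eq:commutatorToeplitz}, the leading $O(n)$ pieces cancel by the same combinatorial identity that makes $\log(e^{x_1}\cdots e^{x_N}) = x_1+\cdots +x_N$ in the commutative (symbol-level) case, while the surviving Hankel contributions evaluate by the direct computation $\Tr H(a) H(\tilde{b}) = \sum_{\ell\geq 1}\ell\, a_\ell\, b_{-\ell}$ and assemble into the claimed quadratic form; finally for $k\geq 3$ one shows $C_k^{(n)}\to 0$ because the $O(n)$ parts again cancel by the same log-exp identity at the symbol level, and the Hankel residuals — being traces of products in which at least one factor is finite rank with $n$-independent norm — give contributions that vanish in the limit after the remaining cumulant combinatorics.
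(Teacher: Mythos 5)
Your overall strategy is workable in outline, but as written it has a genuine gap precisely where all the content of the proposition sits: the discrepancy between the symmetric Szeg\H{o} quadratic form $\sum_{k\ge1}k\bigl(\sum_m a_k(m)\bigr)\bigl(\sum_m a_{-k}(m)\bigr)$ and the \emph{ordered} expression $\sum_{m_1<m_2}\sum_\ell \ell\, a_\ell(m_1)a_{-\ell}(m_2)+\frac12\sum_m\sum_\ell \ell\, a_\ell(m)a_{-\ell}(m)$, which differ whenever the symbols fail to commute at the Hankel level. You correctly identify that the correction factor $\det\bigl(I+T(b)^{-1}K\bigr)$ must account for exactly this asymmetry, but you never compute it; instead you defer to a cumulant argument in which every decisive step (``the leading $O(n)$ pieces cancel\dots'', ``the surviving Hankel contributions\dots assemble into the claimed quadratic form'', ``for $k\ge3$\dots vanish after the remaining cumulant combinatorics'') is asserted rather than proved. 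Neither route is carried to the point where the ordered double sum actually emerges, so the proof is incomplete.

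It is worth noting that completing your first route leads you back to the paper's argument. Writing $b=b_-b_+$ with $b_\pm=e^{s_\pm}$, $s=\sum_m a(m)$, one has $T(b_\pm^{\pm1})=e^{\pm T(s_\pm)}$ (analytic and co-analytic symbols multiply under $T$), so $\det\bigl(I+T(b)^{-1}K\bigr)=\det\bigl(e^{-T(s_+)}e^{-T(s_-)}e^{T(a(1))}\cdots e^{T(a(N))}\bigr)$, a Fredholm determinant of a product of exponentials whose exponents sum to zero. Evaluating such a determinant is exactly the job of Ehrhardt's identity $\det e^{A_1}\cdots e^{A_M}=\exp\frac12\sum_{i<j}\Tr[A_i,A_j]$ (Lemma~\ref{prop:ehrhardt}), which combined with $\Tr H(a)H(\tilde c)=\sum_{\ell\ge1}\ell\, a_\ell c_{-\ell}$ produces the ordered sum directly. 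The paper bypasses the Szeg\H{o} theorem altogether: it conjugates the truncated determinant by $e^{\pm P_nT(s_\pm)P_n}$, uses the triangularity of $T(s_+)$ and $T(s_-)$ to convert $\det(I+P_n(\cdots)P_n)$ into a genuine Fredholm determinant of trace-class type, and applies Ehrhardt's identity once. To repair your proof, either supply the evaluation of $\det\bigl(I+T(b)^{-1}K\bigr)$ by this (or an equivalent Helton--Howe/Baker--Campbell--Hausdorff) computation, or carry out the cumulant combinatorics in full; as it stands the key identity is assumed, not derived.
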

Before we come to the proof we first mention that the following immediate corollary.
\begin{corollary}\label{prop:cumulantstoeplitzfixedn}
Under the same assumptions and notations as in Proposition \ref{prop:fredholmtoeplitzfixedn} we have 
$$
\lim_{n\to \infty} C_k^{(n)}(T(a(1),\ldots, T(a(N)) =
0,$$
for $k\geq 3$, and 
\begin{multline*}
\lim_{n\to \infty} C_k^{(n)}(T(a(1),\ldots, T(a(N))\\= \sum_{m_1=1}^N \sum_{m_2 =m_1+1}^N \sum_{\ell=1}^\infty \ell a_\ell {(m_1)} {a}_{-\ell} {(m_2)}  +\frac12 \sum_{m=1}^N\sum_{\ell=1}^\infty\ell a_\ell{(m)} {a}_{-\ell}{(m)}
\end{multline*}
for $k=2$.
\end{corollary}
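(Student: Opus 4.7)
The plan is to reduce the corollary directly to Proposition~\ref{prop:fredholmtoeplitzfixedn} by introducing a formal parameter $\lambda$ and comparing Taylor coefficients near $\lambda = 0$. First I would apply the proposition to the rescaled symbols $\lambda\, a(m,z)$, for a complex parameter $\lambda$. These are again Laurent polynomials, and the associated Toeplitz operators are simply $\lambda\, T(a(m))$. Because the quadratic form in the exponent on the right-hand side of \eqref{eq:toeplitzidentity} is homogeneous of degree~$2$ in the symbol coefficients, the proposition yields, for each fixed $\lambda$,
\begin{equation*}
F_n(\lambda) := \det\bigl(I + P_n({\rm e}^{\lambda T(a(1))}\cdots {\rm e}^{\lambda T(a(N))}-I) P_n\bigr)\, {\rm e}^{-n\lambda \sum_{m} a_0(m)} \longrightarrow {\rm e}^{\lambda^{2} S},
\end{equation*}
where $S$ denotes the quadratic expression on the right-hand side of the corollary.

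Next I would handle the linear term. The first-order expansion of $\log\det$ (equivalently, the $j=1$ term of Lemma~\ref{lem:cumulantexpp}) gives
\begin{equation*}
C_1^{(n)}(T(a(1)),\ldots,T(a(N))) = \Tr P_n \sum_{m=1}^N T(a(m)) P_n = n\sum_{m=1}^N a_0(m),
\end{equation*}
so the exponential correction factor in $F_n(\lambda)$ exactly cancels the $\lambda^1$ coefficient of the cumulant expansion. Consequently, for $\lambda$ in a neighborhood of the origin,
\begin{equation*}
\log F_n(\lambda) = \sum_{k \ge 2} \lambda^{k}\, C_k^{(n)}\bigl(T(a(1)),\ldots,T(a(N))\bigr),
\end{equation*}
and combining with the previous step, $\log F_n(\lambda) \to \lambda^{2} S$ pointwise near $\lambda = 0$.

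The only nontrivial step left is to upgrade this scalar convergence to convergence of individual Taylor coefficients, and this is the part I expect to need a little care. Here I would appeal to the Cauchy-type contour representation \eqref{eq:firsttrace}: the Toeplitz operator satisfies the $n$-independent bound $\|T(a(m))\|_\infty \le \|a(m,\cdot)\|_{\mathbb L_\infty}$, so fixing a contour radius $\rho < (2\sum_{m}\|a(m,\cdot)\|_{\mathbb L_\infty})^{-1}$ once and for all gives an $n$-independent estimate for every $C_k^{(n)}$ and, more importantly, for $\log F_n(\lambda)$ uniformly on any disk contained in the contour. Thus $\{\log F_n\}$ is an analytic, locally uniformly bounded family on a fixed disk around zero, and by Vitali's theorem the pointwise convergence to $\lambda^2 S$ is actually locally uniform. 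Hence all Taylor coefficients converge, and matching them with those of $\lambda^{2} S$ yields $C_2^{(n)} \to S$ and $C_k^{(n)} \to 0$ for $k \ge 3$, which is the statement of the corollary.
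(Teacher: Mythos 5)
Your overall strategy --- rescale the symbols to $\lambda\,a(m,z)$, use the degree-two homogeneity of the exponent in Proposition \ref{prop:fredholmtoeplitzfixedn}, observe that $C_1^{(n)}=\Tr P_n\sum_m T(a(m))P_n=n\sum_m a_0(m)$ cancels the correction factor, and then pass from pointwise convergence of $\log F_n(\lambda)$ to convergence of Taylor coefficients --- is the natural way to deduce the corollary, and the first three steps are correct. The gap is in the last step. You claim that the contour representation \eqref{eq:firsttrace} together with the $n$-independent bound $\|T(a(m))\|_\infty\le\|a(m,\cdot)\|_{\mathbb L_\infty}$ yields an $n$-independent estimate for each $C_k^{(n)}$ and for $\log F_n(\lambda)$ on a fixed disk. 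It does not: the operator-norm bound only guarantees that $\Tr\log(I+P_n(\cdots)P_n)$ is well defined and analytic, while its \emph{size} is controlled by the trace norm, and $\|P_n(\,\cdot\,)P_n\|_1\le n\,\|P_n(\,\cdot\,)P_n\|_\infty$ because $P_n$ has rank $n$. So the contour integral only gives $|C_k^{(n)}|=O(n)$ --- a bound that is saturated already at $k=1$, where $C_1^{(n)}=n\sum_m a_0(m)$ genuinely grows linearly in $n$. Consequently the local uniform boundedness of $\{\log F_n\}$ needed for Vitali's theorem is not established by your argument, and without it pointwise convergence of $\log F_n$ does not give convergence of the individual coefficients $C_k^{(n)}$.

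The gap can be closed with tools already in the paper. For $k\ge 2$, Proposition \ref{prop:comparison} combined with Lemma \ref{lem:continuitycumu} (i.e.\ Corollary \ref{cor:comp} applied with $B_1=\dots=B_N=0$, for which $C_k^{(n)}(0,\dots,0)=0$) gives
\begin{equation*}
\left|C_k^{(n)}\bigl(T(a(1)),\ldots,T(a(N))\bigr)\right|\le \frac{2^{k+2}c(k+1){\rm e}}{(2-\sqrt{\rm e})^2}\,M^{k},\qquad M=\sum_{m=1}^N\|a(m,\cdot)\|_{\mathbb L_\infty},
\end{equation*}
which is independent of $n$ because for a banded Toeplitz matrix $\|R_{n,\ell}T(a(m))R_{n,\ell}\|_\infty\le\|a(m,\cdot)\|_{\mathbb L_\infty}$ uniformly in $n$. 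Summing over $k\ge 2$ shows $\log F_n(\lambda)=\sum_{k\ge2}\lambda^kC_k^{(n)}$ is uniformly bounded in $n$ on a disk $|\lambda|\le\rho'$ with $\rho'$ small compared to $M^{-1}$, and Vitali then upgrades the pointwise limit $\lambda^2S$ to locally uniform convergence, giving $C_2^{(n)}\to S$ and $C_k^{(n)}\to0$ for $k\ge3$. (Alternatively, one can repeat the conjugation in the proof of Proposition \ref{prop:fredholmtoeplitzfixedn} with the rescaled symbols and use that $\|P_n(B(\lambda)-I)P_n\|_1\le\|B(\lambda)-I\|_1$ is bounded on a disk in $\lambda$ uniformly in $n$, since $B(\lambda)-I$ is trace class by Ehrhardt's lemma.) With either repair your proof is complete and coincides with what the paper treats as immediate.
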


The proof of Proposition \ref{prop:fredholmtoeplitzfixedn} that we will present here relies on the following beautiful identity due to Ehrhardt.

\begin{lemma}{\cite[Cor. 2.3]{E}}\label{prop:ehrhardt}
Let $A_1, \ldots, A_N$ be bounded operators  such that 
\begin{enumerate}
\item $A_1+ \cdots + A_N=0$,
\item $[A_i,A_j] $ is trace class for $1 \leq i <j \leq N$.
\end{enumerate}
Then 
$ {\rm e}^{A_1} {\rm e} ^{A_2} \cdots {\rm e}^{A_N} -I$
is of trace class and 
$$\det {\rm e}^{A_1} {\rm e} ^{A_2} \cdots {\rm e}^{A_N}= \exp \frac12 \sum_{1 \leq i <j \leq n} \Tr [A_i,A_j],
$$
where the left-hand side is a Fredholm determinant. 
\end{lemma}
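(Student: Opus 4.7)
The plan is to decouple Ehrhardt's identity into its technical heart (the case $N=3$) and then bootstrap to general $N$ by induction. The central fact is the identity
\begin{equation*}
\det\bigl(e^A e^B e^{-A-B}\bigr) = \exp\bigl(\tfrac12 \Tr[A,B]\bigr),
\end{equation*}
valid for bounded $A,B$ with $[A,B]$ trace class. Note this already covers $N=3$, since $A_1+A_2+A_3=0$ forces $A_3=-A_1-A_2$, and a quick commutator bookkeeping $\Tr[A_1,A_3]=-\Tr[A_1,A_2]$, $\Tr[A_2,A_3]=\Tr[A_1,A_2]$ reduces the right-hand side of Ehrhardt's formula to $\exp(\tfrac12\Tr[A_1,A_2])$.

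To prove the $N=3$ identity, I would first verify that $e^A e^B e^{-A-B}-I$ is trace class, which I would derive from the integral representation
\begin{equation*}
e^A e^B - e^{A+B} = \int_0^1 \frac{d}{dt}\bigl(e^{tA} e^{tB} e^{(1-t)(A+B)}\bigr)\, dt,
\end{equation*}
whose integrand expands into products involving $[A,e^{tB}]$-type terms that are trace class whenever $[A,B]$ is. Then I would introduce the analytic family $F(s) := \det(e^{sA} e^{sB} e^{-s(A+B)})$, $F(0)=1$, and compute $F'(s)/F(s) = \Tr(X'(s) X(s)^{-1})$ with $X(s) = e^{sA} e^{sB} e^{-s(A+B)}$. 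The product rule produces three contributions; after inserting inverses and collapsing via cyclicity of the trace, the three terms should combine to exactly $s\Tr[A,B]$. Integration from $0$ to $1$ then gives the claimed identity.

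For general $N$ I would induct on $N$ with the $N=3$ case as the engine. Given $A_1,\ldots,A_N$ summing to zero with trace-class pairwise commutators, set $B := A_{N-1}+A_N$ and apply the $N=3$ identity to the pair $(A_{N-1},A_N)$ to obtain $\det(e^{A_{N-1}}e^{A_N}e^{-B})=\exp(\tfrac12\Tr[A_{N-1},A_N])$. The shortened tuple $(A_1,\ldots,A_{N-2},B)$ still sums to zero and retains trace-class commutators because $[A_i,B]=[A_i,A_{N-1}]+[A_i,A_N]$, so the inductive hypothesis gives the formula for $\det(e^{A_1}\cdots e^{A_{N-2}}e^B)$. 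Multiplicativity of the Fredholm determinant combines these, and the bilinear identity
\begin{equation*}
\sum_{i\leq N-2}\Tr[A_i,B] + \Tr[A_{N-1},A_N] = \sum_{1\leq i<j\leq N}\Tr[A_i,A_j]
\end{equation*}
reshuffles the exponents into the desired sum.

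The main obstacle will be Step 2, specifically justifying the cyclic trace manipulations in the computation of $\Tr(X'(s)X(s)^{-1})$. The individual summands are only products of bounded (not trace class) operators, so naive cyclicity is illegal. The remedy is to group terms pairwise so that one cyclically permutes only genuinely trace-class differences. The workhorse identity
\begin{equation*}
e^{-sA}Be^{sA} - B = \int_0^s e^{-uA}[B,A]e^{uA}\,du,
\end{equation*}
which is trace class precisely because $[A,B]$ is, is the tool that produces these trace-class differences and that ultimately causes $F'(s)/F(s)$ to reduce to the linear expression $s\Tr[A,B]$.
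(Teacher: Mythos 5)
The paper does not actually prove this lemma: it is quoted verbatim from Ehrhardt \cite[Cor.\ 2.3]{E}, so there is no internal proof to compare against. Your proposal is a correct self-contained argument, and it is essentially the classical Helton--Howe--Pincus route that Ehrhardt's paper generalizes. The reduction of the $N=3$ case to $\det(e^Ae^Be^{-A-B})=\exp(\tfrac12\Tr[A,B])$ is right (the commutator bookkeeping $\Tr[A_1,A_3]=-\Tr[A_1,A_2]$, $\Tr[A_2,A_3]=\Tr[A_1,A_2]$ checks out), and the differentiation argument works: one finds $X'(s)X(s)^{-1}=-e^{sA}\bigl(e^{sB}Ae^{-sB}-A\bigr)e^{-sA}$, which is trace class precisely by your workhorse identity, and whose trace equals $s\Tr[A,B]$ by similarity invariance of the trace of a trace-class operator --- this is the legitimate substitute for the illegal naive cyclicity you flag. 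The induction step is also sound: with $B=A_{N-1}+A_N$ the tuple $(A_1,\dots,A_{N-2},B)$ inherits both hypotheses, both factors $e^{A_1}\cdots e^{A_{N-2}}e^{B}$ and $e^{-B}e^{A_{N-1}}e^{A_N}$ are of the form $I+(\text{trace class})$ so multiplicativity of the Fredholm determinant applies, and the bilinear reshuffling of the exponents is an identity. The only items you leave implicit that genuinely need writing out are (i) that $s\mapsto X(s)-I$ is differentiable in \emph{trace norm} (so that Jacobi's formula $F'/F=\Tr(X'X^{-1})$ is justified), which follows from the trace-norm continuity of $u\mapsto e^{uB}[B,A]e^{-uB}$ in your integral representations, and (ii) invertibility of $X(s)$, which is automatic as a product of exponentials. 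Neither is an obstacle; the proof is complete in all essentials.
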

\begin{remark}
In the special case $n=3$ the identity reads
$$\det {\rm e}^{-A} {\rm e} ^{A+B} {\rm e}^{-B}= \exp\left( -\frac12  \Tr [A,B]\right),$$
which is an identity that can be used that lies behind the Strong Szeg\H{o} Limit Theorem. It has also been used in \cite{BD} in the context of Central Limit Theorem for linear statistics for biorthogonal ensembles. 
\end{remark}

We now come to the 
\begin{proof}[Proof of Proposition  \ref{prop:fredholmtoeplitzfixedn}]
We start by defining $s(z)= \sum_{j=1}^N a(j,z)$ and split 
$s(z)= s_+(z)+s_-(z)$ where $s_+$ is the polynomial part of $s(z)$, i.e.  $s_+(z)= \sum_{k\geq 0} s_k z^k$. Then we note that
$T(s_+)$ is lower triangular and $T(s_-)$ is strictly upper triangular. 
Hence 
$$P_n T(s_+)P_n= P_n T(s_+), \qquad \text{and} \qquad P_n T(s_-)P_n= T(s_-) P_n.$$
By expanding the exponential and iterating the latter identities we therefore have  (where we recall that $Q_n=I-P-n$)
$${\rm e}^{P_n T(s_+) P_n} = P_n {\rm e}^{ T(s_+)}P_n+ Q_n = P_n {\rm e}^{ T(s_+)}+ Q_n,$$
and 
$${\rm e}^{P_n T(s_-) P_n} = P_n {\rm e}^{ T(s_-)}P_n+ Q_n =  {\rm e}^{ T(s_-)}P_n+ Q_n.$$
Moreover,\begin{multline} \label{eq:hulp}
\left(Q_n+ P_n {\rm e}^{T(s_+)} P_n\right)\left(Q_n+P_nB P_n\right)\left(Q_n+ P_n {\rm e}^{T(s_-)} P_n\right)\\
=Q_n+ P_n {\rm e}^{T(s_+)} P_n B P_n  {\rm e}^{T(s_-)} P_n\\
=Q_n+ P_n {\rm e}^{T(s_+)} B   {\rm e}^{T(s_-)} P_n=I+ P_n \left({\rm e}^{T(s_+)} B   {\rm e}^{T(s_-)}-I \right)P_n,
\end{multline}
for any operator $B$.   We then write 
$${\rm e}^{-n \sum_{j=1}^N a^{(j)}_0}= {\rm e}^{-\Tr P_n T(s_+) P_n}=\det {\rm e}^{- P_n T(s_+)P_n}= \det \left(Q_n+ P_n {\rm e}^{-T(s_+)} P_n\right),$$
and 
$$1= {\rm e}^{-\Tr P_n T(s_+) P_n}=\det {\rm e}^{ -P_n T(s_-)P_n}= \det \left(Q_n+ P_n {\rm e}^{-T(s_-)} P_n\right).$$
By combining this with  \eqref{eq:hulp} and taking $B= {\rm e}^{-T(s_+(z))} {\rm e}^{T(a{(1)})}{\rm e}^{T(a{(2)})}\cdots {\rm e}^{T(a{(N)})} $, we see that we can rewrite the left-hand side of \eqref{eq:toeplitzidentity} as 
 \begin{equation}\label{eq:samvaryingfixed}
 \det \left( I+ P_n \left({\rm e}^{-T(s_+)} {\rm e}^{T(a{(1)})}{\rm e}^{T(a{(2)})}\cdots {\rm e}^{T(a{(N)})}   {\rm e}^{-T(s_-)}-I \right)P_n\right).
 \end{equation}
 The idea is now to invoke Lemma \ref{prop:ehrhardt}. To this end, we first  note that 
 $$T(s)+ \sum_{j=1}^N T(a^{(j)})=0,$$
 and 
 that for $1\leq j,k \leq N$ we have that 
 $$[T(a_+{(j)}), T(a_-{(k)})]= H(a_+{(j)}) H(\tilde a_-{(k)}),$$
 is of finite rank and hence of trace class, from which it follows that also the second condition  of the proposition is satisfied and that
$${\rm e}^{-T(s_+)} {\rm e}^{T(a{(1)})}{\rm e}^{T(a{(2)})}\cdots {\rm e}^{T(a{(N)})}   {\rm e}^{-T(s_-)}-I$$
is of trace class. Hence, 
\begin{multline}\label{eq:traceclassconvergence}
P_n\left({\rm e}^{-T(s_+)} {\rm e}^{T(a{(1)})}{\rm e}^{T(a{(2)})}\cdots {\rm e}^{T(a{(N)})}   {\rm e}^{-T(s_-)}-I\right)P_n \\
\to 
{\rm e}^{-T(s_+)} {\rm e}^{T(a{(1)})}{\rm e}^{T(a{(2)})}\cdots {\rm e}^{T(a{(N)})}   {\rm e}^{-T(s_-)}-I,
\end{multline}
in trace norm. 
 By continuity of the Fredholm determinant we can therefore take the limit $n \to \infty$ and obtain 
\begin{multline}\label{eq:limittoepltizfredholm}
\lim_{n\to \infty}  \det \left( I+ P_n \left({\rm e}^{-T(s_+)} {\rm e}^{T(a{(1)})}{\rm e}^{T(a{(2)})}\cdots {\rm e}^{T(a{(N)})}   {\rm e}^{T(s_-)}-I \right)P_n\right)\\
=  \det {\rm e}^{T(s_+)} {\rm e}^{T(a{(1)})}{\rm e}^{T(a{(2)})}\cdots {\rm e}^{T(a{(N)})}   {\rm e}^{-T(s_-)}.
\end{multline}
Moreover, by the same proposition, 
\begin{equation}\label{eq:limitwithS1} \det {\rm e}^{-T(s_+)} {\rm e}^{T(a{(1)})}{\rm e}^{T(a{(2)})}\cdots {\rm e}^{T(a{(N)})}   {\rm e}^{-T(s_-)}= \exp \frac12\Tr S_1,
\end{equation}
with 
$$
S_1 = -\sum_{j=1}^N[ T(s_+),T(a^{(j)})]+[T(s_+),T(s_-)]
+\sum_{j=1}^N \sum_{k >j} [T(a{(j)}),T(a{(k)})- \sum_{j=1}^N [T(a{(j)}),T(s_-)].
$$
By splitting $a(m)=a_+(m)+a_-(m)$  and using the definition of $s_\pm$, we can rewrite this  to 
\begin{multline*}
S_1= \sum_{j=1}^N \sum_{k >j} [T(a{(j)}),T(a{(k)})]- \sum_{j=1}^N \sum_{k=1}^N [T(a{(j)}),T(a_-{(k)})]
\\
= \sum_{j=1}^N \sum_{k >j} [T(a{(j)}),T(a_+{(k)})]- \sum_{j=1}^N \sum_{k \leq j} [T(a{(j)}),T(a_-{(k)})]\\
=\sum_{j=1}^N \sum_{k >j} H(a{(k)}) H (\tilde {a}{(j)}) +\sum_{j=1}^N \sum_{k \leq j}H(a{(j)})H( \tilde {a}{(k)}),
\end{multline*}
where in the last step we used \eqref{eq:commutatorToeplitz}. 
By taking the trace we find 
\begin{multline*}
\Tr S_1= \sum_{j=1}^N \sum_{k =j+1}^N \sum_{\ell=1}^\infty \ell a_\ell{(k)}  {a}_{-\ell}{(j)}   +\sum_{j=1}^N \sum_{k = 1}^j\sum_{\ell=1}^\infty\ell a_\ell{(j)} {a}_{-\ell}{(k)}\\=  2\sum_{j=1}^N \sum_{k =j+1}^N \sum_{\ell=1}^\infty \ell a_\ell{(k)}  {a}_{-\ell}{(j)}   +\sum_{j=1}^N\sum_{\ell=1}^\infty\ell a_\ell{(j)} {a}_{-\ell}{(j)}
\end{multline*}
Hence the statement follows after inserting the latter expression for $\Tr S_1$ into \eqref{eq:limitwithS1} and combining the result with \eqref{eq:limittoepltizfredholm}
\end{proof}
 \subsubsection{The case $N_n \to \infty$}
 We now come to the case that $N=N_n$ is depending on $n$ in such a way that $N_n \to \infty$ as $n \to \infty$. In this paragraph we prove the following proposition. 

\begin{proposition}\label{prop:fredholmtoeplitzvaryingn}
Let $a(t,z)=\sum_{j=-q}^p a_j(t)z^j$ be a $t$ dependent Laurent polynomial for which the $a_j(t)$ are piecewise continuous on an interval  $[\alpha,\beta]$.  Let $\{N_n\}_{n \in \bbN}$ be a sequence such that $N_n \to \infty$ as $n \to \infty$. Moreover, for each $n$ let $$\alpha=t_1^{(n)} < t_2^{(n)} < \ldots < t_{N_n}^{(n)}< t_{N_n+1}^{(n)}=\beta$$ be a partitioning of $[\alpha,\beta]$ for which the mesh $\sup_{j=0, \ldots, N_n} (t_{j+1}-t_j) \to 0$ as $n \to \infty$.  Then 
\begin{multline*}
\det \left(I+ P_n \left(\prod _{m=1}^{N_n} {\rm e}^ { (t_{m+1}-t_m) T(a(t^{(n)}_{m} ))} -I \right)P_n\right) {\rm e}^{-\sum_{m=1}^{N_n} (t_{m+1}-t_m) a_0(t_m^{(n)})}
\\
=\exp\left({\sum_{\ell=1}^\infty \iint_{\alpha<t_1<t_2<\beta} \ell a_\ell(t_1) a_{-\ell}(t_2) {\rm d} t_1 {\rm d} t_2}\right)
\end{multline*}
\end{proposition}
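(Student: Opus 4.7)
The plan is to mirror the proof of Proposition \ref{prop:fredholmtoeplitzfixedn} line for line, with the fixed finite sum $\sum_{m=1}^{N}$ replaced everywhere by the Riemann sum $\sum_{m=1}^{N_n}\Delta_m(\cdot)$ where $\Delta_m := t_{m+1}^{(n)}-t_m^{(n)}$. Introduce the total symbol
\[
s_n(z) := \sum_{m=1}^{N_n} \Delta_m\, a(t_m^{(n)},z) = s_{n,+}(z) + s_{n,-}(z),
\]
split into the non-negative and strictly negative powers of $z$. Because $T(s_{n,+})$ is lower triangular and $T(s_{n,-})$ is strictly upper triangular, the same triangular-sandwich identity used in the fixed-$N$ proof rewrites the left-hand side of the proposition in the form
\[
\det\bigl(I + P_n(A_n-I)P_n\bigr), \qquad A_n := e^{-T(s_{n,+})}\,\Bigl(\prod_{m=1}^{N_n} e^{\Delta_m T(a(t_m^{(n)}))}\Bigr)\, e^{-T(s_{n,-})},
\]
the stated normalization arising exactly as the Fredholm-determinant contribution of the triangular factors.

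I would next invoke Ehrhardt's identity (Lemma \ref{prop:ehrhardt}) for $A_n$. The list $-T(s_{n,+}), \Delta_1 T(a(t_1^{(n)})), \ldots, \Delta_{N_n} T(a(t_{N_n}^{(n)})), -T(s_{n,-})$ sums to zero by construction, and every pairwise commutator $[T(u),T(v)] = H(v)H(\tilde u) - H(u)H(\tilde v)$ is finite rank since the symbols are Laurent polynomials. Ehrhardt then gives $\det A_n = \exp(\tfrac12 \Tr S_n)$, where $S_n$ is the sum over all pairwise commutators. The trace $\tfrac12\Tr S_n$ is computed by exactly the same bookkeeping as in the fixed-$N$ case, based on $\Tr[T(u),T(v)] = \sum_{\ell\ge 1}\ell(v_\ell u_{-\ell}-u_\ell v_{-\ell})$, except that each occurrence of $T(a(t_m^{(n)}))$ now carries an extra weight $\Delta_m$.

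The cross-pair contribution to $\tfrac12\Tr S_n$ is therefore a double Riemann sum of $\sum_\ell \ell\, a_\ell(t_1)a_{-\ell}(t_2)$ over the simplex $\alpha<t_1<t_2<\beta$. Since each $a(t,\cdot)$ has fixed bandwidth, the $\ell$-sum is finite, and piecewise continuity of each $a_\ell(t)$ delivers Riemann-sum convergence to the integral claimed in the statement. The diagonal contribution carries an extra factor $\Delta_m$ compared to a proper Riemann sum and is therefore bounded by a constant times $\sum_m \Delta_m^2 \le (\beta-\alpha)\sup_m \Delta_m$, which vanishes as $n\to\infty$.

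The main obstacle is closing the gap between $\det A_n$, given exactly by Ehrhardt, and $\det(I+P_n(A_n-I)P_n)$, the actual quantity of interest. In the fixed-$N$ setting this is immediate because $A-I$ is a single trace-class operator and $P_n(A-I)P_n\to A-I$ in trace norm. Here $A_n$ depends on $n$, so one needs $A_n - I\to A_\infty - I$ in trace norm for some limiting trace-class $A_\infty - I$. A uniform trace-norm bound on $A_n-I$ is available because Ehrhardt's expansion writes $A_n-I$ as a convergent series of products of bounded exponentials paired with finite-rank Hankel factors whose ranks are uniformly controlled by the fixed bandwidth of the symbols. Convergence $A_n\to A_\infty$ then follows from piecewise continuity of $a(t,\cdot)$ combined with a standard Trotter-type estimate for how well a Riemann product of close-to-identity exponentials $\prod_m e^{\Delta_m X(t_m)}$ approximates its continuous limit. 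Once $P_n(A_n-I)P_n \to A_\infty-I$ in trace norm is established, continuity of the Fredholm determinant, together with the trace computation above, yields
\[
\det\bigl(I+P_n(A_n-I)P_n\bigr)\longrightarrow \det A_\infty = \exp\!\Bigl(\lim_{n\to\infty}\tfrac12 \Tr S_n\Bigr),
\]
which is the right-hand side of the proposition.
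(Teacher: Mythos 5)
Your reduction is the same as the paper's: the triangular sandwich with $T(s_{n,\pm})$, the application of Ehrhardt's identity (Lemma \ref{prop:ehrhardt}) to the zero-sum family of banded Toeplitz operators, and the computation of $\tfrac12\Tr S_n$ as a double Riemann sum over the simplex plus a diagonal term of order $\sum_m \Delta_m^2$ are all exactly what the paper does, and that part of your argument is correct. You have also correctly identified the one step that does not carry over from the fixed-$N$ case, namely passing from $\det(I+P_n(A_n-I)P_n)$ to $\det A_n$.

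However, your treatment of that step has a genuine gap, and it is also a detour the paper does not take. You propose to construct a limiting operator $A_\infty$ and to prove $A_n-I\to A_\infty-I$ in trace norm via ``a standard Trotter-type estimate.'' Trotter/product-formula convergence for $\prod_m {\rm e}^{\Delta_m T(a(t_m^{(n)}))}$ is standard only in operator norm; upgrading it to trace-norm convergence of $A_n-I$ is not standard, is not implied by the uniform trace-norm bound you cite (boundedness of a sequence in $\mathcal B_1$ gives no convergence), and would require comparing products over two different partitions in trace norm --- a nontrivial estimate you do not supply. The paper avoids the limit operator entirely: its Lemma \ref{lem:hulp} shows directly that $\|(A_n-I)Q_n\|_1$ and $\|Q_n(A_n-I)\|_1$ tend to zero, by an induction on the number of exponential factors whose engine is the observation that each commutator $[A_s,A_t]$ is a sum of products of Hankel operators with all nonzero entries in a fixed $c\times c$ corner, so $[A_s,A_t]Q_m=0$ for $m\geq c$; this localization forces every surviving term in the expansion of $(A_n-I)Q_n$ to carry at least $n/(2c)$ powers of the $A_j$'s, yielding a tail bound $\sum_{m_2+m_3\geq n/(2c)} r^{m_2+m_3}/(m_2!m_3!)\to 0$ that is uniform over the whole family. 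With that lemma, $\det(I+P_n(A_n-I)P_n)-\det A_n\to 0$ for the \emph{same} $n$ on both sides, and one simply takes the limit of the exact Ehrhardt formula for $\det A_n$. To repair your argument you should either prove the trace-norm Trotter convergence you assert, or replace that step by an estimate of the type in Lemma \ref{lem:hulp}.
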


\begin{corollary}\label{prop:cumulantstoeplitzvaryingn}
Under the same assumptions and notations as in Proposition \ref{prop:fredholmtoeplitzvaryingn} we have 
$$
\lim_{n\to \infty} C_k^{(n)}(T(a(1),\ldots, T(a(N_n)) =
0,$$
for $k\geq 3$, and 
$$
\lim_{n\to \infty} C_k^{(n)}(T(a(1),\ldots, T(a(N))\\= 2{\sum_{\ell=1}^\infty \iint_{\alpha<t_1<t_2<\beta} \ell a_\ell (t_1) a_{-\ell}(t_2) {\rm d} t_1 {\rm d} t_2},
$$
for $k=2$.
\end{corollary}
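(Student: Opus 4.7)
The plan is to read off the cumulant limits from Proposition~\ref{prop:fredholmtoeplitzvaryingn} by introducing a complex scale parameter $\lambda$ and extracting Taylor coefficients at $\lambda=0$. Applying Proposition~\ref{prop:fredholmtoeplitzvaryingn} with $a(t,z)$ replaced by $\lambda\,a(t,z)$ (all hypotheses hold verbatim for every fixed $\lambda\in\bbC$) and writing $B_m^{(n)}=(t_{m+1}^{(n)}-t_m^{(n)})T(a(t_m^{(n)}))$, one obtains the pointwise limit
$$D_n(\lambda) := \det\bigl(I+P_n(e^{\lambda B_1^{(n)}}\cdots e^{\lambda B_{N_n}^{(n)}}-I)P_n\bigr)\, e^{-\lambda\,\mathrm{tr}_n}\ \longrightarrow\ \exp(\lambda^2 C),$$
where $\mathrm{tr}_n$ denotes the normalization appearing in Proposition~\ref{prop:fredholmtoeplitzvaryingn} and
$$C=\sum_{\ell=1}^\infty\iint_{\alpha<t_1<t_2<\beta}\ell\,a_\ell(t_1)\,a_{-\ell}(t_2)\,dt_1\,dt_2.$$

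The central technical step is to upgrade this to uniform convergence on compact subsets of $\bbC$. Each $D_n$ is entire in $\lambda$ and the limit $e^{\lambda^2 C}$ is entire, so by Vitali's theorem uniformity reduces to locally uniform boundedness of $\{D_n(\lambda)\}$. This is the main obstacle: one must show that the exponential growth of the Fredholm determinant in $\lambda$ is controlled by the normalizing exponential, uniformly in $n$. I expect this to be obtained either by direct inspection of the proof of Proposition~\ref{prop:fredholmtoeplitzvaryingn} (which, modeled on Proposition~\ref{prop:fredholmtoeplitzfixedn} via Ehrhardt's identity, gives trace-norm convergence of the normalized operator and is therefore naturally uniform in $\lambda$ on compacts), or from the standard bound $|\det(I+K)|\le e^{\|K\|_1}$ combined with the cancellation of the dominant linear-in-$n$ contribution by the normalizing exponential.

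Once uniform convergence on a closed disk $\{|\lambda|\le\rho\}$ is in hand, $D_n(\lambda)\neq 0$ on that disk for all sufficiently large $n$ (since $e^{\lambda^2 C}$ is zero-free), hence $\log D_n(\lambda)\to\lambda^2 C$ uniformly there. By Lemma~\ref{lem:cumulant},
$$\log\det\bigl(I+P_n(e^{\lambda B_1^{(n)}}\cdots e^{\lambda B_{N_n}^{(n)}}-I)P_n\bigr)=\sum_{k=1}^\infty\lambda^k\,C_k^{(n)}(B_1^{(n)},\ldots,B_{N_n}^{(n)}),$$
and the linear-in-$\lambda$ coefficient here, namely $C_1^{(n)}$, is precisely what the normalization $e^{-\lambda\,\mathrm{tr}_n}$ subtracts (this is the only role of the normalization, and is the reason the exponential factor appears in Proposition~\ref{prop:fredholmtoeplitzvaryingn}). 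Therefore $\log D_n(\lambda)=\sum_{k\ge 2}\lambda^k\,C_k^{(n)}(B_1^{(n)},\ldots,B_{N_n}^{(n)})$, and Cauchy's integral formula on $\{|\lambda|=\rho\}$ together with the uniform convergence yields, for every $k\ge 2$,
$$\lim_{n\to\infty}C_k^{(n)}(B_1^{(n)},\ldots,B_{N_n}^{(n)})=\frac{1}{2\pi i}\oint_{|\lambda|=\rho}\lambda^2 C\,\frac{d\lambda}{\lambda^{k+1}},$$
which equals $C$ for $k=2$ and vanishes for every $k\ge 3$. This is exactly the content of the corollary.

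The only delicate point is the promotion of the pointwise limit in $\lambda$ to locally uniform convergence; once this is granted the rest is a routine Cauchy-coefficient extraction. An alternative route that avoids Vitali's theorem is to start from the explicit multi-trace expansion of $C_k^{(n)}$ in Lemma~\ref{lem:cumulantexpp} and mimic, trace by trace, the Toeplitz/Hankel manipulations behind Proposition~\ref{prop:fredholmtoeplitzvaryingn}, exploiting the fact that each summand is a Riemann sum whose limit can be computed using the commutator identity \eqref{eq:commutatorToeplitz}; but the analytic argument above is considerably cleaner.
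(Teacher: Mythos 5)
Your overall strategy—replace $a(t,z)$ by $\lambda a(t,z)$, observe that Proposition~\ref{prop:fredholmtoeplitzvaryingn} then gives $D_n(\lambda)\to\exp(\lambda^2C)$, and extract the Taylor coefficients at $\lambda=0$ by a Cauchy integral over a fixed small circle $|\lambda|=\rho$ with $\rho<(2\sum_m\|B_m^{(n)}\|_\infty)^{-1}$ (the right-hand side of which is bounded below uniformly in $n$ since $\sum_m\|B_m^{(n)}\|_\infty\le(\beta-\alpha)\sum_j\sup_t|a_j(t)|$)—is exactly the intended derivation; the paper treats the corollary as immediate and gives no separate argument, so your write-up is a legitimate filling-in of that gap. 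You also correctly identify that the normalizing exponential is precisely $e^{-\lambda C_1^{(n)}}$, so that $\log D_n(\lambda)=\sum_{k\ge2}\lambda^kC_k^{(n)}$.

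The one point that needs tightening is the locally uniform boundedness of $\{D_n(\lambda)\}$. Of the two routes you offer, the second does not work as stated: the bound $|\det(I+K)|\le e^{\|K\|_1}$ applied to $K=P_n(e^{\lambda B_1^{(n)}}\cdots e^{\lambda B_{N_n}^{(n)}}-I)P_n$ only gives $\|K\|_1\le n\|K\|_\infty=O(n)$, and the normalizing factor is likewise $e^{O(n)}$, so the cancellation you need is invisible at that level of crudeness. The first route is the correct one, and it closes completely: the conjugation step in the proof of Proposition~\ref{prop:fredholmtoeplitzvaryingn} rewrites $D_n(\lambda)$ as $\det(I+P_n(e^{\lambda A_0^{(n)}}\cdots e^{\lambda A_{N_n+1}^{(n)}}-I)P_n)$ with $\sum_j\lambda A_j^{(n)}=0$, and Lemma~\ref{lem:hulp} bounds $\|e^{\lambda A_0^{(n)}}\cdots e^{\lambda A_{N_n+1}^{(n)}}-I\|_1$ by a quantity depending only on the bandwidth $c$ and on $r=|\lambda|\sum_j\|A_j^{(n)}\|_\infty$, hence uniformly in $n$ and locally uniformly in $\lambda$; together with $|\det(I+K)|\le e^{\|K\|_1}$ applied to this conjugated operator, this gives the normal-family bound needed for Vitali (or, more directly, one can skip Vitali altogether and apply dominated convergence in the contour-integral representation \eqref{eq:firsttrace} after subtracting $\lambda C_1^{(n)}$, using the same trace-norm bound to dominate the integrand on $|\lambda|=\rho$). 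With that substitution your argument is complete.
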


The proof of Proposition \ref{prop:fredholmtoeplitzvaryingn} goes along the same lines as the proof of Proposition \ref{prop:fredholmtoeplitzfixedn}. The main difficulty  is that \eqref{eq:traceclassconvergence} (with $N$ and the symbols depending on $n$) is no longer immediate and requires a proof. Hence we can not deduce  \eqref{eq:limittoepltizfredholm}. We overcome this issue by proving the following.

\begin{lemma}\label{lem:hulp}
Let $\{N_n\}_n$ be a sequence of integers. For each $n\in \mathbb N $ 
let $A_j^{(n)}$ for $j=1,\ldots, N_n$ be a family of Toeplitz operators satisfying the following conditions
\begin{itemize}
\item $A_1^{(n)}+ \cdots+ A_{N_n}^{(n)}=0$
\item the $A_j^{(n)}$'s are banded with width $c$ which is independent of $j$ and  $n$.
\item  $\sum_{j=1}^{N_n} \|A_j^{(n)} \|_\infty <r$, for some constant $r$ independent of $n$. 
\end{itemize}
 Then
\begin{multline}\lim_{n\to \infty}  \left\|P_n\left({\rm e}^{A_1^{(n)}} \cdots {\rm e}^{A_{N_n}^{(n)} }-I\right)P_n- \left({\rm e}^{A_1^{(n)}} \cdots {\rm e}^{A_{N_n}^{(n)} }-I\right)\right\|_1 = 0.
\end{multline}
\end{lemma}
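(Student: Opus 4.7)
The plan is to show that $B_n := \prod_{j=1}^{N_n} e^{A_j^{(n)}} - I$ is uniformly bounded in trace norm and, moreover, is ``tight'' in the sense that $\|B_n - P_M B_n P_M\|_1$ can be made small for large $M$ uniformly in $n$; the lemma will then follow by a three-term triangle inequality together with the fact that $P_n P_M = P_M$ for $n \ge M$.

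First I would derive a commutator representation of $B_n$. Since $\sum_j A_j^{(n)} = 0$, differentiating $s \mapsto \prod_j e^{s A_j^{(n)}}$ and integrating from $0$ to $1$, combined with the telescoping identity $[\prod_m X_m, Y] = \sum_i (\prod_{m<i} X_m)\,[X_i,Y]\,(\prod_{m>i} X_m)$ and the formula $[e^{uX}, Y] = \int_0^u e^{vX}[X,Y] e^{(u-v)X} \, dv$, yields
\begin{equation*}
B_n = \sum_{1 \leq i < k \leq N_n} \int_0^1 \int_0^u X_{ik}^{(n)}(u,v)\, K_{ik}^{(n)}\, Y_{ik}^{(n)}(u,v)\, dv\, du,
\end{equation*}
where $K_{ik}^{(n)} := [A_i^{(n)}, A_k^{(n)}]$ and $X_{ik}^{(n)}, Y_{ik}^{(n)}$ are products of exponentials of $u A_j^{(n)}$ and $v A_i^{(n)}$ with operator norms at most $e^r$. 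By identity \eqref{eq:commutatorToeplitz} and the observation that the Hankel matrix $H(a)_{pq} = a_{p+q-1}$ vanishes whenever $p+q > c+1$ if $a$ has bandwidth $c$, each $K_{ik}^{(n)}$ has rank at most $2c$, satisfies $P_c K_{ik}^{(n)} P_c = K_{ik}^{(n)}$, and obeys $\|K_{ik}^{(n)}\|_1 \leq 4c\, \|A_i^{(n)}\|_\infty \|A_k^{(n)}\|_\infty$. Summing then gives the uniform bound $\|B_n\|_1 \leq C(c,r)$.

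Next I would establish uniform super-exponential off-diagonal decay of the sandwiching operators. Expanding every factor of $X_{ik}^{(n)} = \prod_{j<i} e^{u A_j^{(n)}} e^{v A_i^{(n)}}$ as a power series, using that any product of banded matrices of bandwidths summing to $L c$ has bandwidth $\leq Lc$, and applying the multinomial identity $\sum_{\ell_1 + \cdots + \ell_m = L} \prod_j \beta_j^{\ell_j}/\ell_j! = (\sum_j \beta_j)^L / L!$ with $\beta_j \leq u\|A_j^{(n)}\|_\infty$, one obtains the pointwise estimate
\begin{equation*}
\bigl|(X_{ik}^{(n)})_{pq}\bigr| \leq \sum_{L \geq \lceil |p-q|/c \rceil} \frac{r^L}{L!},
\end{equation*}
uniformly in $n, i, k, u, v$; the same bound holds for $Y_{ik}^{(n)}$. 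Since $K_{ik}^{(n)} = P_c K_{ik}^{(n)} P_c$, this decay implies $\|Q_M X_{ik}^{(n)} P_c\|_\infty \leq g(M)$ with $g(M) \to 0$ as $M \to \infty$, uniformly, and hence
\begin{equation*}
\|P_M B_n P_M - B_n\|_1 \leq 2 e^r g(M) \sum_{i<k} \|K_{ik}^{(n)}\|_1 \leq C g(M)
\end{equation*}
uniformly in $n$. Given $\varepsilon > 0$, choose $M$ so that $C g(M) < \varepsilon$; then for $n \geq M$ one has $P_n P_M = P_M$, so
\begin{equation*}
\|P_n B_n P_n - B_n\|_1 \leq \|P_n (B_n - P_M B_n P_M) P_n\|_1 + \|P_M B_n P_M - B_n\|_1 < 2\varepsilon,
\end{equation*}
which proves the lemma.

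The main obstacle is the uniform off-diagonal decay: the product $X_{ik}^{(n)}$ may contain $N_n \to \infty$ exponentials, each potentially of order-one operator norm, yet we need that applying it to the top-$c$-dimensional subspace produces vectors with tails decaying at a rate independent of the length of the product. The constraint $\sum_j \|A_j^{(n)}\|_\infty \leq r$ limits the \emph{total} spreading capacity, and the multinomial identity above is the precise tool that collapses the sum over all allocations of total power $L$ among many factors into the single exponential tail $\sum_{L \geq L_0} r^L / L!$, which is independent of the number of factors and decays super-exponentially in $L_0$.
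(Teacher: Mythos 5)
Your proof is correct, and it reaches the same two structural pillars as the paper's argument: (i) because the $A_j^{(n)}$ are banded Toeplitz with $\sum_j A_j^{(n)}=0$, the deviation of $\prod_j {\rm e}^{A_j^{(n)}}$ from $I$ can be expressed through the commutators $[A_i^{(n)},A_k^{(n)}]$, which by \eqref{eq:commutatorToeplitz} are finite-rank and supported in the upper-left $c\times c$ corner; and (ii) the off-diagonal spreading of the surrounding products of exponentials is governed by the total budget $\sum_j\|A_j^{(n)}\|_\infty<r$, so the tail is a super-exponentially decaying quantity $\sum_{L\ge L_0} r^L/L!$ independent of $N_n$. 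Where you genuinely diverge is in how the commutator representation is obtained: the paper proves the key trace-norm bound by induction on $N$, grouping $A_{N-1}+A_N$ and reducing to an explicit three-factor power-series identity quoted from \cite{BDmes}, whereas you derive a Duhamel-type double-integral formula $B_n=\sum_{i<k}\iint X_{ik}\,[A_i^{(n)},A_k^{(n)}]\,Y_{ik}$ valid for all $N$ at once. Your route avoids the induction entirely and makes the uniform trace-norm bound $\|B_n\|_1\le C(c,r)$ transparent; the paper's route avoids the (routine but order-sensitive) bookkeeping needed to justify the integral representation. A second, more cosmetic difference: the paper estimates $\|B_nQ_n\|_1$ and $\|Q_nB_nP_n\|_1$ directly at the same index $n$, while you first prove uniform tightness $\|B_n-P_MB_nP_M\|_1\le Cg(M)$ and then pass to $P_n$ via $P_nP_M=P_M$; these are equivalent. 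Two small points you should make explicit in a final write-up: the passage from the pointwise bound $|(X_{ik})_{pq}|\le\sum_{L\ge\lceil|p-q|/c\rceil}r^L/L!$ to $\|Q_MX_{ik}P_c\|_\infty\le g(M)$ uses that this block has only $c$ columns, so one can sum the squares of the entries (the super-exponential decay makes the Hilbert--Schmidt norm tend to $0$, which dominates the operator norm); and the derivation of the integral representation should fix a consistent direction in which each $A_i$ is commuted past the remaining exponentials so that the leading terms assemble into $F(s)\sum_iA_i^{(n)}=0$.
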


\begin{proof}
Note that 
\begin{multline*}
P_n\left({\rm e}^{A_1^{(n)}} \cdots {\rm e}^{A_{N_n}^{(n)} }-I\right)P_n- \left({\rm e}^{A_1^{(n)}} \cdots {\rm e}^{A_{N_n}^{(n)} }-I\right)\\
=-Q_n\left({\rm e}^{A_1^{(n)}} \cdots {\rm e}^{A_{N_n}^{(n)} }-I\right)P_n- \left({\rm e}^{A_1^{(n)}} \cdots {\rm e}^{A_{N_n}^{(n)} }-I\right)Q_n 
\end{multline*}
Hence it suffices to prove that the two terms at the right-hand side separately converge to zero in trace norm. Here we will only show that for the second term, i.e. we show that 
\begin{equation}\label{eq:qnbla}
\lim_{n\to \infty}  \left\|\left({\rm e}^{A_1^{(n)}} \cdots {\rm e}^{A_{N_n}^{(n)} }-I\right)Q_n\right\|_1 \to 0.
\end{equation}
The arguments for first term are analogous and left to the reader. 

We first claim that  for $A_j$ and $r,c$  satisfying the stated conditions, we have 
\begin{equation}
\label{eq:claimvaryingN} \left\|\left({\rm e}^{A_1} \cdots {\rm e}^{A_{N} }-I\right)Q_n\right\|_1\leq {\rm e}^{2 r} \sum_{2 \leq s < t \leq N} \left\|[A_s,A_t]\right\|_1 \sum_{m_2+m_3 \geq n/(2c)}\frac{  r^{m_2+m_3}}{m_2! m_3!}, \end{equation}
for any $n \geq 2c $ and $N \in \mathbb N$. 

The proof of this statement goes by induction to $N$. 

We start with $N=3$. In that case we recall a result from \cite[Lem. 4.2]{BDmes}  that if $A_1+ A_2+A_3=0$ then 
$${\rm e}^{A_1}{\rm e}^{A_2} {\rm e}^{A_3} -I = \sum_{m_1=0}^\infty \sum_{m_2=1}^\infty \sum_{m_3=0}^\infty \sum_{j=0}^{m_2-j-1} \frac{ A_1^{m_1} A_2^j [A_2,A_3]A_2^{m_2-j-1} A_3^{m_3}}{m_1! m_2! m_3! (m_1+m_2+m_3+1)},$$
(this follows easily after differentiating ${\rm e}^{t A_1}{\rm e}^{ tA_2} {\rm e}^{t A_3}$ with respect to $t$, expanding the exponentials and then integrate again for $t$ from $0$ to $1$).  Since $A_j$   are banded with bandwidth $c$ we also have 
$A_2^{m_2-j-1} A_3^{m_3}$
is banded with bandwidth $c (m_2+m_3-j-1)$ but that means that 
$$A_2^{m_2-j-1} A_3^{m_3}Q_n= Q_{n-c(m_2+m_3-j-1)} A_2^{m_2-j-1} A_3^{m_3}Q_n$$
and thus
$$ [A_2,A_3]A_2^{m_2-j-1} A_3^{m_3}Q_n= [A_2,A_3] Q_{n-c(m_2+m_3-j-1)} A_2^{m_2-j-1} A_3^{m_3}Q_n$$
Since by assumption $A_2$ and $B_3$ are banded Toeplitz matrices of bandwith $c$, we see by \eqref{eq:commutatorToeplitz} that $[A_1,A_2]$ is sum of  product of two Hankel matrices with Laurent polynomials as their symbols. All the non-trivial entries for these Hankel operators are in the upper left $c\times c$ block and hence $$[A_2,A_3]Q_{m} =0$$ for  $m \geq c$.
 Hence, for $n \geq 2 c$,  we can restrict the sum to terms with $m_2+m_3 \geq n/2c$ and write
\begin{multline*}
 \left\|\left({\rm e}^{-A} {\rm e}^{A+B } {\rm e}^{-B}-I\right)Q_n\right\|_1\\
 \\
 =\left\|\sum_{m_1=0}^\infty\sum_{\overset{m_2 \geq 1,m_3 \geq 0,}{ m_2+m_2 \geq n/2c}}  \sum_{j=0}^{m_2-j-1} \frac{ (A_1)^{m_1} (A_2)^j [A_2,A_3](A_2)^{m_2-j-1} (A_3)^{m_3}Q_n}{m_1! m_2! m_3! (m_1+m_2+m_3+1)}\right\|_1\\ 
 \leq  \sum_{m_1=0}^\infty \sum_{\overset{m_2 \geq 1,m_3 \geq 0,}{ m_2+m_2 \geq n/2c}} \sum_{j=0}^{m_2-j-1} \frac{ \|A_1\|_\infty^{m_1} \|A_2\|_\infty^j \| [A_1,A_2]\|_1 \|A_2\|_\infty ^{m_2-j-1} \|A_3\|_\infty^{m_3}}{m_1! m_2! m_3! (m_1+m_2+m_3+1)}\\
\leq {\rm e}^{r} \left\|[A_2,A_3]\right\|_1 \sum_{m_2+m_3 \geq n/(2c)}\frac{  r^{m_2+m_3}}{m_2! m_3!} ,
\end{multline*}
and this proves the statement for $N=3$ (with a slightly better bound).

Now suppose the statement is true for  for  $N-1 \geq 2$.  We then first write
\begin{multline*}
{\rm e}^{A_1} {\rm e}^{A_2} \cdots {\rm e}^{A_{N}} -I= {\rm e}^{A_1} {\rm e}^{A_1} {\rm e}^{A_2} \cdots {\rm e}^{A_{N-2}} {\rm e}^{A_{N-1}+A_{N}}\left({\rm e}^{-A_{N-1}-A_{N}}{\rm e}^{A_{N-1}}{\rm e}^{A_{N}}-I\right)  \\+{\rm e}^{A_1} {\rm e}^{A_2} \cdots {\rm e}^{A_{N-2}} {\rm e}^{A_{N-1}+A_{N}}-I.
\end{multline*}
The first term at the right-hand side can be estimate by as in the case $N=3$ giving 
\begin{multline}
 \left\|{\rm e}^{A_1} {\rm e}^{A_1} {\rm e}^{A_2} \cdots {\rm e}^{A_{N-2}} {\rm e}^{A_{N-1}+A_{N}}\left({\rm e}^{-A_{N-1}-A_{N}}{\rm e}^{A_{N-1}}{\rm e}^{A_{N}}-I\right)\right\|_1 \\
 \leq {\rm e}^{\sum_{j=1}^{N} \|A_j\|_\infty} \left\|{\rm e}^{-A_{N-1}-A_{N}}{\rm e}^{A_{N-1}}{\rm e}^{A_{N}}-I\right\|_1 \\
\leq {\rm e}^{2r} \left\|\left[A_{N-1},A_{N}\right]\right\|_1 \sum_{m_2+m_3 \geq n/(2c)}\frac{ r^{m_2+m_3} }{m_2! m_3!}.\label{eq:inductionpart1}
\end{multline}
Moreover, since 
$$\sum_{j=1}^{N-2} \|A_j\|_\infty  + \|{A_{N-1}+A_{N}} \|_\infty <\sum_{j=1}^{N} \|A_j\|_\infty  \leq r$$
and $A_{N-1}+A_N$ by linearity is also a Toeplitz operator with bandwidth $c$, we have by the induction hypothesis that 
we find that 
\begin{multline}
\left\|{\rm e}^{A_1} {\rm e}^{A_2} \cdots {\rm e}^{A_{N-2}} {\rm e}^{A_{N-1}+A_{N}} -I\right| \\
\leq {\rm e}^{2 r} \left(\sum_{1 \leq s < t \leq  N_{n}-2} \|[A_s,A_t ] \|_1 + \sum_{s=1}^{N-2} \left\|\left[A_s,A_{N-1}+A_{N} \right]\right\|_1\right)\sum_{m_2+m_3 \geq n/(2c)}\frac{ r^{m_2+m_3} }{m_2! m_3!}
\\
\leq {\rm e}^{2 r} \left(\sum_{1 \leq s < t \leq  N_{n}-1} \|[A_s,A_t ] \|_1 + \sum_{s=1}^{N-2} \left\|\left[A_s,A_{N} \right]\right\|_1\right)\sum_{m_2+m_3 \geq n/(2c)}\frac{ r^{m_2+m_3} }{m_2! m_3!}.\label{eq:inductionpart2}
\end{multline}
Hence, by combining the \eqref{eq:inductionpart1} and \eqref{eq:inductionpart2} we obtain the claim  \eqref{eq:claimvaryingN} for any $N \in \mathbb N$.  

To finish the proof we recall again that by \eqref{eq:commutatorToeplitz} we have that $[A_s,A_t]$ is sum of  product of two Hankel matrices with Laurent polynomials as their symbols. All the non-trivial entries for these Hankel operators are in the upper left $c\times c$ block and hence they are of rank $c$.   Since we also have $\|H(a)\|_\infty \leq \|T(a)\|_\infty$, we therefore find
$$\|[A_s,A_t]\|_1\leq 2c^2 \|A_s\|_\infty \|A_t\|_\infty.$$
But then we have 
$$\sum_{2 \leq s < t \leq N} \left\|[A_s,A_t]\right\|_1 \leq 2 c^2 \sum_{2 \leq s < t \leq N}\|A_s\|_\infty \|A_t\|_\infty <r^2 c^2.$$  
After inserting this into  \eqref{eq:claimvaryingN} we obtain 
$$
\left\|\left({\rm e}^{A_1} \cdots {\rm e}^{A_{N} }-I\right)Q_n\right\|_1\leq 2r^2c^2{\rm e}^{2 r}\sum_{m_2+m_3 \geq n/(2c)}\frac{  r^{m_2+m_3}}{m_2! m_3!}, $$
for any $n \geq 2 c$, $N \in \mathbb N$ and operators $\{A_j\}_{j=1}^N$ satisfying the conditions of the proposition (with respect to $c$ and $r$). By setting $N=N_n$, $A_j=A^{(n)}_j$  and taking $n \to \infty$ we obtain \eqref{eq:qnbla}. This finishes the proof. 
 \end{proof}
 \begin{proof}{Proof of Proposition \ref{prop:fredholmtoeplitzvaryingn}}
We argue exactly the same as in the proof of \ref{prop:fredholmtoeplitzfixedn} until \eqref{eq:samvaryingfixed} giving 
 \begin{equation}
 \det \left( I+ P_n \left({\rm e}^{A_0^{(n)}} {\rm e}^{A_1^{(n)}}\cdots {\rm e}^{A_{N+1}^{(n)}}  -I \right)P_n\right).
 \end{equation}
 where 
 $$A_m^{(n)}= (t_{m+1}^{(n)}-t_m^{(n)}) T(a(t_m^{(n)})), \qquad m=1, \ldots, N$$
 and
 $$A_0^{(n)} =- \sum_{m=1}^N(t_{m+1}^{(n)}-t_m^{(n)}) T(a_+(t_m^{(n)})), \text{  and  } A_{N+1}^{(n)} =- \sum_{m=1}^N(t_{m+1}^{(n)}-t_m^{(n)}) T(a_-(t_m^{(n)})).$$
 Hence it is clear that $A^{(n})_0+ \ldots + A^{(n)}_{N+1}=0$. Moreover, each $A_m^{(n)}$ is a banded Toeplitz matrix with band-width $\max(p,q)$. Finally, we check the condition on the norm of the matrices. To this end, 
 \begin{multline}
 \sum_{m=1}^{N} \|A_m^{(n)}\|_\infty=  \sum_{m=1}^N(t_{m+1}^{(n)}-t_m^{(n)})\| T(a(t_m^{(n)}))\|_\infty \\
 = \sum_{m=1}^N(t_{m+1}^{(n)}-t_m^{(n)})\| a(t_m^{(n)})\|_\infty  \leq (\beta-\alpha) \sup_{t} \|a(t)\|_\infty \leq  (\beta-\alpha) \sum_{j=-q}^p \sup_t |a_j(t)|.
 \end{multline}
 The latter is finite, since we assume that the $a_j$ are piecewise continuous. 
 A similar argument shows that also $\|A_0^{(n)}\|_\infty, \|A_{N+1}^{(n)} \|\leq \sum_{j=-q}^p \sup_t |a_j(t)|.$ Hence the last condition of Lemma \ref{lem:hulp} is also satisfied. It then follows from that lemma, that 
 \begin{multline}
\lim_{n\to \infty}  \det \left( I+ P_n \left({\rm e}^{A_0^{(n)}} {\rm e}^{A_1^{(n)}}\cdots {\rm e}^{A_{N+1}^{(n)}}  -I \right)P_n\right)\\
    = \lim_{n\to \infty} \det \left( {\rm e}^{A_0^{(n)}} {\rm e}^{A_1^{(n)}}\cdots {\rm e}^{A_{N+1}^{(n)}} \right).
  \end{multline}
  The right-hand side is the exponential of a trace that we can compute in the same way as we have done in the proof of Proposition \ref{prop:fredholmtoeplitzfixedn} which gives 
  \begin{multline}
\lim_{n\to \infty}  \det \left( I+ P_n \left({\rm e}^{A_0^{(n)}} {\rm e}^{A_1^{(n)}}\cdots {\rm e}^{A_{N+1}^{(n)}}  -I \right)P_n\right)\\
 = \lim_{n\to \infty} \exp\left(  2\sum_{m_1=1}^{N_n} (t_{m_1+1}^{(n)}-t_{m_1}^{(n)}) \sum_{m_2 =m_1+1}^{N_n} (t_{m_2+1}^{(n)}-t_{m_2}^{(n)}) \sum_{\ell=1}^\infty \ell a_\ell{(t_{m_1}^{(n)})}  {a}_{-\ell}{(t_{m_"}^{(n)}}\right.\\
 \left.   +\sum_{m=1}^{N_n} (t_{m+1}^{(n)}-t_{m}^{(n)})^2 \sum_{\ell=1}^\infty\ell a_\ell{(m)} {a}_{-\ell}{(m)}\right).
  \end{multline}
In the limit $n \to \infty$, the double sum in the  exponent converges to a Riemann-Stieltjes integral and the single sum tends to zero. We thus proved the statement.  \end{proof}
\section{Proofs of the main results}\label{sec:proofs}
\subsection{Proof of Theorem \ref{th:main0}}
\begin{proof}[Proof of Theorem \ref{th:main0}]
To prove that the difference of the moments converges to zero, it is sufficient to prove that for $k\geq 2$ we have 
$$\mathcal  C_k(X_n(f))-\tilde {\mathcal C}_k(X_n(f)) \to 0,$$
as $n \to \infty$, where the $\mathcal C_k$'s are the cumulants defined in \eqref{eq:cumulantgeneratinggunction}. By Lemmas \ref{lem:moment} and \ref{lem:cumulant} and this means that we need to prove that, for given $k \geq 2$, 
\begin{equation}\label{eq:cumulantsS}
 C_k^{(n)}(f(1,\mathbb J_{1,S}), \ldots, f(N,\mathbb J_{N,S}))-C_k^{(n)}(f(1,\tilde {\mathbb J}_{1,S}), \ldots f(N,\tilde {\mathbb J}_{N,S})) \to 0,
 \end{equation}
as $n \to \infty$ for $S$ sufficiently large. The key ingredient is Corollary \ref{cor:comp}.

  Since $x\mapsto f(m,x)$ is a polynomial and $\tilde J_m$ and $\tilde{\hat J}_m$ are banded, it is not hard to see that  \eqref{eq:cond1main0}  implies that, for given  $k, \ell \in \mathbb Z$,
$$\left(f(m,\mathbb J_{m})\right)_{n+k,n+ \ell}-\left(f(m,\tilde {\mathbb J}_{m})\right)_{n+k,n+ \ell}\to 0,$$
as $n \to \infty$. 

Now, let $k \in \bbN$. Note that for any  $S \in \bbN$ we have that both $f(m,\mathbb J_{m,S})$ and $f(m,\tilde {\mathbb J}_{m,S})$ are banded matrices with a bandwidth $\rho$ that is independent of $n$.  With $R_{n,2\rho(k+1)}$ as in Corollary \ref{cor:comp} we then have 
\begin{equation} \label{eq:proofthm0A}
\left\|R_{n,2\rho(k+1)}\left(f(m,\mathbb J_{m,S})-f(m,\tilde {\mathbb J}_{m,S})\right)R_{n,2\rho(k+1)}\right\|_\infty\to 0,
\end{equation}
as $n \to \infty$ for sufficiently large $S$. We also have by the first condition in Theorem \ref{th:main0} that there exists an $M>0$ such that
\begin{equation}\label{eq:proofthm0B}
\left\|R_{n,2\rho(k+1)}f(m,\tilde {\mathbb J}_{m,S})R_{n,2\rho(k+1)}\right\|_\infty<M,
\end{equation}
for $n \in \bbN$. Hence, by \eqref{eq:proofthm0A}, we can  also choose $M$ to be large enough so that we have
\begin{equation}\label{eq:proofthm0C}
\left\|R_{n,2\rho(k+1)}f(m,\mathbb {J}_{m,S})R_{n,2\rho(k+1)}\right\|_\infty<M,\end{equation}
for $n \in \bbN$. The statement now follows by \eqref{eq:cumulantsS} and  inserting \eqref{eq:proofthm0A}, \eqref{eq:proofthm0B} and \eqref{eq:proofthm0C} for sufficiently large $S$ into the conclusion of Corollary \ref{cor:comp}. \end{proof}
\subsection{Proof of Theorem \ref{th:main1}}
\begin{proof}[Proof of Theorem \ref{th:main1}]
To prove that $X_n(f)-\EE X_n(f)$ converges to a normally distributed random variable, it is sufficient to prove that the $k$-th cumulant converges to zero if $k \geq 3$ and to the stated value of the variance if $k=2$.  By \eqref{eq:fromCtoC} this means that we need to show that, for given $k>2$, 
$$\lim_{n\to \infty} C_k^{(n)}(f(1,\mathbb J_{1,S}), \cdots f(N, \mathbb J_{N,S})) =0,$$
for some sufficiently large $S$, and that   the variance $ C_k^{(n)}$ converges to the stated value.  

Let $k \in \bbN$, The assumptions of the theorem imply 
that 
$$\lim_{n\to \infty}  \left(\left(\mathbb J_m)\right)_{n+k,n+l}- \left(T(a(m))\right)_{n+k,n+l}\right)=0,$$
with $a(z)=\sum_{j=-q}^p a_j^{(m)} z^j$ where $a_j$  are the values in \eqref{eq:mainthcond1}. Since $x\mapsto f(x,m)$  is a polynomial it is not hard to see that also 
$$
\lim_{n\to \infty} \left( \left(f(m,\mathbb J_{m,S})\right)_{n+k,n+l}-\left(T(b(m))\right)_{k,l}\right)=0,
$$
for a sufficiently large $S$ and  with $b(m)= \sum_{k} f_k^{(m)} z^k $ and $f_k^{(m)}$ as in \eqref{eq:defpk}.  The statement then follows after applying Corollaries \ref{cor:comp} and  \ref{prop:cumulantstoeplitzfixedn}.\end{proof}

\subsection{Proof of Theorem \ref{th:extend}}
We now show how the results can be extended to allow for more general functions in the case \eqref{eq:opchoice}. We recall   that  the variance of the linear statistics is given in terms of the kernel $K_{n,N}$ as in \eqref{eq:variancelinstat}. Note that  for fixed $m$ the kernel in this situation reads  $$K_{n,N}(m,x,m,y)=
\sum_{j=1}^n p_{j,m_1}(x)p_{j,m_1}(y),$$
and hence  $K_{n,N}(m,x,m,y)= K_{n,N}(m,y,m,x)$. This  symmetry is a  key property  in the coming analysis.  We start with  the following estimate on the variance. 
\begin{lemma}\label{lem:compactsupports}
Let $f(m,x)$ be such that $\frac{\partial f}{\partial x} (m,x)$ is a bounded function.  Then 
$$\Var \sum_{m=1}^N \sum_{j=1}^n f(m,x_j(t_m))  \leq \frac N2 \sum_{m=1}^N \left\|\frac{\partial f}{\partial x} (m,x) \right\|_\infty^2 \|[P_n,J_m]\|_2^2.$$
\end{lemma}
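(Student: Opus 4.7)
The plan is to reduce the multi-time variance to a sum of single-time variances and then express each of the latter as a Hilbert--Schmidt commutator norm.  Set $Y_m = \sum_{j=1}^n f(m, x_j(t_m))$, so that $X_n(f)=\sum_{m=1}^N Y_m$ and
$$\Var X_n(f)=\sum_{m_1,m_2=1}^N\mathrm{Cov}(Y_{m_1},Y_{m_2}).$$
First I would apply Cauchy--Schwarz for covariance, $|\mathrm{Cov}(Y_{m_1},Y_{m_2})|\leq \sqrt{\Var Y_{m_1}\,\Var Y_{m_2}}$, followed by a second Cauchy--Schwarz on the double sum, to obtain
$$\Var X_n(f)\leq\Bigl(\sum_{m=1}^N\sqrt{\Var Y_m}\Bigr)^{\!2}\leq N\sum_{m=1}^N\Var Y_m.$$
This sidesteps the off-diagonal $(m_1\neq m_2)$ contributions in \eqref{eq:variancelinstat}, which are awkward because the multi-time kernel is not Hermitian.

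Next I would analyze each $\Var Y_m$ separately.  The single-time marginal \eqref{eq:BE} in the orthogonal polynomial situation \eqref{eq:opchoice} is an orthogonal polynomial ensemble: the factors $c_{j,m}$ cancel between the two determinants, and the kernel
$$K_{n,N}(m,x,m,y)=\sum_{j=0}^{n-1} p_{j,m}(x)p_{j,m}(y)$$
is the Hermitian projection onto $\mathrm{span}(p_{0,m},\dots,p_{n-1,m})\subset L^2(\mu_m)$.  The standard determinantal variance identity (using $K=K^*=K^2$) then yields
$$\Var Y_m=\tfrac{1}{2}\iint\bigl(f(m,x)-f(m,y)\bigr)^2 K_{n,N}(m,x,m,y)^2\,{\rm d}\mu_m(x)\,{\rm d}\mu_m(y).$$

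The final step is to estimate $(f(m,x)-f(m,y))^2\leq \|\partial_x f(m,\cdot)\|_\infty^2 (x-y)^2$ and then to identify
$$\iint(x-y)^2 K_{n,N}(m,x,m,y)^2\,{\rm d}\mu_m(x){\rm d}\mu_m(y)=\|[P_n,J_m]\|_2^2.$$
Indeed, viewing $K_{n,N}(m,\cdot,m,\cdot)$ as the integral kernel of the projection $P_n$ onto $\mathrm{span}(p_{0,m},\dots,p_{n-1,m})$ and letting $M_x$ denote multiplication by $x$ on $L^2(\mu_m)$, the commutator $[P_n,M_x]$ has kernel $(y-x)K_{n,N}(m,x,m,y)$, so the double integral is precisely $\|[P_n,M_x]\|_2^2$; in the basis $\{p_{j,m}\}$ the operator $M_x$ is represented by the Jacobi matrix $J_m$.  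Substituting into $\Var X_n(f)\leq N\sum_m \Var Y_m$ gives the claimed bound.

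No step is technically difficult; the only mildly subtle point is the opening Cauchy--Schwarz argument, which pays a factor of $N$ in exchange for reducing everything to the symmetric single-time kernels where the classical variance formula and its operator-theoretic reformulation are available.
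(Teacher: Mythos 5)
Your proof is correct and follows essentially the same route as the paper: reduce to single-time variances via $\Var\bigl(\sum_m Y_m\bigr)\leq N\sum_m \Var Y_m$, rewrite each $\Var Y_m$ in the symmetric form $\tfrac12\iint (f(m,x)-f(m,y))^2K(x,y)K(y,x)\,{\rm d}\mu\,{\rm d}\mu$ using the reproducing/projection property of the single-time kernel, bound the increment by the derivative, and identify the remaining double integral with $\|[P_n,J_m]\|_2^2$. The only differences are cosmetic (you derive the $N\sum_m\Var Y_m$ bound explicitly by two applications of Cauchy--Schwarz, and your bookkeeping of the factor $\tfrac12$ in the commutator identity actually matches the stated constant $\tfrac N2$ more cleanly than the paper's own display).
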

\begin{proof}
Let us first deal with the special case that $N=1$. In that case, the variance of a linear statistic can be written as (with $K_{n,N}(x,1,y,1)$ shortened to $K_{n,N}(x,y))$
\begin{multline*}\Var \sum_{j=1}^n f(x_j(t))= \int f(x)^2 K_{n,N}(x,x) {\rm d} \mu(x)\\
- \iint f(x) f(y) K_{n,N}(x,y) K_{n,N}(y,x)  {\rm d} \mu(x) {\rm d}\mu(y)
\end{multline*}
By using the fact that, by orthogonality, we have  $\int K_{n,N}(x,y) K_{n,N}(y,x) {\rm d} \mu(y)=K_{n,N}(x,x)$ we can rewrite this in symmetric form 
\begin{equation}\label{eq:variance}\Var \sum_{j=1}^n f(x_j(t))= \frac12
 \iint (f(x)- f(y))^2 K_{n,N}(x,y) K_{n,N}(y,x)  {\rm d} \mu(x) {\rm d}\mu(y).
\end{equation}
We now use the fact that $K_{n,N}(x,y)$ is symmetric (and hence $K_{n,N}(x,y)K_{n,N}(y,x) \geq 0$) to estimate this as 
\begin{equation}\Var \sum_{j=1}^n f(x_j(t))\leq \frac12\|f'\|_\infty 
 \iint (x-y)^2 K_{n,N}(x,y) K_{n,N}(y,x)  {\rm d} \mu(x) {\rm d}\mu(y).
\end{equation}
It remains to estimate the double integral. To this end, we note that
$$\frac12
 \iint (x-y)^2 K_{n,N}(x,y) K_{n,N}(y,x)  {\rm d} \mu(x) {\rm d}\mu(y)= \|[P_n,J]\|_2^2,$$
 proving the statement for $N=1$. For general $N$ we use the inequality $\Var \sum_{m=1}^N  X_m \leq N  \sum _{m=1}^N \Var X_m$, valid for any sum of random variables. 
\end{proof}
In the next lemma, we show that we can approximate the variance for linear statistics by linear statistic corresponding to polynomials. 
\begin{lemma}\label{lem:extendB}
Assume there exists a compact set $E \subset \mathbb R$ such that \\
(1) $S(\mu^{(n)}_m) \subset E$ for $m=1, \ldots, N$ and $n \in \mathbb N$, \\
or, more generally,\\
(2) for $k \in \mathbb N$ and $m=1,\ldots, N$ we have $$ \int_{\mathbb R \setminus E}  |x|^k K_{n,N}(m,x,m,x) {\rm d} \mu_m^{(n)}(x) = o(1/n),$$ as $n\to \infty$. 

Then for any  function$f$ such that $x \mapsto f(m,x)$ is a continuously differentiable function with at most polynomial growth at $\pm \infty$, we have
\begin{equation}\label{eq:conclusionlemmaextend}
\Var X_n(f) \leq  \frac N2 \sum_{m=1}^N \sup_{x \in E} \left|\frac{\partial f}{\partial x} (m,x)  \right|^2 \|[P_n,J_m]\|_2^2 + o(1),
\end{equation}
as $n \to \infty$. 
\end{lemma}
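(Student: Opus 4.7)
The strategy is to approximate $f$ by a $C^1$ function $\tilde f$ that agrees with $f$ on $E$ and has globally bounded derivative, apply Lemma~\ref{lem:compactsupports} to $\tilde f$, and show that the remainder $g=f-\tilde f$ contributes only $o(1)$ to the variance. Hypothesis~(1) is a special case of~(2) (trivially $\int_{\bbR\setminus E}|x|^k K_{n,N}\,d\mu_m=0$), so I focus on~(2). Without loss of generality I may enlarge $E$ to a compact interval $[a,b]$: enlarging $E$ preserves~(2) and only enlarges $\sup_{x\in E}$.

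For each $m$, define $\tilde f(m,\cdot)$ to equal $f(m,\cdot)$ on $[a,b]$ and to continue by its tangent lines at the endpoints, i.e.\ $\tilde f(m,x)=f(m,a)+\partial_x f(m,a)(x-a)$ for $x<a$ and symmetrically for $x>b$. Then $\tilde f$ is $C^1$ with
$$
\left\|\frac{\partial \tilde f}{\partial x}(m,\cdot)\right\|_\infty = \sup_{x\in E}\left|\frac{\partial f}{\partial x}(m,x)\right|,
$$
so Lemma~\ref{lem:compactsupports} already produces the principal bound
$$
\Var X_n(\tilde f) \le \frac{N}{2}\sum_{m=1}^N \sup_{x\in E}\left|\frac{\partial f}{\partial x}(m,x)\right|^2\,\|[P_n,\mathbb J_m]\|_2^2.
$$
This quantity is $O(1)$, since $[P_n,\mathbb J_m]$ is supported on an $O(\rho^2)$-element corner with uniformly bounded entries (the convergence of the entries of $\mathbb J_m$ near the diagonal at level $n$ assumed in Corollary~\ref{cor:th:main1} gives the uniform boundedness needed here).

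For the remainder $g:=f-\tilde f$, note that $g(m,\cdot)$ vanishes on $E$ and satisfies $|g(m,x)|\le C|x|^{k_0}$ for some $k_0\in\bbN$. In the orthogonal-polynomial setting \eqref{eq:opchoice} the equal-time kernel is symmetric, $K_{n,N}(m,x,m,y)=K_{n,N}(m,y,m,x)$, so by~\eqref{eq:variancelinstat}
$$
\Var \sum_{j=1}^n g(m,x_{j,m}) = \int g(m,x)^2 K_{n,N}(m,x,m,x)\,d\mu_m(x)-\iint g(m,x)g(m,y)\,K_{n,N}(m,x,m,y)^2\,d\mu_m(x)d\mu_m(y).
$$
The first term is at most $C^2\int_{\bbR\setminus E}|x|^{2k_0}K_{n,N}(m,x,m,x)\,d\mu_m = o(1/n)$ by hypothesis~(2). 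For the second, Cauchy--Schwarz together with the reproducing identity $\int K_{n,N}(m,x,m,y)^2\,d\mu_m(y)=K_{n,N}(m,x,m,x)$ gives the same $o(1/n)$ bound. Bounding the cross-time covariances in $\Var X_n(g)=\sum_{m_1,m_2}\mathrm{Cov}(\cdots)$ by the usual $|\mathrm{Cov}|\le\sqrt{\Var\cdot\Var}$ and summing over the (fixed number of) indices yields $\Var X_n(g)=o(1)$.

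Finally, the elementary inequality $\Var(A+B)\le(\sqrt{\Var A}+\sqrt{\Var B})^2$ applied to $X_n(f)=X_n(\tilde f)+X_n(g)$, together with $\Var X_n(\tilde f)=O(1)$ and $\Var X_n(g)=o(1)$, produces the cross term $2\sqrt{\Var X_n(\tilde f)\,\Var X_n(g)}=o(1)$ and hence \eqref{eq:conclusionlemmaextend}. The main obstacle is the tail control of $\Var X_n(g)$: exploiting both the symmetry of $K_{n,N}$ at equal times and the reproducing property is essential, since a crude estimate would cost a factor of $n$ and just barely lose $o(1)$.
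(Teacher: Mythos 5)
Your proof is correct, but it takes a genuinely different route from the paper's. You decompose the \emph{function}, writing $f=\tilde f+g$ with $\tilde f$ the tangent-line extension of $f|_E$ (so Lemma~\ref{lem:compactsupports} applies directly to $\tilde f$) and $g$ a remainder vanishing on $E$; you then kill $\Var X_n(g)$ using hypothesis~(2) together with the symmetry and reproducing property of the equal-time kernel. The paper instead decomposes the \emph{variance integral}: it keeps $f$ intact, writes the single-time variance in the symmetrized form \eqref{eq:variance}, splits the double integral over $E\times E$, $(\bbR\setminus E)\times E$ and $\bbR\times(\bbR\setminus E)$, bounds the off-$E$ pieces by $K(m,x,m,x)K(m,y,m,y)$ and hypothesis~(2) (giving $o(1/n)\cdot n=o(1)$), and bounds the $E\times E$ piece by $\sup_E|\partial_x f|^2\iint(x-y)^2K^2=\sup_E|\partial_x f|^2\,\|[P_n,\mathbb J_m]\|_2^2$. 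Your Cauchy--Schwarz-plus-reproducing-property estimate of the cross term $\iint g(x)g(y)K(x,y)^2$ is a nice alternative to the paper's pointwise bound $K(x,y)K(y,x)\le K(x,x)K(y,y)$; both deliver $o(1/n)$ per time slice.

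There is one point where your argument asks for slightly more than the lemma's stated hypotheses: the cross term $2\sqrt{\Var X_n(\tilde f)\,\Var X_n(g)}$ is $o(1)$ only if $\Var X_n(\tilde f)=O(1)$ (or at least $o(n)$), i.e.\ only if $\|[P_n,\mathbb J_m]\|_2$ stays bounded (or is $o(\sqrt n)$). You justify this by invoking the convergence of the recurrence coefficients from Corollary~\ref{cor:th:main1}, which indeed holds everywhere the lemma is applied (it is exactly \eqref{eq:benmoe} in the paper's proof of Theorem~\ref{th:extend}), but it is not formally part of Lemma~\ref{lem:extendB} as stated. The paper's region-splitting proof produces no cross term and therefore needs no such input. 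This is worth being explicit about if you present the argument as a proof of the lemma in its stated generality; as a step toward Theorem~\ref{th:extend} it is harmless. (The replacement of $E$ by its convex hull is likewise harmless: hypothesis~(2) is inherited by any larger compact set, and the paper's own mean-value-theorem step in \eqref{eq:extensionsF} implicitly works with the convex hull anyway.)
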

\begin{proof}
If the supports of  $\mu^ {(m)}$ all lie within a compact set of $\bbR$ (independent of $n$)  then this is simpy Lemma \ref{lem:compactsupports}.
Now suppose the measure is not compactly supported but condition (2)  is satisfied with some $E$. Then we write again
\begin{equation}\label{eq:extensionB}
 \Var \sum_{m=1}^N \sum_{j=1}^n f(m,x_j(t_m)) \leq N \sum_{m=1}^N \Var \sum_{j=1}^n f(m,x_j(t_m)).
 \end{equation}
Each term in the sum at the right-hand side can be written in the form of \eqref{eq:variance} and we have 
\begin{multline}\Var \sum_{j=1}^n f(m,x_j(t_m))\\
= \frac12
 \iint_{E \times E}  (f(m,x)- f(m,y))^2 K_{n,N}(m,x,m,y) K_{n,N}(m,y,m,x)  {\rm d} \mu(x) {\rm d}\mu(y)\\
+ \frac12 \iint_{(\bbR \setminus E ) \times E}  (f(m,x)- f(m,y))^2 K_{n,N}(m,x,m,y) K_{n,N}(m, y,m,x)  {\rm d} \mu(x) {\rm d}\mu(y)\\
+\frac12 \iint_{\bbR \times (\bbR \setminus E ) }  (f(m,x)- f(m,y))^2 K_{n,N}(m,x,m,y) K_{n,N}(m,y,m,x)  {\rm d} \mu(x) {\rm d}\mu(y). \label{eq:extensionC}
\end{multline}
Now,  since $$K_{n,N}(m,x,m,y)K_{n,N}(m,y,m,x) \leq K_{n,N}(m,x,m,x) K_{n,N}(m,y,m,y),$$ $$\int_\bbR K_{n,N}(m,y,m,y) {\rm d} \mu(y)=n,$$ and  $g$ has at most polynomial growth at infinity, we have
\begin{multline}
\iint_{(\bbR \setminus E ) \times E}  (f(m,x)- f(m,y))^2 K_{n,N}(m,x,m,y) K_{n,N}(m,y,m,x){\rm d} \mu(x) {\rm d}\mu(y) \\
\leq \iint_{(\bbR \setminus E ) \times E}  (f(m,x)- f(m,y))^2 K_{n,N}(m,x,m,x) K_{n,N}(m,y,m,y) {\rm d} \mu(x) {\rm d}\mu(y)
\\= o(1/n) \int K_{n,N}(m,y,m,y) {\rm d} \mu(y)=o(1), 
\label{eq:extensionD}
\end{multline}
as $n \to \infty$. For the same reasons, 
\begin{equation}\label{eq:extensionsE}
\iint_{\bbR \times (\bbR \setminus E ) }  (f(m,x)- f(m,y))^2 K_{n,N}(m,x,m,,y) K_{n,N}(m,y,m,x)  {\rm d} \mu(x) {\rm d}\mu(y)= o(1),
\end{equation}
as $n \to \infty$. Moreover
\begin{multline}
 \iint_{E \times E}  (f(m,x)- f(m,y))^2 K_{n,N}(m,x,m,y) K_{n,N}(m,y,m,x)  {\rm d} \mu(x) {\rm d}\mu(y)\\
 \leq 
 \sup_{x \in E} \left|\frac{\partial f}{\partial x} (m,x)  \right|^2\iint_{E \times E}  (x-y)^2 K_{n,N}(m,x,m,y) K_{n,N}(m,y,m,x)  {\rm d} \mu(x) {\rm d}\mu(y)\\
\leq  \sup_{x \in E} \left|\frac{\partial f}{\partial x} (m,x)  \right|^2\iint_{\bbR \times \bbR}  (x-y)^2 K_{n,N}(m,x,m,y) K_{n,N}(m,y,m,x)  {\rm d} \mu(x) {\rm d}\mu(y)\\
= \sup_{x \in E} \left|\frac{\partial f}{\partial x} (m,x)  \right|^2\left\|[P_n,J_m]\right\|_2^2.
\label{eq:extensionsF}
 \end{multline}
By inserting \eqref{eq:extensionD}, \eqref{eq:extensionsE} and \eqref{eq:extensionsF} into \eqref{eq:extensionC} and the result into \eqref{eq:extensionB}, we obtain the statement. 
\end{proof}
We need one more lemma. 
\begin{lemma}\label{lem:extendC}
Let $a_0^{(m)} \in \bbR$ and $a_1^{(m)}>0$ for $m=1, \ldots, N$. Then for any real valued function $f$ on $\{1,\ldots,N\} \times \bbR$ such that $x\mapsto f(m,x)$ is continuously differentiable, we have that 
$$\sigma(f)^2= \sum_{k=1}^\infty \sum_{m_1,m_2=1}^\infty 
{\rm e}^{-|\tau_{m_1}-\tau_{m_2}|k } f_k^{(m_1)} f_k^{(m_2)},$$
is finite, where 
$$f_k^{(m)}= \frac{1}{2 \pi} \int_0^{2 \pi} f\left(m,a_0^{(m)}+  2 a_1^{(m)} \cos  \theta\right) {\rm e}^{-{\rm i} k \theta} {\rm d} \theta.$$
Moreover,  for some constant $C$ we have 
$$\sigma(f)^2 \leq C N \sum_{m=1}^N \sup_{x\in F_m} \left | \frac{ \partial f}{\partial x}(m,x)\right |^2,$$
where $F_m= [a_0^{(m)}-2 a_1^{(m)},a_0^{(m)}+2 a_1^{(m)}]$. 
\end{lemma}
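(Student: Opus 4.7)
The plan is to identify $f_k^{(m)}$ as the $k$-th Fourier coefficient of the $2\pi$-periodic function
$$g_m(\theta) := f\left(m, a_0^{(m)}+2a_1^{(m)}\cos\theta\right).$$
Since $x\mapsto f(m,x)$ is continuously differentiable and $\cos\theta$ is smooth, the chain rule gives
$$g_m'(\theta) = -2a_1^{(m)}\sin\theta\cdot \partial_x f\left(m,a_0^{(m)}+2a_1^{(m)}\cos\theta\right),$$
and since $a_0^{(m)}+2a_1^{(m)}\cos\theta \in F_m$ for all $\theta$, this derivative is bounded uniformly in $\theta$ by $2a_1^{(m)}\sup_{x\in F_m}|\partial_x f(m,x)|$. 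Integration by parts identifies the $k$-th Fourier coefficient of $g_m'$ with $i k f_k^{(m)}$, so Parseval's theorem yields
$$\sum_{k\in\bbZ} k^2 |f_k^{(m)}|^2 \;=\; \frac{1}{2\pi}\int_0^{2\pi} |g_m'(\theta)|^2 \,{\rm d}\theta \;\le\; 4\,(a_1^{(m)})^2 \sup_{x\in F_m}|\partial_x f(m,x)|^2.$$

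For the bound on $\sigma(f)^2$, I would use the crude estimate ${\rm e}^{-|\tau_{m_1}-\tau_{m_2}|k}\le 1$ together with the triangle inequality and Cauchy--Schwarz:
$$\sigma(f)^2 \;\le\; \sum_{k=1}^\infty \left(\sum_{m=1}^N |f_k^{(m)}|\right)^2 \;\le\; N\sum_{m=1}^N \sum_{k=1}^\infty |f_k^{(m)}|^2.$$
Since the inner sum runs over $k\ge 1$ one has $|f_k^{(m)}|^2 \le k^2|f_k^{(m)}|^2$, and combining with the Parseval bound above yields finiteness together with
$$\sigma(f)^2 \;\le\; 4N\sum_{m=1}^N (a_1^{(m)})^2\sup_{x\in F_m}|\partial_x f(m,x)|^2 \;\le\; CN\sum_{m=1}^N \sup_{x\in F_m}|\partial_x f(m,x)|^2,$$
with $C= 4\max_m (a_1^{(m)})^2$.

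The argument is essentially routine, and I do not anticipate any serious obstacle. The only points requiring minor attention are (i) verifying that the cosine substitution produces a $C^1$ periodic function so that Parseval applied to the derivative is valid, and (ii) discarding the exponential decay factors via $\le 1$ without losing the desired form of the inequality — the factor $N$ in the conclusion is exactly what the $N^2$ off-diagonal terms contribute after Cauchy--Schwarz.
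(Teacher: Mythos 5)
Your proof is correct, and the heart of it --- integration by parts to identify $ikf_k^{(m)}$ as the Fourier coefficients of $\theta\mapsto g_m'(\theta)$, followed by Parseval and the uniform bound on $g_m'$ over $F_m$ --- is exactly the paper's argument for the key estimate $\sum_k k^2|f_k^{(m)}|^2\leq C\sup_{x\in F_m}|\partial_x f(m,x)|^2$. The only place you diverge is in handling the double sum over $m_1,m_2$: you discard the exponential factors via ${\rm e}^{-|\tau_{m_1}-\tau_{m_2}|k}\leq 1$ and apply the triangle inequality plus Cauchy--Schwarz, whereas the paper first rewrites $\sigma(f)^2$ through the Poisson-kernel identity ${\rm e}^{-k|\tau|}=\frac{1}{\pi}\int_\bbR\frac{k}{k^2+\omega^2}{\rm e}^{-{\rm i}\omega\tau}\,{\rm d}\omega$ as $\sum_k\int_\bbR\bigl|\sum_m{\rm e}^{-{\rm i}\tau_m\omega}kf_k^{(m)}\bigr|^2\frac{{\rm d}\omega}{k^2+\omega^2}$ and only then applies Cauchy--Schwarz in $m$ together with $\int\frac{{\rm d}\omega}{k^2+\omega^2}\leq\pi$. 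Both routes land on the same bound $CN\sum_m\sum_k k^2|f_k^{(m)}|^2$; yours is shorter and entirely elementary, while the paper's detour buys the manifest nonnegativity of $\sigma(f)^2$ (used in Remark~\ref{rem:sym} and in the Gaussian Free Field computation), which your absolute-value bound does not exhibit. Since the lemma only claims finiteness and the upper bound, nothing is missing from your argument.
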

\begin{proof}
We start by recalling Remark \ref{rem:sym} on the symmetric case.  By the same arguments leading to \eqref{eq:variancesymmetricpositivr}, we see that we can write 
$$\sigma(f)^2=\sum_{k=1}^\infty \int_{-\infty}^\infty\left|\sum_{m=1}^N {\rm e}^{-{\rm i} \tau_m \omega} k f_k^{(m)}\right|^2\frac{{\rm d} \omega}{k^2+\omega^2}, $$
showing that $\sigma(f)^2$ is indeed positive (if finite). To show that it is finite, we note by applying Cauchy-Schwarz on the sum over $m$ and using 
\begin{equation}
\label{eq:piiiii}
\int\frac{{\rm d} \omega}{k^2+ \omega^2} \leq \int\frac{{\rm d} \omega}{1+ \omega^2} = \pi,
\end{equation}
it easily follows that 
\begin{equation}\label{eq:sigmaineq} \sigma(f)^2\leq C_0 N \sum_{m=1}^N \sum_{k=1}^\infty k^2 |f_k^{(m)}|^2.
\end{equation}
for some constant $C_0>0$. 
Then by integration by parts we find 
$$k f_k^{(m)} = \frac{1}{2\pi } \int_0^{2 \pi} \frac{ \partial f}{\partial x} (m,a_0^{(m)} +2 a_1^{(m)} \cos \theta)  a_1^{(m)} \sin  \theta \  {\rm e}^{-{\rm i} k \theta} {\rm d} \theta.$$
By applying Parseval and using the compactness of $F$ to estimate the $\mathbb L_2$ norm by the $\mathbb L_\infty$ norm, we find there exists a constant $C$ such that 
$$\sum_{k=1}^\infty k^2 |f_k^{(m)}|^2 \leq C\sup_{x\in  F_m } \left | \frac{ \partial f}{\partial x}(m,x)\right |^2.$$
By inserting this back into \eqref{eq:sigmaineq} we obtain the statement. 
\end{proof}

\begin{proof}[Proof of Theorem \ref{th:extend}]
For any function $p$ on $\{1, \ldots, N\} \times \bbR$ such that $x\mapsto p(m,x)$ is a polynomial we can write
\begin{multline}\label{eq:proofextendA}
\left|\EE \left[ {\rm e}^{{\rm i } t X_n(f)} \right]-{\rm e}^{-\frac{\sigma(f)^2t^2}{2}}\right|\leq  
\left|\EE \left[ {\rm e}^{{\rm i } t X_n(f)} \right]-\EE \left[ {\rm e}^{{\rm i } t X_n(p)} \right]\right|
\\
+\left|\EE \left[ {\rm e}^{{\rm i } t X_n(p)} \right]-{\rm e}^{-\frac{\sigma(p)^2t^2}{2}}\right|+\left|{\rm e}^{-\frac{\sigma(p)^2t^2}{2}}-{\rm e}^{-\frac{\sigma(f)^2t^2}{2}}\right|
\end{multline}
We claim that we can choose $p$ such that the first and last term at the right-hand side are small. Let $G$ be a compact set containing both the compact set $E$ and all $F_m$ from Lemma's \ref{lem:extendB} and \ref{lem:extendC} respectively. For $\eps>0$, choose $p$ such that  
\begin{equation}
\label{eq:proofetendB}
\sum_{m=1}^N \sup_{x \in F_m} \left| \frac{ \partial f}{\partial x}(m,x)- \frac{ \partial p}{\partial x}(m,x)\right|^2<\eps,
\end{equation}
and such that $p(m,x)$ is a polynomial in $x$. Also note that by the assumptions of Theorem \ref{th:main1}, there exists a constant $C_0$ such that
\begin{equation}\label{eq:benmoe}
\|[P_n,J_m] \|_2 \leq C_0,
\end{equation}
for  $n\in \mathbb N$ and $m=1, \ldots, N$.

We now recall that for any two real valued random variables $X,Y$ with $\EE X= \EE Y$ we have the identity
\begin{multline}
\left|\EE \left[{\rm e}^{{\rm i} t X}\right]- \EE \left[{\rm e}^ {{\rm i} t Y} \right]\right|=
\left|\EE \left[{\rm e}^{{\rm i} t X}- {\rm e}^ {{\rm i} t Y} \right]\right|\leq
\EE \left[\left| 1-{\rm e}^ {{\rm i} t (X-Y)}\right| \right] \\
\leq  |t| \EE  \left[ \left|X-Y\right|\right] \leq  |t| \sqrt{\Var (X-Y)}
|,
\end{multline}
for $ t \in \bbR$. This means by Lemma \ref{lem:extendB}, \eqref{eq:proofetendB} and \eqref{eq:benmoe}, that 
$$\left|\EE \left[ {\rm e}^{{\rm i } t X_n(f)} \right]-\EE \left[ {\rm e}^{{\rm i } t X_n(p)} \right]\right|\leq |t| \sqrt{\Var X_n(f-p)} \leq |t| C_1 \eps,$$
for $ t \in \bbR$ some constant $C_1$. Moreover, by Lemma \ref{lem:extendC} and \eqref{eq:proofetendB}   we have that 
$$\left|{\rm e}^{-\frac{\sigma(p)^2t^2}{2}}-{\rm e}^{-\frac{\sigma(f)^2t^2}{2}}\right|\leq C_2 \eps$$
for some constant $C_2$ By substituting this back into \eqref{eq:proofextendA} and applying Theorem \ref{th:main1} we obtain
$$
\left|\EE \left[ {\rm e}^{{\rm i } t X_n(f)} \right]-{\rm e}^{-\frac{\sigma(f)^2t^2}{2}}\right|\leq   (C_1+C_2) \eps
\left|\EE \left[ {\rm e}^{{\rm i } t X_n(p)} \right]-{\rm e}^{-\frac{\sigma(p)^2t^2}{2}}\right| \to (C_1+C_2) \eps,
$$
as $n \to \infty$. Since $\eps$ was arbitrary, we have $\left|\EE \left[ {\rm e}^{{\rm i } t X_n(f)} \right]-{\rm e}^{-\frac{\sigma(f)^2t^2}{2}}\right|\to 0$ as $n \to \infty$ and the statement follows.
\end{proof}

\subsection{Proof of Theorem \ref{th:main0growing}} 
\begin{proof}[Proof of Theorem \ref{th:main0growing}]
The proof is a simple extension of the proof of Theorem \ref{th:main0} and is again a consequence of Corollary \ref{cor:comp}. The main difference being that \eqref{eq:proofthm0A}, \eqref{eq:proofthm0B} and \eqref{eq:proofthm0C} are not sufficient, since the the number of terms in Corollary \ref{cor:comp} grows with $n$. Instead, we replace \eqref{eq:proofthm0A} with 
\begin{multline*}
\sum_{m=1}^{N_n} \left\|R_{n,2\rho(k+1)}\left(f(m,\mathbb J_{m,S})-f(m,\tilde {\mathbb J}_{m,S}\right)R_{n,2\rho(k+1)}\right\|_\infty\\
\leq \sum_{m=1}^{N_n} (t_{m+1}^{(n)}-t_{m}^{(n)}) \left\|R_{n,2\rho (k+1)}\left(g(m,\mathbb J_{m,S})-g(m,\tilde {\mathbb J}_{m,S})\right)R_{n,2\rho(k+1)}\right\|_\infty
\\
 \leq (\beta-\alpha) \sup_{m=1,\ldots,N_n} \left\|R_{n,2\rho(k+1)}\left(g(m,\mathbb J_{m,S})-g(m,\tilde {\mathbb J}_{m,S})\right)R_{n,2\rho(k+1)}\right\|_\infty \to 0,
\end{multline*}
as $n \to \infty$ for sufficiently large $S$, by the condition \eqref{eq:growingextendcondition1}  (and the exact same reasoning as in the beginning of the proof of Theorem \ref{th:extend}). We also replace \eqref{eq:proofthm0B} by 
\begin{multline}
\sum_{m=1}^{N_n} \left\|R_{n,2\rho(k+1)}f(m,\tilde {\mathbb J}_{m,S})R_{n,2\rho(k+1)}\right\|_\infty\\
\leq 
\sum_{m=1}^{N_n} (t_{m+1}^{(n)}-t_{m}^{(n)}) \left\|R_{n,2\rho(k+1)}g(m,\tilde {\mathbb{ J}}_{m,S})R_{n,2\rho(k+1)}\right\|_\infty \\
\leq (\beta-\alpha)\sup_{m=1,\ldots,N_n} \left\|R_{n,2\rho(k+1)}g(m,\tilde {\mathbb
J}_{m,S})R_{n,2\rho(k+1)}\right\|_\infty,
\end{multline}
which is bounded in $n$ by the assumption  in the theorem and the fact that $g(m,x)$ is a polynomial in $x$. This also shows that  
\begin{equation}
\sum_{m=1}^{N_n} \left\|R_{n,2\rho(k+1)}f(m,\mathbb {J}_{m,S})R_{n,2\rho(k+1)}\right\|_\infty,
\end{equation}
is also bounded $n$. Hence, by inserting these identities in the conclusion of Corollary \ref{cor:comp} we proved the statement.\end{proof}

\subsection{Proof of Theorem \ref{th:main1growing}}
\begin{proof}[Proof of Theorem \ref{th:main1growing}]
The proof is exactly the same as the proof of Theorem \ref{th:main1}, with the only difference that we now rely on (the proof of) Theorem \ref{th:main0growing} and use Corollary \ref{prop:cumulantstoeplitzvaryingn} in the final step instead of Corollary \ref{prop:cumulantstoeplitzfixedn}. 
\end{proof}

\subsection{Proof of Theorem \ref{th:extendgrowing}}
The proof of Theorem \ref{th:extendgrowing} also follows the same  arguments. We start with the following adjustment of Lemma \ref{lem:extendB}.
\begin{lemma}\label{lem:extendBgrowing}
Assume there exists a compact set $E \subset \mathbb R$ such that \\
(1) $S(\mu^{(n)}_m) \subset E$ for $m=1, \ldots, N_n$ and $n \in \mathbb N$, \\
or, more generally,\\
(2) for $k \in \mathbb N$, we have 
$$\sup_{m=1,\ldots,N_n} \int_{\mathbb R \setminus E}  |x|^k K_{n,N}(m,x,m,x) {\rm d} \mu_m^{(n)}(x) = o(1/n),$$ as $n\to \infty$. 

Then for any  function $g$ such that $x \mapsto g(t,x)$ is a continuously differentiable function with at most polynomial growth at $\pm \infty$, i.e $g(t,x) = \mathcal O(|x|^M)$ as $|x| \to \infty$, for some $M$ independent of $t\in I$, we have
\begin{equation}
\label{eq:extendassumptiongrowing} \Var X_n(f) \leq  \frac N2 \sum_{m=1}^N (t_{m+1}^{(n)}-t_m^{(n)})^2  \left\|\frac{\partial g}{\partial x}  \right\|_{\mathbb L_\infty(I\times E)}^2 \sup_{m=1,\ldots,N_n} \|[P_n,J_m]\|_2^2 + o(1),\end{equation}
as $n \to \infty$. 
\end{lemma}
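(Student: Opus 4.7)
The plan is to follow the template of Lemma \ref{lem:extendB}, the new feature being that there are $N_n\to\infty$ many slices so all estimates must be uniform in $m$. First I would apply the Cauchy--Schwarz inequality for the variance of a sum to write
\begin{equation*}
\Var Y_n(g) \leq N_n \sum_{m=1}^{N_n} (t_{m+1}^{(n)}-t_m^{(n)})^2\, \Var\Bigl(\sum_{j=1}^n g(t_m^{(n)},x_{j,m})\Bigr),
\end{equation*}
which reduces the problem to controlling the single-slice variance uniformly in $m$.

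For each fixed $m$ I would then use the symmetric kernel representation \eqref{eq:variance} of the single-slice variance and split the double integral over $\mathbb R\times \mathbb R$ into the part over $E\times E$ and its complement. On $E\times E$ I would use the mean-value bound $(g(t_m,x)-g(t_m,y))^2\leq \|\partial g/\partial x\|_{\mathbb L_\infty(I\times E)}^2 (x-y)^2$ together with the identity
$$\tfrac12\iint (x-y)^2 K_{n,N}(m,x,m,y)K_{n,N}(m,y,m,x)\,d\mu_m(x)d\mu_m(y) = \|[P_n,J_m]\|_2^2,$$
which, once one replaces each $\|[P_n,J_m]\|_2^2$ by $\sup_{m}\|[P_n,J_m]\|_2^2$ and sums over $m$ with weights $(t_{m+1}^{(n)}-t_m^{(n)})^2$, produces precisely the main term in the statement.

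The tail contribution requires uniform control. Using the Cauchy--Schwarz estimate $K_{n,N}(m,x,m,y)K_{n,N}(m,y,m,x)\leq K_{n,N}(m,x,m,x)K_{n,N}(m,y,m,y)$, the polynomial growth $|g(t,x)|\leq C(1+|x|^M)$ uniform in $t\in I$, and the total-mass identity $\int K_{n,N}(m,y,m,y)\,d\mu_m(y)=n$, the tail integrals reduce to a finite number of expressions of the form $n\cdot \sup_{m} \int_{\mathbb R\setminus E}|x|^{k}K_{n,N}(m,x,m,x)\,d\mu_m(x)$ for $k\in\mathbb N$ depending on the growth exponent of $g$. By hypothesis (2) this is $n\cdot o(1/n)=o(1)$ uniformly in $m$, so the tail contribution to each slice variance is $o(1)$ uniformly. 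Multiplying by $(t_{m+1}^{(n)}-t_m^{(n)})^2$, summing over $m$, and combining with the prefactor $N_n$ gives a total tail contribution of $N_n\sum_{m=1}^{N_n}(t_{m+1}^{(n)}-t_m^{(n)})^2\cdot o(1)$, which is $o(1)$ by the standing assumption that $N_n\sum_{m=1}^{N_n}(t_{m+1}^{(n)}-t_m^{(n)})^2$ is bounded. In case (1) the tail region is empty and no such care is needed.

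The only obstacle I anticipate is tracking the uniformity in $m$ throughout the tail estimates; once hypothesis (2) is read as a uniform $\sup_m o(1/n)$ bound and the mesh-summability hypothesis is in hand, the pieces assemble directly.
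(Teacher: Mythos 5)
Your proof follows the paper's argument essentially verbatim: the paper's own proof of this lemma simply reruns Lemma \ref{lem:extendB} slice by slice (Cauchy--Schwarz over the $N_n$ slices, the symmetric kernel form \eqref{eq:variance}, the $E\times E$ versus tail split, and the identity producing $\|[P_n,J_m]\|_2^2$), noting that the supremum over $m$ in hypothesis (2) makes the tail $o(1)$ constants uniform in $m$, and then rewrites $\partial f/\partial x=(t_{m+1}^{(n)}-t_m^{(n)})\,\partial g/\partial x$. Your explicit appeal to the boundedness of $N_n\sum_{m}(t_{m+1}^{(n)}-t_m^{(n)})^2$ to keep the accumulated tail contributions at $o(1)$ addresses a point the paper glosses over --- that hypothesis is stated only in Theorem \ref{th:extendgrowing}, not in the lemma itself --- and you are right that some such control is needed to prevent the $N_n$ uniform $o(1)$ terms from piling up.
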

\begin{proof}
We use the same arguments as in the proof of Lemma \ref{lem:extendB}. The only possible issue that we need to address is that in \eqref{eq:conclusionlemmaextend} the constant in the $o(1)$ term my depend on $N$ and hence, in the present situation, also on $n$.  However,  by the fact we take the supremum over $m$ in  \eqref{eq:extendassumptiongrowing} we have that the constant in the $o(1)$ terms in \eqref{eq:extensionD} and \eqref{eq:extensionsE} can be chosen such that they do not on $n$.  By following the same proof we see that the constant in the $o(1)$ does not depend on $n$ and hence we have 
$$\Var X_n(f) \leq  \frac{N_n}{2} \sum_{m=1}^{N_n} \sup_{x \in E} \left|\frac{\partial f}{\partial x} (m,x)  \right|^2 \|[P_n,J_m]\|_2^2 + o(1),$$
as $n \to \infty$. Since $f$ is considered to be of the special type \eqref{eq:varyinglinstat}, we rewrite the latter in terms of $g$, $$\Var X_n(f) \leq  \frac {N_n}{ 2} \sum_{m=1}^{N_n}(t_{m+1}^{(n)}-t_m^{(n)})^2 \left\|\frac{\partial g}{\partial x} \right\|_{\mathbb L_\infty(I\times E)}^2 \|[P_n,J_m]\|_2^2 + o(1),$$
as $n \to \infty$. The statement now follows by estimating the summands in terms of their surpema. 
\end{proof}
\begin{lemma}\label{lem:extendCgrowing}
Let $a_0^{(m)}$ and $a_1^{(m)}$ be piecewise continuous function on $I$ for $m=1, \ldots, N_n$.  Then for any real valued function $g$ on $\{1,\ldots,N\} \times \bbR$ such that $x\mapsto g(\tau,x)$ is continuously differentiable, we have that 
$$\sigma(g)^2= \sum_{k=1}^\infty \iint_{I\times I}
{\rm e}^{-|\tau_{1}(t)-\tau_{2}(t)|k } g_k(t_1) g_k(t_2) {\rm d} t_1 {\rm d} t_2,$$
is finite, where 
$$g_k(t)= \frac{1}{2 \pi} \int_0^{2 \pi} g\left(m,a_0(t)+ 2 a_1 (t) \cos t\right) {\rm e}^{-{\rm i} k \theta} {\rm d} \theta.$$
Moreover,  for some constant $C$ we have 
$$\sigma(f)^2 \leq  C \left\| \frac{ \partial g}{\partial x}\right\|_{\mathbb L_\infty (F)}^2,$$
where $F= \{ (t,x) \mid x \in  [a_0(t)- 2 a_{1}(t), a_0(t)+2 a_1(t)], \ t \in I \}.$
\end{lemma}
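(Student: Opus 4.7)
The plan is to mimic the proof of Lemma \ref{lem:extendC}, replacing the discrete sum over $m$ by an integral over $t \in I$. Positivity of $\sigma(g)^2$ follows for free by first rewriting it via the Fourier--Poisson identity \eqref{eq:fourierpoisson} in the form \eqref{eq:fouriertransformvariance}, namely
\begin{equation*}
\sigma(g)^2 = \frac{1}{\pi}\sum_{k=1}^\infty \int_{\mathbb R} \frac{k^2}{\omega^2+k^2}\left|\int_I {\rm e}^{-{\rm i}\omega\tau(t)} g_k(t)\,{\rm d}t\right|^2 {\rm d}\omega,
\end{equation*}
which manifestly lies in $[0,\infty]$. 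So the substantive task is the upper bound.

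For finiteness I would take the shortcut of discarding the exponential factor: since ${\rm e}^{-k|\tau(t_1)-\tau(t_2)|}\leq 1$, two applications of Cauchy--Schwarz on $I$ together with the trivial inequality $k\leq k^2$ for $k\geq 1$ and Tonelli give
\begin{equation*}
\sigma(g)^2 \leq \sum_{k=1}^\infty k\left(\int_I |g_k(t)|\,{\rm d}t\right)^2 \leq |I|\sum_{k=1}^\infty k \int_I g_k(t)^2\,{\rm d}t \leq |I|\int_I \sum_{k=1}^\infty k^2 g_k(t)^2\,{\rm d}t.
\end{equation*}
Thus everything reduces to controlling the pointwise sum $\sum_k k^2 g_k(t)^2$.

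This is where integration by parts in $\theta$ enters, exactly as in Lemma \ref{lem:extendC}. Since $\sin 0 = \sin(k\pi) = 0$, the boundary terms vanish and one gets
\begin{equation*}
k\, g_k(t) = \frac{2 a_1(t)}{\pi}\int_0^\pi \frac{\partial g}{\partial x}\bigl(t,a_0(t)+2a_1(t)\cos\theta\bigr)\sin\theta\,\sin k\theta\,{\rm d}\theta,
\end{equation*}
so $k g_k(t)/a_1(t)$ is (up to the standard normalization) the $k$-th sine Fourier coefficient on $(0,\pi)$ of $h_t(\theta) := \frac{\partial g}{\partial x}(t,a_0(t)+2a_1(t)\cos\theta)\sin\theta$. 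Parseval then yields
\begin{equation*}
\sum_{k=1}^\infty k^2 g_k(t)^2 = \frac{2 a_1(t)^2}{\pi}\int_0^\pi h_t(\theta)^2\,{\rm d}\theta \leq a_1(t)^2 \left\|\frac{\partial g}{\partial x}\right\|_{\mathbb L_\infty(F)}^2,
\end{equation*}
using $\int_0^\pi \sin^2\theta\,{\rm d}\theta = \pi/2$ and bounding the $\mathbb L_2$-norm of $\partial_x g$ on $F$ by its $\mathbb L_\infty$-norm.

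Plugging this back gives $\sigma(g)^2 \leq |I|\,\|\partial g/\partial x\|_{\mathbb L_\infty(F)}^2 \int_I a_1(t)^2\,{\rm d}t$, and since $a_1$ is piecewise continuous on the compact interval $I$ the factor $\int_I a_1^2$ is a finite constant depending only on $I$ and $a_1$. This yields the claimed bound with $C := |I|\int_I a_1(t)^2\,{\rm d}t$. There is no serious obstacle here; the only minor points to verify are that the integration by parts is legitimate (which uses only that $x\mapsto g(t,x)$ is $C^1$, as assumed) and that the exchange of sum and integral going from the display above to the pointwise Parseval bound is justified, which is automatic from positivity via Tonelli.
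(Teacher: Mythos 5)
Your proposal is correct and takes essentially the same route as the paper: positivity via the representation \eqref{eq:fouriertransformvariance}, reduction to the quantity $|I|\int_I\sum_{k\geq 1}k^2 g_k(t)^2\,{\rm d}t$ (the paper gets there by Cauchy--Schwarz in $t$ plus the bound \eqref{eq:piiiii} on the $\omega$-integral, you by discarding the exponential and using Cauchy--Schwarz directly --- a cosmetic difference), followed by the identical integration by parts and Parseval argument in $\theta$. Your explicit constant $C=|I|\int_I a_1(t)^2\,{\rm d}t$ is consistent with the paper's unspecified $C$, since $a_1$ is piecewise continuous on the compact interval $I$.
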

\begin{proof}
As in \eqref{eq:fouriertransformvariance} we rewrite
$\sigma(g)^2$ as 
$$\sigma(g)^2= \frac1\pi\sum_{k=1}^\infty  \int \frac{k^2}{\omega^2+k^2}   \left| \int_I  {\rm e}^{- {\rm i}  \omega \tau(t)} g_k(t){\rm d} t \right|^ 2 {\rm d} \omega,$$
showing that $\sigma(g)^2$ is positive (if finite). By applying the Cauchy-Schwarz inequality to the integral over $t$, the estimate \eqref{eq:piiiii} and the Plancherel Theorem we have
$$\sigma(g)^2\leq (\beta-\alpha) \pi \sum_{k=1}^\infty  \int_\bbR {k^2}  | g_k(t) |^ 2 {\rm d}t.$$ 
Then by integration by parts we find 
$$kgf_k(t)= \frac{1}{2\pi } \int_0^{2 \pi} \frac{ \partial g}{\partial x} (t,a_0(t)+ 2 a_1(t) \cos  \theta) a_1(t) \sin  \theta  \ {\rm e}^{-{\rm i} k \theta} {\rm d} \theta.$$
By applying Parseval and  estimating the $\mathbb L_2$ norm by the $\mathbb L_\infty$ norm, we find there exists a constant $C$ such that 
$$\sum_{k=1}^\infty k^2 |g_k(t)|^2 \leq C  \left\| \frac{ \partial g}{\partial x}\right\|_{\mathbb L_\infty (F)}^2,$$
and  this finishes the proof.
\end{proof}
\begin{proof}[Proof of Theorem \ref{th:extendgrowing}]
The proof of is now identical to the proof of Theorem \ref{th:extendgrowing}, with only difference being that we use Theorem \ref{th:main1growing}, Lemma's \ref{lem:extendBgrowing} and \ref{lem:extendCgrowing} instead of Theorem \ref{th:main1} and Lemma's \ref{lem:extendB} and \ref{lem:extendC}.  
\end{proof}


\begin{thebibliography}{99}

\bibitem{AGZ} G.W.~Anderson, A.~Guionnet and O.~Zeitouni,
\emph{An introduction to random matrices.} 
Cambridge Studies in Advanced Mathematics, 118. Cambridge University Press, Cambridge, 2010. 

\bibitem{BKMM}  J.~Baik, T.~Kriecherbauer, K.~T.-R.~McLaughlin, and  P. D.~Miller, \emph{Discrete orthogonal polynomials. Asymptotics and applications}, Annals of Mathematics Studies, 164. Princeton University Press, Princeton, NJ, 2007. 

\bibitem{BorBio} A.~Borodin, \emph{Biorthogonal ensembles}, Nuclear Phys. B 536 (1999), no. 3, 704--732. 

\bibitem{BorDet} A.~Borodin, \emph{Determinantal point processes},  In: Oxford Handbook on Random Matrix theory, edited by Akemann G.; Baik, J.\ ; Di Francesco P., Oxford University Press, 2011. (arXiv:0911.1153)  


\bibitem{Bwigner} A. Borodin, \emph{CLT for spectra of submatrices of Wigner random matrices}.  In: Random Matrix Theory, Interacting Particle Systems, and Integrable Systems
Series: Mathematical Sciences Research Institute Publications (No. 65)
\bibitem{BB} A. Borodin, A. Bufetov, \emph{Plancherel representations of $U(\infty)$ and correlated Gaussian Free Fields}, Duke Mathematical Journal, 163, no. 11 (2014), 2109--2158.
\bibitem{BF} A. Borodin and P. Ferrari, \emph{Anisotropic growth of random surfaces in 2+1 dimensions}, Comm. Math. Phys. 325 (2014), no. 2, 603--684.

\bibitem{BG} A. Borodin \and V.E. Gorin, \emph{General $β$ Jacobi corners process and the Gaussian Free Field}, Comm.  Pure and Appl. Math
 68 (2015), no. 10,  1774--1844.
\bibitem{BO} A. Borodin \and G. Olshanski, \emph{Markov processes on partitions},
Probab. Theory Related Fields 135 (2006), no. 1, 84--152. 

\bibitem{BD} J. Breuer and M. Duits, \emph{Central Limit Theorems for biorthogonal ensembles with a recurrence,}  accepted for publication in J. Amer. Math. Soc. (arXiv:1309.6224)

\bibitem{BDmes} J. Breuer and M. Duits, \emph{Universality of mesocopic fluctuation in orthogonal polynomial ensembles,} accepted for publication in Comm. Math. Phys. (arXiv:1411.5205)

\bibitem{BS} A. B\"ottcher \and B. Silbermann, \emph{ Introduction to large truncated Toeplitz matrices}. Universitext. Springer-Verlag, New York, 1999.
\bibitem{CL} O. Costin and J. Lebowitz, \emph{Gaussian fluctuations in random matrices},
Phys. Rev. Lett. 75, No. 1, 69--72, (1995).

\bibitem{CV} E. Coussement and W. Van Assche, \emph{ Some classical multiple orthogonal polynomials,} Numerical analysis 2000, Vol. V, Quadrature and orthogonal polynomials. J. Comput. Appl. Math. 127 (2001), no. 1-2, 317--347.
\bibitem{DKMVZ1}
 	P.~Deift, T.~Kriecherbauer, K.T.-R.~McLaughlin, S.~Venakides, and X.~Zhou,
	\emph{Uniform asymptotics for polynomials orthogonal with respect to varying exponential weights and applications to universality questions in random matrix theory},
	Comm. Pure Appl. Math. {\bf 52} (1999), 1335--1425.
\bibitem{D}  M.~Duits, \emph{The Gaussian Free Field in an interlacing particle system with two jump rates}, Comm. Pure and Appl.
Math. 66 (2013), no. 4, 600--643. 

\bibitem{DGK} M. Duits, D. Geudens \and A.B.J.~Kuijlaars, 
							\emph{A vector equilibrium problem for the two-matrix 		
							model in the quartic/quadratic case,} Nonlinearity 24 		
						(2011), 951--993.
\bibitem{doumerc} Y.~Doumerc, \emph{Matrices al\'eatoires, processus stochastiques et groupes de r\'eflexions}, PhD thesis 2005. 
\bibitem{Dyson} F.J. Dyson, 
	\emph{A Brownian-motion model for the eigenvalues of a random matrix},
	J. Math. Phys. 3 (1962), 1191--1198. 

\bibitem{E} T.~A.~Ehrhardt, \emph{Generalization of Pincus' formula and Toeplitz operator determinants} Arch.\ Math.\ {\bf 80} (2003), no.\ 3, 302--309. 

\bibitem{G1} V.E.~Gorin, \emph{Nonintersecting paths and the Hahn orthogonal polynomial ensemble. } Funct. Anal. Appl. 42 (2008), no. 3, 180--197 
\bibitem{G2} V.E.~Gorin, \emph{Noncolliding Jacobi diffusions as the limit of Markov chains on the Gelʹfand-Tsetlin graph}, English translation in 
J. Math. Sci. (N. Y.) 158 (2009), no. 6, 819--837.

\bibitem{GK} I.~Gohberg, S.~Goldberg and N.~Krupnik, \emph{Traces and determinants of linear operators}. Operator Theory: Advances and Applications, 116. Birkhäuser Verlag, Basel, 2000. x+258 pp.

\bibitem{Jduke} K.~Johansson, \emph{On fluctuations of eigenvalues of random Hermitian matrices,} Duke Math.\ J.\ 91 (1998), no.\ 1, 151–-204.
\bibitem{Jannals} K.~Johansson, \emph{The arctic circle boundary and the Airy process,}
Ann. Probab. 33 (2005), no. 1, 1–30. 



\bibitem{Jhahn} K.~Johansson, \emph{Non-intersecting, simple, symmetric random walks and the extended Hahn kernel}, 
Ann. Inst. Fourier 55 (2005), no. 6, 2129–-2145. 

\bibitem{Jdet} K.~Johansson, \emph{Random matrices and determinantal processes},  
		Mathematical Statistical Physics, Elsevier B.V.\ Amsterdam (2006) 1--55.
		

\bibitem{Kenyon} R. Kenyon, \emph{Height fluctuations in the honeycomb dimer model. 
Comm. Math. Phys. 281 (2008), no. 3, 675–709. }

\bibitem{koekoek}  R.~Koekoek, P.~A.~Lesky, and R.~F.~Swarttouw, 				
			\emph{Hypergeometric orthogonal polynomials and their q-analogues.} 
			Springer Monographs in Mathematics. Springer-Verlag, Berlin, 2010. 

\bibitem{Kuijlaars} A.~Kuijlaars, \emph{Multiple Orthogonal Ensembles}. In \lq Recent 			trends in orthogonal polynomials and approximation theory\rq,  Contemp. Math., 507 (2010), pp.\ 155–-176.

\bibitem{K} W.~K\"onig, \emph{Orthogonal polynomial ensembles in probability theory}, Probab.\ Surveys {\bf 2} (2005), 385--447.


\bibitem{KOC}   W.~K\"onig and N. O'Connell, \emph{Eigenvalues of the Laguerre process as non-colliding squared Bessel processes,} 
Electron. Comm. Probab. 6 (2001), 107--114 
\bibitem{KOCR} W. K\"onig, N. O'Connell and Roch, \emph{Non-colliding random walks, tandem queues, and discrete orthogonal polynomial ensembles,} 
Electron. J. Probab. 7 (2002), no. 5, 24 pp.
\bibitem{Kuan} J. Kuan, \emph{The Gaussian free field in interlacing particle systems,} 
Electron. J. Probab. 19 (2014), no. 72, 31 pp. 
\bibitem{L} R.~Lyons, \emph{Determinantal probability measures}, Publ.\ Math.\ Inst.\ Hautes Etudes Sci.\ {\bf 98}  (2003), 167--212.
\bibitem{O} G. Olshanksi, \emph{Laguerre and Meixner symmetric functions, and infinite-dimensional diffusion processess},  English translation in J. Math. Sci. (N. Y.) 174 (2011), no. 1, 41--57
\bibitem{OC} N. O'Connell, \emph{Conditioned random walks and the RSK correspondence,}
J. Phys. A 36 (2003), no. 12, 3049--3066. 


\bibitem{OY} N.~O'Connell and M.Yor, \emph{A representation for non-colliding random walks},. Electron. Comm. Probab. 7 (2002), 1–12
\bibitem{Petrov} L.~Petrov, \emph{Asymptotics of Uniformly Random Lozenge Tilings of Polygons. Gaussian Free Field},  Ann. Probab. 43 (2015), no. 1, 1--43. 

\bibitem{Shef} S. Sheffield, \emph{The Gaussian Free Field for mathematicians},  Probab. Theory Related Fields 139 (2007), no. 3-4, 521–-541.
\bibitem{Schoutens}  W. Schoutens, \emph{Stochastic Processes and Orthogonal Polynomials}, Lecture Notes in Statistics, 146. Springer-Verlag, New York, 2000.
\bibitem{Simon} B.Simon, \emph{Trace ideals and their applications,} Second edition. Mathematical Surveys and Monographs, 120. American Mathematical Society, Providence, RI, 2005. 

\bibitem{SimonSzego} B.~Simon, \emph{Szeg\H o's Theorem and its Descendants: Spectral Theory for $L^2$ Perturbations of Orthogonal Polynomials}, M.\ B.\ Porter Lecture Series, Princeton Univ.\ Press, Princeton, NJ, 2011. 

\bibitem{SoshCLT}
	A.~Soshnikov, \emph{Gaussian limit for determinantal random point fields}, Ann.\ Probab.\ {\bf 30} (2002), 171--187.

%
\bibitem{Sosh} A.~Soshnikov, \emph{Determinantal random point fields},  Uspekhi Mat.\ Nauk {\bf 55} (2000), no.\ 5 (335), 107--160; translation in Russian Math.\ Surveys {\bf 55} (2000), no.\ 5, 923--975.

%


\bibitem{Handbook} \emph{The Oxford handbook of random matrix theory}, 
Edited by G.\ Akemann, J.\ Baik and P.\ Di Francesco. Oxford University Press, Oxford, 2011. 




\end{thebibliography}
\end{document}